\newcommand{\onlyFull}[1]{#1}
\newcommand{\onlyConference}[1]{}
\newif\ifFull
\newcommand{\whenFull}[2]{\onlyFull{#1}\onlyConference{#2}}
\title{The complexity of downward closures of indexed languages}
\author{Richard Mandel}{Max Planck Institute for Software Systems (MPI-SWS), Germany}{rmandel@mpi-sws.org}{https://orcid.org/0009-0001-5062-0860}{}
\author{Corto Mascle}{Max Planck Institute for Software Systems (MPI-SWS), Germany}{cmascle@mpi-sws.org}{https://orcid.org/0009-0007-7976-7480}{}
\author{Georg Zetzsche}{Max Planck Institute for Software Systems (MPI-SWS), Germany}{georg@mpi-sws.org}{https://orcid.org/0000-0002-6421-4388}{}
\authorrunning{R. Mandel, C. Mascle, G. Zetzsche} %
\keywords{Higher-order pushdown automata, well quasi-orders, semigroup algebra} %
\mathchardef\mhyphen="2D
\definecolor{defcolor}{HTML}{ccfaff}
\definecolor{lemcolor}{HTML}{eeffee}
\definecolor{thmcolor}{HTML}{ffeeee}
\definecolor{propcolor}{HTML}{ddffff}
\newcommand{\overbar}[1]{\mkern 1.5mu\overline{\mkern-1.5mu#1\mkern-1.5mu}\mkern 1.5mu}
\newcommand{\coNTIME}{\mathsf{coNTIME}}
\newcommand*{\coNEXP}[1][OPT]{\ifthenelse{\equal{#1}{OPT}}{\mathsf{coNEXP}}{\mathsf{co}\mhyphen{}#1\mhyphen\mathsf{NEXP}}}
\newcommand*{\EXPTIME}[1][OPT]{\ifthenelse{\equal{#1}{OPT}}{\mathsf{EXPTIME}}{#1\mhyphen\mathsf{EXPTIME}}}
\newcommand{\PTIME}{\mathsf{P}}
\newcommand{\coNP}{\mathsf{coNP}}
\newcommand{\ltr}[1]{\mathtt{#1}}
\newcommand{\set}[1]{\{#1\}}
\newcommand{\powerset}[1]{2^{#1}}
\renewcommand{\phi}{\varphi}
\renewcommand{\epsilon}{\varepsilon}
\newcommand{\Aa}{\mathcal{A}}
\newcommand{\Bcal}{\mathcal{B}}
\newcommand{\Cc}{\mathcal{C}}
\newcommand{\Gg}{\mathcal{G}}
\newcommand{\Tt}{\mathcal{T}}
\newcommand{\NN}{\mathbb{N}}
\newcommand{\Nb}{\overbar{N}}
\newcommand{\Ib}{\overline{I}}
\newcommand{\Pb}{\overbar{P}}
\newcommand{\Sb}{\overbar{S}}
\newcommand{\ub}{\overbar{u}}
\newcommand{\wb}{\overbar{w}}
\newcommand{\xb}{\overbar{x}}
\newcommand{\zb}{\overbar{z}}
\newcommand{\Ggb}{\overbar{\Gg}}
\newcommand{\myparagraph}[1]{\subparagraph*{#1}}
\knowledgenewrobustcmd{\PumpConst}{\mathfrak{P}}
\knowledgenewrobustcmd{\subword}{\mathbin{\cmdkl{\preceq}}}
\knowledgenewrobustcmd{\dcl}[1]{\cmdkl{#1\mathord{\downarrow}}}
\newcommand{\xRightarrow}[2][]{\ext@arrow 0359\Rightarrowfill@{#1}{#2}}
\newcommand{\lang}[1]{\mathsf{L}(#1)}
\knowledgenewrobustcmd{\langIG}[1]{\cmdkl{\mathsf{L}(#1)}}
\knowledgenewrobustcmd{\langSF}[1]{\cmdkl{\mathsf{L}_{\mathsf{SF}}(#1)}}
\knowledgenewrobustcmd{\langX}[2]{\cmdkl{\mathsf{L}_{#2}(#1)}}
\knowledgenewrobustcmd{\langPS}[1]{\cmdkl{\mathsf{L}_{\mathsf{pump,skip}}(#1)}}
\knowledgenewrobustcmd{\sentforms}{\cmdkl{\mathsf{SF}}}
\knowledgenewrobustcmd{\rulepump}{\cmdkl{\to_{\text{\scriptsize{\textsf{pump}}}}}}
\knowledgenewrobustcmd{\ruleskip}{\cmdkl{\to_{\text{\scriptsize{\textsf{skip}}}}}}
\knowledgenewrobustcmd{\ruleskipstar}{\xrightarrow{*}_{\scriptsize{\mathsf{skip}}}}
\knowledgenewrobustcmd{\derivesindexedstar}[1][]{\cmdkl{\mathbin{\xRightarrow{*}_{\text{\scriptsize{$#1$}}}}}}
\knowledgenewrobustcmd{\derivesindexed}[1][]{\cmdkl{\mathbin{\Rightarrow_{\text{\scriptsize{$#1$}}}}}}
\knowledgenewrobustcmd{\derivesindexedps}[1][EMPTY]{%
	\ifthenelse{\equal{#1}{EMPTY}}{%
		\derivesindexed[\mathsf{pump,skip}]%
	}{%
		\derivesindexed[\mathsf{pump,skip},#1]}%
}
\knowledgenewrobustcmd{\derivesindexedstarps}[1][EMPTY]{%
	\ifthenelse{\equal{#1}{EMPTY}}{%
		\derivesindexedstar[\mathsf{pump,skip}]%
	}{%
		\derivesindexedstar[\mathsf{pump,skip},#1]}%
}
\knowledgenewrobustcmd{\Useful}{\cmdkl{\mathsf{U}}}
\knowledgenewrobustcmd{\relTrans}[1]{\cmdkl{\textsf{R}(#1)}}
\newcommand{\grammarcommand}[1]{\color{gray} \emph{#1}}
\knowledgenewrobustcmd{\projection}{\cmdkl{\pi}}
\knowledgenewrobustcmd{\Reach}[1]{\ \cmdkl{\mathsf{R}_{#1}}\ }
\knowledgenewrobustcmd{\alphaG}{\cmdkl{{\alpha}}}
\knowledgenewrobustcmd{\betaG}{\cmdkl{{\beta}}}
\newcommand{\matrixmonoid}{\mathbb{B}^{N \times N}}
\knowledgenewrobustcmd{\prodmonoid}{\cmdkl{\mathbb{M}}}
\newcommand{\neutralM}{\mathbf{1}_{\mathbf{M}}}
\newcommand{\neutral}{\mathbf{1}_{\prodmonoid}}
\newcommand{\zeroPM}{\mathbf{0}_{\prodmonoid}}
\knowledgenewrobustcmd{\morphism}{\cmdkl{\phi}}
\knowledgenewrobustcmd{\Jgreen}{\cmdkl{\mathcal{J}}}
\knowledgenewrobustcmd{\Rgreen}{\cmdkl{\mathcal{R}}}
\knowledgenewrobustcmd{\Lgreen}{\cmdkl{\mathcal{L}}}
\knowledgenewrobustcmd{\Hgreen}{\cmdkl{\mathcal{H}}}
\knowledgenewrobustcmd{\Jleq}{\mathbin{\cmdkl{\leq_{\mathcal{J}}}}}
\knowledgenewrobustcmd{\Lleq}{\mathbin{\cmdkl{\leq_{\mathcal{L}}}}}
\knowledgenewrobustcmd{\Rleq}{\mathbin{\cmdkl{\leq_{\mathcal{R}}}}}
\knowledgenewrobustcmd{\Hleq}{\mathbin{\cmdkl{\leq_{\mathcal{H}}}}}
\knowledgenewrobustcmd{\Jheight}[1]{\cmdkl{\mathsf{depth}(#1)}}
\knowledgenewrobustcmd{\idempotents}[1]{\cmdkl{\mathsf{Idem}(#1)}}
\knowledgenewrobustcmd{\evalmorph}[1]{\cmdkl{\psi_{#1}}}
\knowledgenewrobustcmd{\Jlength}[1]{\cmdkl{\mathcal{J}\mathsf{len}(#1)}}
\knowledgenewrobustcmd{\push}[2]{\cmdkl{\mathsf{push}(#1 \blacktriangleright #2)}} 
\knowledgenewrobustcmd{\pop}[2]{\cmdkl{\mathsf{pop}(#1 \blacktriangleleft #2)}} 
\knowledgenewrobustcmd{\Summaries}{\cmdkl{\mathsf{Summaries}}} 
\knowledgenewrobustcmd{\blaskd}{{A}}
\knowledgenewrobustcmd{\CFG}{\cmdkl{\mathcal{C}_\Gg}}
\newcommand{\inc}{\mathtt{inc}}
\newcommand{\one}{\mathbf{1}}
\newcommand{\duplicate}{D}
\newcommand{\zero}{\mathbf{0}}
\definecolor{green}{HTML}{5EE534}
\definecolor{Red2}{HTML}{CC0400}
\definecolor{Orange2}{HTML}{E6670A}
\definecolor{Violet2}{HTML}{9659d0}
\definecolor{Green3}{HTML}{45A229}
\definecolor{Navy}{HTML}{7973E2}
\definecolor{red1}{HTML}{E6276D}
\definecolor{blue1}{HTML}{1E88E5}
\definecolor{yel1}{HTML}{BF9000}
\definecolor{DarkGamboge}{HTML}{be7c00}
\tikzset{AUT style/.style={>=angle 60,initial text= ,every edge/.append,every state/.style={minimum size=20,inner sep=2}}}
\whenFull{\hideLIPIcs}{}
\begin{document}

	\maketitle
	
	\begin{abstract}

	Indexed languages are a classical notion in formal language theory,
which has attracted attention in recent decades due to its role in higher-order
model checking: They are precisely the languages accepted by order-2 pushdown automata.

The downward closure of an indexed language---the set of all
(scattered) subwords of its members---is well-known to be a regular
over-approximation. It is known since 2015 that the downward
closure of a given indexed language is effectively computable.  However, the
algorithm comes with no complexity bounds, and it has remained open
whether a primitive-recursive construction exists.

We settle this question and provide a triply (resp.\ quadruply)
exponential construction of a non-deterministic (resp.\ deterministic)
automaton. We also prove (asymptotically) matching lower bounds.
For the upper bounds, we rely on recent advances in semigroup theory,
which let us compute bounded-size summaries of words with respect to a finite
semigroup.  By replacing stacks with their summaries, we are able to transform
an indexed grammar into a context-free one with the same downward closure, and
then apply existing bounds for context-free grammars.

	\end{abstract}

\section{Introduction}

\myparagraph{Downward closures.}
For finite words $u$ and $v$, we say that $u$ is a \emph{(scattered) subword} of $v$, written $u\subword v$, if $v$ can be obtained from $u$ by inserting letters.
The \emph{downward closure} of a language $L\subseteq\Sigma^*$ is the set $\dcl{L}=\{u\in\Sigma^* \mid u\subword v~\text{for some $v\in L$}\}$ of all subwords of members of $L$.
By the well-known Higman’s lemma~\cite{Higman52}, the downward closure $\dcl{L}$ is regular for \emph{any} language $L$. 

This makes the downward closure a fundamental abstraction---it is a regular overapproximation
that preserves information about 
pattern occurrences and unboundedness behavior. Specifically, in the
verification of complex systems, downward closures are often used to replace
infinite-state components by finite-state ones, which then enables the
algorithmic analysis of the entire system. Examples include asynchronous
programs~\cite{MajumdarTZ22,0001GMTZ23a}, shared-memory systems with dynamic
thread
creation~\cite{AtigBQ11,DBLP:conf/icalp/0001GMTZ23,AiswaryaCBSSZ2026,DBLP:journals/pacmpl/BaumannMTZ21,BaumannMTZ22},
parameterized asynchronous shared-memory
systems~\cite{TorreMW15}, and systems communicating via lossy channels~\cite{DBLP:conf/concur/AtigBT08,AbdullaBB01,DBLP:conf/lics/AiswaryaMS22}.

For these reasons, it is often useful to \emph{compute downward closures}, which means constructing a finite automaton for $\dcl{L}$ when given a description (i.e.\ a grammar or an infinite-state recognizer) for $L$.  This is a notoriously difficult task, as it requires a deep understanding of how $L$ is generated/recognized. Therefore, the problem of computing downward closures has attracted significant attention over recent decades, with work on
context-free languages~\cite{%
	van1978effective,
	Courcelle91},
systems with counting and concurrency~\cite{
	DBLP:conf/icalp/HabermehlMW10,
	DBLP:conf/mfcs/AtigMMS17,
	AiswaryaCBSSZ2026,
	DBLP:conf/lics/AtigCHKSZ16,
	DBLP:conf/stacs/Zetzsche15},
models of higher-order recursion~\cite{
	Zetzsche15,
	DBLP:conf/popl/HagueKO16,
	DBLP:conf/lics/ClementePSW16},
lossy channel systems~\cite{
	AbdullaBB01,
	DBLP:journals/tcs/Mayr03},
general algorithms for broad classes of infinite-state systems~\cite{
	Zetzsche15,
	DBLP:conf/lics/AnandSSZ24},
representation sizes~\cite{
	DBLP:journals/fuin/GruberHK09,
	MajumdarTZ22,
	BaumannMTZ22,
	DBLP:conf/lata/BachmeierLS15},
related algorithmic tasks~\cite{
	Zetzsche16,
	DBLP:conf/fsttcs/Parys17,
	fazekas_et_al:LIPIcs.ISAAC.2024.28,
	DBLP:conf/wia/FazekasKKMMS25,
	DBLP:conf/mfcs/AmarilliMRS25,
	DBLP:journals/corr/abs-1904-10105,
	DBLP:conf/stacs/GanardiSZ24},
and even computability beyond subwords~\cite{
	DBLP:conf/concur/AnandZ23,
	DBLP:conf/lics/AnandSSZ24,
	DBLP:conf/lics/Zetzsche18,
	DBLP:journals/fuin/BarozziniCCP22,
	DBLP:conf/tagt/CourcelleS94}.

\myparagraph{Indexed languages.} A setting
where downward closure computation is particularly challenging is that of
\emph{indexed languages}~\cite{Aho68}, a classical notion in formal language
theory that generalizes context-free languages. Essentially, indexed grammars differ from context-free grammars in that each non-terminal carries a stack, which can be pushed and popped through special rules. These grammars have recently attracted
interest because of their role in higher-order model-checking~\cite{DBLP:conf/lics/Ong15,DBLP:journals/jacm/Kobayashi13}: 
Indexed grammars
are equivalent to order-2 pushdown automata, from the hierarchy of
\emph{higher-order pushdown automata (HOPA)}, which model safe higher-order recursion~\cite{DBLP:conf/fossacs/KnapikNU02,DBLP:conf/popl/Kobayashi09} (see also the survey~\cite{DBLP:conf/lics/Ong15}). Level $k$ of this hierarchy consists of the \emph{order-$k$ pushdown automata ($k$-PDA)}, which have access to \emph{stacks
of (stacks of (...)) stacks}---with nesting depth $k$: In particular, a 1-PDA is an ordinary pushdown automaton, whereas a 2-PDA 
has a stack of stacks, where it can operate on the top-most stack as an
ordinary pushdown automaton; but it can also copy the top-most stack. 

There is a significant gap in our understanding of downward closures of indexed
languages (and HOPA more generally). There is a general approach for computing
downward closures~\cite{Zetzsche15}. Based on this approach, computability of
downward closures has been shown for indexed languages (equivalently,
2-PDA)~\cite{Zetzsche15}, then for general
HOPA~\cite{DBLP:conf/popl/HagueKO16}, and even higher-order recursion
schemes~\cite{DBLP:conf/lics/ClementePSW16}.

However, while downward closure computability is
settled for these models, the complexity has remained a long-standing open problem. This
is because the algorithm from \cite{Zetzsche15} enumerates automata and then
solves instances of the so-called \emph{simultaneous unboundedness problem (SUP)} to
decide whether the current automaton in fact recognizes $\dcl{L}$. Thus,
although the complexity of the SUP itself is well-understood for
HOPA~\cite{DBLP:conf/fsttcs/Parys17,DBLP:conf/icalp/BarozziniPW22}, there are
no complexity upper bounds for the entire task of computing downward closures of indexed languages
(nor for languages recognized by HOPA in general).  In fact, even the existence of a
primitive-recursive (or even hyper-Ackermannian) upper bound remained open\footnote{Given that computing
downward closures involves a well-quasi-ordering (WQO) at its core, and
WQO-based algorithms employing the subword ordering can often be furnished with
hyper-Ackermannian/multiply-recursive upper bounds via length-function
theorems~\cite{DBLP:conf/icalp/SchmitzS11,DBLP:books/hal/Schmitz17}, one might
hope for such a bound here. Unfortunately, it is not clear how to apply
length-function theorems to downward closure computation.}.

\myparagraph{Contribution.} In this work, we settle the complexity of computing
downward closures of indexed languages. We show that given an indexed language
$L$, one can compute a non-deterministic finite automaton (NFA) for $\dcl{L}$
in triply exponential time (and hence of triply exponential size). We also
provide a triply exponential lower bound, improving on the doubly exponential
lower bound in~\cite{Zetzsche16}.

Furthermore, our constructions also provide a tight bound for the computation
of a deterministic finite automaton (DFA) for $\dcl{L}$: It follows that one
can construct a quadruply-exponential-sized DFA for $\dcl{L}$, and we provide a
quadruply-exponential lower-bound.

Moreover, our results settle the complexity of decision problems involving
downward closures of indexed languages. The \emph{downward closure inclusion
problem} asks, whether for given indexed languages $L_1$ and $L_2$, we have
$\dcl{L_1}\subseteq\dcl{L_2}$. Similarly, the \emph{downward closure
equivalence problem} asks whether $\dcl{L_1}=\dcl{L_2}$. We show that
both problems are $\coNEXP[3]$-complete.

Finally, our construction for the downward closure lower bound also settles
another question about indexed languages: It implies that the known triply
exponential upper bound on the pumping threshold for indexed
grammars~\cite{Hayashi1973,DBLP:journals/iandc/Smith17} (the maximal length of
words in a finite indexed language) is in fact asymptotically tight. (In
\cite{DBLP:conf/csl/Kartzow11}, a doubly exponential upper bound was claimed. However, as confirmed by the author of \cite{DBLP:conf/csl/Kartzow11}, that is a miscalculation; see \whenFull{\cref{sec:results,app:results}}{\cref{sec:results} and the full version~\cite{MandelMZ26arxiv}}.)

\myparagraph{Why are the results unexpected?} The complexity results come as a
considerable surprise. This is because (tight) complexity bounds for HOPA are usually towers of exponentials where the height grows linearly with the order. For example, for $k\ge 1$, the emptiness problem for $k$-PDA is $\EXPTIME[(k-1)]$-complete~\cite[comment before Thm.~7.12]{DBLP:journals/iandc/Engelfriet91} (note that $\EXPTIME[0]=\PTIME$). The same is true of the SUP~\cite[Thm.~3]{DBLP:conf/fsttcs/Parys17}. 
Since downward closure inclusion and equivalence are $\coNP$-complete for NFAs (see \cite[Sec.~5]{DBLP:conf/lata/BachmeierLS15} and \cite[Prop.~7.3]{DBLP:journals/tcs/KarandikarNS16}) and $\coNEXP[1]$-complete for $1$-PDAs~\cite[Tab.~1]{Zetzsche16}, and hence $\coNEXP[k]$-complete for $k$-PDAs with $k\le 1$, one would expect $\coNEXP[2]$ rather than $\coNEXP[3]$ for $2$-PDAs. In fact, the best (and only) known lower bound until now has been $\coNEXP[2]$~\cite[Cor.~18]{Zetzsche16}. 
Similarly, since downward closure NFAs are polynomial-sized for given NFAs (sometimes considered $0$-PDAs) and exponential-sized for given $1$-PDAs~\cite[Cor.~6]{DBLP:conf/lata/BachmeierLS15}, one would expect a doubly exponential bound for $2$-PDAs. In fact, the best known lower bound on the NFA size for downward closures had been doubly exponential~\cite[remarks before Cor.~18]{Zetzsche16}.

\myparagraph{Key ingredients.}
Indexed grammars extend context-free grammars by equipping the nodes of derivation trees with pushdown store (which we also call stack). This way, each branch of a derivation tree corresponds to a run of a pushdown automaton. 

Our construction relies on recent advances in finite semigroup theory, which provide succinct “summaries’’ of arbitrarily long words relative to a finite semigroup~\cite{GimbertMT25arxiv}. We use this (for a suitable semigroup) to replace the stack in each derivation tree node by such a summary, each of which uses exponentially many bits. This replacement changes the overall language, but preserves the downward closure. Once the information in each node is bounded, we can transform the grammar into a doubly-exponential-sized context-free grammar. This yields the triply exponential upper bound overall, since for context-free grammars, existing algorithms yield exponential-sized downward closure NFAs~\cite[Corollary 6]{DBLP:conf/lata/BachmeierLS15}.

The aforementioned summaries are similar in spirit to Simon's factorization
forests~\cite{Simon90}. The latter annotate words
by trees: Relative to a morphism $\varphi\colon\Sigma^*\to M$ into a finite
monoid $M$, a factorization forest for $w\in\Sigma^*$ is a tree of height
bounded by a function of $|M|$ that allows evaluating $\varphi$ on infixes of $w$ without
processing the entire infix. Here, a crucial idea is that a sequence of infixes
that all map under $\varphi$ to the same idempotent $e$ must evaluate to $e$. 

Similar to factorization forests, our summaries also exploit repetitions of
idempotents to collapse long infixes in a stack.  However, in contrast to
factorization forests, our summaries reduce the stack word to one of bounded
length (hence losing information).  Also crucially, taking summaries is
compatible with pushing stack symbols: Given a summary of $w$, one can compute a
summary of $aw$.

\myparagraph{Structure of the paper.}
We recall necessary notations and basic results in Section~\ref{sec:prelims}, and state the main results in Section~\ref{sec:results}.
Then in Section~\ref{sec:sound} we show how to make a given indexed grammar \emph{productive}, meaning that every partial computation can be extended to a full one. 
In Section~\ref{sec:monoid} we introduce an object at the core of our construction, the \emph{production monoid}. 
In Section~\ref{sec:pump-skip} we introduce new rules that extend and reduce parts of the stack without altering the downward closure of the language. We then proceed in Section~\ref{sec:summaries} to define \emph{summaries} of stack contents with respect to the production monoid.
In Section~\ref{sec:CFG} we leverage those summaries to compute from an indexed grammar a context-free one with the same downward closure, to which known constructions apply.
Finally, in Section~\ref{sec:lower-bound} we complement our upper bounds with matching lower bounds.

\AP \emph{This paper contains internal links; every occurrence of a "term@@example" is linked to its ""definition"". The reader can click on "terms@@example" (and some notations), or simply hover over them on some pdf readers, to get their definition.}

\whenFull{}{\emph{Due to space constraints, most of the proofs are deferred to the full version~\cite{MandelMZ26arxiv}.}}

\section{Preliminaries}
\label{sec:prelims}

\myparagraph{Words, trees and languages.}
Given a finite alphabet $\Sigma$, we write $\Sigma^*$ for the set of words over $\Sigma$, and $\Sigma^k$ for the set of words of length $k$. Given $w \in \Sigma^*$, we denote $|w|$ its length.
We assume familiarity with basic finite automata theory, see~\cite{Sakarovitch09} for an introduction.

\AP We write $u \intro*\subword v$ if $u$ is a (scattered) ""subword"" of $v$; that is, if $u$ can be obtained from $v$ by removing some of its letters.
If a word $w$ is equal to $uv$, then we call $u$ a ""prefix"", and $v$ a ""suffix"" of $w$.
The ""downward closure"" of a language $L \subseteq \Sigma^*$ is the set of subwords of words of $L$, and is denoted $\intro*\dcl{L}$. An important reason for studying "downward closures" is that they are always regular. This was first shown by Haines~\cite[Theorem 3]{haines1969free}, but also follows from $\subword$ being a well-quasi ordering, which was shown earlier by Higman~\cite[Theorem 4.3]{Higman52}:
\begin{theorem}
	The "downward closure" of any language is regular.
\end{theorem}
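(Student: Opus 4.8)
The plan is to derive regularity from Higman's lemma~\cite{Higman52}, namely that the subword order $\subword$ is a \emph{well-quasi-order} (WQO) on $\Sigma^*$: every infinite sequence $w_1, w_2, \dots$ of words admits indices $i < j$ with $w_i \subword w_j$. A self-contained argument is available via Nash--Williams' minimal-bad-sequence technique, or by induction on $|\Sigma|$ using the fact that the WQO property lifts from the (finite, hence trivially well-quasi-ordered) alphabet $\Sigma$ to $\Sigma^*$; but one may simply cite it, as the paper does.

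Granting this, fix an arbitrary $L \subseteq \Sigma^*$ and set $C = \Sigma^* \setminus \dcl{L}$. Since $\dcl{L}$ is by definition downward closed with respect to $\subword$, the complement $C$ is \emph{upward} closed: if $u \in C$ and $u \subword w$ then $w \in C$. Now I would invoke the standard consequence of the WQO property that every upward-closed subset of a WQO is generated by its minimal elements, of which there are only finitely many — otherwise one could greedily extract an infinite antichain, contradicting well-quasi-orderedness. Hence there is a finite set $F = \{u_1, \dots, u_n\} \subseteq C$ with $C = \bigcup_{i=1}^{n} \{\, w \in \Sigma^* : u_i \subword w \,\}$.

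The last step is purely elementary: for a single word $u = a_1 a_2 \cdots a_k$, the "shuffle ideal" $\{\, w : u \subword w \,\}$ equals $\Sigma^* a_1 \Sigma^* a_2 \Sigma^* \cdots \Sigma^* a_k \Sigma^*$, a concatenation of finitely many regular languages, hence regular. Therefore $C$ is a finite union of regular languages, so regular, and consequently $\dcl{L} = \Sigma^* \setminus C$ is regular as well.

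The only substantive ingredient is Higman's lemma; everything after it is bookkeeping, and in particular the argument is non-constructive (the finite set $F$ is not, in general, computable from a description of $L$ — which is precisely why the rest of the paper is needed for the indexed case). Accordingly, the single real obstacle — should one insist on a fully self-contained proof rather than citing~\cite{Higman52,haines1969free} — is establishing that $\subword$ is a WQO, which is exactly where well-foundedness genuinely enters.
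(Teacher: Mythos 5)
Your proof is correct and follows exactly the route the paper indicates: regularity via Higman's lemma (the subword order is a WQO), the finite-antichain characterization of the upward-closed complement, and the regularity of shuffle ideals $\Sigma^* a_1 \Sigma^* \cdots \Sigma^* a_k \Sigma^*$. The paper itself gives no proof beyond citing Haines and Higman, and your argument is the standard derivation it alludes to, including the correct remark that it is non-constructive.
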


\AP A finite ordered $\Sigma$-labeled binary tree is a pair $(\tau, \lambda)$ with $\tau$ a finite prefix-closed subset of $\set{0,1}^*$ such that for all $\nu \in \tau$, $\nu 1 \in \tau$ implies $\nu0 \in \tau$, and $\lambda : \tau \to \Sigma$ a function mapping each node in $\tau$ to a label.
We use the usual terminology for trees, with \emph{node, leaf, branch, subtree, parent, child, ancestor, descendant} etc. retaining their usual meaning. The ""leaf word"" of $\tau$ is the word obtained by concatenating $\lambda(\nu_1)\dots \lambda(\nu_k)$, where $\nu_1, \dots, \nu_k$ are the leaves of $\tau$ read from left to right, i.e., in lexicographic order.

\myparagraph{Indexed grammars.}
\label{sec:defIG}
To simplify definitions and proofs, we use a syntax for "indexed grammars" that resembles the Chomsky normal form used for context-free grammars. 

	\label{def:IG}
	An ""indexed grammar"" is a tuple $\Gg=(N, T, I, P, S)$ with
	\begin{itemize}
		\setlength\itemsep{0em}
		\item $N$ the set of non-terminals, and $S\in N$ the starting non-terminal
		\item $T$ the set of terminal symbols
		\item $I$ the set of index symbols, also called stack symbols
		\item $P$ a set of productions, which are of the following types:
		\begin{itemize}
			\item $A \to w$  with $w \in  T^*$.
			
			\item $A \to B C$  with $A,B,C \in N$
			
			\item $A \to B f$  with $A,B \in N$ and $f \in I$
			
			\item	$A f \to B$  with $A,B \in N$ and $f \in I$
		\end{itemize}
	\end{itemize}	
	We define the ""size@@IG"" of $\Gg$ as $|N|+|P|+\sum_{A \to w \in P}|w|$.	
	A ""context-free grammar"" (or "CFG") is an "indexed grammar" where all productions are of the two first forms.

\AP 
A ""sentential form"" is a word over the (infinite) alphabet $N I^* \cup T$. The set of "sentential forms" is denoted $\intro*\sentforms_{\Gg}$, or simply $\sentforms$ when the grammar is clear from context. We write the elements of $NI^*$ as $A[z]$, with $A\in N$ a non-terminal and $z\in I^*$ interpreted as the content of its stack; such an element is sometimes referred to as a ""term"". If $u\in \sentforms$, we occasionally use the notation $u[z]$ to denote the sentential form obtained by pushing $z$ on top of the stack of every non-terminal in $u$. The derivation relation $\intro*\derivesindexed_{\Gg}$ (or simply $\derivesindexed$) over $\sentforms$ is defined as:
\begin{alignat*}{3}
	&u A [z] v &  ~&\derivesindexed~u B[z] C[z] v&\quad\quad\quad &\text{if } A \to BC \in P,\\  
	&u A [z] v &  &\derivesindexed~u B [fz] v&   &\text{if } A \to Bf \in P,\\
	&u A [fz] v&  &\derivesindexed~u B[z] v&  &\text{if } A f \to B \in P,\\
	&u A [z] v &  &\derivesindexed~u w v&  &\text{if } A \to w \in P.
\end{alignat*}
for all $u,v \in \sentforms$, $A,B,C \in N$, $f \in I$, $z \in I^*$ and $w \in T^*$.
We write $\intro*\derivesindexedstar_{\Gg}$ (or simply $\derivesindexedstar$) for the reflexive transitive closure of $\derivesindexed_\Gg$.

The "subword" relation $\subword$ is extended to $\sentforms$ in the natural way, i.e. $u\subword v$ if $u$ can be obtained from $v$ by deleting "terms". Note that on $\sentforms$, the ordering $\subword$ is not a well-quasi ordering, since any two "terms" $A[z]$ and $A[z']$ are incomparable for $z,z'\in I^*$, $z\ne z'$.

\AP Given a "sentential form" $u$, we write $\intro*\langSF{u}$ for the set  of "sentential forms" derivable from $u$, $\set{v \in \sentforms \mid u \derivesindexedstar v}$.
The ""language@@indexed"" of $\Gg$ is denoted $\intro*\langIG{\Gg}$ and defined as $\langSF{S} \cap T^*$.
Furthermore, for all $X \subseteq N$ and $u \in \sentforms$, the language $\intro*\langX{u}{X}$ is defined as the set $\langSF{u} \cap (X \cup T)^*$ of "sentential forms" derivable from $u$ with all stacks empty, and all non-terminals in $X$. 
In particular, 
$\langX{u}{\emptyset}$ is the set of terminal words which can be derived from $u$.
If the language $\langX{u}{\emptyset}$ is non-empty then we say that $u$ is ""productive@@SF"".

\AP A ""derivation tree"" is a finite ordered tree $\tau$ whose nodes are labeled by elements of $NI^* \cup T^*$, with the following constraints. For each node $\nu$ with label $A[z]$ (where $A\in N$ and $z\in I^*$), exactly one of the following holds 
\begin{itemize}%
	\item $\nu$ is a leaf
	
	\item $\nu$ has one child labeled $w$ for some $A \to w \in P$
	
	\item $\nu$ has two children labeled $B[z]$ and $C[z]$, for some production $A \to B C \in P$ 
	
	\item $\nu$ has one child labeled $B [fz]$ for some $A \to B f \in P$
	
	\item $\nu$ has one child labeled $B [z']$ for some $A f \to B \in P$, with $z = fz'$.
\end{itemize}
A derivation tree whose root is labeled $A[z]$ and whose "leaf word" is equal to $u$ is called a \emph{derivation tree from $A[z]$ to $u$}. If $u \in T^*$ we say that the "derivation tree" is ""complete@@tree"".

\begin{remark}
	An easy induction shows that for all $A[z] \in N I^*$, a "sentential form" $u$ can be derived from $A[z]$ if and only if $u$ is the "leaf word" of a "derivation tree" whose root is labeled $A[z]$. 
	In the forthcoming proofs we will either use "sentential forms" or "derivation trees" depending on what is most convenient.
\end{remark}

\begin{remark}
	When describing examples of "indexed grammars", we will use rules of the form $A f \to u$ and $A \to u$ with $u \in (N \cup T)^*$. 
	This is just syntactic sugar, as we can replace them with rules from the definition of "indexed grammar" while adding a small number of non-terminals, in the spirit of the Chomsky normal form~\cite{Chomsky59}. 
	This transformation incurs a linear size increase of the grammar. 
	Details are left to \whenFull{Appendix~\ref{app:Chomsky}.}{the full version~\cite{MandelMZ26arxiv}.}
\end{remark}

\begin{example}
	We can define the language $\set{\ltr{a}^n \ltr{b}^{n^2} \mid n\in \NN}$ with an indexed grammar:
	\begin{align*}
		&S \to Tg  &\grammarcommand{\#  $g$ is the stack bottom symbol;} \\
		&T \to Tf  &\grammarcommand{\#  we push some number $n$ of $f$;} \\
		&T \to A   &\\
		&Ag \to \epsilon  &\grammarcommand{\#  if $n = 0$ then return $\epsilon$;}\\
		&Af \to C   &\grammarcommand{\#  if $n > 0$  we pop an $f$;} \\
		&C \to \ltr{a} A B   &\grammarcommand{\#  repeat  $n$ times to get $\ltr{a}^nB[g] B[fg] \cdots B[f^{n-1}g]$;} \\
		&Bf \to \ltr{b}\ltr{b}B   &%
        \\
		&Bg \to \ltr{b}  &\grammarcommand{\#the $B$s output $\ltr{b}^{\sum_{i=0}^{n-1}(2i+1)} = \ltr{b}^{n^2}$.}
	\end{align*}

\end{example}

\knowledgenewrobustcmd{\Nlabel}{\cmdkl{\alpha}}

\begin{remark}
	\label{rmk:label-push}
	We can assume without loss of generality that every symbol pushed on the stack carries the information of the production rule used to push it (this property can always be ensured by adding a quantity of new stack symbols and production rules that is at most quadratic in the size of the grammar). Formally, we assume that there are functions $\alpha, \beta : I \to N$ such that for every push rule $A \to B f$ we have $A = \alpha(f)$ and $B = \beta(f)$.
	We also assume that for all $f \in I$ there is a rule of the form $A \to B f$ in $P$. Clearly stack symbols not satisfying this condition can be removed.
\end{remark}

\myparagraph{Complexity.}
We define the functions $\exp_k\colon\NN\to\NN$ inductively by setting
$\exp_0(n)=n$ and $\exp_{k+1}(n)=2^{\exp_k(n)}$. A function $f\colon\NN\to\NN$
is \emph{(at most) $k$-fold exponential} if there is a constant $c>0$ such that
$f(n)\le \exp_k(n^c)$ for almost all $n$. We say that $f$ is \emph{at least
$k$-fold exponential} if there is a constant $c>0$ such that
$f(n)\ge\exp_k(n^c)$ for infinitely many $n$. Instead of $2$-fold, $3$-fold,
$4$-fold exponential, resp., we also say doubly, triply, or quadruply
exponential, resp. For $k\ge 1$, the class $\coNEXP[k]$ consists of the
complements of sets accepted by an $f$-time-bounded non-deterministic
Turing machine, for some $f\colon\NN\to\NN$ that is at most $k$-fold exponential.

\section{Main results}\label{sec:results}

In this section, we present the main results of this work. 

\myparagraph{Non-deterministic automata.}
Our first main
contribution is an algorithm to compute an NFA of at most triply exponential
size for the downward closure of an indexed language: 
\begin{theorem}\label{main-upper-bound}
Given an indexed grammar $\Gg$, one can compute (in triply exponential time) a triply-exponential-sized NFA for $\dcl{\langIG{\Gg}}$.
\end{theorem}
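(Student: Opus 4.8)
## Proof Plan

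The plan is to follow the roadmap announced in the introduction: transform the given indexed grammar $\Gg$ into a context-free grammar $\Gg'$ with $\dcl{\langIG{\Gg'}} = \dcl{\langIG{\Gg}}$ and $|\Gg'|$ at most doubly exponential in $|\Gg|$, and then invoke the known construction for context-free grammars~\cite{DBLP:conf/lata/BachmeierLS15}, which produces an NFA of exponential size (and in exponential time) in $|\Gg'|$, hence triply exponential in $|\Gg|$ overall. The work is entirely in producing $\Gg'$.

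First I would preprocess $\Gg$ so that it is \emph{productive} (every reachable term can be completed to a terminal word), using the construction of Section~\ref{sec:sound}; this costs at most an exponential blow-up and is needed so that later we may freely replace stacks by summaries without creating dead ends that would shrink the language below its downward closure. Next I would build the \emph{production monoid} $\prodmonoid$ of Section~\ref{sec:monoid}, a finite monoid (of size exponential in $|\Gg|$) together with an evaluation morphism $\morphism$ on stack contents that records, for a stack word $z$, enough information to know which terminal-word downward closures the subtrees hanging off a branch carrying $z$ can contribute — in particular which non-terminals become reachable after popping through $z$ and what sets of letters the spawned subderivations produce. Then, using the \emph{pump} and \emph{skip} rules of Section~\ref{sec:pump-skip}, I would argue that one may modify any derivation tree by inserting or deleting infixes of a stack that map to the same idempotent of $\prodmonoid$ without changing the downward closure of the generated language; this is the analogue of the idempotent-collapsing step in Simon's factorization forests. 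Finally, invoking the bounded-size \emph{summaries} of Section~\ref{sec:summaries} (from~\cite{GimbertMT25arxiv}), each stack content $z$ can be replaced by a summary of length bounded by a function of $|\prodmonoid|$ — hence of exponential length, so summaries range over a doubly-exponential-sized set — in a way that is compatible with pushing symbols (given a summary of $w$ one computes a summary of $aw$). Replacing the stack alphabet $I^*$ by this finite set of summaries turns $\Gg$ into a grammar where each term $A[\sigma]$ ranges over a doubly-exponential-sized set of symbols, so all productions become context-free; this is $\Gg'$, of doubly exponential size.

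The correctness argument has two directions. For $\dcl{\langIG{\Gg'}} \subseteq \dcl{\langIG{\Gg}}$: every derivation in $\Gg'$ can be "unfolded" back into a derivation in $\Gg$ producing a word with the same letters (up to reordering/subword), using soundness of summaries and of the pump/skip rules. For the reverse inclusion: given $w \in \langIG{\Gg}$, take a derivation tree in $\Gg$, summarize every stack along it, check that the summarization is consistent with the productions (this is where productiveness and the compatibility of summaries with push are used), and read off a derivation in $\Gg'$ of a word $w'$ with $w \subword w'$ or at least $w' \in \dcl{\langIG{\Gg}}$ and $w \subword$-related appropriately. Then $\dcl{\langIG{\Gg}} = \dcl{\langIG{\Gg'}}$ and the CFG downward-closure construction finishes the job, with the time bound being the composition: exponential (productiveness) $\circ$ doubly exponential (summaries) $\circ$ exponential (CFG downward closure) $=$ triply exponential.

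The main obstacle I expect is the correctness of the summary replacement — specifically, showing that collapsing a stack to a bounded-length summary never removes a word from the downward closure. The delicate point is that an indexed derivation branches: a single stack is copied into many subtrees (via $A\to BC$ rules with the same $z$), and those subtrees may pop to different depths and spawn further subtrees; so the summary must capture not just "what happens if we pop all of $z$" but the whole family of behaviors obtained by popping any prefix of $z$ and then branching. Getting the production monoid and the summary notion rich enough to be sound for this branching, while still bounded in size, and then verifying that the pump/skip rules let one realize every needed modification, is where the real content lies. The push-compatibility of summaries is what makes the final grammar context-free rather than merely having a bounded stack, and checking that this compatibility survives all the above reductions is the other technical hot spot.
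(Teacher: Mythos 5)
Your plan follows the paper's own route step for step (productiveness annotation, stack monoid, pump/skip rules, bounded summaries, CFG, then the exponential NFA construction of \cite{DBLP:conf/lata/BachmeierLS15}), so the correctness skeleton is not in dispute. The genuine gap is in the complexity accounting, and it is precisely the point the paper has to work around. Your final tally, exponential (productiveness) composed with doubly exponential (summaries/CFG) composed with exponential (CFG downward closure), does not equal triply exponential: composing these blow-ups literally gives a quadruply exponential NFA. Concretely, if you first replace $\Gg$ by a productive grammar of exponential size and then run the monoid/summary machinery on that grammar as a black box, the Boolean matrices are indexed by the exponentially many annotated non-terminals, so the stack monoid is doubly exponential, summaries and the CFG each gain an exponential, and you land at $\exp_4$, not $\exp_3$. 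This is exactly why the paper refuses to use productivization as a preprocessing step (\cref{rem-noblackbox}): the monoid $\prodmonoid$ is defined over the \emph{original} non-terminal set $N$ (annotation sets $X\subseteq N$ appear only as components), which keeps $|\prodmonoid|$ singly exponential; your claim that the monoid is exponential in $|\Gg|$ is incompatible with the black-box preprocessing you describe.

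A second, related gap: you write that summaries have length bounded by a function of $|\prodmonoid|$, hence exponential. That inference does not hold. With the natural Simon-style bound (height linear in $|\prodmonoid|$) summaries would be doubly exponential and the final NFA quadruply exponential, as the paper itself points out in \cref{sec:summaries}. The singly exponential bound of \cref{lem:exp-summaries} is obtained because the controlling parameter is not $|\prodmonoid|$ but its regular $\mathcal{J}$-length, which is polynomial in $|N|$ by Jecker's results (\cref{thm:Ismael-height,thm:Jheight}), combined with the Ramsey-type bound on idempotent repetitions. So to reach the stated triply exponential bound you must (i) interleave the annotation with the monoid and summary constructions rather than compose blow-ups, and (ii) invoke the polynomial bound on $\Jlength{\prodmonoid}$; without these two ingredients your argument proves only a quadruply exponential upper bound. (A minor further inaccuracy: the monoid does not record which terminal letters spawned subderivations produce, only reachability between non-terminals before and after pushing/popping, via the relations $\Reach{X}$ and the pop matrices; this suffices because productiveness lets one erase side sentential forms when passing to the downward closure.)
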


We also provide an asymptotically matching lower bound, which we infer from the following result:

\begin{theorem}\label{construction-lower-bound}
	Given $n\in\NN$ (in unary encoding), we can compute in polynomial time an indexed grammar for the language $\{\ltr{a}^{\exp_3(n)}\}$.
\end{theorem}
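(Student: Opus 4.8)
The plan is to exploit the fact that indexed grammars simulate order-2 pushdown automata, which can count up to triply exponential values through iterated stack duplication. Concretely, I would build the grammar in three nested "layers," each layer responsible for one level of exponentiation, so that the overall derivation produces $\exp_3(n) = 2^{2^{2^n}}$ copies of the letter $\ltr{a}$.

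First I would recall the standard construction showing that an ordinary context-free grammar (equivalently, a $1$-PDA, i.e.\ an indexed grammar using only the first two production types) can produce $\ltr{a}^{2^n}$ from a grammar of size $O(n)$: take non-terminals $A_0,\dots,A_n$ with rules $A_i \to A_{i-1}A_{i-1}$ and $A_0 \to \ltr{a}$. This handles one exponential. The key step is then to "boost" this by one level using the stack. Following Remark~\ref{rmk:label-push}'s philosophy, I would push a word $f^k$ on the stack encoding a number $k$ (in unary on the stack, so of stack-length $k$); whenever a non-terminal carrying stack $f^k$ derives two children, both children inherit the stack $f^k$, so a binary derivation tree of height $k$ yields $2^k$ leaves each still carrying $f^k$. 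Popping the stack symbol by symbol while branching binarily at each pop step then multiplies the leaf count by another factor of $2^k$ per... more carefully: a non-terminal with stack $f^k$ that pops to $f^{k-1}$ and meanwhile spawns a full binary subtree of height $k$ before popping realizes the recurrence $g(k) = 2^k \cdot g(k-1)$, giving $g(k) = 2^{k + (k-1) + \cdots + 1} = 2^{\Theta(k^2)}$ — still only singly exponential in $k$. To actually gain a level of exponentiation one needs the genuine order-$2$ power: the operation of \emph{copying the entire top stack}. In the indexed-grammar formalism of this paper, copying is implicit, since in $uA[z]v \derivesindexed uB[z]C[z]v$ both children receive a full copy of the stack $z$; the subtlety is arranging that $z$ itself has already been made exponentially long by a \emph{lower} layer before it is duplicated and consumed.

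So the real construction is: the innermost layer uses plain context-free doubling to build, on the stack, a word of length $2^n$ over $\{f\}$ — or rather, since we cannot "put a CFG-generated word on the stack" directly, we instead push $n$ symbols $h_1,\dots,h_n$ and then, in a controlled pop phase, each pop of $h_i$ triggers a subderivation that doubles the number of $f$'s subsequently pushed, so that after the $h$-phase the stack contains $f^{2^n}$ below a bottom marker; this is the order-$2$ trick (duplication of the partially-built stack) and it yields a stack of length $\exp_1(n) = 2^n$. The middle layer then pops these $2^n$ copies of $f$ one at a time, and each pop, combined with binary branching as above, contributes a doubling; because the \emph{number} of symbols popped is already $2^n$, the number of leaves produced is $2^{2^n} = \exp_2(n)$. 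Finally the outermost layer: each of those $\exp_2(n)$ leaves, before terminating, itself carries a freshly-rebuilt stack of length... no — rather, we wrap the whole thing once more, using the fact that the middle layer can be made to run "inside" each pop step of an outer stack of length $2^n$, giving the third exponential, $\exp_3(n)$. I would formalize this by an explicit induction: define grammars $\Gg_1, \Gg_2, \Gg_3$ where $\Gg_{j+1}$ uses the stack to run $\exp_2(n)$-many (resp.\ $\exp_1(n)$-many) independent copies of $\Gg_j$'s derivation, and prove $\langIG{\Gg_j} = \{\ltr{a}^{\exp_j(n)}\}$ by induction on $j$, with the base case $j=0$ or $j=1$ being the elementary context-free doubling argument. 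All rules are over an index alphabet and non-terminal set of size $O(n)$ (or $O(\mathrm{poly}(n))$ after Chomsky-style normalization per the earlier remark), and the construction is clearly computable in polynomial time from $n$ in unary.

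The main obstacle I anticipate is \emph{bookkeeping the interface between layers}: ensuring that each layer runs to completion with its stack fully emptied before the next action of the enclosing layer, so that the counts multiply cleanly rather than interleaving and producing junk words or words of the wrong length. Concretely, I need each "$\ltr{a}$-producing" leaf of an inner derivation to appear \emph{exactly once} and the stack discipline (which symbol is on top when a pop rule fires) to be rigid enough that no unintended derivations exist — i.e.\ I must prove both $\ltr{a}^{\exp_3(n)} \in \langIG{\Gg}$ and that \emph{nothing else} is derivable, the latter requiring a careful invariant on the shape of reachable sentential forms. I would manage this by using disjoint non-terminal and index alphabets for the three layers and distinct bottom-of-stack markers $g_1, g_2, g_3$, so that the pop rule of layer $j$ can only fire on layer-$j$ symbols and the marker $g_j$ forces a clean handoff; a short induction on derivation length then pins down the reachable sentential forms and yields uniqueness of the generated word.
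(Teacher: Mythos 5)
There is a genuine gap, and it lies exactly at the step the paper identifies as the crux: how to force stacks of height exactly $2^n$ and branches of length $2^{2^n}$ with only polynomially many non-terminals. Your inner layer asks that "each pop of $h_i$ triggers a subderivation that doubles the number of $f$'s subsequently pushed," but no such operation exists in an indexed grammar: along a single branch of a derivation tree the stack evolves like an ordinary pushdown (push/pop one symbol, finite control), and the only duplication available is the branching rule $u A[z] v \derivesindexed u B[z]C[z] v$, which creates two \emph{separate} terms with stack $z$ each — it never produces a term with stack $zz$, and a count stored in the stack cannot be read back into subsequent pushes on the same branch beyond finite-state memory. So the proposed construction of a term carrying $f^{2^n}$ "by doubling" has no implementation. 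Your outer layer has a second, independent problem: nesting the middle layer "inside each pop step of an outer stack of length $2^n$" multiplies exponents rather than iterating the exponential — it yields at most $(2^{2^n})^{2^n}=2^{2^{2n}}$ leaves, not $2^{2^{2^n}}$. (Your own observation that $g(k)=2^k g(k-1)$ only gives $2^{\Theta(k^2)}$ is precisely this phenomenon one level up.) Since the out-degree of a derivation tree is at most $2$, producing $\exp_3(n)$ leaves forces branches of length about $2^{2^n}$, so some mechanism must make a single branch perform a doubly exponential, exactly controlled computation — and that is the part your layered scheme never supplies. The "interface bookkeeping" you flag as the main obstacle (disjoint alphabets, bottom markers) is not where the difficulty lies.

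The missing ideas are the two the paper builds on. First, each branch maintains a $2^n$-bit binary counter written on the stack and counts from $0$ to $2^{2^n}-1$, duplicating the term (via $D\to AA$) before each increment, so that the unique complete derivation tree has $\exp_3(n)$ leaves. Second — and this is the real trick — after an increment the grammar must restore the stack to height exactly $2^n$, which a polynomial-size control cannot do by counting; instead it pushes $\ltr{0}$'s non-deterministically and then uses \emph{alternation} to verify the height: a rule $A\xrightarrow{\Aa_1,\ldots,\Aa_n}B$ is simulated by splitting off auxiliary branches, each of which pops the stack while running a small DFA $\Aa_i$ and derives $\varepsilon$ only on acceptance. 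The DFAs themselves encode an $n$-bit counter over an alphabet of increment instructions, so their intersection contains a single word, of length $2^n-1$; this pins the stack height (and content pattern) exactly, and a final uniqueness argument shows no other terminal word is derivable. Without stack-height verification by alternation (or an equivalent device), non-deterministic pushing lets the grammar derive words other than $\ltr{a}^{\exp_3(n)}$, and with only deterministic finite control you cannot reach height $2^n$ exactly — so the construction as you outline it cannot be completed.
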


An NFA for $\dcl{\{\ltr{a}^{\exp_3(n)}\}}$ clearly requires at least
$\exp_3(n)$ states, implying \cref{main-lower-bound}.

\begin{corollary}\label{main-lower-bound}
	There is a family $(\Gg_n)_{n\ge 1}$ of indexed grammars of size polynomial in $n$
	such that any NFA for $\dcl{\langIG{\Gg_n}}$ requires at least $\exp_3(n)$
	states.
\end{corollary}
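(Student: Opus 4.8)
The plan is to obtain \cref{main-lower-bound} as an immediate consequence of \cref{construction-lower-bound} together with an elementary state lower bound for NFAs recognizing the downward closure of a single word.

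First, for each $n\ge 1$, let $\Gg_n$ be the indexed grammar produced by \cref{construction-lower-bound} on input $n$ (in unary), so $\Gg_n$ has size polynomial in $n$ and $\langIG{\Gg_n}=\{\ltr{a}^{\exp_3(n)}\}$. Since this is a single word, $\dcl{\langIG{\Gg_n}}=\{\ltr{a}^k \mid 0\le k\le \exp_3(n)\}$. It thus remains to show that every NFA recognizing a language of the form $L_N:=\{\ltr{a}^k\mid 0\le k\le N\}$ has at least $N+1$ states; applying this with $N=\exp_3(n)$ finishes the proof.

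For the state bound, let $\Aa$ be an NFA for $L_N$ with state set $Q$; we may assume $\Aa$ has no $\epsilon$-transitions, since $\epsilon$-elimination does not increase the number of states (alternatively, one argues directly on a shortest accepting run, as indicated below). Because $\ltr{a}^N\in L_N$, there is an accepting run of $\Aa$ on $\ltr{a}^N$, visiting a sequence of states $q_0,q_1,\dots,q_N$. If $|Q|\le N$, two of these states coincide, say $q_i=q_j$ with $i<j$, so the run contains a cycle reading $\ltr{a}^{j-i}$ with $j-i\ge 1$. Pumping this cycle once yields an accepting run on $\ltr{a}^{N+(j-i)}$, contradicting $\ltr{a}^{N+(j-i)}\notin L_N$. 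Hence $|Q|\ge N+1$. (If $\epsilon$-transitions are allowed, run the same argument on a \emph{shortest} accepting run on $\ltr{a}^N$: a repeated state again produces a cycle, which either reads a nonempty word—giving the same pumping contradiction—or reads $\epsilon$, contradicting minimality of the run; so again no state repeats and $|Q|\ge N+1$.)

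I expect the only real content to live in \cref{construction-lower-bound}: constructing, in polynomial time, an indexed grammar whose language is the single word $\ltr{a}^{\exp_3(n)}$ of triply exponential length. That is where a genuine three-level-exponential blow-up must be engineered—presumably by layering a doubly-exponential counter (already nontrivial to realize with the indexed stack mechanism) on top of another exponential level of nesting—and will be the main obstacle. By contrast, the passage from that theorem to the corollary is routine; the only mild subtlety is accounting for possible $\epsilon$-transitions in the NFA state count, which is handled by the shortest-run argument above.
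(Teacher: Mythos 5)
Your proposal is correct and follows essentially the same route as the paper: take the grammar for $\{\ltr{a}^{\exp_3(n)}\}$ from \cref{construction-lower-bound} and observe that an NFA for its downward closure needs at least $\exp_3(n)$ states (the paper phrases this as a trim NFA for a finite language being acyclic, which is the same elementary pumping observation you spell out). Your handling of $\epsilon$-transitions via a shortest accepting run is a harmless extra precaution, but adds nothing beyond the paper's argument.
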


\myparagraph{Deterministic automata.}
Of course, \cref{main-upper-bound} implies a quadruply exponential upper bound for deterministic automata:
\begin{corollary}\label{main-upper-bound-dfa}
Given an indexed grammar $\Gg$, one can compute (in quadruply exponential time) a DFA of at most quadruply exponential size for $\dcl{\langIG{\Gg}}$.
\end{corollary}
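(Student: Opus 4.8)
The plan is to derive \cref{main-upper-bound-dfa} as an immediate consequence of \cref{main-upper-bound} by the standard subset construction. First I would invoke \cref{main-upper-bound} to obtain, in triply exponential time, an NFA $\Aa$ for $\dcl{\langIG{\Gg}}$ whose number of states $m$ is at most $\exp_3(|\Gg|^c)$ for some constant $c$. Then I would apply the powerset construction to $\Aa$: the resulting DFA $\Aa'$ has at most $2^m$ states, recognizes the same language $\dcl{\langIG{\Gg}}$, and can be computed in time polynomial in its size, i.e.\ in time $2^{O(m)}$. Since $m\le\exp_3(|\Gg|^c)$, we get $2^m\le 2^{\exp_3(|\Gg|^c)}=\exp_4(|\Gg|^c)$, so $\Aa'$ has quadruply exponential size and is computed in quadruply exponential time. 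A brief remark that determinization of NFAs is effective and that the time to run the subset construction is linear in the (possibly exponential) output size suffices.

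There is essentially no obstacle here; the only minor point to state carefully is the bookkeeping of exponents, namely that feeding a triply exponential bound into one more exponential yields a quadruply exponential bound, which is exactly the definition of $\exp_{k+1}(n)=2^{\exp_k(n)}$ from the preliminaries. One could optionally note that the DFA need not be minimized for the bound to hold, and that the matching quadruply exponential lower bound (asserted in the contribution section and proved later via the downward closure construction underlying \cref{construction-lower-bound}) shows this blow-up is unavoidable, so the DFA bound in \cref{main-upper-bound-dfa} is tight. But for the statement as given, the proof is the one-line observation: determinize the NFA from \cref{main-upper-bound}.
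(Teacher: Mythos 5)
Your proof is correct and matches the paper's own (implicit) argument: the corollary is obtained simply by determinizing the triply exponential NFA from \cref{main-upper-bound} via the subset construction, and the exponent bookkeeping $2^{\exp_3(n^c)}=\exp_4(n^c)$ is exactly as you state.
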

Here, we have an asymptotically matching lower bound as well:
\begin{theorem}\label{main-lower-bound-dfa}
There is a family $(\Gg_n)_{n\ge 1}$ of indexed grammars of size polynomial in $n$
such that any DFA for $\dcl{\langIG{\Gg_n}}$ requires at least $\exp_4(n)$
states.
\end{theorem}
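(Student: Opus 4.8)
The plan is to build on \cref{construction-lower-bound} and the standard fact that for the singleton language a DFA blows up exponentially over the NFA. Concretely, for a fixed alphabet $\Sigma$ with $|\Sigma|\ge 2$, I would consider a language whose downward closure has the following shape: it should be the set of subwords of words of the form $\ltr{a}^{i_1}\ltr{b}\,\ltr{a}^{i_2}\ltr{b}\cdots$ arranged so that recognizing membership deterministically forces the automaton to ``count modulo'' a collection of incomparable moduli, each of size up to $\exp_3(n)$. The cleanest route is to reuse the classical gap construction for downward closures of $1$-PDAs from \cite{DBLP:conf/lata/BachmeierLS15,Zetzsche16}: there one has languages $K_m$ over $\{\ltr a,\ltr b\}$, of description size polynomial in the binary encoding of $m$, whose downward closure NFA is polynomial in $\log m$ but whose minimal DFA has roughly $m$ states, because the downward closure is (essentially) $\dcl{(\ltr a^{\le m}\ltr b)^*}\cup\cdots$ and distinguishing the ``$\ltr a$-block of length exactly $j$'' residues requires $m$ DFA states. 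The idea is to instantiate this construction with $m=\exp_3(n)$, which is possible because \cref{construction-lower-bound} already gives us an indexed grammar that generates $\ltr a^{\exp_3(n)}$ from a polynomial-size description, and then massage that grammar so that it generates the analogue of $K_{\exp_3(n)}$.

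The key steps, in order, are as follows. First, I would recall/establish the ``NFA-to-DFA gap'' template: there is a family of \emph{regular} languages $R_m\subseteq\{\ltr a,\ltr b\}^*$, downward-closed, such that every NFA for $R_m$ has $\Omega(m)$ states and every DFA has $\Omega(m)$ states — actually for the DFA lower bound I want a family of languages $L_m$ (not yet downward closed) whose downward closure $\dcl{L_m}=R_m$ has a minimal DFA of size $\Omega(m)$; the cleanest choice is $L_m=\{(\ltr a^{m}\ltr b)^k : k\ge 0\}$-style or simply $\{\ltr a^m\}\cup\{\ltr b^m\}\cup\cdots$ engineered so that the subword structure encodes a threshold counter up to $m$. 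Second, I would show this $L_m$ is an indexed language via a polynomial-in-$n$ grammar when $m=\exp_3(n)$: take the grammar $\Gg_n$ from \cref{construction-lower-bound} producing $\ltr a^{\exp_3(n)}$ and modify its terminal-output rules (and, if needed, add a context-free ``wrapper'' using the CFG fragment of indexed grammars) to produce the desired shape; this is routine because indexed grammars are closed under the relevant operations (substitution, concatenation with regular/CF languages) with only polynomial size overhead. Third, I would prove the DFA lower bound for $\dcl{L_{\exp_3(n)}}$ directly via a fooling-set / Myhill–Nerode argument: exhibit $\exp_4(n)$ pairwise inequivalent classes — typically indexed by ``how many $\ltr a$'s modulo $\exp_3(n)$ we have seen'', or by subsets that cannot be merged because the downward-closed target distinguishes them — yielding the $\exp_4(n)$ state bound. (The jump from $\exp_3$ to $\exp_4$ comes exactly from the subset-construction blow-up: an NFA of size $\exp_3(n)$ forces a DFA of size up to $2^{\exp_3(n)}=\exp_4(n)$, and one designs $L_m$ so this is attained.)

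The main obstacle I expect is the third step: it is not automatic that \emph{some} indexed language whose downward closure has an $\exp_3(n)$-state NFA actually has a \emph{DFA} requiring $\exp_4(n)$ states — one must carefully design $L_m$ so that the minimal DFA of the downward closure genuinely realizes the full exponential gap, and then certify the lower bound with an explicit distinguishing family. The known $1$-PDA construction of \cite{Zetzsche16} already achieves the analogous one-level gap (polynomial NFA vs.\ exponential DFA for downward closures of context-free languages), so the real content is (a) lifting its parameter $m$ to $\exp_3(n)$ using the order-$2$ counting power certified by \cref{construction-lower-bound}, keeping the grammar polynomial, and (b) verifying that taking the downward closure does not accidentally collapse the DFA-distinguishing structure. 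Both are technical but follow the same pattern as the NFA lower bound (\cref{main-lower-bound}); the Myhill–Nerode computation for the downward-closed language is where the care is needed, since downward closure can merge residue classes if the encoding is not chosen with a ``guard letter'' $\ltr b$ separating the blocks.
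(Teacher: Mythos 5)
Your overall strategy---adapt the known one-level DFA gap for downward closures of context-free languages \cite{DBLP:conf/lata/BachmeierLS15,Zetzsche16} and boost its parameter to $\exp_3(n)$ using the grammar from \cref{construction-lower-bound}---is exactly the route the paper takes, and the grammar-construction part is indeed routine (the paper does it via polynomial-time closure of indexed languages under rational transductions and simple substitutions, i.e.\ the $\Delta(\exp_3)$ property). However, there is a genuine gap at what you yourself flag as ``the main obstacle'': you never supply the witness language, and the concrete candidates you sketch would fail. Languages of the shape $\{(\ltr{a}^{m}\ltr{b})^k\}$ or $\{\ltr{a}^m\}\cup\{\ltr{b}^m\}$, i.e.\ anything whose subword structure only encodes a threshold or modulus counter up to $m=\exp_3(n)$, have downward closures with only about $m$ Myhill--Nerode classes (and in the $(\ltr{a}^m\ltr{b})^k$ case, if $k$ is unbounded the downward closure even degenerates to $\{\ltr{a},\ltr{b}\}^*$, since subwords may drop the separating $\ltr{b}$'s). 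That gives a DFA bound of $\exp_3(n)$, not $\exp_4(n)$. The intuition that ``an NFA of size $\exp_3(n)$ forces a DFA of size up to $2^{\exp_3(n)}$'' is only an upper bound on determinization and yields no lower bound by itself; the $\exp_4(n)$ classes must come from the language.

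What is missing is a language whose downward closure distinguishes \emph{all} binary strings of length $\exp_3(n)$ as prefixes. The paper (following the idea of \cite[Theorem 7]{DBLP:conf/lata/BachmeierLS15}) uses
$L_n=\{uv \mid u,v\in\{\ltr{0},\ltr{1}\}^*,\ |u|=|v|=\exp_3(n),\ u\ne v\}$,
obtained from $\{\ltr{a}^{\exp_3(n)}\}$ by transductions and a simple substitution (with guard letters to delimit the two halves before relabelling). The fooling argument is then direct, with no appeal to determinization: for distinct $u,u'$ of length $\exp_3(n)$, the word $uu'$ lies in $\dcl{L_n}$, but $u'u'$ does not---any word of length $2\exp_3(n)$ in $\dcl{L_n}$ must itself be in $L_n$---so a DFA reaching the same state after $u$ and $u'$ would wrongly accept $u'u'$. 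Hence $2^{\exp_3(n)}=\exp_4(n)$ states are needed. Without this (or an equivalent) construction and the accompanying Myhill--Nerode argument, your proposal does not establish the theorem.
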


\myparagraph{Downward closure comparisons.}
\cref{main-upper-bound} and our construction for \cref{construction-lower-bound} also allow us
to settle the complexity of decision problems related to downward closures. The ""downward closure inclusion problem"" (for indexed
languages) asks whether two given indexed languages $L_1,L_2$ satisfy
$\dcl{L_1}\subseteq\dcl{L_2}$.  Similarly, the ""downward closure
equivalence problem"" asks whether $\dcl{L_1}=\dcl{L_2}$.

In \cite[Corollary~18]{Zetzsche16}, it was shown that downward closure
inclusion and equivalence for indexed languages are $\coNEXP[2]$-hard. Here, we
settle their precise complexity:
\begin{theorem}\label{main-completeness}
Downward closure inclusion and downward closure equivalence for indexed
languages are $\coNEXP[3]$-complete.
\end{theorem}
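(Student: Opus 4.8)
For \textbf{membership in $\coNEXP[3]$}, given indexed grammars $\Gg_1,\Gg_2$ of total size $m$, I would first apply \cref{main-upper-bound} to compute, in triply exponential time, NFAs $\Aa_1,\Aa_2$ of size at most $\exp_3(m^c)$ satisfying $\lang{\Aa_i}=\dcl{\langIG{\Gg_i}}$. Since each $\lang{\Aa_i}$ is already downward closed, $\dcl{\langIG{\Gg_1}}\subseteq\dcl{\langIG{\Gg_2}}$ holds if and only if $(\Aa_1,\Aa_2)$ is a positive instance of downward closure inclusion for NFAs. This is the crucial step: while ordinary NFA inclusion would cost triply exponential \emph{space}, downward closure inclusion (and equivalence) for NFAs lies in $\coNP$~\cite{DBLP:conf/lata/BachmeierLS15,DBLP:journals/tcs/KarandikarNS16}, so there is a nondeterministic procedure certifying non-inclusion that runs in time polynomial in the sizes of the two input NFAs. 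Running it on $(\Aa_1,\Aa_2)$ takes nondeterministic time $\mathrm{poly}(\exp_3(m^c))\le\exp_3(m^{c'})$; prepending the deterministic triply-exponential construction of $\Aa_1,\Aa_2$ then places downward closure inclusion in $\coNEXP[3]$. Downward closure equivalence is a conjunction of two such inclusions and therefore also lies in $\coNEXP[3]$.

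For \textbf{$\coNEXP[3]$-hardness}, I would reduce from the complement of a standard $\coNEXP[3]$-complete problem: given a nondeterministic Turing machine $M$ and input $x$, decide whether $M$ has \emph{no} accepting computation on $x$ within $N=\exp_3(p(|x|))$ steps, for a fixed polynomial $p$. Any such accepting computation encodes, in the usual way, as a word over a fixed alphabet $\Gamma$ (with $\ltr{a}\notin\Gamma$) of length at most $N^2\le\exp_3(\mathrm{poly}(|x|))$. In polynomial time I would build indexed grammars $\Gg_1,\Gg_2$ over the unary alphabet $\set{\ltr{a}}$ with $\langIG{\Gg_2}=\set{\ltr{a}^{N}}$ and $\langIG{\Gg_1}=\set{\ltr{a}^{i}\mid 0\le i\le N}\cup X$, where $X=\set{\ltr{a}^{N+1}}$ if $M$ accepts $x$ and $X=\emptyset$ otherwise. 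Here $\Gg_2$ is exactly the grammar of \cref{construction-lower-bound} for $n=p(|x|)$; the grammar for $\set{\ltr{a}^{i}\mid i\le N}$ is a minor variant of it (let every terminal-producing production also be allowed to produce $\epsilon$, which over a unary alphabet yields all shorter words); and the component of $\Gg_1$ responsible for $X$ nondeterministically generates and verifies a candidate accepting computation of $M$ on $x$ cell by cell---using the index stacks to carry the $\exp_2(\mathrm{poly}(|x|))$-bit cell positions and to match consecutive configurations, with all of its nonterminals eventually rewriting to $\epsilon$---so that a successful verification emits nothing but unlocks a copy of the \cref{construction-lower-bound} gadget outputting $\ltr{a}^{N+1}$, while a failed attempt contributes only a short word. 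One then checks that $\dcl{\langIG{\Gg_1}}=\set{\ltr{a}^{i}\mid i\le N}$ when $M$ does not accept $x$ and $\dcl{\langIG{\Gg_1}}=\set{\ltr{a}^{i}\mid i\le N+1}$ otherwise, whereas $\dcl{\langIG{\Gg_2}}=\set{\ltr{a}^{i}\mid i\le N}$ always; hence $\dcl{\langIG{\Gg_1}}\subseteq\dcl{\langIG{\Gg_2}}$ (equivalently $\dcl{\langIG{\Gg_1}}=\dcl{\langIG{\Gg_2}}$, since $\dcl{\langIG{\Gg_2}}\subseteq\dcl{\langIG{\Gg_1}}$ holds unconditionally) if and only if $M$ does not accept $x$, establishing hardness for both problems.

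The \textbf{main obstacle} is the verification component of $\Gg_1$: one must show that the triply-exponential counting power witnessed by \cref{construction-lower-bound} is enough for a polynomial-size indexed grammar to simulate, and check runs of, an $\exp_3$-time nondeterministic Turing machine---a ``one level higher'' analogue of the classical context-free encoding of $\coNEXP[1]$ computations underlying the pushdown case~\cite{Zetzsche16}. The remainder of the argument is robust precisely because over the unary alphabet $\set{\ltr{a}}$ the downward closure degenerates to ``all lengths up to the maximum one'', so that the presence of a single long word in $\langIG{\Gg_1}$---equivalently, of an accepting computation of $M$ on $x$---is faithfully detected by the comparison.
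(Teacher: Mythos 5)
Your upper-bound argument is exactly the paper's: compute the triply-exponential NFAs via \cref{main-upper-bound} and then invoke the $\coNP$ procedure for downward closure inclusion/equivalence of NFAs \cite{DBLP:conf/lata/BachmeierLS15,DBLP:journals/tcs/KarandikarNS16}; that part is fine.

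The lower bound, however, has a genuine gap, and it is precisely the step you flag as the ``main obstacle'': the verification component of $\Gg_1$ cannot exist. You want a polynomial-size indexed grammar that nondeterministically generates \emph{and checks} an accepting computation of an $\exp_3$-time nondeterministic Turing machine, emitting $\ltr{a}^{N+1}$ exactly when the check succeeds. Restrict attention to that component (start from the nonterminal that launches the verification): its language would be non-empty if and only if $M$ accepts $x$ within $\exp_3(p(|x|))$ steps, and the grammar is constructible in polynomial time. This is a polynomial-time reduction from an $\mathrm{NTIME}(\exp_3)$-complete problem to emptiness of indexed grammars, which is decidable in deterministic exponential time \cite[Theorem 7.12]{DBLP:journals/iandc/Engelfriet91}; by the time hierarchy theorem this is impossible. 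The same objection already applies one level down: a polynomial-size CFG cannot verify $\coNEXP[1]$-computations either (CFG emptiness is in $\PTIME$), so the ``classical context-free encoding'' you appeal to in \cite{Zetzsche16} does not work by having the grammar verify the run. In those hardness proofs the run verification is offloaded to the downward-closure \emph{comparison} itself: candidate computations and their possible faults are encoded in two languages, and $\dcl{L_1}\subseteq\dcl{L_2}$ fails exactly when some candidate is fault-free. The grammars themselves never need to decide anything beyond what an effective full trio with polynomial-time transductions can do.

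This is why the paper does not attempt a direct reduction. It shows that indexed grammars form a $\Delta(\exp_3)$ class (the only nontrivial ingredient being the polynomial-size grammar for $\set{\ltr{a}^{\exp_3(n)}}$ from \cref{construction-lower-bound}, together with polynomial-time rational transductions and closure under simple substitutions) and then applies \cite[Theorem 15]{Zetzsche16}, which yields $\coNTIME(\exp_3)$-hardness of both problems for any such class. If you want a self-contained proof, you would have to reprove that theorem's machinery, not replace it by a grammar that simulates the machine.
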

Here, the upper bounds follow from \cref{main-upper-bound} and the fact that
"downward closure inclusion" and "equivalence@downward closure equivalence" are
$\coNP$-complete for NFAs (see \cite[Section 5]{DBLP:conf/lata/BachmeierLS15}
and \cite[Proposition 7.3]{DBLP:journals/tcs/KarandikarNS16}).

\myparagraph{The pumping threshold for indexed languages.}
Our lower bound technique also settles the growth of the "pumping threshold" of indexed grammars. Consider the function
\begin{multline*}
	\intro*\PumpConst(n)=\max\{|w| \mid \text{there is an indexed grammar $\Gg$ of size $\le n$}\\
	\text{with $w\in\langIG{\Gg}$ such that $\langIG{\Gg}$ is finite}\} 
\end{multline*}
We call $\PumpConst(n)$ the ""pumping threshold"" (for size $n$) because
placing an upper bound on $\PumpConst(n)$ usually involves a pumping argument.

An analogous pumping threshold function for NFAs and CFGs is well-understood:
It is linear for NFAs (see~\cite{DBLP:journals/corr/abs-2309-02757} for related
results) and exponential for CFGs (exact bounds can be found
in~\cite{DBLP:conf/birthday/Gruber025}). For indexed grammars, there are two
proofs of a triply exponential upper bound: the pumping lemmas of
Hayashi~\cite[Theorem 5.1]{Hayashi1973} and Smith~\cite[Theorem
1]{DBLP:journals/iandc/Smith17} (Smith's proof mentions the bound explicitly;
Hayashi's proof requires some analysis for this). We complement this by showing
a triply exponential lower bound:
\begin{corollary}
$\PumpConst$ grows at least triply exponentially.
\end{corollary}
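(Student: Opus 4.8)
The plan is to read the corollary off \cref{construction-lower-bound} together with a short inversion argument. Let $(\Gg_n)_{n\ge 1}$ be the family of "indexed grammars" supplied by that theorem, so that $\langIG{\Gg_n}=\{\ltr{a}^{\exp_3(n)}\}$. Since $\Gg_n$ is computed in polynomial time, there is a fixed polynomial $p$ such that the size of $\Gg_n$ is at most $p(n)$ (a polynomial-time procedure cannot write down more than polynomially many symbols). Each $\langIG{\Gg_n}$ is a singleton, hence finite, and it contains the word $\ltr{a}^{\exp_3(n)}$, which has length $\exp_3(n)$. By the definition of the "pumping threshold", this already gives
\[
	\PumpConst(p(n)) \;\ge\; \exp_3(n)\qquad\text{for every } n\ge 1 .
\]

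The remaining work is to translate this into the statement that $\PumpConst$ grows at least triply exponentially, i.e.\ that $\PumpConst(m)\ge\exp_3(m^{c})$ for some fixed $c>0$ and infinitely many $m$. First I would observe that $\PumpConst$ is non-decreasing: adjoining to an "indexed grammar" of size $m$ a fresh non-terminal that occurs in no production produces a grammar of size $m+1$ with the same (finite) language, so $\PumpConst(m+1)\ge\PumpConst(m)$. Let $d=\deg p$, and assume without loss of generality that $p$ is eventually increasing with $p(n)\le n^{d+1}$ for all large $n$. Given a large $m$, choose the largest $n$ with $p(n)\le m$; then $n=\Omega(m^{1/(d+1)})$, and monotonicity yields $\PumpConst(m)\ge\PumpConst(p(n))\ge\exp_3(n)\ge\exp_3(m^{c})$ for any fixed $c<\tfrac{1}{d+1}$ and all sufficiently large $m$ (in particular for infinitely many $m$). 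This establishes the corollary.

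I do not expect a genuine obstacle in this argument: the entire difficulty resides in \cref{construction-lower-bound}, whose construction may be assumed. The only points that deserve brief care are that it is the \emph{size} of $\Gg_n$, not merely the time to build it, that is polynomial---immediate, since the grammar must be emitted---and the monotonicity-by-padding step, which is needed because \cref{construction-lower-bound} controls $\PumpConst$ only at the sparse arguments $p(1),p(2),\dots$.
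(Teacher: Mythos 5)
Your proposal is correct and follows the same route as the paper: the corollary is read off \cref{construction-lower-bound}, since $\{\ltr{a}^{\exp_3(n)}\}$ is finite yet has a polynomial-size indexed grammar, so $\PumpConst$ at polynomially bounded sizes is at least $\exp_3(n)$. The paper leaves the polynomial-inversion/monotonicity bookkeeping implicit (the definition of ``at least triply exponential'' only requires infinitely many arguments, so even the padding step is dispensable), but spelling it out as you do is harmless and accurate.
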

This follows from \cref{construction-lower-bound}, because
$\{\ltr{a}^{\exp_3(n)}\}$ is finite but has an indexed grammar of size
polynomial in $n$.  It should be noted that \cite[Section
7]{DBLP:conf/csl/Kartzow11} claims a doubly exponential upper bound for
$\PumpConst$. Unfortunately, as confirmed by the author of
\cite{DBLP:conf/csl/Kartzow11}, that is due to a miscalculation, see
\whenFull{\cref{app:results}}{the full version~\cite{MandelMZ26arxiv}} for a
discussion.
\begin{remark}
	A triply exponential upper bound on $\PumpConst(n)$ also follows from
	our \cref{main-upper-bound}: If an indexed grammar $\Gg$ generates a
	word that is longer than the number of states of an NFA for
	$\dcl{\langIG{\Gg}}$, then $\dcl{\langIG{\Gg}}$ must be infinite, and hence also $\langIG{\Gg}$.
\end{remark}

\myparagraph{Structure of the paper.} In
\cref{sec:sound,sec:monoid,sec:pump-skip,sec:summaries,sec:CFG}, we will
prove \cref{main-upper-bound}, from which all remaining upper bounds follow. In
\cref{sec:lower-bound}, we will then prove all lower bounds.

\section{Productiveness}\label{sec:sound}

For the rest of the paper, we assume that our input grammar $\Gg$ generates a
non-empty language.  This is because emptiness of indexed languages can be decided in
$\EXPTIME$~\cite[Theorem 7.12]{DBLP:journals/iandc/Engelfriet91}, and if
$\langIG{\Gg}$ is empty, an NFA for $\dcl{\langIG{\Gg}}$ is immediate. 

However, we will need to establish the stronger guarantee of \emph{productiveness},
which expresses the absence of deadlocks. This means that if we can produce a "sentential form" $u$, then from $u$ we can derive a terminal word in $T^*$.

\myparagraph{Productive grammars.} \AP We say that $\Gg$ is ""productive@@grammar"" if for all $u \in \langSF{S}$ we have $\langX{u}{\emptyset} \neq \emptyset$, that is, from every "sentential form" obtained from $S$ we can derive a word in $T^*$.

This property is especially useful for "downward closure" computation.
In a "productive@@grammar" indexed grammar, we can observe:
for all $u \in \langSF{S}$, for all $v$ such that $v \subword u$, we have $\langX{v}{\emptyset} \subseteq \dcl{\langX{u}{\emptyset}}$. In other words, we can interleave derivations and deletion of terms without any risk of obtaining ``extra'' terminal words. This property does not hold in general, because deleting "terms" that cannot produce terminal words may allow the derivation of a terminal word that is not in the downward closure.

\begin{example}
	The following grammar $\Gg$ is not "productive@@grammar".
\begin{align*}
	&S \to S \bot& &S \to Sf & &S \to Sg& \\
	&S \to AA& &S\to AAC& & &\\
	&Af \to \ltr{a} A& &Ag \to \ltr{b}&  & &\\
	&Cf \to \ltr{c} C& &Cg \to CAB&  & &\\
	&A\bot \to \epsilon & &C\bot \to \epsilon.& & & 	
\end{align*}
The language of $\Gg$ is $\{w^2 \mid w\in \{\ltr{a},\ltr{b}\}^*  \} \cup \{\ltr{a}^{2n}\ltr{c}^n \mid n > 0\}$. In particular, by setting $u=A[gf\bot]A[gf\bot]C[f\bot]A[f\bot]B[f\bot]$ and $v = C[f\bot]A[f\bot]$ we have $v\subword u$ and $u\in \langSF{S}$, but $\ltr{c}\ltr{a} \in \langX{v}{\emptyset}$ while $ \dcl{\langX{u}{\emptyset}} = \emptyset$ (since $B$ cannot produce a terminal word). Moreover, $ca \not\in \dcl{\langIG{\Gg}}$. In this case, it is easy to modify $\Gg$ to obtain a "productive@@grammar" grammar which generates the same language.
\end{example}

\myparagraph{Tracking productivity of non-terminals.} Before we 
describe our method for achieving "productiveness@@grammar", we observe that tracking the non-terminals which, with a given stack content, can derive a "sentential form" with terms confined to a certain subset, gives rise to a left semigroup action.
\begin{definition}
	For  $X \subseteq N$ and $z \in I^*$, we define
	\[ z \cdot X=\set{A \in N \mid \langX{A[z]}{X} \neq \emptyset}, \]
	i.e.\ $z\cdot X$ is the set of non-terminals $A$ so that the "term" $A[z]$ can derive a "sentential form" consisting of terminals and empty-stack occurrences of non-terminals in $X$.
	Let $\Useful$ be defined as the set of non-terminals which can derive a word in $T^*$, i.e. 	\[\intro*\Useful = \set{A \in N \mid \langX{A}{\emptyset} \neq \emptyset}.\]
\end{definition}
Note that $\langIG{\Gg}$ is empty if and only if $S \notin \Useful$.

We will often rely on the fact that $I^*$ acts (on the left) as a semigroup\footnote{However, it does not act as a monoid, because $\varepsilon\cdot X$ it not necessarily $X$, as every set $z\cdot X$ includes $\Useful$.} on the power set $\powerset{N}$, as we state now:
\begin{restatable}{lemma}{lemSemigroupActionProperties}\label{lem:semigroup-action-properties}
	For every $f\in I$, $z\in I^*$, and $X\subseteq N$, we have $fz\cdot X=f\cdot (z\cdot X)$. Moreover, $z\cdot\Useful=z\cdot\emptyset$.
\end{restatable}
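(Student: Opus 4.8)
The plan is to prove both claims by unwinding the definition of the semigroup action $z \cdot X = \{A \in N \mid \langX{A[z]}{X} \neq \emptyset\}$ and reasoning about how derivations from $A[fz]$ decompose.

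\textbf{First claim: $fz \cdot X = f \cdot (z \cdot X)$.} The key observation is that in any derivation from $A[fz]$, the topmost stack symbol $f$ can only be consumed by a pop rule $A'f \to B'$, and at the moment this happens, the stack below (namely $z$) is exactly what remains; moreover, since our grammar is in a Chomsky-like normal form where the stack is shared by both children of a binary rule, the entire derivation from $A[fz]$ can be organized so that we first run a ``$z$-independent'' phase that only sees $f$ on top (treating $fz$ as if the stack were just $f$ over an abstract bottom), producing a sentential form whose terms all have stack content $z$, and then continue independently below. I would formalize this by induction on the length of a witnessing derivation. For the forward inclusion $fz \cdot X \subseteq f \cdot (z \cdot X)$: if $A \in fz \cdot X$, take a derivation $A[fz] \derivesindexedstar u$ with $u \in (X \cup T)^*$; every non-terminal occurrence in $u$ has empty stack, so along each branch the symbol $f$ (and then all of $z$) must have been popped. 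Replacing the suffix $z$ of every stack by $\varepsilon$ throughout this derivation and noting that the portions of the derivation that pop into $z$ can be ``detached'', one extracts a derivation $A[f] \derivesindexedstar u'$ where $u'$ is $u$ with each terminal-deriving subderivation below the $f$-pop replaced by the corresponding non-terminal from $z \cdot X$; this witnesses $A \in f \cdot (z \cdot X)$. The converse inclusion is the easier direction: from a derivation $A[f] \derivesindexedstar u'$ with $u' \in ((z \cdot X) \cup T)^*$, substitute for each occurrence of a non-terminal $B \in z \cdot X$ a derivation $B[z] \derivesindexedstar u_B$ with $u_B \in (X \cup T)^*$ (which exists by definition of $z \cdot X$), after first pushing $z$ back onto every stack in the derivation of $u'$; this yields $A[fz] \derivesindexedstar u$ with $u \in (X \cup T)^*$.

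\textbf{Second claim: $z \cdot \Useful = z \cdot \emptyset$.} The inclusion $z \cdot \emptyset \subseteq z \cdot \Useful$ is immediate since $\emptyset \subseteq \Useful$ and the action is clearly monotone in the second argument (a derivation witnessing $\langX{A[z]}{\emptyset} \neq \emptyset$ also witnesses $\langX{A[z]}{\Useful} \neq \emptyset$). For the reverse inclusion: suppose $A \in z \cdot \Useful$, so there is a derivation $A[z] \derivesindexedstar u$ with $u \in (\Useful \cup T)^*$. Every non-terminal $B$ occurring in $u$ lies in $\Useful$, so by definition of $\Useful$ there is a derivation $B[\varepsilon] \derivesindexedstar w_B$ with $w_B \in T^*$. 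Concatenating the derivation of $u$ with these derivations (one per non-terminal occurrence) gives $A[z] \derivesindexedstar u'$ with $u' \in T^* \subseteq (\emptyset \cup T)^*$, hence $A \in z \cdot \emptyset$.

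\textbf{Main obstacle.} The routine parts are the substitution arguments; the delicate step is the forward inclusion of the first claim, i.e.\ justifying that a derivation from $A[fz]$ really can be ``factored'' through an intermediate sentential form in which $f$ has been popped along every branch but $z$ is still intact, so that it projects cleanly to a derivation from $A[f]$. The subtlety is that binary rules $A \to BC$ copy the stack to both children, so the pop of $f$ may happen at different times on different branches and after arbitrary interleaving; one must argue that reordering/restructuring the derivation (pushing all ``below-$f$'' work to the end of each branch) is sound, which ultimately rests on the locality of indexed-grammar rules — each rule inspects only the top of the stack it acts on. I would handle this cleanly by working with "derivation trees" rather than linear derivations: in a derivation tree from $A[fz]$ to a word over $(X \cup T)$, each branch has a highest node whose stack is exactly $z$ (the child of the node applying the pop rule $A'f \to B'$ that removes this particular copy of $f$); cutting the tree along this frontier yields a tree from $A[f]$ (after deleting the $z$-suffix from all stacks above the frontier) whose leaf word is over $((z \cdot X) \cup T)$, and the subtrees hanging below the frontier witness membership of the corresponding non-terminals in $z \cdot X$.
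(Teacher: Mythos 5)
Your proof is correct and, for the main identity $fz\cdot X=f\cdot(z\cdot X)$, follows essentially the same route as the paper: the easy inclusion by pushing $z$ onto every stack in a derivation from $A[f]$ and then substituting witnessing derivations for the non-terminals of $z\cdot X$, and the hard inclusion by cutting a derivation tree from $A[fz]$ along the frontier where the stack first becomes exactly $z$. One small imprecision: it is not true that along every branch the symbol $f$ must be popped --- a branch may end with a terminal production $A'\to w$, which discards the whole stack, before the original $f$ is ever exposed; the paper therefore takes as frontier, on each branch, the first node labelled either by a word in $T^*$ or by a term $B[z]$. This is cosmetic, since such terminal leaves survive the cut unchanged and the rest of your argument goes through verbatim. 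For the second claim you argue directly, substituting terminal derivations for the occurrences of non-terminals in $\Useful$, whereas the paper deduces it by induction on $z$ from the first identity (the base case being exactly your substitution argument); both are immediate and equally valid.
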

Here, only the inclusion $fz\cdot X\subseteq f\cdot (z\cdot X)$ is not trivial:
It holds because a derivation that eliminates $fz$ from the stack must in each
branch first remove $f$, and then $z$. 
\whenFull{The proof is in
\cref{app:semigroup-action-properties}.}{}

\myparagraph{Achieving productiveness.} \AP We are now ready to present our
construction of a productive equivalent of $\Gg$. Formally, the ""annotated
version"" of $\Gg$, denoted $\Ggb = (\Nb, T, \Ib, \Pb, \Sb)$, is defined as
follows.
The general idea is to integrate in to label every point in the stack with a set of non-terminals, which are the non-terminals $A$ such that if we take the stack content $z'$ from that point, one can derive a terminal word from $A[z']$. By maintaining this information, we avoid producing "terms" that cannot derive a terminal word.

 Recall that we assumed $\langIG{\Gg} \neq \emptyset$, thus $S \in \Useful$.
Otherwise, we have:
\begin{itemize}[leftmargin=5mm]
	\item $\Nb = \set{(A,X) \mid X \subseteq N, A \in X}$, $\Ib = I \times 2^N$, and $\Sb = (S, \Useful)$,
	\item $\Pb$ contains the following rules:
		\begin{itemize}[leftmargin=5mm]
		\item $(A, X) \to w$ for all $A \to w \in P$ with $A\in X$ and $w \in T^*$
		\item $(A, X) \to (B,X) (C,X)$ for all $A \to B C \in P$ with $A,B,C \in X$
		\item $(A, X) \to (B,Y)(f, X)$ for all $A \to B f \in P$ with $Y = f \cdot X$, $A \in X$  and $B \in Y$
		\item $(A, Y) (f, X) \to (B,X) $ for all $A f \to B \in P$ with $Y = f \cdot X$, $A \in Y$ and $B \in X$ 
	\end{itemize}
\end{itemize}

\knowledgenewrobustcmd{\annotate}[2]{\cmdkl{\overline{#1}^{#2}}}
Note that a stack word $\zb = (f_n, X_n) \cdots (f_1, X_1) \in \Ib^*$ appearing as an infix of a stack content in a derivation of $\Ggb$ must be so that $X_i = f_{i-1}\cdot X_{i-1}$ for all $i>1$. A stack word satisfying this property is determined by its projection onto $I^*$ and its last element. 
Given $z = f_n\cdots f_1 \in I^*$ and $X \subseteq N$, define $\intro*\annotate{z}{X}$ as $(f_n, X_n) \cdots (f_1, X_1)$ with $X_1 = X$ and $X_{i+1} = f_i \cdot X_i$ for all $i<n$. Given $A [z]\in NI^+$ with $z=f_n \cdots f_1 $ and $A \in z \cdot X$, the ""$X$-based annotation"" of $A[z]$ is the $\Ggb$ "term" $(A,Y)[\zb] = (A,Y) [(f_n,X_n) \cdots (f_1, X_1)] \in \Nb \,\Ib^+$ such that $X_1 = X$, $Y = f_n \cdot X_n$ and $X_i = f_{i-1}\cdot X_{i-1}$ for all $i>1$.
In particular, by Lemma~\ref{lem:semigroup-action-properties} we have $X_i = f_{i-1} \cdots f_1 \cdot X$ for all $i$ and $Y = z \cdot X$. 
In the case of an empty stack,the "$X$-based annotation" of $A$ is $(A,X)$. 

\myparagraph{Correctness of the construction.} Having defined $\Ggb$, we can
prove that it serves its purpose:
\begin{restatable}{lemma}{lemSoundness}\label{lem-soundness}
	$\Gg$ and $\Ggb$ have the same language, and $\Ggb$ is "productive@@grammar".
\end{restatable}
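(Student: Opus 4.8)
The plan is to prove the two assertions separately, with the language equality split into two inclusions. For $\langIG{\Ggb}\subseteq\langIG{\Gg}$, I would define a projection $\projection$ that erases the second component of every non-terminal and every stack symbol, i.e.\ sends $(A,X)$ to $A$ and $(f,X)$ to $f$, extended to terms, sentential forms, and derivation trees. A one-step inspection of the four rule types of $\Pb$ shows that $\projection$ maps each rule of $\Ggb$ to a rule of $\Gg$ (the side conditions $Y=f\cdot X$, $A\in X$, etc., are simply dropped), so by induction on the length of a derivation, $\Sb\derivesindexedstar_{\Ggb} u$ implies $S=\projection(\Sb)\derivesindexedstar_{\Gg}\projection(u)$; restricting to $u\in T^*$ (on which $\projection$ is the identity) gives the inclusion.

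For the reverse inclusion $\langIG{\Gg}\subseteq\langIG{\Ggb}$, the key is the annotation operation $\annotate{z}{X}$ and the ``$X$-based annotation'' of a term, already set up in the excerpt. I would prove, by induction on the structure of a complete derivation tree, the following strengthened statement: for every $A[z]$, every $X\subseteq N$ with $A\in z\cdot X$, and every $u\in\langX{A[z]}{X}$ (a sentential form with empty stacks and non-terminals confined to $X$), the $X$-based annotation of $A[z]$ derives in $\Ggb$ the sentential form obtained from $u$ by replacing each $B\in X$ occurring in it with $(B,X)$. The base case is a rule $A\to w$ with $w\in T^*$, handled by the first family of rules of $\Pb$. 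In the inductive step, a derivation $A[z]\derivesindexed_{\Gg}\cdots$ begins with one of the four rule types; for $A\to BC$ one keeps the same $X$; for $A\to Bf$ one passes to $Y=f\cdot X$ for the child's stack (which is exactly $\annotate{fz}{X}$, consistent with \cref{lem:semigroup-action-properties}); for $Af\to B$ with $z=fz'$ one pops the top stack symbol $(f,X_1)$ where $X_1$ is the annotation that $z'$ carries. Each time, the relevant side conditions ($A\in X$, $B\in Y$, $Y=f\cdot X$, $A\in Y$, $B\in X$) are precisely what \cref{lem:semigroup-action-properties} together with $A\in z\cdot X$ and the existence of a terminating derivation guarantees to hold, so the matching $\Ggb$-rule is available. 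Applying this with $A=S$, $z=\varepsilon$, $X=\Useful$, so that the $\Useful$-based annotation of $S$ is $\Sb=(S,\Useful)$, and with $u\in T^*$, gives any $u\in\langIG{\Gg}$ as an element of $\langIG{\Ggb}$.

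For productiveness of $\Ggb$, I would show by induction along derivations of $\Ggb$ from $\Sb$ that every term $(A,X)[\zb]$ appearing in any sentential form reachable from $\Sb$ satisfies the invariant: $\zb=\annotate{z}{X_0}$ for some $z\in I^*$ and $X_0$ with $X=z\cdot X_0$ (equivalently, reading the second components along the stack they evolve by the action and the top value $X$ equals $z\cdot X_0$), and $A\in X$. This invariant holds for $\Sb=(S,\Useful)$ since $S\in\Useful$ and is preserved by each of the four rule types---this is the content of the remark in the excerpt about $X_i=f_{i-1}\cdot X_{i-1}$, and the conditions in $\Pb$ were designed exactly to maintain ``$A\in X$ with $X=\text{(stack)}\cdot(\text{bottom annotation})$''. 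Given the invariant, $A\in X = z\cdot X_0$, hence by definition of the action $\langX{A[z]}{X_0}\ne\emptyset$ in $\Gg$; tracing such a derivation and annotating it (again using \cref{lem:semigroup-action-properties} to check side conditions) yields a derivation in $\Ggb$ from $(A,X)[\zb]$ to a sentential form over $T$ and terms of the form $(B,X_0)$ with $B\in X_0$, each of which is again productive by the same argument; iterating drives everything down to $T^*$. Assembling these per-term derivations shows $\langX{v}{\emptyset}\ne\emptyset$ in $\Ggb$ for every $v\in\langSF{\Sb}$.

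The main obstacle is the bookkeeping in the reverse language inclusion and in the productiveness argument: one must track, branch by branch, how the annotation sets evolve when symbols are pushed and popped, and verify that the side conditions of the $\Pb$-rules are met at exactly the right moments. This is where \cref{lem:semigroup-action-properties} does the real work---the identity $fz\cdot X=f\cdot(z\cdot X)$ is what lets a global annotation (computed from the whole stack) agree with the local annotation seen after a pop, and $z\cdot\Useful=z\cdot\emptyset$ is what makes the choice of $\Useful$ at the root harmless. The remaining steps are routine structural inductions on derivation trees.
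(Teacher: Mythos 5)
Your treatment of the language equality is essentially the paper's: the easy inclusion via the projection that erases annotations (rule-by-rule, then induction), and the converse via exactly the paper's consistency lemma (\cref{lem-consistent}), proved by the same induction with the same case analysis on the first rule applied; this part is fine.

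The productiveness argument, however, has a genuine gap. Your invariant only asserts that each reachable term $(A,X)[\zb]$ satisfies $\zb=\annotate{z}{X_0}$ and $X=z\cdot X_0$ \emph{for some} $X_0\subseteq N$, with $A\in X$. That is too weak to yield productiveness, and your concluding step does not go through. From $A\in X=z\cdot X_0$ you only get a derivation of $A[z]$ to a sentential form over $X_0\cup T$, hence a $\Ggb$-derivation from $(A,X)[\zb]$ to a form whose terms are $(B,X_0)$ with $B\in X_0$; for an arbitrary $X_0$ nothing forces these to be productive, and applying \emph{the same argument} to $(B,X_0)$ is circular: $B\in X_0$ trivially gives $B\in\varepsilon\cdot X_0$, so the consistency lemma only returns forms over $(X_0\times\{X_0\})\cup T$ again, with no measure decreasing, so the iteration never forces arrival in $T^*$. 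Concretely, the implication (invariant $\Rightarrow$ productive) is false: take a grammar with rules $B\to BD$, $D\to\varepsilon$ and no other rule for $B$ (add $S\to\ltr{a}$ so the standing non-emptiness assumption holds), so $B\notin\Useful$; with $X=\varepsilon\cdot\{B\}$ the term $(B,X)$ satisfies your invariant (take $z=\varepsilon$, $X_0=\{B\}$), yet every $\Ggb$-form derivable from $(B,X)$ still contains $(B,X)$, so it is not productive. The fix is to prove the stronger invariant the paper uses: every term reachable from $\Sb$ is the $\Useful$-based annotation of some $A[z]$, i.e.\ $X_0$ is always $\Useful$ (this is what the induction actually preserves, since every stack pushed from $\Sb=(S,\Useful)$ carries $\Useful$ as its bottom annotation). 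Then the second half of \cref{lem:semigroup-action-properties}, namely $z\cdot\Useful=z\cdot\emptyset$ --- which you mention only in passing --- gives $A\in z\cdot\emptyset$, i.e.\ $A[z]$ derives a word of $T^*$ in $\Gg$, and a single application of the consistency lemma transports that terminal derivation to $(A,X)[\zb]$ in $\Ggb$; no iteration is needed.
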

\whenFull{The mostly straightforward proof can be found in \cref{app:soundness-correctness}.}{}

\begin{remark}\label{rem-noblackbox}
Although we are able to turn any "indexed grammar" into a "productive@@grammar" one with the same language, this transformation incurs an exponential blow-up in the size. Therefore, in order to obtain tight complexity bounds, we do not simply assume productiveness of the input grammar. Rather, we work with "annotated version"s explicitly when needed.
\end{remark}

\section{The stack monoid}\label{sec:monoid}

\myparagraph{Overall goal.}
Another key aspect of our construction is the "stack monoid"---a finite monoid in which we evaluate stack contents.
Essentially, we map a stack $\zb \in \Ib^*$ to a monoid element that encodes\footnote{Notice that there is a dissymmetry between push and pop here (in fact, throughout the construction). This is because a stack symbol is only pushed once, but may be popped on multiple branches.}
\begin{itemize}
	\item the non-terminals from and to which this content was pushed (which are unique thanks to Remark~\ref{rmk:label-push}), and
	
	\item for each pair $A,B$ of non-terminals, whether we can derive a "sentential form" containing $B$ from $A[\zb]$ (i.e., whether we can obtain a $B$ by popping $\zb$ from $A$). 	
\end{itemize}
The purpose of this encoding is that based on this information, we will be able to simplify (or expand) stack contents during derivations, \emph{while preserving the "downward closure"}.
For example, if we have a derivation from $A[z]$ to a "sentential form" $uBv$, then we could just erase $u$ and $v$, thus turning $A[z]$ into $B$.
Even better, if we have a derivation from $A[z]$ to $A$ then, roughly speaking, this means we can turn any term $A[zz']$ into $A[z']$.

Observations like this, based on the stack monoid, will be used in \cref{sec:pump-skip} to define more involved stack manipulations. In \cref{sec:summaries}, these will allow us to reduce, roughly speaking, every stack to one of doubly exponentially many.

\myparagraph{Semigroup terminology.} \AP We begin with some terminology.
A semigroup $(\mathbf{S}, \cdot)$ is a set equipped with an associative product.
We will often identify a semigroup with its set of elements and denote it simply as $\mathbf{S}$, when the product operation is clear.
A monoid $(\mathbf{M}, \cdot, \mathbf{1}_{\mathbf{M}})$ is a semigroup with a neutral element $\mathbf{1}_{\mathbf{M}}$.

\AP For each monoid $(\mathbf{M}, \cdot)$, we define $\intro*\evalmorph{\mathbf{M}} : \mathbf{M}^* \to \mathbf{M}$ to be its ""evaluation morphism"", which maps a sequence of elements of $\mathbf{M}$ to its product, with the empty sequence being mapped to $\mathbf{1}_{\mathbf{M}}$.
An element $x \in \mathbf{M}$ is ""idempotent"" if $x \cdot x = x$.
The set of "idempotent" elements of $\mathbf{M}$ is denoted $\intro*\idempotents{\mathbf{M}}$.

In all that follows we fix an "indexed grammar" $\Gg =(N,T,I,P,S)$. We will mainly work with its "annotated version" $\Ggb$. Note that we cannot use the annotation construction as a black box and simply assume that the input grammar is "productive@@grammar" (see Remark \ref{rem-noblackbox}).

\myparagraph{Formal definition.} Let us now start by describing formally the monoid at hand. It is a bit more involved than what is described above since we have to account for the productiveness issues explained in the previous section. 
\AP We will use the boolean matrix monoid over non-terminals of $\Gg$: This monoid is defined as the set of matrices $\matrixmonoid$, with $\mathbb{B}=\set{\top, \bot}$. The product is the usual product of matrices over the Boolean semiring $(\mathbb{B}, \lor, \land)$, and the neutral element is the matrix with $\top$ on the diagonal and $\bot$ everywhere else.
Given matrices $M_1, M_2$, we write $M_1 M_2$ for their product.

\AP We also define, for all $X \subseteq N$, the ""reachability relation"" $\intro*\Reach{X}$ over $N$.
It relates two non-terminals if from the first one we can produce a "sentential form" containing the second one, with empty stacks. Formally,

\begin{center}
	$A \Reach{X} B$ if and only if there is a derivation $(A,X) ~\derivesindexedstar_{\Ggb}~ u (B, X) v$ with $u,v \in \sentforms$
\end{center}
(note that due to the "productiveness@@grammar" property, requiring stack emptiness is only important for $(B,X)$).
\AP Let $Q$ be the set of tuples $(B,Y, M, A, X)$ with  $A, B \in N$ non-terminals, $X, Y \subseteq N$ sets of non-terminals, and $M \in \matrixmonoid$. Define the ""stack monoid"" as the monoid $(\intro*\prodmonoid, \cdot, \neutral)$ whose elements are $Q \cup \{\neutral\} \cup \{\zeroPM\}$, where $\zeroPM$ satisfies $\zeroPM \cdot x = x \cdot \zeroPM =  \zeroPM$ for all $x \in \prodmonoid$, and whose product operation is defined as follows:

\[ 
(B_2, Y_2, M_2, A_2, X_2) \cdot (B_1, Y_1, M_1, A_1, X_1) =
\begin{cases}
	& \begin{split}(B_2, Y_2, M_1 M_2, A_1 , X_1) \text{ if } &X_2 = Y_1 \\ \text{ and } &B_1 \Reach{X_2} A_2 \end{split}\\
	& \zeroPM \text{ otherwise.}
\end{cases}
\]

\AP Let $\intro*\alphaG, \intro*\betaG$ be the functions described in Remark~\ref{rmk:label-push} for $\Gg$.
\AP We define a morphism $\intro*\morphism : \Ib^+ \to {\prodmonoid}$ as follows. For each letter $(f, X) \in I$, set 
$$\morphism(f,X) = (\betaG(f),f \cdot X, M_{f,X}, \alphaG(f),X),$$ where for all $A,B \in N$,
 	$M_{f,X}(A,B) = \top$ if and only if there exist $u,v \in (X\cup T)^*$ such that $A[f] \derivesindexedstar_{\Gg} u B v$.
 	Note that computing this matrix easily reduces to an emptiness check for an indexed grammar, which can be done in exponential time.

\myparagraph{Feasibility.}
\AP Let us mention some basic properties of stacks that are encoded in their image in $\prodmonoid$. The first concerns whether a given stack content can be pushed:
We say that a non-empty stack content (in either $I^*$ or $\Ib^*$) is ""feasible@@stack"" if, starting from some non-terminal, it can be pushed onto the stack of some non-terminal. In the case of an annotated stack content $\zb = (f_n, X_n)\cdots (f_1, X_1)  \in \Ib^*$, this is equivalent to the existence of a derivation $$(\alphaG(f_1), X_1) ~\derivesindexedstar_{\Ggb}~ u(\betaG(f_n), f_n\cdot X_n)[\zb]v$$ with $u,v \in \sentforms$.
The following lemma says that the "stack monoid" distinguishes the "feasible@@stack" stack contents in $\Ib^*$.

\begin{restatable}{lemma}{lemRightMorphism}
	\label{lem:right-morphism}
	Let  $\zb = (f_n, X_n)\cdots (f_1, X_1)  \in \Ib^*$ be a stack content. The following are equivalent:
	\begin{enumerate}
		\item\label{one} $\zb$ is "feasible@@stack"
		\item\label{two} $\morphism(\zb) \neq \zeroPM$
		\item\label{three} for all $i>1$, $X_{i} = f_{i} \cdot X_{i-1}$ and $\betaG(f_{i-1}) \Reach{X_i} \alphaG(f_i)$.
	\end{enumerate}
\end{restatable}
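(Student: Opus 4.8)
The plan is to prove the cycle of implications $\eqref{one}\Rightarrow\eqref{three}\Rightarrow\eqref{two}\Rightarrow\eqref{one}$, with the bulk of the work in the first and last implications; the middle one is essentially a bookkeeping computation with the product in $\prodmonoid$.

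For $\eqref{one}\Rightarrow\eqref{three}$: suppose $\zb$ is feasible, so there is a derivation $(\alphaG(f_1),X_1)\derivesindexedstar_{\Ggb} u\,(\betaG(f_n),f_n\cdot X_n)[\zb]\,v$. First, the constraint $X_i = f_i\cdot X_{i-1}$ is forced simply by the shape of the push rules of $\Ggb$: a letter $(f,X)$ can only be pushed on top of a stack whose topmost letter is of the form $(g,X)$ with the same second component $X$ (by the definition of $\Pb$, the rule $(A,X)\to(B,Y)(f,X)$ pushes $(f,X)$ precisely when the current annotation set is $X$), and $Y=f\cdot X$. I would formalize this by induction on the derivation, tracking how the stack content grows. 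Second, for the reachability condition: when the letter $(f_i,X_i)$ is pushed on top of $(f_{i-1},X_{i-1})$, the non-terminal carrying the stack at that moment is the one that pushed $(f_{i-1},X_{i-1})$, namely $(\betaG(f_{i-1}),f_{i-1}\cdot X_{i-1}) = (\betaG(f_{i-1}),X_i)$ (using the first part), and it must subsequently be rewritten so that some descendant carries a push rule for $f_i$, whose left-hand non-terminal is $(\alphaG(f_i),X_i)$. Between these two moments the stack above $(f_{i-1},X_{i-1})$ may be manipulated, but the crucial point — which I would extract as a sub-claim, or invoke from the productiveness setup — is that the derivation restricted to one branch, with stack contents at or above a fixed height, behaves like a derivation with empty stack on that portion; combined with productiveness of $\Ggb$ this gives exactly $\betaG(f_{i-1})\Reach{X_i}\alphaG(f_i)$ in the sense of the definition (with the empty-stack requirement on the target being free by productiveness, as the remark after the definition of $\Reach{}$ notes).

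For $\eqref{three}\Rightarrow\eqref{two}$: I would compute $\morphism(\zb) = \morphism(f_n,X_n)\cdots\morphism(f_1,X_1)$ by induction on $n$, showing it equals $(\betaG(f_n),f_n\cdot X_n,\,M,\,\alphaG(f_1),X_1)$ for some matrix $M$, in particular not $\zeroPM$. The inductive step multiplies $\morphism(f_{i},X_{i}) = (\betaG(f_i),f_i\cdot X_i,M_{f_i,X_i},\alphaG(f_i),X_i)$ on the left of the accumulated product $(\betaG(f_{i-1}),f_{i-1}\cdot X_{i-1},M',\alphaG(f_1),X_1)$; the product is non-zero exactly when the second component of the right factor equals the fifth component of the left factor, i.e. $f_{i-1}\cdot X_{i-1} = X_i$, and when $\betaG(f_{i-1})\Reach{X_i}\alphaG(f_i)$ — both of which are hypothesis~\eqref{three}. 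So the product stays in $Q$, never collapsing to $\zeroPM$.

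For $\eqref{two}\Rightarrow\eqref{one}$: contrapositive is awkward, so I would instead argue directly that $\morphism(\zb)\neq\zeroPM$ forces condition~\eqref{three} (by the same product computation run in reverse: if any of the side conditions failed at step $i$, the product would hit $\zeroPM$ and stay there), and then build a witnessing derivation showing $\zb$ is feasible. The construction is bottom-up: start from $(\alphaG(f_1),X_1)$, apply the push rule for $f_1$ to reach $(\betaG(f_1),f_1\cdot X_1)[(f_1,X_1)] = (\betaG(f_1),X_2)[(f_1,X_1)]$; then use $\betaG(f_1)\Reach{X_2}\alphaG(f_2)$ to derive a sentential form containing $(\alphaG(f_2),X_2)[(f_1,X_1)]$ (pushing $(f_1,X_1)$ uniformly onto the $\Reach{X_2}$-witnessing derivation, which is legitimate since that derivation uses empty stacks and nothing pops below its own contributions); apply the push rule for $f_2$; and iterate. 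After $n$ steps we obtain a sentential form containing $(\betaG(f_n),f_n\cdot X_n)[\zb]$, which is exactly feasibility. Here one must check that all the push rules invoked are genuinely present in $\Pb$: the rule $(\alphaG(f_i),X_i)\to(\betaG(f_i),f_i\cdot X_i)(f_i,X_i)$ is in $\Pb$ because the underlying rule $\alphaG(f_i)\to\betaG(f_i)\,f_i$ is in $P$ (Remark~\ref{rmk:label-push}) and the side conditions $Y=f_i\cdot X_i$, $\alphaG(f_i)\in X_i$, $\betaG(f_i)\in Y$ hold — the membership conditions need a small argument, e.g.\ that $\alphaG(f_i)\in X_i$ follows because $X_i = f_{i-1}\cdots f_1\cdot X$ always contains $\Useful$ and $\alphaG(f_i)$ is useful (it appears on the left of a push rule).

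\textbf{Main obstacle.} The hardest part is the implication $\eqref{one}\Rightarrow\eqref{three}$, specifically isolating the reachability condition $\betaG(f_{i-1})\Reach{X_i}\alphaG(f_i)$ from an arbitrary feasibility derivation. In such a derivation the stack above height $i-1$ can be pushed, popped and re-pushed many times across different branches before $(f_i,X_i)$ is finally placed, so one cannot simply read off a single linear $\Reach{}$-derivation. The fix is a "projection to a single branch and a single stack level" argument: follow the branch of the derivation tree leading to the node that performs the push of $f_i$, restrict attention to the portion of each stack strictly above the fixed prefix $(f_{i-1},X_{i-1})\cdots(f_1,X_1)$, and observe that along this branch that portion is never popped below empty (it starts empty, right after $(f_{i-1},X_{i-1})$ is pushed, and must be empty again when $f_i$ is pushed since $f_i$ sits directly on $(f_{i-1},X_{i-1})$); hence this sub-derivation is a legitimate $\Ggb$-derivation starting from $(\betaG(f_{i-1}),X_i)$ (empty stack) and producing a sentential form containing $(\alphaG(f_i),X_i)$ (empty stack). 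I would either prove this projection lemma inline by induction on derivation length, or — more cleanly — phrase it once as a general fact about $\derivesindexed_{\Ggb}$ and reuse it, since the same "cut the stack at a fixed height" reasoning recurs elsewhere in the paper.
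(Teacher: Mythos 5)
Your proposal is correct and, for two of the three implications, follows the same route as the paper: the equivalence of (2) and (3) is exactly the paper's observation that the product of consecutive letter images is $\zeroPM$ precisely when an interface condition fails, and your construction for (2)$\Rightarrow$(1) (push $(f_i,X_i)$, then lift a $\Reach{X_{i+1}}$-witnessing derivation onto the existing stack, and iterate) is the paper's proof of (3)$\Rightarrow$(1), with the useful extra remark that the lifted derivation never pops below its own contribution, which the paper leaves implicit. Where you genuinely diverge is (1)$\Rightarrow$(3): the paper argues by induction on the length of the feasibility derivation, doing a case analysis on the first production (binary split, or the push of $(f_1,X_1)$, which lets it strip the bottom letter and recurse), whereas you propose a branch-projection argument --- follow the branch to the node pushing $(f_i,X_i)$, truncate all stacks at the fixed prefix below it, and read off a $\Ggb$-derivation witnessing $\betaG(f_{i-1})\Reach{X_i}\alphaG(f_i)$. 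This is sound (it is the same tree-surgery the paper uses elsewhere, e.g.\ for Lemma~\ref{lem:left-action} and Lemma~\ref{lem:eliminate-skips}), provided you anchor the cut at the \emph{last} push of the occurrence of $(f_{i-1},X_{i-1})$ that ends up directly below $(f_i,X_i)$, so that the prefix is never popped along the chosen branch; your planned projection lemma would have to make exactly this point. Two small inaccuracies to fix: first, the interface condition is $X_i=f_{i-1}\cdot X_{i-1}$ (as your own computation in (3)$\Rightarrow$(2) and the paper's appendix proof use), not ``the pushed letter has the same second component as the topmost letter below it''; the pushed letter's annotation equals the \emph{current nonterminal's} annotation, which by the invariant equals $f_{i-1}\cdot X_{i-1}$, not $X_{i-1}$. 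Second, your justification that $\alphaG(f_i)\in X_i$ because $\alphaG(f_i)$ is useful is not valid (appearing on the left of a push rule does not put a nonterminal in $\Useful$); but for $i\ge 2$ this membership is automatic, since $\betaG(f_{i-1})\Reach{X_i}\alphaG(f_i)$ already requires $(\alphaG(f_i),X_i)$ to be a nonterminal of $\Ggb$, and the analogous side conditions are glossed over in the paper's own proof as well, so this does not affect the substance of your argument.
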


\whenFull{The proof is given in Appendix~\ref{app:right-morphism}.}{} 
Note that Lemma \ref{lem:right-morphism} shows that all "feasible@@stack" stack contents in $\Ib^*$ can be written as $\annotate{z}{X}$ for some "feasible@@stack" $z\in I^*$ and $X\subseteq N$ (hence, we sometimes assume that a stack content is of this form).

\AP Let $(A,X) \in \Nb$ and $\zb = (f_n, X_n)\cdots (f_1, X_1)  \in \Ib^*$, we say that the term $(A,X) [\zb]$ is ""feasible@@term"" if $\zb$ is "feasible@@stack", $X = f_n \cdot X_n$ and $\betaG(f_n) \Reach{X} A$.

\myparagraph{Pushing between specific non-terminals.}
We will now see that $\prodmonoid$ encodes which non-terminals allow a stack content to be pushed, and which non-terminals can result.
More formally, $\morphism(\zb)$ encodes all pairs of non-terminals $(C,X),(D,Y)\in \Nb$ such that $(C,X)$ can derive a sentential form $u (D,Y)[\zb]v$.

\begin{restatable}{lemma}{lemRightMonoidA}
	\label{lem:right-monoid-A}
	Let $z\in I^+$ a non-empty stack content, let $X \subseteq N$ and let $(B, Y,  M, A, X) =\morphism(\annotate{z}{X})$. Then for all $C \in X$ and $D \in Y$, the following are equivalent:
	\begin{itemize}
		\item $C \Reach{X} A$ and $B \Reach{Y} D$
		\item there exist $u,v \in \sentforms$ such that $(C,X) \derivesindexedstar_{\Ggb} u (D,Y)[\annotate{z}{X}] v$.
	\end{itemize}
\end{restatable}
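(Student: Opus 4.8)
The proof is an induction on the length $n$ of $z$, proving both implications simultaneously. The base case $n=1$ is essentially the definition of $\morphism(f,X)$ together with the definition of $M_{f,X}$ and the meaning of $\Reach{\cdot}$: for $z = f$, $\annotate{z}{X} = (f,X)$, $A = \alphaG(f)$, $B = \betaG(f)$, $Y = f\cdot X$, and $M = M_{f,X}$. A derivation $(C,X)\derivesindexedstar_{\Ggb} u(D,Y)[(f,X)]v$ must push $(f,X)$ at some point via a rule $(A',Y')\to (B',Y'')(f,X)$, which by construction of $\Ggb$ forces $A' = \alphaG(f) = A$, $B' = \betaG(f) = B$, $Y'' = f\cdot X = Y$; before that push we need $(C,X)\derivesindexedstar_{\Ggb} u_0(A,X)v_0$, i.e.\ $C\Reach{X}A$, and afterwards $(B,Y)\derivesindexedstar_{\Ggb} u_1(D,Y)[(f,X)]v_1$ where the occurrence of $(f,X)$ is never popped, which (using productiveness to empty the other stacks) is equivalent to $B\Reach{Y}D$. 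Conversely, $C\Reach{X}A$ lets us derive $(A,X)$, push to get $(B,Y)[(f,X)]$, and $B\Reach{Y}D$ lets us derive $(D,Y)[(f,X)]$ underneath (the pushed symbol persists on the relevant branch), which is the wanted derivation.

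For the inductive step, write $z = z' f_1$ with $z'\in I^+$, so $\annotate{z}{X} = \annotate{z'}{X_2}\,(f_1,X_1)$ where $X_1 = X$ and $X_2 = f_1\cdot X$. By Lemma~\ref{lem:right-morphism}, feasibility is inherited by suffixes, so $\annotate{z'}{X_2}$ is feasible and $\morphism(\annotate{z'}{X_2}) = (B, Y, M', A', X_2)$ for some $A', M'$, while $\morphism(f_1,X_1) = (\betaG(f_1), X_2, M_{f_1,X_1}, \alphaG(f_1), X_1)$ with $A = \alphaG(f_1)$. By the definition of the product in $\prodmonoid$ (and since $\morphism(\annotate{z}{X})\ne\zeroPM$, Lemma~\ref{lem:right-morphism} again), we get $\betaG(f_1)\Reach{X_2}A'$ and $M = M_{f_1,X_1}M'$. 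Now I would chain: a derivation $(C,X)\derivesindexedstar_{\Ggb} u(D,Y)[\annotate{z}{X}]v$ factors through pushing $(f_1,X_1)$ first (giving $C\Reach{X}\alphaG(f_1) = A$ and landing at $(\betaG(f_1),X_2)[(f_1,X_1)]$), then—since on the branch carrying the future $(D,Y)$ node the symbol $(f_1,X_1)$ is never popped—the remainder is, after stripping that bottom symbol, a derivation $(\betaG(f_1),X_2)\derivesindexedstar_{\Ggb} u'(D,Y)[\annotate{z'}{X_2}]v'$. Applying the induction hypothesis to $z'$ with set $X_2$, this is equivalent to $\betaG(f_1)\Reach{X_2}A'$ (which we already know) and $B\Reach{Y}D$. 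Combined with $C\Reach{X}A$ from the push, this yields exactly the right-hand side of the statement; the reverse direction reassembles the derivation the same way.

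The main obstacle is the careful handling of the ``stripping the bottom stack symbol'' step: one has to argue that in any derivation producing a node labeled $(D,Y)[\annotate{z}{X}]v$, the bottom-most annotated letter $(f_1,X_1)$ is inert on the branch leading to that node (never popped above it), so that the derivation restricted to that branch and its offshoots can be replayed verbatim on the shorter stack $\annotate{z'}{X_2}$, and vice versa that any derivation on the shorter stack lifts back by re-appending $(f_1,X_1)$ at the bottom. This is where productiveness is essential: it guarantees that the side sentential forms $u,v$ (with their various stacks) can be completed to terminal words, so that "$B\Reach{Y}D$" truly captures reachability with the bottom symbol present, and that emptying stacks of the non-$(D,Y)$ terms is harmless. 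Once this factorization lemma is isolated (it is really the same reasoning already used to justify the product of $\prodmonoid$, and is likely packaged in the appendix proofs of Lemmas~\ref{lem:right-morphism} and \ref{lem-soundness}), the rest is bookkeeping with the semigroup action $\annotate{\cdot}{\cdot}$ and Lemma~\ref{lem:semigroup-action-properties}.
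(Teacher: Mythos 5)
Your proposal is correct, but it is organized differently from the paper's proof. The paper does not induct on $|z|$: for the forward implication it invokes Lemma~\ref{lem:right-morphism} once, on the whole stack, to get a derivation $(A,X)\derivesindexedstar_{\Ggb} u'(B,Y)[\annotate{z}{X}]v'$ (this is exactly the ``push the entire feasible stack'' derivation), and then sandwiches it between the two derivations witnessing $C\Reach{X}A$ and $B\Reach{Y}D$ (the latter lifted by appending $\annotate{z}{X}$ to all stacks); for the converse it argues directly on the given derivation that the bottom symbol $(f_1,X)$ must at some point be pushed onto an empty stack, which forces the pusher to be $(A,X)$ and yields $C\Reach{X}A$, and that the topmost symbol must be pushed, which lands in $(B,Y)[\annotate{z}{X}]$ and, after the same suffix-stripping argument you describe, yields $B\Reach{Y}D$. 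Your induction on the length of $z$, peeling off the bottom letter and propagating via the non-$\zeroPM$ product condition $\betaG(f_1)\Reach{X_2}A'$, in effect re-derives inside this lemma part of what Lemma~\ref{lem:right-morphism} already packages; it is more self-contained but longer, and it hinges on exactly the same two ingredients as the paper: uniqueness of the pushing/resulting non-terminals via $\alphaG,\betaG$ and the annotation structure, and the ``strip/append a stack suffix'' correspondence (truncating branches at the first pop below the suffix), which is the same argument as in the proofs of Lemmas~\ref{lem:semigroup-action-properties} and~\ref{lem:right-monoid-M}. Two small remarks: your appeal to productiveness when extracting $B\Reach{Y}D$ is unnecessary, since $\Reach{Y}$ allows arbitrary sentential forms around $(D,Y)$, so truncation alone suffices; and what you call the ``suffix'' $(f_1,X_1)$ is the bottom letter, i.e.\ the last letter of the stack word --- harmless, but worth keeping straight.
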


\whenFull{This lemma is proven in Appendix~\ref{app:right-monoid-A}.}{}

\myparagraph{Popping between specific non-terminals.}
Finally, our monoid even encodes popping behavior. Specifically, the matrix $M$
inside $\varphi(\annotate{z}{X})$ tells us for which non-terminals $D$ and $C$,
it is possible to pop $z$ from $D$ to $C$:
\begin{restatable}{lemma}{lemRightMonoidM}
	\label{lem:right-monoid-M}
	Let $z\in I^+$ a non-empty stack content, let $X \subseteq N$ and let $(B, Y,  M, A, X) =\morphism(\annotate{z}{X})$. The following are equivalent:
	
	\begin{itemize}
		\item $M(D,C) = \top$
		\item $C \in X$, $D \in Y$ and there exist $u,v \in (X \cup T)^*$ such that $$D[z] \derivesindexedstar_{{\Gg}} u C v$$
		\item $C \in X$, $D \in Y$ and there exist $u,v \in \sentforms_{\Gg}$ such that $$(D,Y)[\annotate{z}{X}] \derivesindexedstar_{{\Ggb}} u (C,X) v.$$
	\end{itemize}
\end{restatable}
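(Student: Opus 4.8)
The plan is to prove the cycle of implications: the second bullet $\Rightarrow$ third bullet $\Rightarrow$ first bullet $\Rightarrow$ second bullet. The equivalence between the second and third bullets should be essentially a restatement of the relationship between $\Gg$ and its "annotated version" $\Ggb$, combined with the bookkeeping that $\annotate{z}{X}$ is the unique "feasible@@stack" annotation of $z$ with bottom set $X$. Concretely, for the direction from the second bullet to the third: given a $\Gg$-derivation $D[z]\derivesindexedstar_\Gg uCv$ with $u,v\in (X\cup T)^*$, we lift it to a $\Ggb$-derivation. Since the stack $z$ is popped completely in each branch, \cref{lem:semigroup-action-properties} guarantees that the induced annotation sets along the stack are exactly the $X_i$, so the lifted derivation starts at $(D,Y)[\annotate{z}{X}]$ and reaches $u'(C,X)v'$ where $u',v'$ are the $\Ggb$-annotated versions of $u,v$; dropping the annotations recovers $u,v\in(X\cup T)^*$, and by definition of $\Ggb$ we indeed have $C\in X$, $D\in Y$. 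The converse (third to second) is just projecting a $\Ggb$-derivation down to $\Gg$ by forgetting the set components, which is always a valid $\Gg$-derivation (this projection compatibility is implicit in the soundness of the annotation construction, \cref{lem-soundness}).

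For the direction from the third bullet to the first ($M(D,C)=\top$): the main point is to track how $\morphism$ is defined on single letters and how the product in $\prodmonoid$ accumulates the matrix component. Write $z = f_n\cdots f_1$ and $\annotate{z}{X} = (f_n,X_n)\cdots(f_1,X_1)$. By definition, $\morphism(f_i,X_i)$ has matrix component $M_{f_i,X_i}$, where $M_{f_i,X_i}(A,B)=\top$ iff $A[f_i]\derivesindexedstar_\Gg u'Bv'$ with $u',v'\in(X_i\cup T)^*$; and the product rule multiplies these matrices in the order $M_{f_1,X_1}M_{f_2,X_2}\cdots M_{f_n,X_n}$. So I would prove by induction on $n$ that the $(D,C)$-entry of this product is $\top$ if and only if there is a "sentential form" $w$ with $D[f_n\cdots f_1]\derivesindexedstar_\Gg uCv$, $u,v\in(X\cup T)^*$ — i.e.\ exactly the third bullet. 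The induction step decomposes a pop-derivation of $z$ according to the moment the top symbol $f_n$ is removed in the branch that eventually leads to $C$: first $D[f_n\cdots f_1]$ pops $f_n$ down to some intermediate non-terminal $E$ at stack $f_{n-1}\cdots f_1$ (this is the $M_{f_n,X_n}$ step, using that the stack below is irrelevant to popping $f_n$ and that the remaining terms stay within $X_n = f_{n-1}\cdots f_1\cdot X$, which follows from \cref{lem:semigroup-action-properties}), and then $E[f_{n-1}\cdots f_1]$ pops the rest down to $C$ (the induction hypothesis). Matrix multiplication over the Boolean semiring is exactly this "there exists an intermediate $E$" composition, so the entries match up. Finally, since the other components of $\morphism(\annotate{z}{X})$ are precisely $(B,Y,\cdot,A,X)$ with $A=\alphaG(f_1)$, $B=\betaG(f_n)$, $X=X_1$, $Y=f_n\cdot X_n$, and since $C\in X$, $D\in Y$ hold automatically whenever any such pop-derivation exists, the first bullet ($M(D,C)=\top$) is equivalent to the third.

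For the remaining link, from the first bullet to the second, one simply reads off the definition of $M$ as the matrix component of $\morphism(\annotate{z}{X})$ and applies the induction above in the reverse direction; the side conditions $C\in X$, $D\in Y$ are part of what $M(D,C)=\top$ entails once we know $\morphism(\annotate{z}{X})\neq\zeroPM$, which holds by \cref{lem:right-morphism} since $\annotate{z}{X}$ is "feasible@@stack" (note $z\in I^+$ is assumed "feasible@@stack" implicitly — or, if not, both sides are vacuously false). I expect the main obstacle to be the careful decomposition in the induction step: one must argue that in any $\Gg$-derivation that pops $z$ entirely and isolates a single non-terminal $C$, the top symbol $f_n$ is the first to be removed along the relevant branch, and that the residual terms never dip outside $X_n$ before $f_n$ is popped — this is where \cref{lem:semigroup-action-properties} (and the fact that a $z\cdot X$ always contains $\Useful$) does the real work, and where one has to be precise about what "stay within $(X\cup T)^*$" means branch-by-branch versus globally. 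The rest is routine manipulation of Boolean matrix products and of the annotation/projection correspondence between $\Gg$ and $\Ggb$.
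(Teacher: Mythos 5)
Your proposal follows essentially the same route as the paper: the equivalence of the second and third bullets comes from the annotation correspondence between $\Gg$ and $\Ggb$ (Lemma~\ref{lem-consistent}), and the link to $M(D,C)=\top$ is proved by induction on $|z|$, peeling off the topmost stack symbol, with the Boolean matrix product supplying the intermediate non-terminal $E$ and the action $w\cdot X$ justifying that the residual terms can be driven into $(X\cup T)^*$ — exactly the paper's argument. One small bookkeeping caution: semantically the matrix of $\morphism(\annotate{z}{X})$ composes with the top symbol's matrix on the left, i.e.\ $M_{f_n,X_n}\cdots M_{f_1,X_1}$ (as in the paper's ``$M=M_fM_w$''), rather than $M_{f_1,X_1}\cdots M_{f_n,X_n}$ as you wrote, though the paper's displayed product formula is itself inconsistent on this point and your induction step uses the correct decomposition anyway.
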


\whenFull{
This lemma is proven in Appendix~\ref{app:right-monoid-M}.}{}

\section{Pumping and skipping}
\label{sec:pump-skip}

We introduce two new derivation rules on the "terms" of $\Ggb$, and show that they preserve the downward-closure of the resulting language.
Essentially, we show that,  under certain conditions, sequences of more than $2|N|$ contiguous infixes mapping to the same idempotent can be extended and reduced.

This will let us abstract stack contents by forgetting everything but the first and last $N$ elements in such sequences of infixes mapping to the same idempotent.
By adapting a recent construction by Gimbert, Mascle and Totzke~\cite{GimbertMT25arxiv}, we will show that this allows us to obtain "summaries" of stack contents, of size bounded by an exponential function in the size of $\Gg$.

\myparagraph{Pump and skip derivation steps.} \AP Let us formally define the two rules.
The ""pump rule"" is defined as follows:
\[(B,X) [\zb] ~\intro*\rulepump~ (B,X) [z_e \zb]\] for all $(A,X) \in \Nb$, $e = (B,X,M,A,X) \in \idempotents{\prodmonoid}\setminus\set{\zeroPM,\neutral}$, $z_e\in \Ib^+$ with $\morphism(z_e) =e$, and $\zb\in \Ib^*$.

The ""skip rule"" is first defined on stack contents:
\[\zb' u_1 \cdots u_N z_e u_1 \cdots u_N \zb ~\intro*\ruleskip~  \zb'u_1 \cdots u_N \zb\] for all  $\zb',u_1, \dots, u_N, z_e, \zb \in \Ib^*$ and $e \in \idempotents{\prodmonoid}\setminus\set{\zeroPM}$ such that $\morphism(u_1) = \dots = \morphism(u_N) = \morphism(z_e) = e$ (where the $u_i$ are in one-to-one correspondence with $N$; we use the subscript $N$ instead of $|N|$ for convenience).
We then extend it to "terms" naturally: $(A,X) [\zb] \ruleskip (A,X)[\zb']$ whenever $\zb\ruleskip \zb'$. 
Observe that the "skip rule" erases symbols \emph{(potentially deep) inside the stack}, not just at the top. Nevertheless, we will see that allowing the skip rule does not extend the language beyond its "downward closure".

\AP Both the "pump rule" and "skip rules" are extended to "sentential forms" as expected: $u \rulepump u'$ if $u'$ is obtained by applying $\rulepump$ to a term in $u$, and the same goes for $\ruleskip$.
We write $\derivesindexedps[\Ggb]$ for the union of the three relations $\derivesindexed[\Ggb]$, $\rulepump$ and $\ruleskip$, and $\derivesindexedstarps[\Ggb]$ for its reflexive transitive closure. We may thus define
\[ \intro*\langPS{\Ggb}=\set{w \in T^* \mid (S,\Useful) ~\derivesindexedstarps~[\Ggb] w}. \]

	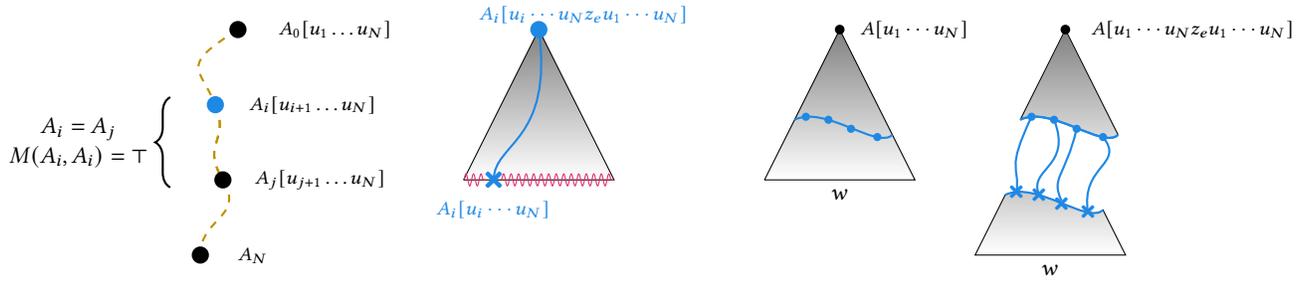
\begin{figure*}
	
	\begin{tikzpicture}[auto,>= triangle 45, scale=0.8]

		\draw[thick, color=yel1, dashed] (0.6,-1).. controls (0.9,-1.5) and (0.3,-1.5)
		..(0.3,-2);
		\draw[thick, color=yel1, dashed] (0.8,1).. controls (0,0.5) and (0.3,0.5)
		..(0.5,0);
		\draw[thick, dashed, color=yel1] (0.5,0).. controls (0.7,-0.5) and (0.3,-0.5)
		..(0.6,-1);
		
		\draw[thick, dashed, color=yel1] (0.5,0).. controls (0.7,-0.5) and (0.3,-0.5)
		..(0.6,-1);

		\draw[thick, color=black, fill=black] (0.8,1) circle (1mm);
		\node (A) at (2.3,1) {\footnotesize $A_0[u_1\dots u_N]$};
		
		\draw[thick, color=blue1, fill=blue1] (0.5,0) circle (1mm);
		\node (A) at (2,0) {\footnotesize $A_i[u_{i+1}\dots u_N]$};

		\path[draw, thick, color=black,decorate,decoration={brace, amplitude=2mm}]  (0.1,-1.1) -- (0.1,0.1)  node[pos=0.5, align=center, left=5pt]{\small$A_i = A_j$\\ \small $M(A_i, A_i)= \top$};

		\draw[thick, color=black, fill=black] (0.6,-1) circle (1mm);
		\node (A) at (2.1,-1) {\footnotesize $A_j[u_{j+1}\dots u_N]$};
		
		\draw[thick, color=black, fill=black] (0.3,-2) circle (1mm);
		\node (A) at (1.2,-2) {\footnotesize $A_N$};

		\draw[color=black, shade] (4,-1) -- (5,1) -- (6,-1) -- (4,-1);
		\draw[thick, color=blue1, fill=blue1] (5,1) circle (1mm);
		\node[color=blue1] (A) at (4.4,-1) {\textbf{\faTimes}};
		
		\node[color=blue1] (O) at (5.6,1.2) {\footnotesize $A_i[u_i \cdots u_Nz_e u_1 \cdots u_N]$};
		\node[color=blue1] (O) at (4.4,-1.4) {\footnotesize $A_i[u_i \cdots u_N]$};
		
		\draw[thick, color=blue1] (5,1).. controls (5.2,-0.5) and (4.4,-0.5)..(4.4,-1);
		
		\draw[color=red1,decorate,decoration={snake, amplitude=0.8mm, segment length=1mm}] (4.5,-1) -- (6,-1);
		\draw[color=red1,decorate,decoration={snake, amplitude=0.8mm, segment length=1mm}] (4,-1) -- (4.3,-1);

		\draw[color=black, shade] (7,-1) -- (8,1) -- (9,-1) -- (7,-1);
		\draw[thick, fill=black] (8,1) circle (0.5mm);
		\node[color=black] (O) at (8.9,1.2) {\footnotesize$A[u_1 \cdots u_N]$};
		\node[color=black] (O) at (8,-1.2) {$w$};
		
		\draw[thick, color=blue1] (7.4,-0.2).. controls (7.8,0) and (8.5,-0.6)
		..(8.7,-0.4);
		
		\draw[thick, color=blue1, fill=blue1] (7.55,-0.16) circle (0.4mm);
		\draw[thick, color=blue1, fill=blue1] (7.85,-0.2) circle (0.4mm);
		\draw[thick, color=blue1, fill=blue1] (8.15,-0.32) circle (0.4mm);
		\draw[thick, color=blue1, fill=blue1] (8.5,-0.43) circle (0.4mm);

		\draw[fill=gray!10!white, color=white,shade] (11.8,0)-- (12.8,-2) -- (10.8,-2) -- (11.8,0);
		
		\draw[white, fill=white] (11.2,-1.2) .. controls (11.6,-1) and (12.3,-1.6)..(12.5,-1.4)  -- (11.8,0);
		
		\draw[color=white, shade] (11,-1) -- (12,1) -- (13,-1) -- (11,-1);
		\draw[fill=white, color=white]  (11.4,-0.2) .. controls (11.8,0) and (12.5,-0.6)..(12.7,-0.4) -- (13,-1) -- (11,-1) -- (11.4,-0.2);

		\draw (11.2,-1.2) .. controls (11.6,-1) and (12.3,-1.6)..(12.5,-1.4) -- (12.8,-2) -- (10.8,-2) --  (11.2, -1.2);
		
		\draw[color=black] (12.7,-0.4) -- (12,1) -- (11.4,-0.2);

		\draw[thick, color=blue1] (11.4,-0.2).. controls (11.8,0) and (12.5,-0.6)
		..(12.7,-0.4);
		\draw[thick, color=blue1] (11.2,-1.2).. controls (11.6,-1) and (12.3,-1.6)
		..(12.5,-1.4);

		\draw[thick, color=blue1, fill=blue1] (11.55,-0.16) circle (0.4mm);
		\draw[thick, color=blue1, fill=blue1] (11.85,-0.2) circle (0.4mm);
		\draw[thick, color=blue1, fill=blue1] (12.15,-0.32) circle (0.4mm);
		\draw[thick, color=blue1, fill=blue1] (12.5,-0.43) circle (0.4mm);
		
		\draw[thick, color=blue1] (11.4,-0.2).. controls (11.8,0) and (12.5,-0.6)
		..(12.7,-0.4);

		\node[color=blue1] (A) at (11.35,-1.16) {\footnotesize \textbf{\faTimes}};
		\node[color=blue1] (A) at (11.65,-1.2) {\footnotesize \textbf{\faTimes}};
		\node[color=blue1] (A) at (11.95,-1.32) {\footnotesize \textbf{\faTimes}};
		\node[color=blue1] (A) at (12.3,-1.43) {\footnotesize \textbf{\faTimes}};

		\draw[thick, color=blue1] (11.55,-0.16).. controls (11.2,-0.9) and (11.4,-0.8)
		..(11.35,-1.16);
		\draw[thick, color=blue1] (11.85,-0.2).. controls (12.1,-0.84) and (11.5,-0.84)
		..(11.65,-1.2);
		\draw[thick, color=blue1] (12.15,-0.32).. controls (12,-0.96) and (11.8,-0.96)
		..(11.95,-1.28);
		\draw[thick, color=blue1] (12.5,-0.43).. controls (12.8,-0.8) and (12.2,-0.8)
		..(12.3,-1.43);

		\draw[thick, fill=black] (12,1) circle (0.5mm);
		\node[color=black] (O) at (12.5,1.2) {\footnotesize$A[u_1 \cdots u_Nz_e u_1 \cdots u_N]$};
		\node[color=black] (O) at (11.8,-2.2) {$w$};
		
	\end{tikzpicture}	
	\caption{The idea behind the "skip rule" is that if we have a derivation from some non-terminal $A$ in which we pop $N$ consecutive infixes $u_1 \dots u_N$ mapping to the same idempotent $e = (B,X,M, A,X)$, then along every branch we must have a node of the form $A_i[u_{i+1} \cdots u_N]$ with $M(A_i, A_i) = \top$.
		This means that for any word mapping to $e$, there is a derivation from $A_i$ popping this word, and with the resulting "sentential form" containing $A_i$. The rest of the sentential form can be erased since we are interested in the "downward closure".
		This non-terminal can be used to ``skip'' the infix $u_{i+1} \cdots u_N z_e u_1 \cdots u_i$, which maps to $e$. Hence, we can turn a derivation from $A[u_1 \cdots u_N]$ into one from $A[u_1 \cdots u_N z_e u_1 \cdots u_N]$ by popping the right infix along every branch.}
	\label{fig:skip}
\end{figure*}

\myparagraph{Pump and skip are harmless.}
We now show that these additional rules do not extend our language beyond its "downward closure":
\begin{restatable}{proposition}{lemPumpSkip}
	\label{prop:eliminate-pump-skip}
	$\langPS{\Ggb} \subseteq \dcl{\langIG{\Ggb}}$ 
\end{restatable}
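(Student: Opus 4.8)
The plan is to show that every word derivable using the extended rules $\derivesindexedstarps[\Ggb]$ is a subword of a word derivable using only the ordinary rules $\derivesindexed[\Ggb]$. The natural strategy is to track, along a derivation in $\langPS{\Ggb}$, how each application of a pump or skip rule can be ``absorbed'' into an ordinary derivation at the cost of only deleting terms (which is harmless for the downward closure). Since $\Ggb$ is productive by \cref{lem-soundness}, deleting terms cannot create spurious terminal words: if $v \subword u$ and $u \in \langSF{S}$, then $\langX{v}{\emptyset} \subseteq \dcl{\langX{u}{\emptyset}}$. So it suffices to simulate each pump/skip step by an ordinary derivation prefix that reaches a sentential form of which the pump/skip result is a subword.

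First I would handle the \textbf{pump rule}. Suppose $(B,X)[\zb] \rulepump (B,X)[z_e\zb]$ with $e = (B,X,M,A,X)$ idempotent, $\morphism(z_e)=e$. The key point is that $e$ being idempotent (and not $\zeroPM$) means, by \cref{lem:right-morphism} and \cref{lem:right-monoid-A} applied to $z_e$ (writing $z_e = \annotate{z}{X}$ for a feasible $z \in I^*$), that $(B,X)$ can derive a sentential form $u'(B,X)[z_e]v'$ — i.e.\ we can push $z_e$ starting from $(B,X)$ and recover $(B,X)$ on one branch. Concatenating the stack $\zb$ underneath (pushing $\zb$ on every term throughout this derivation, using the $u[z]$ notation), we get $(B,X)[\zb] \derivesindexedstar_{\Ggb} u'[\zb]\,(B,X)[z_e\zb]\,v'[\zb]$, and $(B,X)[z_e\zb]$ is a subword of the right-hand side. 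Replacing the pump step by this ordinary derivation and then, at the end of the whole derivation, discarding the extra terms $u'[\zb], v'[\zb]$ (which is harmless by productiveness) absorbs the pump step.

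Next, the \textbf{skip rule}. This is the substantive case, and it is exactly the content of \cref{fig:skip}. Here the direction is reversed: given a derivation that uses $(A,X)[\zb] \ruleskip (A,X)[\zb']$ where $\zb = \zb''\,u_1\cdots u_N\,z_e\,u_1\cdots u_N\,\wb$ and $\zb' = \zb''\,u_1\cdots u_N\,\wb$, I want to show the skipped derivation's terminal output is a subword of an \emph{unskipped} one, so I argue in the opposite direction: I take a derivation from $(A,X)[\zb']$ and show there is a derivation from $(A,X)[\zb]$ producing a superword (up to subwords). The mechanism: along every branch of the derivation from $(A,X)[\zb']$, the suffix $u_1\cdots u_N\,\wb$ (or rather its annotated form) must eventually be popped, so by the pigeonhole principle over the $N = |N|$ infixes $u_1,\dots,u_N$ all mapping to the idempotent $e=(B,X,M,A,X)$, there are indices $i<j$ and a non-terminal $A_i$ with a node labeled $A_i[u_{i+1}\cdots u_N\,\wb]$ and another $A_j = A_i[u_{j+1}\cdots u_N\,\wb]$ on the branch, with $M(A_i,A_i)=\top$ by \cref{lem:right-monoid-M} (popping an $e$-infix from $A_i$ recovers $A_i$). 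Since $\morphism(z_e)=e$ as well, $M(A_i,A_i)=\top$ means $A_i[u_{i+1}\cdots u_N\,z_e\,u_1\cdots u_i]$ can derive a sentential form containing $(A_i,X)$ (again via \cref{lem:right-monoid-M}); the other terms of that sentential form are then erased using productiveness. Thus at that node we can insert the extra infix $u_{i+1}\cdots u_N\,z_e\,u_1\cdots u_i$ into the stack, ``skipping'' it via this $A_i$-cycle, and proceed exactly as in the original branch. Doing this independently along every branch converts a derivation from $(A,X)[\zb']$ into one from $(A,X)[\zb]$ whose leaf word contains the original one as a subword. The main obstacle is the bookkeeping here: making precise that the choice of node and of $i,j$ can be made consistently per branch, that the extra stack segment is inserted at a position that is genuinely popped before the branch terminates, and that the ``erase the rest of the sentential form'' step is legitimate — this is where productiveness of $\Ggb$ is essential and where one must be careful that all intermediate terms are feasible (so that \cref{lem:right-morphism,lem:right-monoid-A,lem:right-monoid-M} apply).

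Finally I would assemble these: by induction on the number of pump/skip steps in a derivation $(S,\Useful) \derivesindexedstarps[\Ggb] w$, repeatedly replace the first such step by an ordinary derivation as above, each time obtaining a derivation with one fewer pump/skip step whose output still dominates $w$ in the subword order; the base case is a pure $\derivesindexed[\Ggb]$ derivation, giving $w' \in \langIG{\Ggb}$ with $w \subword w'$, hence $w \in \dcl{\langIG{\Ggb}}$. One subtlety worth flagging: since pump/skip and deletion of terms are interleaved with ordinary steps, it is cleanest to phrase the invariant at the level of sentential forms — "if $u \derivesindexedstarps[\Ggb] w$ with $w\in T^*$ then there is $u' \in \langSF{u}$ (ordinary derivation) with $w \subword u'$ and $u'$ productive" — and push the induction through that, using \cref{lem:semigroup-action-properties}-style compatibility of the annotation with pushing $\zb$ underneath.
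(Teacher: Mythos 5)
Your proposal is correct and follows essentially the same route as the paper: an induction on the number of pump/skip steps starting from a reachable sentential form, with the pump step absorbed using idempotency of $e$ (which yields $B \Reach{X} A$), feasibility of $z_e$, lifting the derivation by pushing $\zb$ underneath, and productiveness to complete the side terms; and the skip step handled exactly as in \cref{fig:skip}, via a pigeonhole argument producing on each popping branch a non-terminal $A_i$ with $M(A_i,A_i)=\top$ that absorbs the extra $e$-infix by \cref{lem:right-monoid-M}, followed by erasing the surrounding terms thanks to productiveness. The one point your sketch glosses over is that not every branch pops into the modified region (your ``must eventually be popped'' claim is false for branches that terminate with $\zb$ still buried in the stack); the paper handles these with its ``first type'' leaves, for which no surgery is needed at all, so this is a routine repair rather than a gap.
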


The full proof is presented in \whenFull{Appendix~\ref{app:eliminate-pump-skip}}{the full version~\cite{MandelMZ26arxiv}}, we sketch the ideas here. 
For the "pump rule" this is not hard. It follows from the definition of the product of $\prodmonoid$ that if $e = (B,X,M,A,X)$ is an idempotent, then $B \Reach{X} A$, i.e., there must be a derivation from $(B,X)$ to some $u (A,X) v$. 
By erasing $u$ and $v$ (thanks to downward closure and "productiveness@@grammar"), from $(B,X)[\zb]$ we can reach $(A,X)[\zb]$, whence we can reach $(B,X)[z_e \zb]$ (where $\phi(z_e) = e$) thanks to Lemma~\ref{lem:right-monoid-A}. 
Hence, any terminal word derived from $(B,X)[z_e \zb]$ is a subword of a terminal word derived from $(B,X)[\zb]$ (note that the "productive@@grammar"ness of $\Ggb$ is essential here).

\myparagraph{Eliminating skip.} Replacing applications of the "skip rule" requires more work. In an "indexed grammar", a symbol is pushed once on the stack, but may be popped on multiple branches of a derivation.
To tackle this issue, we make the following observations, illustrated by Figure~\ref{fig:skip}. When a sequence $u \in \Ib^*$ with $\morphism(u) = (B,Y, M, A, X)$
is popped along a branch starting with a non-terminal $(D,Y)$, the resulting non-terminal $(C,X)$ (after popping $u$) must satisfy $M(D,C) = \top$ (see Lemma~\ref{lem:right-monoid-M}). 

Now suppose we are popping a sequence of infixes $u_1 \cdots u_N$, with $\morphism(u_1) = \dots = \morphism(u_N) = e =(B,X,M,A,X) \in \idempotents{\prodmonoid}$, along a branch starting with the non-terminal $(A_0,X)$. Consider, for $i = 1,\ldots,|N|$, the non-terminal $(A_i, X)$ obtained along that branch right after popping $u_1 \cdots u_i$. 
By Lemma~\ref{lem:right-monoid-M}, it follows that $M(A_i, A_{i+1}) = \top$ for $i=0,\ldots,|N|$.
Since $e$ is idempotent, $M M = M$, and we can then apply the following elementary lemma.

\begin{lemma}
	\label{lem:idempotent-property}
	Let $M \in \matrixmonoid$ a matrix.
	If $M M = M$ and we have terms $A_0, \dots, A_{N} \in N$ such that $M(A_i, A_{i+1}) = \top$ for all $i = 1,\ldots,|N|-1$, then there exists $i$ with $M(A_i, A_i) = \top$.
\end{lemma}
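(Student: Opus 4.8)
The plan is a short pigeonhole argument combined with the observation that an idempotent Boolean matrix coincides with all of its positive powers. First I would recall the combinatorial reading of Boolean matrix powers: for $M\in\matrixmonoid$ and $k\ge 1$, the product $M^k$ (computed over $(\mathbb{B},\lor,\land)$) satisfies $M^k(A,B)=\top$ if and only if there is a sequence $A=C_0,C_1,\dots,C_k=B$ of non-terminals with $M(C_\ell,C_{\ell+1})=\top$ for each $\ell<k$. This is immediate by induction on $k$ from the definition of the product.

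Next I would use the hypothesis $MM=M$ to deduce $M^k=M$ for every $k\ge 1$, again by a trivial induction: the base case is definitional, and $M^{k+1}=M^kM=MM=M$.

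Then comes the pigeonhole step. The list $A_0,A_1,\dots,A_{|N|}$ consists of $|N|+1$ elements of the $|N|$-element set $N$, so there are indices $0\le i<j\le |N|$ with $A_i=A_j$. The hypothesis provides $M(A_\ell,A_{\ell+1})=\top$ for all $\ell$ in the range $i,\dots,j-1$, so the characterization of powers above gives $M^{\,j-i}(A_i,A_j)=\top$. Since $j-i\ge 1$, we have $M^{\,j-i}=M$, hence $M(A_i,A_j)=\top$, and as $A_i=A_j$ this reads $M(A_i,A_i)=\top$, which is exactly what we want (with the $i$ of the statement being this common index).

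There is essentially no obstacle here: the argument is elementary. The only point that needs a little care is the index bookkeeping — one must read the chain $A_0,\dots,A_N$ as having $|N|+1$ vertices so that the pigeonhole principle forces a repeated non-terminal; the connecting edges then automatically supply a directed $M$-path between the two equal endpoints, and idempotence of $M$ collapses that path to a single $\top$ diagonal entry.
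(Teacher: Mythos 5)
Your proof is correct and follows essentially the same route as the paper's: pigeonhole on the $|N|+1$ terms to find $A_i=A_j$, then use idempotence ($M^{j-i}=M$) together with the path/product characterization of Boolean matrix powers to conclude $M(A_i,A_i)=\top$. The only difference is that you spell out the power-of-a-matrix characterization explicitly, which the paper leaves implicit in a one-line computation.
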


\begin{proof}
	By the pigeonhole principle, there exist $i<j$ such that $A_i = A_j$.
	Since $M$ is "idempotent", $M^{j-i} = M$. Hence, we have
	\[M(A_i, A_i) = M^{j-i}(A_i,A_j) = \bigwedge_{l=i}^{j-1} M(A_l,A_{l+1}) = \top.\qedhere\]
\end{proof}

Thus, one of the $A_i$ is such that $M(A_i, A_i) = \top$ (note that $i$ may depend on the branch).
This means that for all $\zb \in \Ib^*$ with $\morphism(\zb) = e$, we have a derivation from $(A_i,X)[\zb]$ to $u (A_i,X) u'$ for some $u, u'\in \sentforms$ which can be erased since we are only interested in the downward closure (and because of the "productive@@grammar"ness property).
Hence, $(A_i,X)$ can ``erase'' an arbitrary sequence of infixes mapping to $e$ at the top of its stack. The "skip rule" is obtained, roughly speaking, by erasing the sequence $u_{i+1}\cdots u_N z_e u_1\cdots u_i$ at the appropriate $(A_i,X)$ in each branch of a derivation. Since we are sure to encounter, when popping a sequence of $|N|$ infixes mapping to $e$, such a non-terminal on every branch of the derivation (i.e. which can pop such infixes at will), we are able to eliminate all applications of the "skip rule" \whenFull{(see \cref{app:eliminate-pump-skip} for the full proof)}{}.

\section{Summarizing stack contents}
\label{sec:summaries}

We have shown that adding the "pump rule" and "skip rule" to our "annotated" "indexed grammar" does not change its "downward closure". These rules provide flexibility on stack infixes that map to idempotents: the "pump rule" can extend them, and the "skip rule" can reduce them.

We wish to compress stack contents into bounded summaries, where such sequences of infixes are abstracted away, by remembering only the first and last $|N|$ of them.
A natural candidate for this task is Simon's factorization forest theorem~\cite{Simon90}.
It gives us a way to evaluate a word in a finite monoid via a tree of bounded height, using binary products and arbitrary iterations of idempotent elements.
If we only care about the first and last $|N|$ elements in a sequence of idempotent infixes, we can cut out from this tree all the nodes corresponding to the remaining idempotent infixes.

Two problems remain: First, Simon's theorem gives a linear bound for the height of the tree in the monoid size. In our case this would yield summaries of doubly exponential size, and  a quadruply exponential upper bound on an NFA for the "downward closure", while our lower bound is only triply exponential.
To solve this, we dive deeper into the structure of the monoid, utilizing results by Jecker~\cite{Jecker21} which let us cut our monoid into polynomially many layers. Within each layer, we decompose the word by reading it from right to left and compressing sequences of idempotent infixes.

The other problem is that to simulate the "indexed grammar" with a "context-free grammar", we use a version of the push operation on the compressed words: from the compressed version of a word $z$ and a letter $x$, we need to be able to compute the compressed version of $x z$.
This is not a property of Simon's theorem, at least not in its original formulation.
One way to circumvent this problem was through the introduction of \emph{forward Ramsey splits}~\cite{Colcombet-ICALP2007}, a weaker version of factorizations which still detects sequences of idempotent infixes.
In fact, Jecker's results have been applied to improve computational bounds on those splits~\cite{LopezMT25}. It may be interesting to see if one can obtain a form of "summaries" from such splits, by losing information to obtain a bounded object.
Here, however, we do not rely on these, but provide an elementary construction computed deterministically by reading the word right-to-left.

We rely on a recent construction of such summaries presented in~\cite{GimbertMT25arxiv}. Our exposition is self-contained, however, since we use different notation and our summaries need to be constructed in a slightly different manner. Specifically, theirs need to remember only the first and last infixes, while we need the last $|N|$, and theirs are constructed as trees of bounded height, bottom-up, while we need to build ours from right to left along the stack content.

\myparagraph{Green's relations.} \AP We begin with some notions from semigroup theory.
Let $(\mathbf{M}, \cdot, \neutralM)$ be a finite monoid.
We define the usual Green relations $\Jgreen, \Lgreen,\Rgreen, \Hgreen$ on $\mathbf{M}$, starting with the following quasi-orders:
\begin{itemize}
	\item $ x \intro*\Jleq y$ if there exist $a,b \in \mathbf{M}$ such that $x = a \cdot y \cdot b$
	\item $ x \intro*\Lleq y$ if there exist $a \in \mathbf{M}$ such that $x = a \cdot y$
	\item $ x \intro*\Rleq y$ if there exist $b \in \mathbf{M}$ such that $x = y \cdot b$
	\item $ x \intro*\Hleq y$ if $x \Lleq y$ and $x \Rleq y$	
\end{itemize}

The relations $\intro*\Jgreen, \intro*\Lgreen, \intro*\Rgreen, \intro*\Hgreen$ are the equivalence relations induced by those quasi-orders: 
for each $\mathcal{X} \in \set{\mathcal{J}, \mathcal{L}, \mathcal{R}, \mathcal{H}}$ we define $\mathcal{X} = \mathord{\leq_{\mathcal{X}}} \cap \mathord{\geq_{\mathcal{X}}}$, as well as $\mathord{<_{\mathcal{X}}} = \mathord{\leq_{\mathcal{X}}} \setminus \mathord{\geq_{\mathcal{X}}}$.
We do not include the monoid $\mathbf{M}$ in the notation since it will always be clear from the context.

\myparagraph{$\mathcal{J}$-length and $\mathcal{J}$-depth.} 
\AP The ""regular $\mathcal{J}$-length""
\footnote{Called regular $\mathcal{D}$-length in~\cite{Jecker21}. 
	For finite monoids, $\mathcal{D} = \mathcal{J}$, and we use $\mathcal{J}$ here since it is more common. Furthermore, it is defined differently in~\cite{Jecker21}, but a proof that both definitions are equivalent can be found in the long version of that paper~\cite[Appendix B]{Jecker21arxiv}.} of $\mathbf{M}$, denoted $\intro*\Jlength{\mathbf{M}}$, is defined as 
\[\Jlength{\mathbf{M}} = \sup\set{k \in \NN \mid \exists e_1 >_{\mathcal{J}} \cdots >_{\mathcal{J}} e_k \in \idempotents{\mathbf{M}}}. \]

\begin{definition}
	The ""regular $\mathcal{J}$-depth"" (or just "depth") of an element $x \in \mathbf{M}$ is the maximal number $d$ such that there are $e_1,\ldots, e_d \in \idempotents{\mathbf{M}}$ with $e_1 >_{\mathcal{J}} \dots >_{\mathcal{J}} e_d >_{\mathcal{J}} x$.
	
	We write $\intro*\Jheight{x}$ for the "regular $\Jgreen$-depth" of an element $x \in \mathbf{M}$.
	We extend this notation to sequences of elements:
	For $u \in \mathbf{M}^*$, we write $\Jheight{u}$ instead of $\Jheight{\evalmorph{\mathbf{M}}(u)}$.
\end{definition}

\begin{remark}\label{rem:product-depth}
	Note that for all $x,y \in \mathbf{M}$, one has $\Jheight{x \cdot y} \geq \max(\Jheight{x}, \Jheight{y})$. 
	This is because of the definition of $\Jgreen$, since $x \cdot y \Jleq x$ and $x \cdot y \Jleq y$.
\end{remark}

\begin{remark}
	Observe that in the case of the "stack monoid" $\prodmonoid$, we have $x<_{\Jgreen} \neutral$ for every $x \in \prodmonoid \setminus \set{\neutral}$, because if $y,z\in\prodmonoid\setminus\set{\neutral}$, then $y\cdot z\ne\neutral$.
	As a consequence, $\Jheight{\neutral} = 0$, whereas
	$\Jheight{x}$ is in $[1, \Jlength{\prodmonoid}]$ for all $x \in \prodmonoid \setminus \set{\neutral}$.
	The upper bound follows from the definition of $\Jlength{\prodmonoid}$.
	Positivity of $\Jheight{x}$ is due to $x <_{\Jgreen} \neutral$.  
\end{remark}

\myparagraph{Summaries.}
We now define the main object of this section, \emph{summaries}. As mentioned above, they can be viewed as compressions of stack words (with some information loss). Syntactically, these are sequences of (sequences of (...)) sequences, where the nesting depth depends on the depth of $\morphism(w)$, where $w$ is the stack word to be compressed. These sequences contain letters from $\Ib$, but also a special letter $e^+$ for each "idempotent" $e\in\idempotents{\prodmonoid}\setminus\set{\neutral}$. Intuitively, $e^+$ represents a sequence of infixes that evaluate to $e$.

Summaries will have a well-defined image under $\morphism$, and the summary of $w\in\Ib^*$ will agree with $w$ under $\morphism$. To this end, we set $\morphism(e^+) = e$ for all such $e$. 
We also extend $\morphism$ to $(\Ib^*)^*$ by defining the image of a sequence of words $z_1 \dots z_n \in (\overline{I}^*)^*$ as the image of their concatenation, and so on for sequences of (sequences of (...)) sequences.
To simplify notation, given a sequence of stack symbols $z \in \Ib^*$, we also write $\Jheight{z}$ for $\Jheight{\morphism(z)}$, and we extend this notation to sequences of sequences of (sequences of (...)) sequences naturally.

Formally, a $0$-summary is just the empty word, and a $(d+1)$-summary is a sequence of (i)~elements of $\Ib \cup \set{e^+ \mid e\in \idempotents{\prodmonoid}}$ and of (ii)~$d$-summaries.
As explained above, $\morphism(\sigma)$ and $\Jheight{\sigma}$ are well-defined for all "summaries" $\sigma$.
\begin{definition}
	A ""$0$-summary"" is simply the empty word $\epsilon$.
	Let $d \in [1, \Jlength{\prodmonoid}]$.
	\begin{itemize}
		\item A ""$d$-atom"" is a word $(f,X) \sigma$ with $(f,X) \in \Ib$ and $\sigma$ a "$d'$-summary" for some $d' < d$, such that $\Jheight{(f,X) \sigma} =d$.
		
		\item A ""$d$-block"" is a sequence of the form $u_1 \cdots u_N e^+ v_1 \cdots v_N w$ where $u_1, \dots, u_N, v_1, \dots, v_N$ and $w$ are sequences of "$d$-atoms" and $e \in \idempotents{\prodmonoid}\setminus \set{\neutral}$, such that $\morphism(u_i)=\morphism(v_i) = e$ for all $i$ and $\Jheight{u_1 \cdots u_N e^+ v_1 \cdots v_N w} = d$.
		
		\item A ""$d$-summary"" is a sequence of the form $\sigma' u B_1 \dots B_k$ where $\sigma'$ is a "$d'$-summary" for some $d'<d$, $u$ is a word of "$d$-atoms", and $B_1, \dots, B_k$ are "$d$-blocks" with $\Jheight{\sigma' u B_1 \dots B_k} = d$. 
	\end{itemize}
	A ""summary"" is a "$d$-summary" for some $d$. 
	Observe that by definition, a "$d$-summary" $\sigma$ has $\Jheight{\sigma}=d$.
	The set of "summaries" is denoted $\intro*\Summaries$.
\end{definition}

	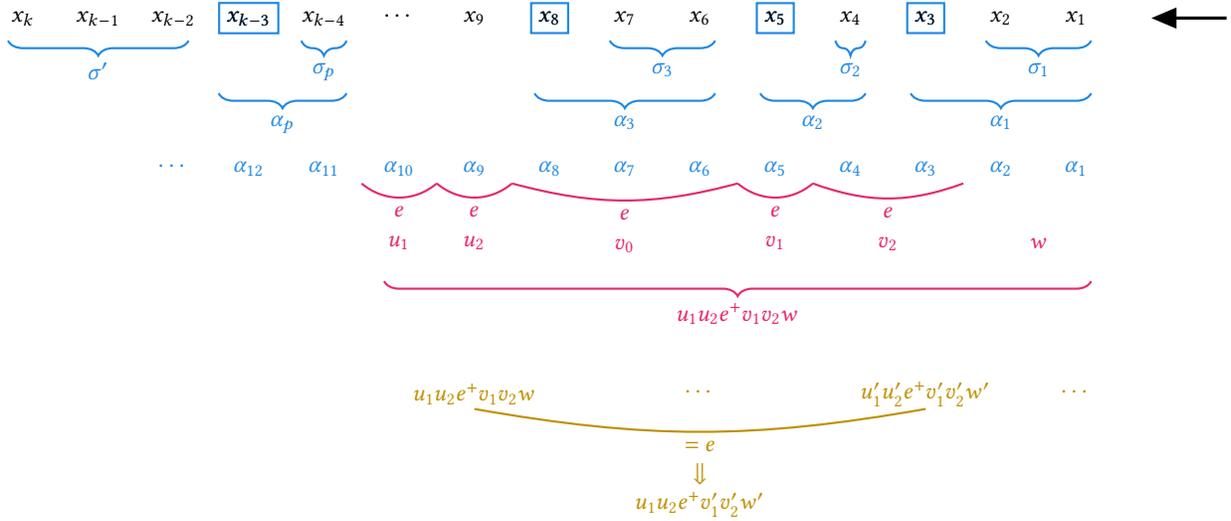
\begin{figure*}[ht]
	
	\begin{tikzpicture}[auto,>= triangle
		45, scale=0.85]

		\path[draw, thick, color=blue1,decorate,decoration={brace, amplitude=2mm}] (-0.8,-0.3) -- (-2.2,-0.3) node[pos=0.5,below=5pt]{$\sigma_1$};
		\path[draw, thick, color=blue1,decorate,decoration={brace, amplitude=2mm}]  (-3.8,-0.3) -- (-4.2,-0.3) node[pos=0.5,below=5pt]{$\sigma_2$};
		\path[draw, thick, color=blue1,decorate,decoration={brace, amplitude=2mm}]  (-5.8,-0.3) -- (-7.2,-0.3) node[pos=0.5,below=5pt]{$\sigma_3$};
		\path[draw, thick, color=blue1,decorate,decoration={brace, amplitude=2mm}] (-10.7,-0.3)-- (-11.3,-0.3) node[pos=0.5,below=5pt]{$\sigma_p$};
		\path[draw, thick, color=blue1,decorate,decoration={brace, amplitude=2mm}]  (-12.8,-0.3) -- (-15.2,-0.3)  node[pos=0.5,below=5pt]{$\sigma'$};
		
		\path[draw, thick, color=blue1,decorate,decoration={brace, amplitude=2mm}] (-0.8,-1) -- (-3.2,-1) node[pos=0.5,below=5pt]{$\alpha_1$};
		\path[draw, thick, color=blue1,decorate,decoration={brace, amplitude=2mm}] (-3.8,-1) -- (-5.2,-1) node[pos=0.5,below=5pt]{$\alpha_2$};
		\path[draw, thick, color=blue1,decorate,decoration={brace, amplitude=2mm}] (-5.8,-1) -- (-8.2,-1) node[pos=0.5,below=5pt]{$\alpha_3$};
		
		\path[draw, thick, color=blue1,decorate,decoration={brace, amplitude=2mm}] (-10.7,-1) -- (-12.4,-1) node[pos=0.5,below=5pt]{$\alpha_p$};
		
		\node[draw, color=blue1, thick] at (-3,0) {$x_3$};
		\node[draw, color=blue1, thick] at (-5,0) {$x_5$};
		\node[draw, color=blue1, thick] at (-8,0) {$x_8$};
		\node[draw, color=blue1, thick, inner xsep=2pt] at (-12,0) {$x_{k-3}$};

		\foreach \x in {1,...,9}
		{
			\node (x\x) at (-\x,0) {$x_{\x}$};
		}

		\foreach \x in {1,...,4}
		{
			\pgfmathparse{int(15-\x)}
			\node (\x4) at (-\pgfmathresult,0) {$x_{k-\x}$};
		}

		\node (dots) at (-10,0) {$\cdots$};
		\node (dots) at (-15,0) {$x_k$};

		\draw[ very thick] (1, 0) edge[->] (0, 0);

		\foreach \x in {1,...,12}
		{
			\node[color=blue1] (a\x) at (-\x,-2) {$\alpha_{\x}$};
		}

		\node[color=blue1] (w) at (-13,-2) {$\cdots$};

		\draw[bend left=20, thick, color=red1, align=center] (-2.5,-2.2) edge node[below] {$e$\\$v_2$} (-4.5,-2.2);
		\draw[bend left=40, thick, color=red1, align=center] (-4.5,-2.2) edge node[below] {$e$\\$v_1$} (-5.5,-2.2);
		\draw[bend left=15, thick, color=red1, align=center] (-5.5,-2.2) edge node[below] {$e$\\$v_0$} (-8.5,-2.2);
		\draw[bend left=40, thick, color=red1, align=center] (-8.5,-2.2) edge node[below] {$e$\\$u_2$} (-9.5,-2.2);
		\draw[bend left=40, thick, color=red1, align=center] (-9.5,-2.2) edge node[below] {$e$\\$u_1$} (-10.5,-2.2);
		
		\node[color=red1] (w) at (-1.5,-3) {$w$};

		\path[draw, thick, color=red1,decorate,decoration={brace, amplitude=2mm}] (-0.8,-3.5) -- (-10.2,-3.5) node[pos=0.5,below=5pt]{$u_1 u_2 e^+ v_1 v_2 w$};

		\node[color=yel1] (w) at (-9,-5) {$u_1 u_2 e^+ v_1 v_2 w$};
		\node[color=yel1] (w) at (-3,-5) {$u'_1 u'_2 e^+ v'_1 v'_2 w'$};
		\node[color=yel1] (w) at (-1,-5) {$\cdots$};
		\node[color=yel1] (w) at (-6,-5) {$\cdots$};

		\draw[bend left=10, thick, color=yel1, align=center] (-3,-5.2) edge node[below] {$= e$\\$\Downarrow$ \\$u_1u_2 e^+ v'_1 v'_2 w'$} (-9,-5.2);
	\end{tikzpicture}	
	\caption{A visual of how "summaries" are constructed. Say we are given a word $z$ of "regular $\Jgreen$-depth" $d$.
		We read it from right to left. We cut it into "$d$-atoms" by iteratively taking the smallest suffix of "regular $\Jgreen$-depth" $d$. Its summary consists of its leftmost letter and a "$d'$-summary" for its "tail" of depth $d'<d$.
		In parallel, we read the resulting sequence of atoms $ \cdots \alpha_3 \alpha_2 \alpha_1$ from right to left. Every time we find an infix with a "prefix" made of $2|N|+1$ infixes mapping to the same idempotent, we turn it into a "$d$-block" $u_1 u_2 e^+ v_1 v_2$.
		Finally, whenever we have two blocks corresponding to an idempotent $e$ and such that the infix between their middle parts also evaluates to $e$, we merge them into one "$d$-block".}\label{fig:summary}
\end{figure*}

\myparagraph{Intuition.} Let us give some intuition on the definition of
summaries.  Note that when processing a string from right to left, the "depth" of
the growing suffix increases monotonically (\cref{rem:product-depth}). We read the word from right to left since this is how our stacks are built.
Here, a
"$d$-atom" represents a suffix-minimal sequence of "depth" $d$: It has depth
$d$, but when removing the left-most letter, the remainder has lower "depth".
Hence, that remainder is given as a "$d'$-summary" for some $d' <d$. 
A "$d$-block" can be thought of as a sequence of atoms where some of them (which mapped to $e$ under $\phi$) were (lossily) compressed
into a single letter $e^+$: This compression is only allowed if the
compressed part had been surrounded by $|N|$ sequences of $d$-atoms (on each
side) that also map to $e$, which are still present in
$u_1,\ldots,u_N,v_1,\ldots,v_N$. Finally, a "$d$-summary" consists of
$d$-blocks $B_1,\ldots,B_k$, some remaining $d$-atoms (that did not permit
compression into $d$-blocks), and a lower-depth part represented by a
$d'$-summary.

\myparagraph{Compressing stacks into summaries.} 
Let us describe how a stack $\zb\in\Ib^*$ is compressed into its (unique) "summary".
For a sequence (i.e.~word over a finite alphabet or a "summary") of length $\ge 1$, its ""tail"" is obtained by removing its leftmost element.
Given a word $\zb$ with $\Jheight{\zb} = d$, we proceed in three stages, illustrated in \cref{fig:summary}. 

	\emph{Stage I: Splitting into $d$-atoms.} First we split the word into
	suffix-minimal infixes of "depth" $d$, from right to left,  and a
	residual "prefix" of some "depth" $d'<d$. By suffix-minimality, the "tail" of each
	of these "depth"-$d$ infixes has "depth" $< d$.  We thus turn each of
	these "depth"-$d$ infixes into a "$d$-atom" by computing recursively a
	"summary" for its lower-"depth" "tail".  Afterwards, the residual "prefix"
	of "depth" $d'<d$ is recursively turned into a "summary".

	\emph{Stage II: Compression into blocks.} Then, we look at the sequence
	of "$d$-atoms" and their values in $\prodmonoid$. Going from right to
	left, we look for sequences of $2|N|+1$ infixes $u_1 \cdots u_N v_0 v_1
	\cdots v_N$ all evaluating to the same idempotent. Whenever we find
	such a pattern we turn the current infix (i.e. the pattern with, possibly, a "suffix" $w$) into a "$d$-block" by
	replacing the middle part $v_0$ by $e^+$.  There may be a remainder $u$
	at the left end of the "$d$-atom" sequence with no such pattern; we write it explicitly
	in the "summary".
	
	\emph{Stage III: Merging blocks.} Finally, we look at the sequence of
	resulting "$d$-blocks".  We go again from right to left, this time
	looking for pairs of blocks corresponding to the same idempotent $e$,
	and so that the infix between their $e^+$ markers also evaluates to
	$e$.  This lets us merge them into a single block, obtained by
	abstracting their $e^+$ markers and everything in between in a single
	$e^+$. 

\myparagraph{Updating a summary.} 
A key feature of summaries is that given the "summary" $\sigma$ of a stack $\zb\in\Ib^*$ and an additional letter $\xb\in\Ib$, we can compute the "summary" of $\xb\zb$.
This is needed when simulating pushes. We now describe the relevant operation, 
\[ \push{\_}{\_}: \Ib^* \times \Summaries\rightarrow \Summaries. \]
\AP 
Given a depth $d \in [0,\Jlength{\prodmonoid}]$, an element $(f,X) \in \Ib$ and an "$d$-summary" $\sigma$, we define $\intro*\push{(f,X)}{\sigma}$ as follows.
If $d=0$ then $\sigma = \epsilon$ and $\push{(f,X)}{\sigma} = (f,X)$.
If $d> 0$ then $\sigma$ is of the form  $\sigma' u B_1 \dots B_k$.
\begin{enumerate}[leftmargin=5mm]
	\item If $\Jheight{(f,X) \sigma}> d$ then let $d_+ = \Jheight{(f,X) \sigma}$. We set $\push{(f,X)}{\sigma}$ as the "$d_+$-summary" made of a single  "$d_+$-atom" $(f,X)\sigma$
	
	\item Otherwise $\Jheight{(f,X) \sigma} \leq d$. In fact, since $\Jheight{(f,X)\sigma}$ is at least $\Jheight{\sigma} =d$, we even know $\Jheight{(f,X) \sigma} = d$.
	\begin{enumerate}
		\item If $\Jheight{(f,X) \sigma'} <d$ then\\ $\push{(f,X)}{\sigma} = (\push{(f,X)}{\sigma'}) u B_1 \dots B_k$
		
		\item Otherwise, $\Jheight{(f,X) \sigma'}\ge d$. Then we even know $\Jheight{(f,X)\sigma'}=d$, because $\Jheight{(f,X)\sigma'}$ is at most $\Jheight{(f,X)\sigma} \leq d$. Hence, $(f,X) \sigma'$ is a "$d$-atom". 
		\begin{enumerate}
			\item If the sequence of "$d$-atoms" $((f,X) \sigma') u$ is of the form \\$u_1 \dots u_N v_0 v_1 \dots v_N w$ with $\morphism(v_0)= \morphism(u_i)= \morphism(v_i) = e$ for all $i \geq 1$, for some $e \in \idempotents{\prodmonoid}$, then define the "$d$-block" $B = u_1 \dots u_N e^+ v_1 \dots v_N w$ (note that $e^+$ replaces the middle infix $v_0$).
			\begin{enumerate}
				\item Suppose there is $j$ such that $B_j$ is of the form \[u'_1 \dots u'_Ne^+ v'_1 \dots v'_N w'\] and the infix $v_1 \dots v_N w B_1 \dots B_{j-1} u'_1 \dots u'_N $ also evaluates to $e$ under $\morphism$. 
					In this case, we merge everything up to $B_j$ into a single block: We define $\push{(f,X)}{\sigma}$ to be $B'  B_{j} B_{j+1} \dots B_k$, with $B' = u_1 \dots u_Ne^+ v'_1 \dots v'_N w'$.
				When there are multiple such $j$, we pick the maximal one.
				
				\item Otherwise $\push{(f,X)}{\sigma} = B B_{1} \dots B_k$
			\end{enumerate} 
			\item Otherwise $\push{(f,X)}{\sigma} = u' B_1 \dots B_k$ with $u' = ((f,X) \sigma')u$
		\end{enumerate}
	\end{enumerate}
\end{enumerate}

The definition is extended inductively to $\Ib^*$ by
$$\push{(f,X)\zb}{\sigma} = \push{(f,X)}{\push{\zb}{\sigma}}.$$
Intuitively, the $\push{\_}{\_}$ operation simultaneously executes stages I, II and III described above.
Upon adding a new letter $(f,X)$, it checks whether it yields a new atom (Stage~I). Case (1) happens when a new atom is generated because the depth of the summary is increased by adding $(f,X)$, case (a) when no new atom is generated at depth $d$.
In case (b), we have a new atom $(f,X)\sigma'$. We thus have to check whether this new atom yields a new block (Stage II). If it does, we are in case (i), and we then have to apply Stage III, i.e., see if we can merge this new block with another one. In case (A) we can; in case (B) we cannot.

\myparagraph{Popping from summaries.}
We also define the inverse relation: for $\zb \in \Ib^*$ and $\sigma, \sigma'\in \Summaries$, we have $\sigma \in \intro*\pop{\zb}{\sigma'}$ if and only if $\sigma' = \push{\zb}{\sigma}$. Note that $\push{\_}{\_}$ is a function while $\pop{\_}{\_}$ is a relation.
Intuitively, $\push{\_}{\_}$ makes compressions, by creating and merging blocks, thereby losing information. Meanwhile, $\pop{\_}{\_}$ may revert those compressions, and thus requires non-determinism.

\myparagraph{Bounding summaries.}
\AP The ""size@@summary"" of a "$d$-summary" (resp. "$d$-atom", "$d$-block") is defined recursively as the sum of the "sizes@@summary" of its elements, the "size@@summary" of a single letter being $1$.
Although "summaries" can have a priori unbounded "sizes@@summary", the ones we will use in our context-free grammar will be of the form $\push{\zb}{\epsilon}$ for some $\zb\in\Ib^*$. For such "summaries", we can prove a "size@@summary" bound:

\begin{restatable}{lemma}{ExponentialSummaries}
	\label{lem:exp-summaries}
	For each $\zb \in \Ib^*$, the "size@@summary" of $\push{\zb}{\epsilon}$ is at most exponential in $|N|$.
\end{restatable}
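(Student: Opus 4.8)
The plan is to prove the size bound by induction on the regular $\Jgreen$-depth $d=\Jheight{\zb}$, by unfolding the recursive definition of a summary. First I would record a structural fact: every summary occurring \emph{inside} $\push{\zb}{\epsilon}$ (the lower-depth part $\sigma'$, an atom's tail, etc.) is itself of the form $\push{\yb}{\epsilon}$ for some $\yb\in\Ib^*$ of strictly smaller depth — a routine induction on $|\zb|$ using the definition of $\push{\_}{\_}$, in particular case~(2)(a) where a lower summary is pushed onto. So it suffices to bound $g(d):=\max\{\text{size of }\push{\zb}{\epsilon}\mid\Jheight{\zb}\le d\}$. Writing $\push{\zb}{\epsilon}=\sigma' u B_1\cdots B_k$ with $\sigma'$ a lower-depth summary, $u$ a word of $d$-atoms, and each $B_j$ a $d$-block of the shape $u_1\cdots u_N e^+ v_1\cdots v_N w$, and using that each $d$-atom contributes $1$ plus the size of its (lower-depth) tail, one obtains a recursion $g(d)\le(1+W_d)\,(g(d-1)+1)$, where $W_d$ counts all $d$-atoms occurring in $u$ and in the blocks, plus the number $k$ of blocks. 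Since $g(0)=0$ this unwinds to $g(\Jlength{\prodmonoid})\le\prod_{d=1}^{\Jlength{\prodmonoid}}(1+W_d)$, so everything reduces to bounding the number of depth levels $\Jlength{\prodmonoid}$ and each width $W_d$.

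The number of levels is handled by Jecker's results~\cite{Jecker21} (as used in the analysis of $\prodmonoid$): the regular $\Jgreen$-depth of $\prodmonoid$ is polynomial in $|N|$. For the widths, the crucial point is that a summary produced by $\push{\_}{\_}$ is \emph{irreducible} — neither Stage~II nor Stage~III can be applied any more. Concretely: (a) the atom-word $u$ has no splitting $u_1\cdots u_N v_0 v_1\cdots v_N w$ with all $2|N|+1$ groups $\morphism$-equal to a common idempotent, for otherwise case~(2)(b)(i) would have turned part of $u$ into a block; and (b) no two blocks $B_i,B_j$ ($i<j$) share an idempotent $e$ with the infix between their $e^+$-markers also $\morphism$-evaluating to $e$, for otherwise case~(2)(b)(i)(A) would have merged them. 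Invariant~(a), combined with the polynomial Ramsey-type bound for $\prodmonoid$ underlying the layered construction (a $\prodmonoid$-word longer than $\mathrm{poly}(|\prodmonoid|)\cdot|N|$ necessarily contains $2|N|+1$ consecutive factors mapping to a common idempotent), bounds $|u|$ by a polynomial in $|\prodmonoid|$; the same bound controls the number of atoms inside any single block, since a block is produced from the word $((f,X)\sigma')u$ and the Stage~III merge recombines the left end of one block with the right end of another without increasing the atom count.

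The step I expect to be the main obstacle is bounding the number $k$ of blocks. One first checks that invariant~(b) is genuinely preserved by every push: merging always picks the \emph{maximal} mergeable index, so once a block stops being leftmost its relation to the blocks on its right is frozen and can never become mergeable later. Granting this, fix an idempotent $e$ and consider the blocks whose idempotent is $e$; the infixes between consecutive such $e^+$-markers lie in the local monoid $e\prodmonoid e$, are sub-products of the depth-$d$ summary and hence have regular $\Jgreen$-depth exactly $d$ (\cref{rem:product-depth}), and by~(b) no product of consecutive ones equals $e$. But $e$ is the unique idempotent of $e\prodmonoid e$ of depth $d$ (any other idempotent of $e\prodmonoid e$ is strictly $\Jgreen$-below $e$, hence of larger depth), so no such product can be idempotent at all; the polynomial Ramsey bound applied to $e\prodmonoid e$ therefore caps the number of $e$-blocks by a polynomial in $|\prodmonoid|$, and summing over the at most $|\prodmonoid|$ idempotents gives $k\le\mathrm{poly}(|\prodmonoid|)$. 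Hence $W_d\le\mathrm{poly}(|\prodmonoid|)$, and since $|\prodmonoid|$ is at most exponential in $|N|$ while $\Jlength{\prodmonoid}$ is polynomial in $|N|$, the product $\prod_d(1+W_d)$ is an at-most-exponential-in-$|N|$ quantity raised to a polynomial-in-$|N|$ power, hence still at most exponential in $|N|$ — the claimed bound. The delicate points are the invariance of~(a) and~(b) under $\push{\_}{\_}$ (especially the maximal-$j$ behaviour of Stage~III) and making sure that no Ramsey application reintroduces a second exponential, which is exactly why polynomial (rather than classical) Ramsey estimates over $\prodmonoid$ are essential.
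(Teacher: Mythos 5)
Your proposal is correct and follows essentially the same route as the paper's proof: bound the number of depth levels by Jecker's polynomial bound on the regular $\Jgreen$-length of $\prodmonoid$, bound the length of the residual atom-word and of each block via Jecker's Ramsey-type theorem together with the irreducibility of summaries under Stage~II, bound the number of blocks per idempotent by the same Ramsey bound combined with the Green-relations argument that an idempotent product lying in $e\prodmonoid e$ at the same depth must equal $e$ (contradicting the Stage~III merge), and multiply these widths over the polynomially many depth levels. The only slip is calling Jecker's bound ``polynomial in $|\prodmonoid|$'' (it is $|\prodmonoid|^{O(\Jlength{\prodmonoid})}$, as in the paper's constant $K$), but your own final accounting already tolerates this and the exponential-in-$|N|$ conclusion is unaffected.
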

To prove this, we rely on two results of Jecker~\cite{Jecker21}, recalled below. \knowledgenewrobustcmd{\Ramsey}[1]{\cmdkl{\mathcal{R}_{#1}}}

Define the ""Ramsey function"" of $\mathbf{M}$ as follows: for all $k \in \NN$, $\intro*\Ramsey{\mathbf{M}}(k)$ is the minimal $n$ such that for every word of length $n$ there exists $e \in \idempotents{\mathbf{M}}$ and $u_1,\ldots,u_k\in \mathbf{M}^+$ such that the word contains an infix $u_1 \cdots u_k$ with $\evalmorph{\mathbf{M}}(u_i) = e$ for all $i$.

The first one guarantees that in a sequence of exponential length in $\Jlength{\prodmonoid}$, we can find large sequences of consecutive infixes that all map to the same idempotent~\cite[Theorem 1]{Jecker21}:
\begin{theorem}[Jecker]\label{thm:Ismael-main}
	For all finite monoid $\mathbf{M}$, for all $k \in \NN$: 
	\[ \Ramsey{\mathbf{M}}(k) \leq (k |\mathbf{M}|^4)^{\Jlength{\mathbf{M}}}. \]
\end{theorem}

The other one bounds the "regular $\mathcal{J}$-length" of a Boolean matrix monoid by a polynomial in its dimension~\cite[Theorem 2]{Jecker21}:
\begin{theorem}[Jecker]\label{thm:Ismael-height}
	The "regular $\mathcal{J}$-length" of $\matrixmonoid$ is $\Jlength{\matrixmonoid} =\frac{N^2+N+2}{2}$.
\end{theorem}

For Lemma \ref{lem:exp-summaries}, we show that when viewing  when viewing $d$-atoms and "$d'$-summaries", for $d'<d$, as individual letters, the size of a "$d$-summary" is bounded by an exponential in $|N|$. The overall bound on "summaries" then results from the product of those (polynomially many) exponential functions.

\section{Building the context-free grammar}
\label{sec:CFG}

We now construct a "context-free grammar" $\CFG$ that over-approximates $\langIG{\Gg}$, but remains within its "downward closure". It will thus satisfy $\dcl{\langIG{\CFG}}=\dcl{\langIG{\Gg}}$ and enable us to compute an NFA for the latter.

\myparagraph{Definition of the grammar.}
Essentially, $\CFG$ is obtained from $\Ggb$ by replacing stack contents with their "summaries".

\knowledgenewrobustcmd{\FT}{\cmdkl{\textsf{FT}}}

\AP 
We first restrict the set of "summaries" to those that can actually result from a derivation of the indexed grammar.
A summary $\sigma$ is ""feasible@@summary"" if $\sigma=\push{\zb}{\epsilon}$ for some "feasible@@stack" $\zb\in\Ib^*$. Note that this implies $\morphism(\sigma) \neq \zeroPM$ by \Cref{lem:right-morphism}. The non-terminals will be triples $(A,X,\sigma)$ where $(A,X) \in \Nb$, and $\sigma$ is a "feasible@@summary" "summary". We call such triples ""feasible@@triple"". 
The set of "feasible@@triple" triples is denoted $\intro*\FT$.

Define the following context-free grammar $\intro*\CFG$: Its set of non-terminals is $\FT$, with $(S, \Useful, \epsilon)$ the initial one. The set of terminal symbols is $T$. The productions directly mimic the productions in $\Ggb$, except that push and pop productions are simulated by the $\push{\_}{\_}$ and $\pop{\_}{\_}$ relations on summaries:
\begin{itemize}
	\item If $(A,X) \to w \in \Pb$ then $(A,X,\sigma) \to w$, for all "feasible@@summary" $\sigma$.

	\item If $(A,X) \to (B,X)(C,X) \in \Pb$ then\\ $(A,X, \sigma) \to (B,X, \sigma) (C,X, \sigma)$, for all "feasible@@summary" $\sigma$.
	
	\item If $(A,X) \to (B, Y) (f, X) \in \Pb$ then\\ $(A,X, \sigma) \to (B, Y, \push{(f, X)}{\sigma})$, for all "summary" $\sigma$ so that $\sigma$ and $\push{(f,X)}{\sigma}$ are "feasible@@summary".
	
	\item If $(A,Y) (f, X) \to (B, X)  \in \Pb$ then\\ $(A,Y, \sigma) \to (B, X, \sigma')$, whenever $\sigma,\sigma'$ are "feasible@@summary" and $\sigma' \in \pop{(f, X)}{\sigma}$.
\end{itemize}
Abusing terminology slightly, we call production rules of the third and fourth type \emph{pushes} and \emph{pops}, respectively (even though they are standard context-free productions).

\myparagraph{Correctness of the construction.}
The key property of $\CFG$ is that it has the same "downward closure" as $\langIG{\Gg}$.
\begin{theorem}\label{correctness-cfg}
	$\dcl{\langIG{\CFG}}=\dcl{\langIG{\Gg}}$.
\end{theorem}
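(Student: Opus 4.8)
The plan is to establish the two inclusions $\dcl{\langIG{\CFG}} \subseteq \dcl{\langIG{\Gg}}$ and $\dcl{\langIG{\Gg}} \subseteq \dcl{\langIG{\CFG}}$ separately. For the first (soundness) inclusion, I would show that $\langIG{\CFG} \subseteq \langPS{\Ggb}$; combined with Proposition~\ref{prop:eliminate-pump-skip} (which gives $\langPS{\Ggb}\subseteq\dcl{\langIG{\Ggb}}$) and Lemma~\ref{lem-soundness} (which gives $\langIG{\Ggb}=\langIG{\Gg}$), this yields $\dcl{\langIG{\CFG}}\subseteq\dcl{\langIG{\Gg}}$. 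The natural device is a "decompression" map sending a summary $\sigma$ to (the set of) stack words $\zb\in\Ib^*$ with $\push{\zb}{\epsilon}=\sigma$, i.e. $\sigma\in\pop{\zb}{\epsilon}$-reachable. I would prove, by induction on the length of a derivation in $\CFG$, that any derivation $(S,\Useful,\epsilon)\derivesindexedstar_{\CFG} u$ can be matched by a derivation $(S,\Useful)\derivesindexedstarps[\Ggb] u'$ in $\Ggb$ augmented with pump/skip rules, where each non-terminal $(A,X,\sigma)$ in $u$ is replaced by $(A,X)[\zb]$ for a suitably chosen decompression $\zb$ of $\sigma$, and $u\subword u'$ after erasing non-terminals as needed. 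The $\push{\_}{\_}$ operation performs compressions exactly of the kind licensed by $\rulepump$ (extending an idempotent infix) and $\ruleskip$ (collapsing a run of $\ge 2|N|+1$ idempotent infixes down to the first and last $|N|$), so each push/pop step in $\CFG$ corresponds to a genuine stack operation in $\Ggb$ possibly preceded/followed by pump and skip steps. The delicate point is that a single summary is decompressed differently on different branches of the derivation tree (since a pushed symbol is popped on many branches), so the induction must be set up on derivation trees, with the decompression of the stack fixed once per tree node but the use of $\ruleskip$ localised per branch; this mirrors exactly the branch-dependent argument already used for the skip rule in Figure~\ref{fig:skip}.

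For the second (completeness) inclusion, I would show directly that $\langIG{\Gg}\subseteq\langIG{\CFG}$ (no downward closure needed on this side, which is the stronger and cleaner statement). Using Lemma~\ref{lem-soundness} again, it suffices to show $\langIG{\Ggb}\subseteq\langIG{\CFG}$. Here I would take a complete derivation tree of $\Ggb$ and, node by node, replace each term $(A,X)[\zb]$ by the triple $(A,X,\push{\zb}{\epsilon})$. The key observations are: (i) for the root, $\zb=\epsilon$ and $\push{\epsilon}{\epsilon}=\epsilon$, matching the start symbol $(S,\Useful,\epsilon)$; (ii) a push rule $(A,X)\to(B,Y)(f,X)$ applied at a node turns stack $\zb$ into $(f,X)\zb$, and by definition $\push{(f,X)\zb}{\epsilon}=\push{(f,X)}{\push{\zb}{\epsilon}}$, which is exactly the summary appearing in the corresponding $\CFG$ push production; (iii) a pop rule $(A,Y)(f,X)\to(B,X)$ applied at a node with stack $(f,X)\zb'$ produces stack $\zb'$, and since $\push{(f,X)\zb'}{\epsilon}=\push{(f,X)}{\push{\zb'}{\epsilon}}$ we have $\push{\zb'}{\epsilon}\in\pop{(f,X)}{\push{(f,X)\zb'}{\epsilon}}$, which is precisely what the $\CFG$ pop production requires; (iv) all summaries arising this way are of the form $\push{\zb}{\epsilon}$ for a feasible $\zb$, hence feasible, so all triples used are in $\FT$; (v) binary and terminal productions are carried over verbatim. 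Finally one must check $\push{\zb}{\epsilon}$ is well-defined, i.e. the summary of a feasible stack is feasible — this follows since $\morphism$ agrees on $\zb$ and $\push{\zb}{\epsilon}$ (summaries preserve the image under $\morphism$) and feasibility is detected by $\morphism(\cdot)\ne\zeroPM$ via Lemma~\ref{lem:right-morphism}.

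I expect the main obstacle to be the soundness direction, specifically the bookkeeping needed to decompress a summary consistently across a derivation tree while invoking $\ruleskip$ correctly on each branch. The crux is Lemma~\ref{lem:idempotent-property} and Lemma~\ref{lem:right-monoid-M}: when a summary contains a marker $e^+$ (standing for an arbitrarily long run of $e$-infixes that was compressed away), the $\CFG$ derivation may pop "through" this marker in ways that need a matching $\Ggb$ derivation popping an actual stack segment evaluating to $e$; the existence of a branch-local non-terminal $(A_i,X)$ with $M(A_i,A_i)=\top$ that can absorb such a segment (modulo downward closure and productiveness of $\Ggb$) is exactly what makes the skip rule sound, and re-deriving it here in the tree setting is the technical heart. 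Once the per-branch decompression and the pump/skip simulation are in place, combining the inclusions with Proposition~\ref{prop:eliminate-pump-skip}, Lemma~\ref{lem-soundness}, and the trivial inclusion $\langIG{\Gg}\subseteq\langPS{\Ggb}$ (via $\langIG{\Gg}=\langIG{\Ggb}\subseteq\langPS{\Ggb}$) gives $\dcl{\langIG{\CFG}}=\dcl{\langIG{\Gg}}$.
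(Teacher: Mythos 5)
Your completeness half ($\langIG{\Gg}\subseteq\langIG{\CFG}$ via replacing each stack $\zb$ by $\push{\zb}{\epsilon}$ node by node) is exactly the paper's \cref{prop:Ggb-in-CFG}, and your top-level decomposition of the soundness half ($\langIG{\CFG}\subseteq\langPS{\Ggb}$, then \cref{prop:eliminate-pump-skip} and \cref{lem-soundness}) also matches the paper. The gap is in how you propose to prove $\langIG{\CFG}\subseteq\langPS{\Ggb}$. Fixing \emph{a} decompression $\zb$ with $\push{\zb}{\epsilon}=\sigma$ once per tree node does not work: after a binary rule both children carry the same summary $\sigma$, but their subderivations may pop it to \emph{different} summaries $\sigma_2,\sigma_2'\in\pop{(f,Y)}{\sigma}$ (the pop relation is non-deterministic), and a single concrete stack $\zb$ cannot in general be reconciled with both. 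The repair you suggest --- localising $\ruleskip$ per branch, in the style of the skip-elimination argument of Figure~\ref{fig:skip} --- is not available here, because $\rulepump$ only prepends an $e$-word at the \emph{top} of the stack while the material needed to realise a branch's chosen $\sigma_2$ may be missing \emph{deep inside} $\zb$, below symbols that are never popped again on that branch; skips can only delete, never insert. So the decompression must anticipate, at push time, every pop that any branch might later perform, and this is precisely what your sketch leaves unspecified under the phrase \emph{suitably chosen decompression}.

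The paper closes exactly this hole with the $p$-unfolding construction: each $e^+$ in a block is replaced by the concatenation of $(f,X)\unfold{p-1}{\sigma_i}$ over \emph{all} feasible summaries $\sigma_i$ whose push yields that block, listed in the fixed order $\ordersm$, with the parameter $p$ bounding the number of remaining pops. The three lemmas \cref{lem:monotone-unfolding} ($\unfold{p}{\sigma}\ruleskipstar\unfold{p-1}{\sigma}$), \cref{lem:push-rule-aux} and \cref{lem:pop-case} then show that pushes and pops of $\CFG$ can be simulated on unfoldings using only $\rulepump$ and $\ruleskip$, and the induction is on the number of pop steps, not on matching arbitrary decompressions. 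Note also that you mis-locate the technical heart: \cref{lem:idempotent-property} and the branch-local non-terminal with $M(A_i,A_i)=\top$ belong to the proof of \cref{prop:eliminate-pump-skip}, which you are already invoking as a black box; they play no role in the simulation $\langIG{\CFG}\subseteq\langPS{\Ggb}$, whose difficulty is the uniform-stack (unfolding) construction just described. Without that construction, or an equivalent device, your soundness direction does not go through.
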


One of the directions is quite easy: we can simply show that the language of $\CFG$ contains that of $\Ggb$. This is natural as $\CFG$ is built as an over-approximation of $\Ggb$.
We can turn a derivation of $\Ggb$ into one of $\CFG$ by replacing every stack content with its "summary".
\whenFull{The formal proof is presented in Appendix~\ref{app:Ggb-in-CFG}.}{}

\begin{restatable}{proposition}{GgbInCFG}
	\label{prop:Ggb-in-CFG}
	$\langIG{\Ggb} \subseteq \langIG{\CFG}$.  
\end{restatable}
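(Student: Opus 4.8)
The plan is to show that every terminal word derivable in $\Ggb$ is also derivable in $\CFG$, by establishing a more general simulation statement at the level of derivation trees (or, equivalently, sentential forms) and then specializing it to the start symbol. Concretely, I would prove by induction on the length of a derivation the following invariant: if $(S,\Useful)[\epsilon] \derivesindexedstar_{\Ggb} u$ where $u = t_1 \cdots t_m$ is a sentential form with terms $t_i = (A_i, X_i)[\zb_i]$ and terminal factors interspersed, then the corresponding $\CFG$-sentential form $\hat{u}$, obtained by replacing each term $(A_i,X_i)[\zb_i]$ by the non-terminal $(A_i, X_i, \push{\zb_i}{\epsilon})$ and leaving terminals untouched, satisfies $(S,\Useful,\epsilon) \derivesindexedstar_{\CFG} \hat{u}$. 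Since $\push{\epsilon}{\epsilon} = \epsilon$, the base case holds; and when $u \in T^*$ we have $\hat u = u$, so the proposition follows.

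For the induction step I would do a case analysis on the type of the last production of $\Ggb$ applied to $u$ to get $u'$. The terminal-rewriting rules $(A,X) \to w$ and the splitting rules $(A,X) \to (B,X)(C,X)$ are immediate: the same $\CFG$ production applies, since the stack (hence the summary component $\sigma = \push{\zb}{\epsilon}$) is unchanged on the affected term(s), and feasibility of $\sigma$ is preserved because $\zb$ has not changed (and it was feasible by Lemma~\ref{lem:right-morphism}, being a stack content reachable in $\Ggb$). For a push rule $(A,X) \to (B,Y)(f,X)$ applied to a term with stack $\zb$, the new stack is $(f,X)\zb$ and the new summary is $\push{(f,X)}{\push{\zb}{\epsilon}} = \push{(f,X)\zb}{\epsilon}$ by the inductive definition of $\push{\_}{\_}$; this is exactly what the $\CFG$ push production $(A,X,\sigma) \to (B,Y,\push{(f,X)}{\sigma})$ produces, and $(f,X)\zb$ is again a feasible stack content (it is reachable in $\Ggb$), so both summaries are feasible. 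For a pop rule $(A,Y)(f,X) \to (B,X)$ applied to a term with stack $f$-topped, i.e.\ $(f,X)\zb'$ going to $\zb'$, we need the $\CFG$ pop production with source summary $\sigma = \push{(f,X)\zb'}{\epsilon} = \push{(f,X)}{\push{\zb'}{\epsilon}}$ and target summary $\sigma' = \push{\zb'}{\epsilon}$; by the very definition of $\pop{\_}{\_}$ as the inverse relation of $\push{\_}{\_}$, we have $\sigma' \in \pop{(f,X)}{\sigma}$, and both are feasible since $(f,X)\zb'$ and $\zb'$ are reachable stack contents in $\Ggb$. Thus in every case the matching $\CFG$ production exists, completing the induction.

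I expect the only genuinely delicate point to be the bookkeeping around annotated stack contents: one must make sure that every stack content $\zb$ arising in a $\Ggb$-derivation really is of the form $\annotate{z}{X}$ for a feasible $z\in I^*$ (so that the summaries are feasible in the sense required by $\FT$), which is precisely the content of the remark following Lemma~\ref{lem:right-morphism}, and that the projection-to-leaf-word is respected when we replace terms by triples — but these are routine once the invariant is phrased at the level of sentential forms. No subtlety about the downward closure enters here, since this direction is an exact language inclusion, not merely an inclusion up to subwords; the $\push{\_}{\_}$ operation, being a deterministic function, makes the simulation completely faithful. The harder converse direction $\langIG{\CFG} \subseteq \dcl{\langIG{\Gg}}$ — which needs the pump and skip rules and Proposition~\ref{prop:eliminate-pump-skip} — is handled separately.
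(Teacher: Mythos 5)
Your proposal is correct and matches the paper's argument: the paper relabels a $\Ggb$-derivation tree by replacing each stack content $\zb$ with $\push{\zb}{\epsilon}$ and checks rule-by-rule (terminal, split, push via $\push{(f,X)}{\push{\zb}{\epsilon}}=\push{(f,X)\zb}{\epsilon}$, pop via the definition of $\pop{\_}{\_}$) that the result is a $\CFG$-derivation tree, with feasibility of all occurring stacks guaranteeing the triples lie in $\FT$. Your induction on derivations of sentential forms is the same simulation phrased with sentential forms instead of trees, a difference the paper itself declares immaterial.
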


\myparagraph{Simulating $\CFG$ with pumps and skips.}
For the inclusion $\dcl{\langIG{\CFG}}\subseteq\dcl{\langIG{\Gg}}$, we will show that every derivation in $\CFG$ can be simulated by pumps and skips:
\begin{restatable}{proposition}{propCFGinPumpSkip}
	\label{prop:CFG-in-pump-skip}
	$\langIG{\CFG} \subseteq \langPS{\Ggb}$.  
\end{restatable}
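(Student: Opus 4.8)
The plan is to take a complete derivation tree of $\CFG$ starting from $(S,\Useful,\epsilon)$ and yielding a terminal word $w\in T^*$, and transform it into a derivation of $\Ggb$ using $\derivesindexed[\Ggb]$, $\rulepump$ and $\ruleskip$ that also yields $w$. The central difficulty is that in $\CFG$ every node carries a \emph{summary} $\sigma$ of a stack content, whereas in $\Ggb$ we need an actual stack word $\zb\in\Ib^*$; moreover, along a branch of the $\CFG$-derivation the summaries at push/pop steps are related only by $\push{\_}{\_}$ and $\pop{\_}{\_}$, which involve (de)compressions via the $e^+$-markers. So the heart of the proof is to reconstruct, for each node of the $\CFG$-tree, a genuine stack word that is consistent with the summary at that node \emph{and} compatible with the summaries of its children, using the pump rule to ``inflate'' the $e^+$-markers back into long runs of infixes and the skip rule to ``deflate'' them where a pop demands it.

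First I would set up the right invariant. For a summary $\sigma$, call a stack word $\zb\in\Ib^*$ a \emph{realization} of $\sigma$ if $\zb$ is obtained from $\sigma$ by replacing each occurrence of a marker $e^+$ (recursively, inside nested summaries) by some word $z_e\in\Ib^+$ with $\morphism(z_e)=e$, subject to the constraint that the surrounding $d$-atoms of a $d$-block are left intact; in particular $\morphism(\zb)=\morphism(\sigma)$ and, since $\sigma$ is feasible, $\zb$ is feasible by \Cref{lem:right-morphism}. The key two facts I would establish about realizations are: (i) if $\zb$ realizes $\sigma$ then $(A,X)[\zb]\rulepump^*(A,X)[\zb']$ for any ``more inflated'' realization $\zb'$ of $\sigma$ (each $e^+$ sits at the boundary of $\ge|N|$ blocks mapping to $e$, so repeated applications of the pump rule with $\phi(z_e)=e$ push in arbitrarily long idempotent runs — this is exactly what the pump rule is for); and, dually, (ii) if $\sigma'=\push{(f,X)}{\sigma}$ then for a suitable realization $\zb$ of $\sigma$ the word $(f,X)\zb$ is ``$\ruleskip$-reducible'' to a realization of $\sigma'$ — because the only thing $\push{\_}{\_}$ does beyond prepending the letter is block creation (replace a middle $v_0$, whose value is $e$ and which is flanked by $\ge|N|$ atoms mapping to $e$ on each side, by $e^+$) and block merging (abstract everything between two $e^+$-markers, again flanked by $|N|$ atoms mapping to $e$), both of which are precisely instances of $\ruleskip$ at the level of stack words. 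Reading a pop production $(A,Y,\sigma)\to(B,X,\sigma')$ with $\sigma'\in\pop{(f,X)}{\sigma}$, i.e. $\sigma=\push{(f,X)}{\sigma'}$, in the other direction, this gives a realization $\zb'$ of $\sigma'$ and a realization $\zb=(f,X)\zb''$ of $\sigma$ with $\zb''\ruleskip^*\zb'$ (or equality).

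With these two lemmas in hand, the construction proceeds by induction on the $\CFG$-derivation tree, processing it top-down. At the root, $\sigma=\epsilon$ and its only realization is $\epsilon$. Inductively, suppose we have produced a $\Ggb$-node $(A,X)[\zb]$ with $\zb$ a realization of the summary $\sigma$ at the corresponding $\CFG$-node. For a terminal or branching production we just copy it (the stack is unchanged, and $(A,X,\sigma)\to(B,X,\sigma)(C,X,\sigma)$ is mirrored by $(A,X)[\zb]\derivesindexed[\Ggb](B,X)[\zb](C,X)[\zb]$). For a push $(A,X,\sigma)\to(B,Y,\push{(f,X)}{\sigma})$, we first apply a sequence of pump rules to go from $(A,X)[\zb]$ to $(A,X)[\zb_{\mathrm{inf}}]$ where $\zb_{\mathrm{inf}}$ is an inflated realization of $\sigma$ chosen large enough that prepending $(f,X)$ and then contracting (Stage II/III of $\push{\_}{\_}$, realized by $\ruleskip$) lands on a realization of $\push{(f,X)}{\sigma}$; then apply the genuine push $A\to Bf$ and the skip rules. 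For a pop, we symmetrically use the realization guaranteed by fact~(ii): the stack word before the pop is already $(f,X)\zb''$, we apply the genuine pop $Af\to B$ of $\Gg$, and then the residual $\ruleskip^*$ steps to reach the child's realization. Crucially, since $\Ggb$ is productive and $\CFG$ faithfully reflects $\Ggb$'s productions, feasibility of all intermediate terms is automatic, so every rule application is legal. At the leaves we reach exactly the terminal productions of $\Ggb$, so the leaf word is $w$, and since all the derivation and pump/skip steps are in $\derivesindexedstarps[\Ggb]$, we conclude $w\in\langPS{\Ggb}$.

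The main obstacle, and where I would spend most of the care, is making facts (i) and (ii) precise in the presence of \emph{nested} summaries and the ``one-to-one correspondence with $N$'' bookkeeping in the $d$-block definition: one has to check that a single $e^+$-marker at depth $d$ is always genuinely surrounded — in any realization — by $|N|$ full $d$-atom blocks each mapping to $e$, so that the pump rule's precondition (an idempotent $e=(B,X,M,A,X)$ with the flanking structure) and the skip rule's precondition (the literal pattern $u_1\cdots u_N z_e u_1\cdots u_N$) are met, uniformly across all nesting levels. This amounts to an induction on the depth $d$ that tracks how $\push{\_}{\_}$ maintains the ``$\ge|N|$ atoms on each side of every $e^+$'' invariant through cases (1), (a), (b)(i)(A), (b)(i)(B), (b)(ii); once that structural invariant is nailed down, the pump/skip applications are mechanical.
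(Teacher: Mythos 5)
Your overall picture (replace each $e^+$ by concrete $e$-words, use $\rulepump$ to justify surplus material and $\ruleskip$ to discard it, then induct over the $\CFG$-derivation) is the same as the paper's, but two of your key claims do not hold in the form you need them, and the gap sits exactly at the pop case. First, your fact~(i) is false as stated: the pump rule only inserts $z_e$ \emph{at the top} of the stack, and only for an idempotent $e=(B,X,M,A,X)$ whose components match the current term $(B,X)$; it cannot touch an $e^+$-position buried inside the stack or inside a nested summary. So one cannot move between arbitrary realizations of the same summary on demand, and in particular the step where you pump $(A,X)[\zb]$ to an inflated realization $\zb_{\mathrm{inf}}$ just before a push, or adjust deep markers later, is not available. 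Inflation is only possible for the part of the stack that is currently being built at the top, which is why in the paper all the flexibility has to be exercised at push time.

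Second, and more fundamentally, your fact~(ii) is stated in the easy direction (from a realization of $\sigma$ to one of $\push{(f,X)}{\sigma}$, which in fact needs no skips at all, since $\push{\_}{\_}$ only compresses), whereas a pop production requires the converse: the derivation non-deterministically picks $\sigma_2\in\pop{(f,Y)}{\sigma_1}$, and the concrete atoms that $\sigma_2$ reinstates in place of an $e^+$ must already be literally present in the stack word realizing $\sigma_1$, because $\ruleskip$ can only delete repeated infixes, never manufacture content. That stack word was fixed when it was pushed, and after a binary production both subtrees share it, so different branches may demand different, incompatible preimages (and recursively so for deeper pops). Your phrases \emph{suitable realization} and \emph{chosen large enough} hide precisely this anticipation problem, and a top-down induction over arbitrary realizations cannot resolve it. The paper's solution is to abandon arbitrary realizations in favour of canonical $p$-unfoldings, where each $e^+$ is expanded into the concatenation of $(f,X)\unfold{p-1}{\sigma_i}$ over \emph{all} feasible preimage summaries $\sigma_i$, listed in a fixed order $\ordersm$, with $p$ an explicit budget of remaining pop steps; Lemma~\ref{lem:pop-case} then shows that skips alone turn $\unfold{p}{\sigma_1}$ into $(f,Y)\unfold{p-1}{\sigma_2}$ for whichever $\sigma_2$ is chosen, Lemma~\ref{lem:push-rule-aux} handles pushes using pumps at the top, and Lemma~\ref{lem:monotone-unfolding} ($\unfold{p}{\sigma}\ruleskipstar\unfold{p-1}{\sigma}$) maintains the invariant elsewhere. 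Without some such preimage-complete, pop-counted canonical choice of stack word, your induction cannot get past the pop case.
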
	
Indeed, by \cref{prop:eliminate-pump-skip}, this implies that
$\langIG{\CFG}\subseteq\dcl{\langIG{\Ggb}}$ and thus
$\dcl{\langIG{\CFG}}\subseteq\dcl{\langIG{\Ggb}}=\dcl{\langIG{\Gg}}$, establishing \cref{correctness-cfg}.

We shall prove \cref{prop:CFG-in-pump-skip} by simulating "summaries" by actual stacks. 
Recall that in the summary $\push{\zb}{\varepsilon}$ that compresses the stack $\zb$, the letters $e^+$ represent a sequence of "$e$-words", where we call $\wb\in\Ib^*$ an ""$e$-word"" if $\morphism(\wb)=e$. The other letters in $\sigma$ are taken directly from $\zb$. Therefore, the unfolding reverses this: It replaces $e^+$ by $e$-words and leaves the other letters unchanged. 

Intuitively, our strategy for replacing $e^+$ is as follows. We replace $e^+$ by a sequence of all "$e$-words" that might be needed to sustain the remainder of the derivation (specifically, all its pop operations). In a particular branch of the derivation, those "$e$-words" that are not needed can always be cleared using the skip rule. On the other hand, the pump rule allows us to justify introducing all these "$e$-words".

\myparagraph{Unfoldings.} Instead of tailoring the stack $\zb$ simulating $\sigma$ to a specific derivation, we will construct a ``canonical'' stack word $\zb$ for $\sigma$ that only depends on the number of pops in that derivation. We will call this canonical stack word the ``$p$-unfolding'' of $\sigma$, where $p$ is the number of pops it is designed to sustain. This means, intuitively, each $e^+$ is replaced by a large enough concatenation of "$e$-words" so that any sequence of $p$ pops can be executed.

\knowledgenewrobustcmd{\ordersm}{\cmdkl{\unlhd}}
\knowledgenewrobustcmd{\unfold}[2]{\cmdkl{\mathsf{unf}_{#1}(#2)}}

Just to choose the order in which those "$e$-words" appear, we impose an
arbitrary total order (e.g.\ a length-lexicographical ordering) on the set of
"summaries", which we denote $\intro*\ordersm$. 

\begin{definition}
	Let $p \in \NN$ and $\sigma$ be a "$d$-summary". The ""$p$-unfolding"" of $\sigma$, denoted $\intro*\unfold{p}{\sigma}$ is defined inductively w.r.t.\ $d$ and $p$ as follows.
	
	For $d = 0$, the "$p$-unfolding" of a "$0$-summary" (i.e., $\epsilon$) is  $\epsilon$.
	\begin{itemize}
		\item The "$p$-unfolding" of a "$d$-atom" $(f,X) \sigma'$ is $(f,X) \unfold{p}{\sigma'}$.
		
		\item The "$p$-unfolding" of a sequence of "$d$-atoms" $\alpha_1 \cdots \alpha_m$ is defined as $\unfold{p}{\alpha_1} \cdots \unfold{p}{\alpha_{m}}$.
		
		\item The "$p$-unfolding" of a "$d$-block" $B = u_1 \dots u_N e^+ v_1 \dots v_N w$ is
				\[			z^u  z^v ((f,X) \unfold{p-1}{\sigma_1}) \cdots ((f,X) \unfold{p-1}{\sigma_r}) z^v \unfold{p}{w},  \]
			where:
		\begin{itemize}
			\item $(f,X)$ is the first symbol of $B$, that is, the stack symbol such that the first "$d$-atom" of $u_1$ is of the form $(f,X) \sigma'$.
			
			\item $z^u = \unfold{p}{u_1 \dots u_N}$ and $z^v = \unfold{p}{v_1 \dots v_N}$
			
			\item if $p>0$, then  $(\sigma_i)_{1 \leq i \leq r}$ is the family of all "feasible@@summary" "summaries" $\sigma'$ for which we have the equality $\push{(f,X)}{\sigma'}=u_1 \dots u_N e^+ v_1 \dots v_N$, ordered according to $\ordersm$.
			
			\item if $p=0$, then $r=0$, i.e., the "$p$-unfolding" of  $B$ is simply
			$z^u z^v \unfold{p}{w}$.
		\end{itemize}
		 
	\item The "$p$-unfolding" of a "$d$-summary" $\sigma' u B_1 \dots B_m$ is defined as \[ \unfold{p}{\sigma'} \unfold{p}{u} \unfold{p}{B_1} \cdots \unfold{p}{B_m}. \]
	\end{itemize}
\end{definition}

Since the unfolding is obtained by replacing each $e^+$ in a "summary" by a concatenation of words with image $e$ (and all other letters are unchanged), the unfolding has the same image under $\varphi$ as $\sigma$:
\begin{remark}
	For every "summary" $\sigma$ and every $p \in \NN$, we have $\morphism(\unfold{p}{\sigma}) = \morphism(\sigma)$.
\end{remark}

\myparagraph{Removing excess "$e$-words".}
When choosing a stack word to simulate a given "summary" $\sigma$, we pick the
$p$-unfolding, where $p$ is the total number of pops in the entire derivation.
However, some branches will apply less than $p$ pops. The
following lemma is therefore crucial: It allows us to get rid of excess
"$e$-words" that are not needed on less pop-heavy branches, while maintaining the
invariant that we have unfoldings on the stack:
\begin{restatable}{lemma}{lemMonotoneUnfolding}\label{lem:monotone-unfolding}
	For each $\sigma$ and $p \geq 1$: $\unfold{p}{\sigma} \ruleskipstar \unfold{p-1}{\sigma}$.
\end{restatable}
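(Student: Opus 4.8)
The plan is to prove, by induction, the following statement for every $d$-atom, sequence of $d$-atoms, $d$-block, or $d$-summary $\xi$ and every $p \ge 1$: $\unfold{p}{\xi} \ruleskipstar \unfold{p-1}{\xi}$. (It suffices to treat \emph{feasible} summaries, which covers all applications of the lemma; feasibility will be used only to ensure $r \ge 1$ in the block case below.) The induction is lexicographic on the triple $(d, p, s)$, where $d = \Jheight{\xi}$, $p$ is the unfolding level, and $s$ is the syntactic size of $\xi$; the base case $d = 0$ is trivial since then $\xi = \epsilon$ and $\unfold{p}{\epsilon} = \epsilon$. Two facts will be used throughout without comment: $\ruleskip$ is a two-sided congruence on $\Ib^*$ (immediate from its definition, which carries a left context $\zb'$ and a right context $\zb$), so any reduction established for a sub-object lifts into the surrounding stack word; and $\morphism(\unfold{q}{\xi}) = \morphism(\xi)$ for all $q$ (the remark preceding the lemma), so unfolding an object whose $\morphism$-image is an idempotent $e$ again produces an "$e$-word".

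The cases of a $d$-atom $(f,X)\sigma'$, a sequence of $d$-atoms $\alpha_1 \cdots \alpha_m$, and a $d$-summary $\sigma' u B_1 \cdots B_m$ are routine: in each of them $\unfold{p}{\cdot}$ and $\unfold{p-1}{\cdot}$ have exactly the same outer shape, with sub-objects that are strictly smaller in $(d,p,s)$ — the inner $\sigma'$ of an atom or of a summary has $\Jgreen$-depth strictly below $d$, and every other piece has smaller size — so one applies the induction hypothesis to each piece and composes the resulting reductions via congruence. The only situation where $\unfold{p}{}$ and $\unfold{p-1}{}$ differ non-recursively is the $d$-block case, which is the crux.

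Fix a $d$-block $B = u_1 \cdots u_N e^+ v_1 \cdots v_N w$ with first symbol $(f,X)$, and put $z^u_q = \unfold{q}{u_1 \cdots u_N}$ and $z^v_q = \unfold{q}{v_1 \cdots v_N}$. Note that $z^u_q$, $z^v_q$ and each $(f,X)\unfold{q}{\sigma_i}$ are "$e$-words": for the last one, $\morphism((f,X)\sigma_i) = \morphism(\push{(f,X)}{\sigma_i}) = \morphism(u_1 \cdots u_N e^+ v_1 \cdots v_N) = e$ by idempotency of $e$. If $p \ge 2$, then $\unfold{p}{B}$ and $\unfold{p-1}{B}$ have the same outer shape — namely $z^u$, then $z^v$, then $r$ factors $(f,X)\unfold{p-1}{\sigma_i}$ resp.\ $(f,X)\unfold{p-2}{\sigma_i}$, then $z^v$, then $\unfold{p}{w}$ resp.\ $\unfold{p-1}{w}$, with the \emph{same} $r$ — so we apply the induction hypothesis to $u_1\cdots u_N$, $v_1\cdots v_N$ and $w$ at level $p$ (same depth, smaller size) and to each $\sigma_i$ at level $p-1 \ge 1$ (strictly smaller depth), and compose. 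If $p = 1$, then $\unfold{0}{B} = z^u_0\, z^v_0\, \unfold{0}{w}$ whereas $\unfold{1}{B} = z^u_1\, z^v_1\, M\, z^v_1\, \unfold{1}{w}$ with $M = (f,X)\unfold{0}{\sigma_1} \cdots (f,X)\unfold{0}{\sigma_r}$. We first delete the factor $M\, z^v_1$ by a single application of the skip rule: its left neighbour $z^v_1 = \unfold{1}{v_1} \cdots \unfold{1}{v_N}$ supplies the required $N$ consecutive "$e$-words" on the left (each $\unfold{1}{v_i}$ has image $e$), the block $M$ plays the role of $z_e$ (an "$e$-word", since $r \ge 1$ and each of its factors maps to $e$), and the trailing copy $z^v_1$ supplies the matching $N$ on the right; this leaves $z^u_1\, z^v_1\, \unfold{1}{w}$. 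Applying the induction hypothesis at level $1$ to $u_1\cdots u_N$, $v_1\cdots v_N$ and $w$ (all of smaller size) then rewrites this to $z^u_0\, z^v_0\, \unfold{0}{w} = \unfold{0}{B}$, as required.

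The main obstacle is precisely this $p = 1$ block case: verifying that the factors surrounding the chunk to be deleted really decompose as $N$ repetitions of one common "$e$-word" on each side of an "$e$-word" $z_e$, so that the skip rule applies verbatim. This is exactly what the bookkeeping $u_1, \dots, u_N$ and $v_1, \dots, v_N$ inside a $d$-block, together with the deliberate duplication of $z^v$ in the block unfolding, are designed to make possible. One also has to confirm $r \ge 1$ — i.e.\ that $M$ is a genuine "$e$-word" rather than the empty word — which is where feasibility of $\sigma$ enters: a witnessing $\sigma_i$ can be read off from the $\push$-computation that produces the block $B$.
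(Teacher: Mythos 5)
Your proof follows essentially the same route as the paper's: a structural induction in which atoms, atom sequences and summaries are handled componentwise, the block case for $p\ge 2$ is again componentwise (the family $(\sigma_i)$ is the same at levels $p$ and $p-1$), and the only real work is the block case at $p=1$, where a single application of the skip rule deletes the factor $((f,X)\unfold{0}{\sigma_1})\cdots((f,X)\unfold{0}{\sigma_r})\,z^v$, with the preceding and trailing copies of $z^v=\unfold{1}{v_1}\cdots\unfold{1}{v_N}$ serving as the two required length-$N$ sequences of $e$-words and the middle concatenation serving as $z_e$. That is exactly the paper's argument, including the final componentwise reduction of $z^u\,z^v\,\unfold{1}{w}$ down to the $0$-unfoldings; your explicit lexicographic measure $(d,p,s)$ just makes precise what the paper's ``induction on $d$'' uses implicitly.

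The one point where you diverge is the side condition $r\ge 1$. The paper's proof does not discuss it at all (its single skip silently assumes the middle factor is a nonempty $e$-word), whereas you flag it and propose to discharge it via feasibility. Two caveats there. First, your justification is only asserted: that some feasible $\sigma_i$ with $\push{(f,X)}{\sigma_i}=u_1\cdots u_N e^+ v_1\cdots v_N$ ``can be read off from the push-computation'' needs an argument (e.g.\ that the summary present when the block was created or last merged, truncated by its trailing $w$-part, is itself feasible and pushes to exactly the block-minus-$w$); this is essentially the same unproved fact the paper leans on in case (i)(A) of the proof of \cref{lem:pop-case}. Second, your restriction to feasible summaries does not propagate through your own induction: the sub-objects you recurse on (the inner $\sigma'$ of an atom, the sequences $u_1\cdots u_N$, $v_1\cdots v_N$, $w$, and individual blocks) are not themselves of the form $\push{\zb}{\epsilon}$ for a feasible $\zb$, so the invariant you actually need is that every block occurring hereditarily inside a feasible summary admits at least one witnessing $\sigma_i$, and that should be stated and carried along. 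A final small slip: a $\sigma_i$ need not have strictly smaller depth than $d$ (pushing $(f,X)$ need not raise the depth); this is harmless, since your measure still decreases in the $p$-component.
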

Note that it is not possible to simply skip "$e$-words" at will, since that
requires equality of some infixes in the stack. However, unfoldings are
carefully constructed to allow \cref{lem:monotone-unfolding}.  For
example, the fact that we always follow a uniform order $\ordersm$ on
"summaries" is key. 
\whenFull{A full proof can be
found in \cref{app:monotone-unfolding}.}{}

\myparagraph{Simulating pushes using unfoldings.}
Let us now show how unfoldings are used to mimic derivations of $\CFG$ in $\Ggb$, with pumps and skips.
First, note that all productions of $\CFG$ that are not pushes and pops have direct counterparts in $\Ggb$. Suppose we want to simulate a push, say a rule $(A,X,\sigma_1)\to(B,Y,\sigma_2)$ of $\CFG$, induced by a push rule $(A,X)\to (B,Y)(f,X)$ in  $\Ggb$. If $(A,X,\sigma_1)$ is simulated by $(A,X)[\unfold{p}{\sigma_1}]$, then we can use $(A,X)\to (B,Y)(f,X)$ first in $\Ggb$. But then the stack is $(f,X)\unfold{p}{\sigma_1}$, rather than an unfolding of $\sigma_2$. The following lemma tells us that, using pump and skip, we can replace $(f,X)\unfold{p}{\sigma_1}$ by $\unfold{p}{\sigma_2}$, which will then enable us to continue the simulation.
\begin{restatable}{lemma}{lemPushRuleAux}
	\label{lem:push-rule-aux}
	Let $p\in \NN$, let $(A,X, \sigma_1) \to (B,Y, \sigma_2)$ a rule of $\CFG$ with $\sigma_2 = \push{(f,X)}{\sigma_1}$.
	We have  \[(B,Y)[(f,X)\unfold{p}{\sigma_1}] ~~\derivesindexedstarps~~ (B,Y)[\unfold{p}{\sigma_2}].\]
\end{restatable}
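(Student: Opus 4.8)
The plan is to prove this by induction on the "regular $\mathcal{J}$-depth" $d=\Jheight{\sigma_1}$, following the case distinction in the definition of $\push{\_}{\_}$. Throughout I would maintain the stronger statement that the derivation produced uses only the "pump rule" applied at the top of the stack (which is legal because $\morphism$ of the relevant topmost infix is an idempotent $e$ whose first component is $\betaG(f)$ and whose second and fifth components are $f\cdot X$, matching the front non-terminal $(B,Y)=(\betaG(f),f\cdot X)$) together with the "skip rule" applied to an arbitrary infix; transformations of this kind remain valid verbatim when an arbitrary stack content is appended below the modified part, which is exactly what the recursive step needs. I would also use repeatedly that both $\push{\_}{\_}$ and $\unfold{p}{\_}$ preserve images under $\morphism$, so $\morphism((f,X)\unfold{p}{\sigma_1})=\morphism(\unfold{p}{\sigma_2})$, which is needed to check that all pumps and skips are legal.

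The easy cases are clauses (1) and (2)(b)(ii): there $\sigma_2$ is literally $(f,X)\sigma_1$ read as a single higher-depth "$d$-atom", respectively $((f,X)\sigma')u\,B_1\cdots B_k$, and since $\unfold{p}{\_}$ commutes with prepending a symbol to an atom, one gets $(f,X)\unfold{p}{\sigma_1}=\unfold{p}{\sigma_2}$ on the nose, so there is nothing to derive. For clause (2)(a), $\sigma_2=(\push{(f,X)}{\sigma'})\,u\,B_1\cdots B_k$ where $\sigma_1=\sigma' u B_1\cdots B_k$; since $\push{\_}{\_}$ preserves $\morphism$-images, $\push{(f,X)}{\sigma'}$ has depth $\Jheight{(f,X)\sigma'}<d$, and "feasibility@@summary" of $\sigma'$ and $\push{(f,X)}{\sigma'}$ is inherited from that of $\sigma_1$ and $\sigma_2$. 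Hence the induction hypothesis applies to $(f,X)$ and $\sigma'$, giving $(f,X)\unfold{p}{\sigma'}\derivesindexedstarps\unfold{p}{\push{(f,X)}{\sigma'}}$ by pumps and skips; appending $\unfold{p}{u}\unfold{p}{B_1}\cdots\unfold{p}{B_k}$ below and using $\unfold{p}{\sigma_1}=\unfold{p}{\sigma'}\unfold{p}{u}\unfold{p}{B_1}\cdots$ together with the analogous identity for $\sigma_2$ closes the case.

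The heart of the proof is clause (2)(b)(i), where pushing $(f,X)$ turns the "$d$-atom" sequence $((f,X)\sigma')u=u_1\cdots u_N v_0 v_1\cdots v_N w$ (all of $u_i,v_0,v_i$ mapping to a common idempotent $e$) into a new "$d$-block" $B=u_1\cdots u_N e^+ v_1\cdots v_N w$. Writing $z^u=\unfold{p}{u_1\cdots u_N}$, $z^v=\unfold{p}{v_1\cdots v_N}$, $\hat v_0=\unfold{p}{v_0}$, and $T=((f,X)\unfold{p-1}{\tau_1})\cdots((f,X)\unfold{p-1}{\tau_r})$ for the $\ordersm$-ordered family of "feasible@@summary" summaries $\tau$ with $\push{(f,X)}{\tau}=u_1\cdots u_N e^+ v_1\cdots v_N$ (each such $\tau$-atom being an "$e$-word"), and $R'=\unfold{p}{B_1}\cdots\unfold{p}{B_k}$, the source stack after the $\Ggb$-push rule equals $z^u\hat v_0 z^v\unfold{p}{w}R'$, whereas in subcase (B) the target is $z^u z^v T z^v\unfold{p}{w}R'$. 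So I would have to derive $z^u\hat v_0 z^v\unfold{p}{w}R'\derivesindexedstarps z^u z^v T z^v\unfold{p}{w}R'$. The strategy: use the "pump rule" (with idempotent $e=\morphism(u_1)$) to introduce enough copies of the $e$-mapping infixes $z^v$ and of $T$; then apply the "skip rule", whose applicability rests on the fact that $\unfold{p}{B}$ is deliberately built to contain two identical copies of $z^v$ around $T$ (so $z^v\,[\,T\,]\,z^v$ matches the skip pattern with the $|N|$ chunks $\unfold{p}{v_1},\dots,\unfold{p}{v_N}$), collapsing the redundant material to the prescribed shape; the uniform order $\ordersm$ guarantees the $\tau$-atoms appear in the right order, and \cref{lem:monotone-unfolding} reconciles the $p$- versus $(p-1)$-unfoldings of the $\tau_i$. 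In subcase (A), where the new block merges with an existing $B_j$ while $B_1,\dots,B_{j-1}$ are absorbed, the hypothesis that $v_1\cdots v_N w B_1\cdots B_{j-1}u'_1\cdots u'_N$ evaluates to $e$ supplies exactly the $e$-mapping infix needed to skip away $\unfold{p}{B_1},\dots,\unfold{p}{B_{j-1}}$, after which one argues as in (B) with $B_j$'s $v'$- and $w'$-parts in place of $v$ and $w$.

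I expect the main obstacle to be precisely this last case: orchestrating pumps and skips so that, starting from the non-redundant stack $z^u\hat v_0 z^v\unfold{p}{w}R'$, one arrives at exactly the word $\unfold{p}{\sigma_2}$ — keeping in mind that pump only adds material at the very top while skip only deletes, and only across matching $|N|$-fold repetitions. Making the bookkeeping line up, in particular producing the exact family $(\tau_i)$ in the correct order and handling the block-unfolding's $p$-versus-$(p-1)$ mismatch, is where the real work lies; the remaining clauses are essentially routine once the invariant ``the stack is the $p$-unfolding of the current summary'' is correctly maintained.
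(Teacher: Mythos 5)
Your overall plan is the same as the paper's: induct along the case split in the definition of $\push{(f,X)}{\sigma_1}$, observe that in cases (1) and (2)(b)(ii) one has $(f,X)\unfold{p}{\sigma_1}=\unfold{p}{\sigma_2}$ on the nose, handle case (2)(a) by the induction hypothesis (your strengthened invariant that only top-of-stack pumps and infix skips are used, so that appending $\unfold{p}{u}\unfold{p}{B_1}\cdots\unfold{p}{B_k}$ below is harmless, is exactly the justification the paper leaves implicit), and handle the block cases by pump/skip manipulations around the idempotent $e$. So the route is the right one, and it can be completed.

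However, the decisive step — the explicit pump/skip choreography in case (2)(b)(i) — is only announced, not carried out, and as phrased it has two inaccuracies. First, pumping copies of $z^v$ and of $T$ is not enough: since the pump rule only adds material at the very top, you cannot insert anything directly above $\hat v_0=\unfold{p}{v_0}$, so the old prefix $z^u\hat v_0$ must be deleted as a whole (it is not flanked by two identical $N$-fold sequences, $z^u$ and $z^v$ being different), and a \emph{fresh} copy of $z^u$ must be reintroduced. Concretely the paper does pump–skip–pump: pump $z^v$ (an $e$-word) on top, skip the block $z^u\hat v_0$ together with the deeper copy of $z^v$ using the two matching occurrences of $\unfold{p}{v_1},\dots,\unfold{p}{v_N}$, and finally pump the entire $e$-prefix $z''=z^u z^v T$ of $\unfold{p}{B}$, landing exactly on $\unfold{p}{\sigma_2}$; an equivalent pump–pump–skip order also works, but some such explicit sequence is the content of the lemma and must be exhibited (likewise in subcase (A), where the hypothesis that $v_1\cdots v_N w B_1\cdots B_{j-1}u'_1\cdots u'_N$ evaluates to $e$ makes the skipped middle an $e$-word, as you note). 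Second, \cref{lem:monotone-unfolding} is not needed here: the pump rule may push \emph{any} $e$-word, so one simply pumps $z''$, which already contains the $(p-1)$-unfoldings $(f,X)\unfold{p-1}{\sigma_i}$ prescribed by the definition of the block unfolding; the $p$-versus-$(p-1)$ reconciliation via \cref{lem:monotone-unfolding} is what the \emph{pop} lemma needs, not this one.
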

\whenFull{This lemma is proved in Appendix~\ref{app:push-rule-aux}; in fact, only $\rulepump$ and $\ruleskip$ are needed.}{In fact, only $\rulepump$ and $\ruleskip$ are needed to prove this lemma.}

\myparagraph{Simulating pops using unfoldings.}
Pop steps are more difficult. In a production $(A,X,\sigma_1)\to(B,Y,\sigma_2)$ in $\CFG$ induced by a pop rule $(A,X)(f,Y)\to (B,Y)$, $\sigma_2$ is the summary of a word obtained by removing the first letter of a word compressed by $\sigma_1$. This removal might break a block centered around some $e\in\idempotents{\prodmonoid}$, in $\sigma_1$. This means, the symbol $e^+$ is replaced by a concatenation of "summaries". 
However, $p$-unfoldings are designed so that $\unfold{p}{\sigma_1}$ contains enough "$e$-words" for each $e^+$ so that using "skip rules", we can remove a subset of them so that the resulting stack is precisely $(f,Y)\unfold{p-1}{\sigma_2}$. This is shown in the following lemma:
\begin{restatable}{lemma}{lemPopRuleAux}
	\label{lem:pop-case}
	Let $p \geq 1$, let $(A,X, \sigma_1) \to (B,Y, \sigma_2)$ a rule of $\CFG$ with $\sigma_2 \in \pop{(f,Y)}{\sigma_1}$. We have 
	\[(A,X)[\unfold{p}{\sigma_1}] ~~\ruleskipstar~~ (A,X)[(f,Y) \unfold{p-1}{\sigma_2}].\]
\end{restatable}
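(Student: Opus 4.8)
The plan is to establish, by induction on the depth $d = \Jheight{\sigma_1}$, the purely combinatorial statement that $\unfold{p}{\sigma_1} \ruleskipstar (f,Y)\unfold{p-1}{\sigma_2}$ for all feasible summaries $\sigma_1,\sigma_2$ with $\sigma_1 = \push{(f,Y)}{\sigma_2}$ (which is exactly what $\sigma_2 \in \pop{(f,Y)}{\sigma_1}$ means), with a case analysis according to which clause of the definition of $\push{\_}{\_}$ is responsible for $\sigma_1 = \push{(f,Y)}{\sigma_2}$; since $\push{\_}{\_}$ is a function, exactly one clause applies. Three facts are used repeatedly. First, $\ruleskip$ is a congruence — a sub-word of the stack may be rewritten independently of its context — which is immediate from the definition of the skip rule. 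Second, \cref{lem:monotone-unfolding} lets us replace $\unfold{p}{\cdot}$ by $\unfold{p-1}{\cdot}$ on any sub-part via skips, and it extends verbatim from summaries to $d$-atoms and sequences of $d$-atoms. Third, unfoldings preserve $\morphism$, so every factor $\unfold{p-1}{u_i}$, $\unfold{p-1}{v_i}$ occurring below maps to the idempotent $e$ — precisely the side condition demanded by the skip rule. Feasibility of the smaller summaries that arise is straightforward from the structure of $\push{\_}{\_}$ and \cref{lem:right-morphism}.

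The easy clauses are (1) (the depth strictly increases, $\sigma_1 = (f,Y)\sigma_2$ is a single atom), (2a) (the new letter is absorbed into the lower-depth tail, so $\sigma_1 = (\push{(f,Y)}{\sigma_2'})\,u\,B_1\cdots B_k$ where $\sigma_2 = \sigma_2' u B_1\cdots B_k$ and $\Jheight{\sigma_2'}<d$), and (2b.ii) (the new atom forms no block, $\sigma_1 = ((f,Y)\sigma_2')\,u\,B_1\cdots B_k$). In each of these, $\unfold{p}{\sigma_1}$ is already syntactically $(f,Y)\unfold{p-1}{\sigma_2}$ up to replacing $\unfold{p}{\cdot}$ by $\unfold{p-1}{\cdot}$ on its factors, so one concludes by \cref{lem:monotone-unfolding} and congruence — with the extra ingredient, in clause (2a), of the induction hypothesis applied to the first factor $\push{(f,Y)}{\sigma_2'}$, whose depth is below $d$.

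The real work is in the block-creating clause (2b.i), in its non-merging form (B) and its merging form (A). Here $\sigma_2 = \sigma_2' u B_1\cdots B_k$, the prepended letter turns $(f,Y)\sigma_2'$ into a new $d$-atom $\alpha$, the atom sequence $\alpha u$ has the form $u_1\cdots u_N v_0 v_1\cdots v_N w$ with $\morphism(v_0)=\morphism(u_i)=\morphism(v_i)=e$, and the block $B = u_1\cdots u_N e^+ v_1\cdots v_N w$ is formed — and in case (A) amalgamated with $B_1,\ldots,B_j$ into $B' = u_1\cdots u_N e^+ v_1'\cdots v_N' w'$. The decisive observation is that the $p$-unfolding of the created block carries a reservoir $\prod_i (f,Y)\unfold{p-1}{\tau_i}$ over all feasible $\tau_i$ whose $(f,Y)$-push equals the block stripped of its $w$-tail, and that one specific entry $\tau_{i_0}$ — the summary obtained from $\sigma_2$ by deleting its trailing blocks and the $w$-suffix of its atom part (in case (A): by also truncating the tail $w'$ of $B_j$ and dropping $B_{j+1},\ldots,B_k$) — satisfies, after a short computation with the definition of $\push{\_}{\_}$: it is feasible, it lies in the reservoir, and $(f,Y)\unfold{p-1}{\tau_{i_0}} = \unfold{p-1}{u_1\cdots u_N}\,\unfold{p-1}{v_0}\,\unfold{p-1}{v_1\cdots v_N}$ (in case (A): $\unfold{p-1}{\alpha u}\,\unfold{p-1}{B_1}\cdots\unfold{p-1}{B_{j-1}}\,\unfold{p-1}{\tilde B_j}$, where $\tilde B_j$ is $B_j$ with its $w'$-tail removed). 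So, after lowering every outer $\unfold{p}{\cdot}$ in $\unfold{p}{\sigma_1}$ to $\unfold{p-1}{\cdot}$ via \cref{lem:monotone-unfolding}, two applications of the skip rule finish the job: one deletes everything between the leading copy of $\unfold{p-1}{u_1\cdots u_N}$ and the copy of it sitting inside $(f,Y)\unfold{p-1}{\tau_{i_0}}$ (legal because the intervening word maps to $e$ and is flanked on both sides by the $N$ infixes $\unfold{p-1}{u_1},\ldots,\unfold{p-1}{u_N}$, each mapping to $e$), and the other deletes everything between the trailing copy of $\unfold{p-1}{v_1\cdots v_N}$ (resp.\ $\unfold{p-1}{v_1'\cdots v_N'}$) inside $\tau_{i_0}$ and the final $\unfold{p-1}{w}$ (resp.\ $\unfold{p-1}{w'}$); what remains is exactly $(f,Y)\unfold{p-1}{\sigma_2}$.

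The main obstacle is the bookkeeping in the merge case (A): one must check that the truncated block $\tilde B_j$ is legal, that $\sigma_2' u B_1\cdots B_{j-1}\tilde B_j$ is feasible and is a reservoir entry of $B'$, and that the two sides match symbol-by-symbol. This uses that $B'$ inherits its tail $v_1'\cdots v_N' w'$ from $B_j$, so the trailing segment $\unfold{p}{v_1'\cdots v_N'}\,\unfold{p}{w'}$ of $\unfold{p}{B'}$ becomes, after one use of \cref{lem:monotone-unfolding}, the trailing segment $\unfold{p-1}{v_1'\cdots v_N'}\,\unfold{p-1}{w'}$ of $\unfold{p-1}{B_j}$. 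Beyond this routine (if lengthy) bookkeeping and the congruence of $\ruleskip$, the only conceptual point is the choice of the distinguished reservoir entry $\tau_{i_0}$; the uniform order $\ordersm$ is needed only to make the reservoir a well-defined word, since the skips above may be carried out wherever $\tau_{i_0}$ happens to sit in it.
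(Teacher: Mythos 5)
Your proof is correct and takes essentially the same route as the paper's: induction on the depth of $\sigma_1$ with a case split following the clauses of the $\push{\_}{\_}$ definition, the non-block clauses dispatched via \cref{lem:monotone-unfolding} (plus the induction hypothesis in the lower-depth clause), and the block-creating clauses handled by locating the distinguished reservoir entry (your $\tau_{i_0}$, the paper's $(f,Y)z'$) inside the unfolding of the new block and then removing the surrounding material with two skips flanked by the $u$- and $v$-copies. The steps you defer as routine bookkeeping (feasibility and reservoir membership of the truncated summaries, and that the residual word sitting to the right of $\tau_{i_0}$ evaluates to $e$) are exactly the points the paper's own proof also treats without detail.
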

\whenFull{This is shown in Appendix~\ref{app:pop-rule-aux}.}{} Thus, to simulate this pop rule, we can first invoke \cref{lem:pop-case} and then apply the production $(A,X)(f,Y)\to (B,Y)$.

\myparagraph{Simulating the whole derivation.}
With \cref{lem:push-rule-aux,lem:pop-case} in hand, \cref{prop:CFG-in-pump-skip} is now easy to show. Indeed, it is straightforward to simulate an entire derivation of $\CFG$ in $\Ggb$ with pump and skip rules: Simulating pushes and pops is as explained in \cref{lem:push-rule-aux,lem:pop-case}, and the other productions are immediate. 
\whenFull{The full proof can be found in \cref{app:CFG-in-pump-skip}.}{}

We have thus completed \cref{prop:CFG-in-pump-skip} and hence \cref{correctness-cfg}.

\myparagraph{Putting it all together.}
With \cref{correctness-cfg}, we are prepared to prove \cref{main-upper-bound}.
By Lemma~\ref{lem:exp-summaries}, the "size@@summary" of a "summary" is bounded by an exponential in the "size@@IG" of $\Gg$. 
As a consequence, the size of $\CFG$ is at most doubly exponential in the size of $\Gg$.
Since for a given "context-free grammar", one can compute an exponential-sized NFA for its language's "downward closure"~\cite[Corollary 6]{DBLP:conf/lata/BachmeierLS15}, this yields a tripy exponentially sized NFA for $\dcl{\langIG{\Gg}}$, as desired in \cref{main-upper-bound}.

\section{Lower bounds}\label{sec:lower-bound}

In this section, we prove the lower bounds in our main results.

\myparagraph{NFA Lower bound: Overview.} We begin with \cref{construction-lower-bound}. The overall goal
is to have a unique "complete@@tree"
"derivation tree" that is a full binary tree of doubly exponential depth:
clearly, such a tree must have triply exponentially many leaves. Here, the
challenge is to ensure that the paths have doubly exponential length. 

A standard construction
can enforce \emph{singly} exponentially long paths: Use a height-$n$ stack over
the alphabet $\set{\ltr{0},\ltr{1}}$, and then count up from $\ltr{0}^n$ to
$\ltr{1}^n$, resulting in $2^n-1$ steps. This works because when
incrementing a binary expansion of the form $\ltr{1}^m\ltr{0}w$ (the least
significant digit being on the left), we must replace the prefix
$\ltr{1}^m 0$ with $\ltr{0}^m\ltr{1}$. Here, we store the number
$m\le n$ in the non-terminal, so as to restore the stack height $n$ when pushing
$\ltr{0}^m\ltr{1}$.  

Doing the same for stack height $2^n$ is not so easy: To restore a stack height
of $2^n$, we would need to remember (in the non-terminal) a number $m\le 2^n$.
In fact, enforcing a single run of length $2^{2^n}$ in a pushdown automaton of
polynomial size is not possible: A pushdown automaton that accepts any
word must also accept a word of at most exponential length.

Instead, we exploit the fact that an indexed grammar can simulate a pushdown
automaton \emph{with alternation}: We implement binary counting on a stack of
height $2^n$; in order to replace a prefix $\ltr{1}^m\ltr{0}$
with $\ltr{0}^m\ltr{1}$, we non-deterministically push some number of
$\ltr{0}$'s, but then use alternation to ensure that the stack 
height is exactly $2^n$.

\myparagraph{Step I: Checking the stack via alternation.}
To this end, we introduce syntactic sugar. We will use rules of the form 
\begin{equation} A\xrightarrow{\Aa_1,\ldots,\Aa_r}B\,, \label{check-rule} \end{equation}
where $\Aa_1,\ldots,\Aa_r$ are DFAs over the stack alphabet $I$. The
rule has the same effect as $A\to B$, but it can only be applied to a
term $A[z]$ if for each $i=1,\ldots,r$, the stack $z$ has a prefix in
$\lang{\Aa_i}$. Such rules can be implemented with only polynomial overhead:
Introduce non-terminals $C_i,D_i$ for each $i=1,\ldots,r$ and also $E_q$ for
each state $q$ in the DFAs $\Aa_1,\ldots,\Aa_r$ (we assume the state sets are
disjoint). Then we can simulate \eqref{check-rule} using $A\to D_1C_1$, $D_i\to
D_{i+1}C_{i+1}$ for $i=1,\ldots,r-1$, and $D_r\to B$, which split the term
$A[z]$ into terms $B[z]$ and $C_1[z],\ldots,C_r[z]$. We then run $\Aa_i$ using
rules $C_i\to E_{q_i}$, where $q_i$ is the initial state of $\Aa_i$, for each
$i$. The non-terminals $E_q$ simulate the DFAs: for each transition $(p,f,q)$,
we have $E_p f\to E_q$. To check acceptance, we have $E_q\to\varepsilon$ for
each final state $q$.

\myparagraph{Step II: Implementing a binary counter in DFAs.}
We want to use rules \eqref{check-rule} to check that the current stack height is $2^n$, for which we construct
automata $(\Aa_i)_{1 \leq i \leq n}$ over some alphabet $\Sigma_n$ such that:
	(i)~Each $\Aa_i$ has two states $\zero_i$ and $\one_i$, with $\zero_i$ being initial and $\one_i$ the only final state,
	(ii)~$|\Sigma_n| = n$, and
		(iii)~the intersection $\bigcap_{i=1}^n\lang{\Aa_i}$ contains a single word of length $2^n$.
The construction is simpler if we do this for $2^n-1$ instead of $2^n$, which 
suffices: We can introduce a fresh letter $\ltr{\#}$ and build automata
$\Aa'_i$ with $\lang{\Aa'_i}=\lang{\Aa_i}\ltr{\#}$.

The idea is simply to use $\Sigma_n = \set{\inc_1, \ldots, \inc_{n}}$. Each letter is an increment operation over an $n$-bit binary counter: $\inc_i$ should be read as ``flip the $i$-th bit from $0$ to $1$ and all lower bits from $1$ to $0$''.
Each automaton $\Aa_i$ keeps track of the value of the $i$-th bit throughout that sequence of instructions.

Formally, $\Aa_i = (\set{\zero_i, \one_i}, \Sigma_n, \delta_i, \zero_i, \set{\one_i})$ where $\delta_i(\zero_i, \inc_j)$ is defined as $\one_i$ if $j=i$, it is $\zero_i$ if $j<i$, and it is undefined if $j>i$.
Meanwhile, $\delta_i(\one_i, \inc_j)$ is $\zero_i$ if $j>i$, it is $\one_i$ if $j < i$ and it is undefined if $i=j$.
It is easy to check that there is a unique word accepted by those automata, corresponding to the only correct sequence of instructions to increment a $n$-bit binary counter from $0$ to $2^n-1$, which enforces a single string of length $2^n-1$, as desired.

\myparagraph{Step III: Constructing the indexed grammar.}
Let $\Aa_1,\ldots,\Aa_n$ the DFAs over $\Sigma_n$ built above. Since they will check for exponential stack height, but we also need to store the binary digits on the stack, we modify them slightly. For each $i$, the automaton $\Bcal_i$ will work over the alphabet $\{\bot\}\cup (\Sigma_n\times\{\ltr{0},\ltr{1}\})$ and accept exactly the words of the form $(\alpha_1,b_1)\cdots (\alpha_m,b_m)\bot$ where $\alpha_1\cdots\alpha_m\in \lang{\Aa_i}$.
The $\bot$ letter is used to mark the bottom of the stack.

Consider the following grammar: $\Gg_n = (N_n, T, I_n, P_n, S)$ with
		$ N_n = \set{S, A, B, D, F, Z} $,
		$T = \set{\ltr{a}}$,
		$ I_n = \{\bot\}\cup (\Sigma_n \times \set{\ltr{0},\ltr{1}}) $, and
		$P_n$ contains the following rules: 
		\begin{align*}
			&S\to Z\bot & & \duplicate \to A A &&B\to Z(\alpha,\ltr{1})\\
			&Z\to Z(\alpha,\ltr{0}) && A(\alpha,\ltr{1})\to A && A\bot \to F \\
			& Z\xrightarrow{\Bcal_1,\ldots,\Bcal_n}\duplicate & &A(\alpha,\ltr{0})\to B && F\to\ltr{a} 
		\end{align*}
for each $\alpha\in\Sigma_n$.
The grammar works as follows. Initially, it places $\bot$ on the stack and switches to $Z$. A non-terminal $Z$ will then fill the stack with $\ltr{0}$'s, each of which is non-deterministically annotated with some $\alpha\in\Sigma_n$. After pushing these, it verifies that the stack height is $2^n$, by using $Z\xrightarrow{\Bcal_1,\ldots,\Bcal_n}D$. This $D$ splits into two $A$'s, where an increment is performed on the number encoded on the stack: It removes the prefix of the form $\ltr{1}^m\ltr{0}$ and switches to $B$. After this, it has to put back $\ltr{0}^m\ltr{1}$: To this end, it pushes a single $\ltr{1}$ and then using $Z$ pushes $\ltr{0}$'s non-deterministically. It then uses $Z\xrightarrow{\Bcal_1,\ldots,\Bcal_n}D$ to verify that the stack height is $2^n$. All this repeats until in each branch, all stack contents encode the number $2^{2^n}-1$. This means, all terms are of the form $A[z\bot]$, where $z$ has length $2^n$ and all its digits are $\ltr{1}$'s.
Each such $A[z\bot]$ is then rewritten to $F$, and then to $\ltr{a}$. Since the terms are duplicated before each increment (using $D\to AA$), the final number of $\ltr{a}$ letters is $\exp_3(n)$, deriving $\ltr{a}^{\exp_3(n)}$. It is also straightforward to check that $\ltr{a}^{\exp_3(n)}$ is the only derivable word. 
\whenFull{Details are in \cref{app:main-lower-bound}.}{}

\myparagraph{Computational hardness.} The lower bounds for "downward closure
inclusion" and "equivalence@downward closure equivalence problem" now follow
easily from \cref{construction-lower-bound} and results in \cite{Zetzsche16}. In
\cite{Zetzsche16}, the $\Delta(f)$ property of language classes is introduced.
Roughly speaking, it requires simple closure properties and that for given
$n\in\NN$, one can construct in polynomial time the language
$\set{\ltr{a}^{f(n)}}$. Under additional mild assumptions, \cite[Theorem
15]{Zetzsche16} shows that "downward closure inclusion" and
"equivalence@downward closure equivalence" are $\coNTIME(f)$-hard for
$\Delta(f)$ classes. Since all assumptions besides a small grammar for
$\{\ltr{a}^{\exp_3(n)}\}$ are easy to observe, we may conclude that the indexed
languages are $\Delta(\exp_3)$ and the two problems are $\coNEXP[3]$-hard. 
\whenFull{See
\cref{app:computational-hardness} for details.}{}

\myparagraph{DFA size.}
For \cref{main-lower-bound-dfa}, we adapt an idea from \cite[Theorem
7]{DBLP:conf/lata/BachmeierLS15}, which shows a doubly exponential lower bound
for downward closure DFAs for CFL. It is not difficult
to translate the grammar $\Gg_n$ for $\set{\ltr{a}^{\exp_3(n)}}$ into one for
$L_n = \set{uv \mid u,v\in\{\ltr{0},\ltr{1}\}^* \mid |u|=|v|=\exp_3(n),~u\ne v}$.
It is easy to see that a DFA for $\dcl{L_n}$ requires $\exp_4(n)$ states: For
distinct $u,v\in\set{\ltr{0},\ltr{1}}^*$ with $|u|=|v|$, the DFA must accept
$uv$ and $vu$, but reject $uu$ and $vv$. It therefore must enter distinct
states after reading $u$ and $v$. 
\whenFull{See \cref{app:lower-bound-dfa} for details.}{}

\section{Conclusion}

We have established (asymptotically) tight bounds on the size of an automaton for the "downward closure" of an indexed language. We rely on an algebraic abstraction of stack contents to translate indexed grammars into context-free ones while preserving the "downward closure".
\label{beforebibliography}
\newoutputstream{pages}
\openoutputfile{main.pages.ctr}{pages}
\addtostream{pages}{\getpagerefnumber{beforebibliography}}
\closeoutputstream{pages}
\bibliography{biblio.bib}

\onlyFull{
\newpage
\appendix

\section{Normalising indexed grammars}
\label{app:Chomsky}

When describing indexed grammars we sometimes use production rules of the form not allowed by our definition of "indexed grammar"
\begin{enumerate}
	\item[(i)] $A f \to u$ with $u \in (N \cup T)^* \setminus N$
	\item[(ii)] $A \to u$ with $u \in (N \cup T)^* \setminus (N^2\cup T^*)$.
\end{enumerate}

We now formally define how these rules should be eliminated to obtain an "indexed grammar" as in Definition~\ref{def:IG}.

We start by eliminating rules of the first type: we replace each rule $A f \to u$ with $u \in (N \cup T)^* \setminus N$ by two rules $A f \to A'$ and $A' \to u$, with $A'$ a fresh non-terminal.

It remains to eliminate rules of the form $A \to u$ with $u \in (N \cup T)^* \setminus (N^2 \cup T^*)$.
If $u = B \in N$ then replace the rule with $A \to BC$ and $C\to \epsilon$ with $C$ a fresh non-terminal.
Otherwise, decompose $u$ as $u = w_0 A_1 w_1 \dots A_k w_k$ with $A_1, \dots, A_k \in N$ and $w_0, \dots, w_k \in T^*$. 
Introduce fresh non-terminals $B_1, \cdots, B_k, C_1, \dots, C_{k-1}$.
We replace $A \to u$ with rules 
\begin{itemize}
	\item $A \to W_0 B_1$,
	
	\item $W_i \to w_i$ for all $i \in \set{0,\ldots,k}$,
	
	\item $B_i \to A_i C_i$ for all $i \in \set{1,\ldots,k-1}$,
	
	\item $C_i \to W_i B_{i+1}$ for all $i \in \set{1,\ldots,k-1}$,
	
	\item $B_k \to A_k W_k$
\end{itemize}

Note that $\sum_{i=0}^{k}|w_i| \leq |u|$ and $k \leq |u|$.
Hence, each such rule $A \to u$ is replaced by a set of at most $3|u|+1$ rules, introducing at most $2|u|-1$ non-terminals whose lengths sum up to at most $8|u|+6$.
As a consequence, each rule of the form $A f \to u$ is replaced by a set of at most $3|u|+2$ rules whose lengths sum up to at most $8|u| + 8$.

\section{Additional material from Section~\ref{sec:results}}\label{app:results}
The conclusion section (Section 7) of \cite{DBLP:conf/csl/Kartzow11} claims that the "pumping threshold"  $\PumpConst$ (see \cref{sec:results} for the definition) grows at most doubly exponentially. Here, we briefly explain the mistake in this claim.

The corresponding results are Theorems 25, 32, and 33 in \cite{DBLP:conf/csl/Kartzow11}.
Each of them provides a bound $\ell$ (in terms of the number of states, the input alphabet, and a target configuration) such that in a (collapsible) order-2 pushdown automaton, if there is an accepting path of length $\ge \ell$, then there are infinitely many. These bounds are in the form of functions $f_3$ (for Thm.~25) and $f_6$ (for Thms.~32 and 33). However, both $f_3$ and $f_6$ grow at least triply exponentially in the number of states of the pushdown system.
To see this, we track the functions $f_0,\ldots,f_6$, which are defined across the paper, in \cref{growth-table}. 
\begin{table}
\begin{tabular}{cll}
	function & defined where in \cite{DBLP:conf/csl/Kartzow11} & growth w.r.t.\ $|Q|$ \\\hline
$f_0$ & Lemma~14 on p.~329 & constant \\
$f_1$ & Theorem~22 on p.~331 & exponential \\
$f_2$ & Corollary~23 on p.~331 & exponential \\
$f_3$ & Theorem~25 on p.~332 & triply exponential \\
$f_4$ & Lemma~26 on p.~333 & exponential \\
$f_5$ & Corollary~27 on p.~333 & exponential \\
$f_6$ & Theorem~32 on p.~334 & triply exponential
\end{tabular}
	\caption{Growth of functions in the paper \cite{DBLP:conf/csl/Kartzow11}}\label{growth-table}
\end{table}

\section{Additional material from Section~\ref{sec:sound}}\label{app:soundness}
\subsection{Proof of Lemma~\ref{lem:semigroup-action-properties}}\label{app:semigroup-action-properties}
In this subsection, we prove:
\lemSemigroupActionProperties*

We will prove this in the two separate lemmas below.
\begin{lemma}
	\label{lem:left-action}
	For all $f\in I$, $z\in I^*$ and $X\subseteq N$, we have 
	\[fz\cdot X = f\cdot (z\cdot X).\]
\end{lemma}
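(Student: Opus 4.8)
The plan is to prove the slightly more general \emph{associativity} statement $(z_1 z_2)\cdot X = z_1\cdot(z_2\cdot X)$ for all $z_1,z_2\in I^*$ and $X\subseteq N$; Lemma~\ref{lem:left-action} is then the special case $z_1=f$. As the footnote after the lemma statement already hints, one of the two inclusions is routine and the other is where the real content lies.

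For the easy inclusion $z_1\cdot(z_2\cdot X)\subseteq(z_1 z_2)\cdot X$, I would use a ``lift and graft'' argument on derivation trees. Suppose $A\in z_1\cdot(z_2\cdot X)$, witnessed by a derivation tree from $A[z_1]$ whose leaf word $v$ lies in $((z_2\cdot X)\cup T)^*$. Appending $z_2$ below the stack of every node of this tree again yields a valid derivation tree, since the only productions that look at the stack are the push and pop rules and those touch only its top symbol; the root becomes $A[z_1 z_2]$ and every non-terminal leaf $B\in z_2\cdot X$ becomes the term $B[z_2]$. Since $B\in z_2\cdot X$, there is a derivation tree from $B[z_2]$ to some $w_B\in(X\cup T)^*$; grafting these onto the corresponding leaves produces a derivation tree from $A[z_1 z_2]$ with leaf word in $(X\cup T)^*$, so $A\in(z_1 z_2)\cdot X$.

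For the converse inclusion $(z_1 z_2)\cdot X\subseteq z_1\cdot(z_2\cdot X)$, I would induct on the size of a derivation tree $\tau$ from $A[z_1 z_2]$ to some $u\in(X\cup T)^*$, constructing a derivation tree from $A[z_1]$ whose leaf word lies in $((z_2\cdot X)\cup T)^*$. Casing on the production at the root: if the root is a leaf, or applies a rule $A\to w$ with $w\in T^*$, or $A\to BC$, or a push $A\to Bg$ (whose child is $B[g z_1 z_2]$, to which we apply the induction hypothesis with $g z_1$ in place of $z_1$), the construction is immediate by recursing on the subtrees and re-attaching the root. The only genuinely interesting case is a pop rule $A g\to B$ at the root: if $z_1\ne\epsilon$ then $z_1=g z_1'$, the child is $B[z_1' z_2]$, and we recurse with $z_1'$; if $z_1=\epsilon$ then $z_2=g z_2'$ and the child is $B[z_2']$ with $B[z_2']\derivesindexedstar u\in(X\cup T)^*$, so prepending the pop step gives $A[z_2]=A[g z_2']\derivesindexedstar u$, i.e.\ $A\in z_2\cdot X$, and hence the single-node tree with root $A[\epsilon]$ already has leaf word $A[\epsilon]\in((z_2\cdot X)\cup T)^*$.

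This last case is the crux: it is the formal counterpart of ``a derivation that removes $z_1 z_2$ from the stack must, on each branch, remove $z_1$ first and only then $z_2$'', since as long as part of $z_1$ remains on top the pop rule strips it, and once $z_1$ is exhausted whatever the branch does is a derivation eliminating $z_2$, whose endpoint non-terminal therefore lies in $z_2\cdot X$. The points that require care are (i)~running the induction on tree size, so that both children of a binary node count as strictly smaller, and (ii)~carrying the two-parameter statement so that each recursive call can track exactly how much of $z_1$ is still present. Combining the two inclusions gives $(z_1 z_2)\cdot X = z_1\cdot(z_2\cdot X)$, and the lemma follows.
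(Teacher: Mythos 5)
Your proof is correct, and the easy inclusion ($z_1\cdot(z_2\cdot X)\subseteq (z_1z_2)\cdot X$ via pushing $z_2$ under every stack and grafting witnessing derivations onto the non-terminal leaves) is essentially the paper's argument for that direction. For the harder inclusion, however, you take a genuinely different route: the paper keeps the statement in its single-letter form $fz\cdot X\subseteq f\cdot(z\cdot X)$ and performs a one-shot surgery on a fixed derivation tree from $A[fz]$ --- cut each branch at the first node whose label is in $T^*$ or of the form $B[z]$, observe that such a $B$ lies in $z\cdot X$, and strip the $z$ suffix from all remaining stacks to obtain a derivation tree from $A[f]$ --- whereas you run an induction on tree size with the statement generalized to $(z_1z_2)\cdot X = z_1\cdot(z_2\cdot X)$, doing a case analysis on the root production. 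The generalization is exactly what lets your push case go through (the paper's frontier-cut avoids this bookkeeping entirely, since it never needs to recurse), and in exchange you obtain the full associativity of the action for arbitrary words rather than just a letter on the left, with the pop case at $z_1=\epsilon$ playing the role of the paper's ``first node of the form $B[z]$''. Both arguments formalize the same intuition stated after the lemma (each branch must eliminate $f$ before it can touch $z$); yours is a bit longer but more systematic, the paper's is shorter but requires one to check that the truncated, suffix-stripped tree is still a valid derivation tree. One cosmetic point: in your leaf case you should say explicitly that the leaf word being in $(X\cup T)^*$ forces $z_1z_2=\epsilon$ and $A\in X$, and then use $X\subseteq\epsilon\cdot X$; as written, ``immediate'' glosses over this small but necessary observation.
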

\begin{proof}
	If $A\in f\cdot (z\cdot X)$ then (by definition) $A[f]\derivesindexedstar u$, with $u\in (z\cdot X \cup T)^*$. 
	Let $u[z]$ be the "sentential form" obtained by replacing every non-terminal $B$ in $u$ with $B[z]$ (i.e. pushing $z$ onto every stack).
	Since all those non-terminals are in $z \cdot X$, there exists $v \in (X\cup T)^*$ such that $u[z]\derivesindexedstar v$,
	implying that $A[fz]\derivesindexedstar v$ and so $A\in fz\cdot X$.
	
	To show the other inclusion, suppose that $A\in fz\cdot X$ and consider a derivation tree from $A[fz]$ to some $v\in (X\cup T)^*$. 
	Along every branch there is a first node with a label either in $T^*$ or of the form $B[z]$.	
	In the latter case we have $B \in z \cdot X$ (since the tree from this node is a "derivation tree" from $B[z]$ to an element of $(X \cup T)^*$). 
	After deleting everything below these nodes and removing the $z$ suffix from the stack in each label, we obtain a "derivation tree" from $A [f]$ to an element of $(z\cdot X \cup T)^*$, completing the proof. 
\end{proof}
Since non-terminals in $\Useful$ derive terminal words, we have:
\begin{lemma}
	For all $z \in I^*$, $z \cdot \Useful = z\cdot \emptyset = \set{A \in N \mid \langX{A[z]}{\emptyset} \neq \emptyset}$.
\end{lemma}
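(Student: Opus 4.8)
The statement consists of two claims: the equality $z\cdot\emptyset=\set{A\in N\mid \langX{A[z]}{\emptyset}\neq\emptyset}$, which is immediate by unfolding the definition of $z\cdot X$ with $X=\emptyset$; and the equality $z\cdot\Useful=z\cdot\emptyset$, which is the actual content. I would prove the latter by a double inclusion.

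\textbf{The easy inclusion $z\cdot\emptyset\subseteq z\cdot\Useful$.} This follows from monotonicity of the map $X\mapsto z\cdot X$ in its second argument: if $X\subseteq X'$ then $(X\cup T)^*\subseteq(X'\cup T)^*$, hence $\langX{A[z]}{X}\subseteq\langX{A[z]}{X'}$, so $z\cdot X\subseteq z\cdot X'$. Applying this with $\emptyset\subseteq\Useful$ gives the inclusion.

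\textbf{The inclusion $z\cdot\Useful\subseteq z\cdot\emptyset$.} Let $A\in z\cdot\Useful$, so there is a derivation $A[z]\derivesindexedstar v$ with $v\in(\Useful\cup T)^*$, i.e.\ $v=u_0B_1u_1\cdots B_ku_k$ with $u_i\in T^*$ and each $B_i\in\Useful$ occurring with empty stack. By definition of $\Useful$, each $B_i$ derives some word $w_i\in T^*$, i.e.\ $B_i\derivesindexedstar w_i$; since these derivations involve disjoint occurrences and the derivation relation is closed under applying productions inside an arbitrary sentential form, we may continue the derivation of $v$ to obtain $A[z]\derivesindexedstar u_0w_1u_1\cdots w_ku_k\in T^*$. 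Thus $\langX{A[z]}{\emptyset}\neq\emptyset$, i.e.\ $A\in z\cdot\emptyset$. (Formally, one may phrase the substitution step via derivation trees, plugging a complete derivation tree for each $B_i$ into the leaf labelled $B_i$, as in Remark following Definition~\ref{def:IG}.)

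\textbf{Expected obstacle.} There is essentially no obstacle here: the only point requiring a word of care is that the substitution of terminal words for the $\Useful$-labelled leaves is legitimate, which is exactly the standard ``context-freeness'' of derivations (sub-derivations at distinct occurrences do not interfere). This is why the excerpt introduces the lemma with the remark that non-terminals in $\Useful$ derive terminal words, and I would simply make that substitution explicit.
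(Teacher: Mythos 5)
Your proof is correct, but it takes a different route from the paper. The paper proves this lemma by induction on $z$: the base case $z=\epsilon$ is treated as (essentially) the definition of $\Useful$, and the inductive step is the one-line computation $fz\cdot\Useful = f\cdot(z\cdot\Useful) = f\cdot(z\cdot\emptyset) = fz\cdot\emptyset$, i.e.\ two applications of the previously established left-action lemma ($fz\cdot X = f\cdot(z\cdot X)$, Lemma~\ref{lem:left-action}), which is the other half of Lemma~\ref{lem:semigroup-action-properties}. You instead argue directly by double inclusion: monotonicity of $X\mapsto z\cdot X$ for the easy direction, and for $z\cdot\Useful\subseteq z\cdot\emptyset$ the substitution of a complete terminal derivation for each empty-stack occurrence of a non-terminal from $\Useful$. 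Both are sound; your argument is self-contained and does not need the left-action lemma, and it makes explicit the substitution step that the paper's base case compresses into ``equivalent to the definition of $\Useful$'' (that step is needed there too, since $\epsilon\cdot\Useful\subseteq\Useful$ is exactly your plugging-in argument with empty stack). The paper's version buys brevity: once the action identity is in place, the whole lemma reduces to a formal computation, which also explains why the two statements are bundled into one lemma in the main text.
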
 
\begin{proof}
	Note that the second equality is simply the definition of $z\cdot \emptyset$. We proceed by induction on $z$. For $z=\epsilon$, the statement is equivalent to the definition of $\Useful$. Assuming the statement holds for $z$, then two applications of Lemma \ref{lem:left-action} yield
	$$fz\cdot \Useful = f\cdot(z\cdot \Useful) = f\cdot(z\cdot \emptyset) = fz\cdot \emptyset,$$
	completing the proof.
\end{proof}

\subsection{Proof of Lemma~\ref{lem-soundness}}\label{app:soundness-correctness}

In this subsection, we prove:
\lemSoundness*
We prove \cref{lem-soundness} below, after establishing a preliminary result, which details the relation between derivations in $\Gg$ and $\Ggb$.
\knowledgenewrobustcmd{\project}{\cmdkl{\pi}}
\AP Define $\intro*\project: \Nb\,\Ib^* \cup T \to NI^* \cup T$ to be the function projecting each $(A,X) \in \Nb$ to $A$, each $(f,X) \in \Ib$ to $f$ and each $a \in T$ to itself.
We naturally extend it to a morphism from $(\Nb\,\Ib^* \cup T)^*$ to $(NI^* \cup T)^*$

\begin{lemma}\label{lem-consistent}
	Let $(A,Y)[\zb]\in \Nb\,\Ib^*$ be the "$X$-based annotation" of $A[z]\in NI^*$ for some $X\subseteq N$. 
	If $u\in \langX{A[z]}{X}$, then there exists $\ub\in (X\times\{X\}\cup T)^*$ such that $$\pi(\ub) = u \mbox{ and } (A,Y)[\zb]\derivesindexedstar[\Ggb]\ub.$$
	In particular, if $u\in T^*$ then $A[z]\derivesindexedstar[\Gg] u$ implies that $(A,Y)[\zb]\derivesindexedstar[\Ggb] u$ as well.
\end{lemma}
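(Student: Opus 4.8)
The plan is to proceed by induction on the length of a derivation $A[z] \derivesindexedstar_{\Gg} u$ witnessing $u \in \langX{A[z]}{X}$, or equivalently by induction on a derivation tree from $A[z]$ to $u$ (using the remark equating sentential forms and derivation trees). The statement to prove is that whenever $A \in z \cdot X$ (which is guaranteed since $u \in \langX{A[z]}{X}$ means $\langX{A[z]}{X} \neq \emptyset$, hence $A \in z \cdot X$ by definition of the action), the $X$-based annotation $(A,Y)[\zb]$ derives in $\Ggb$ a sentential form $\ub$ with $\project(\ub) = u$ and all symbols of $\ub$ drawn from $(X \times \{X\}) \cup T$. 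Working with the leftmost (or any fixed) derivation, I would do a case analysis on the first production applied.

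First I would handle the four production types. If the first step is $A \to w$ with $w \in T^*$, then $u = w$ and we are done immediately: the rule $(A,X) \to w$ exists in $\Pb$ (here $Y = X$ since the stack $z$ must be empty for this to be the whole derivation, or more carefully: the annotation of $A[z]$ has head non-terminal $(A, z\cdot X)$, and the rule $(A, z\cdot X) \to w$ is in $\Pb$). If the first step is $A \to BC$ with the derivation splitting as $A[z] \derivesindexed B[z]C[z] \derivesindexedstar u_1 u_2 = u$, then since $A, B, C$ must all lie in $z \cdot X$ (because $B[z]$ and $C[z]$ must each derive words in $(X \cup T)^*$, using productiveness-style reasoning — actually this is just the semigroup-action definition applied to the subderivations), the rule $(A, z\cdot X) \to (B, z\cdot X)(C, z\cdot X)$ is in $\Pb$; I apply the induction hypothesis to the two subderivations and concatenate. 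If the first step pushes, $A[z] \derivesindexed B[fz] \derivesindexedstar u$, then $B[fz]$ is the $X$-based annotation story: the annotation of $A[z]$ has head $(A, z\cdot X)$ and the rule $(A, z\cdot X) \to (B, fz\cdot X)(f, z\cdot X)$ lies in $\Pb$ provided $B \in fz \cdot X$, which holds; note $fz \cdot X = f \cdot (z \cdot X)$ by Lemma~\ref{lem:left-action}, so the annotation of $B[fz]$ with base $X$ is exactly $(B, fz\cdot X)[(f, z\cdot X)\zb]$, and I apply the IH. If the first step pops, $A[fz'] \derivesindexed B[z'] \derivesindexedstar u$ with $z = fz'$, then the annotation $\zb$ of $z$ with base $X$ has leftmost letter $(f, z'\cdot X)$, and the rule $(A, f\cdot(z'\cdot X))(f, z'\cdot X) \to (B, z'\cdot X)$ is in $\Pb$ — this requires $B \in z' \cdot X$, which holds since $B[z']$ derives $u \in (X\cup T)^*$ — and I apply the IH to $B[z']$, whose $X$-based annotation is precisely $(B, z'\cdot X)[\zb']$ where $\zb'$ is the tail of $\zb$.

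The last sentence of the lemma is then an immediate specialization: take $X = \emptyset$ (so $X \times \{X\} \cup T \subseteq T$ forces $\ub \in T^*$ and $\project$ is the identity on $T^*$), and observe that $A[z] \derivesindexedstar_{\Gg} u$ with $u \in T^*$ exactly means $u \in \langX{A[z]}{\emptyset}$, so $A \in z \cdot \emptyset$ and the hypothesis applies. The main obstacle I anticipate is the bookkeeping in the push/pop cases: one must verify that the second component of every annotated symbol that appears along the way is forced to be the correct set ($f_{i-1}\cdots f_1 \cdot X$ at position $i$), so that the productions of $\Ggb$ are actually available, and that membership conditions like $B \in f\cdot X$ are always met — these follow from the definition of the action together with Lemma~\ref{lem:left-action}, but they need to be threaded carefully through the induction so that the annotation of each intermediate term genuinely coincides with the term appearing in the $\Ggb$-derivation being built.
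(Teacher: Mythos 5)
Your main induction is essentially the paper's proof: induct on the length of the derivation $A[z]\derivesindexedstar_{\Gg}u$, split on the first production, check in each case that the required membership conditions ($A,B,C\in z\cdot X$, $B\in fz\cdot X$, $B\in z'\cdot X$) follow from the fact that the corresponding subterms derive words in $(X\cup T)^*$, use Lemma~\ref{lem:left-action} to see that the annotation components line up, and apply the induction hypothesis. That part is fine and matches the paper case by case.

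The one genuine slip is in how you derive the final sentence. The ``in particular'' clause is about the \emph{same} $X$-based annotation $(A,Y)[\zb]$ fixed in the statement, for an arbitrary $X$; this is how it is used in Lemma~\ref{lem-soundness}, where one needs $(S,\Useful)\derivesindexedstar[\Ggb]w$, i.e.\ the case $X=\Useful$. Specializing to $X=\emptyset$ proves the claim only for the $\emptyset$-based annotation, which is a different term in general (and in the empty-stack case $(A,\emptyset)$ is not even a non-terminal of $\Ggb$, since $\Nb$ requires $A\in X$). The correct, and simpler, route is: if $u\in T^*$ and $A[z]\derivesindexedstar_{\Gg}u$, then $u\in\langX{A[z]}{X}$ for the given $X$ (because $T^*\subseteq(X\cup T)^*$), so the main statement yields $\ub$ with $\project(\ub)=u$; since $\project$ sends non-terminal terms to non-terminal terms, $\ub$ can contain no such terms, hence $\ub=u$ and $(A,Y)[\zb]\derivesindexedstar[\Ggb]u$. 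With that adjustment your argument is complete.
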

\begin{proof}
	We proceed by induction on the length of the derivation $A[z] \derivesindexedstar_{\Gg} u$, and distinguish cases according to the production rule used in the first step.
	\begin{itemize}
		\item If the rule is of the form $A \to w\in T^*$, then $u=w$ and the statement is immediate.
		
		\item If the  rule is of the form $A \to BC$, then $u = u_B u_C$ where $B [z] \derivesindexedstar_{\Gg} u_B$ and $C [z] \derivesindexedstar_{\Gg} u_C$. 
		By our assumption, we have $Y = z\cdot X$, and since $A[z]$, $B[z]$ and $C[z]$ can all produce a word in $(X\cup T)^*$, we must have $A,B,C \in Y$ (also, $A\in Y$ by definition). 
		Consequently, we may apply the corresponding derivation in $\Ggb$: $(A,Y) [\zb] \derivesindexed[\Ggb] (B,Y)[\zb] (C,Y)[\zb]$.
		Note that $(B,Y)[\zb]$ and $(C,Y)[\zb]$ are "$X$-based annotations" of $B[z]$ and $C[z]$ respectively, 
		so by the induction hypothesis there are $\ub_B, \ub_C\in (X\times\{X\}\cup T)^*$ such that $\project(\ub_B) = u_B$, $\project(\ub_C) = u_C$, $(B,Y)[\zb] \derivesindexedstar[\Ggb] \ub_B$ and $(C,Y)[\zb] \derivesindexedstar[\Ggb] \ub_C$. 
		Setting $\ub = \ub_B\ub_C$, we thus have $\pi(\ub) = u$ and $(A,Y)[\zb] \derivesindexedstar[\Ggb] \ub$, as desired.
		
		\item If the  rule is of the form $A \to B f$, then $B[fz] \derivesindexedstar_{\Gg} u$. We have $Y = z \cdot X$, and $A \in Y$ by definition. Let $Y' = f\cdot Y = fz \cdot X$. Since $B[fz]$ can produce a word in $(X\cup T)^*$, it must be the case that $B \in Y'$. 
		Hence, we have the corresponding derivation $(A,Y)[\zb] \derivesindexed[\Ggb] (B,Y')[(f,Y)\zb]$.
		Note that $(B,Y')[(f,Y)\zb]$ is an  "$X$-based annotation" of $B[fz]$, so from the induction hypothesis we obtain $\ub\in (X\times\{X\}\cup T)^*$ such that $\pi(\ub)=u$ and $(B,Y')[(f,Y) \zb] \derivesindexedstar[\Ggb] \ub$. Hence, $(A,Y)[\zb] \derivesindexedstar[\Ggb] \ub$.

		\item If the  rule is of the form $A f \to B$, then we have $z = f z'$ for some $z'$ such that $B [z'] \derivesindexedstar_{\Gg} u$. 
		Let $\zb'$ be such that $\zb = (f,X')\zb'$. 
		Since $(A,Y)[\zb]$ is the "$X$-based annotation" of $A[z]$, we must have $X' = z'\cdot X$, and since $B[z']$ derives a word in $(X\cup T)^*$ it follows that $B\in X'$. Hence, there is a corresponding derivation $(A,Y)[(f,X')\zb']\derivesindexed[\Ggb](B,X')[\zb']$. 
		Since $(B,X')[\zb']$ is easily seen to be the "$X$-based annotation" of $B[z']$, the induction hypothesis yields a derivation $(B,X')[\zb'] \derivesindexedstar[\Ggb] \ub$ for some $\ub\in (X\times\{X\}\cup T)^*$ with $\project(\ub)=u$. 
		Hence, $(A,Y)[\zb] \derivesindexedstar[\Ggb] \ub$ as desired.
		
	\end{itemize}
	This concludes the proof.
\end{proof}

\begin{proof}[Proof of Lemma \ref{lem-soundness}]
	The inclusion $\langIG{\Ggb} \subseteq \langIG{\Gg}$ is obtained as follows. For all $w \in \langIG{\Ggb}$, we have a "derivation tree" for $\Ggb$ from $(S, \Useful)$ to $w$. Since $\pi$ maps the rules in $\Pb$ to rules in $P$, it is easy to check that the tree obtained by applying $\pi$ to each node is a "derivation tree" from $S$ to $w$ for $\Gg$, implying that $w \in \langIG{\Gg}$.
	
	To obtain $\langIG{\Gg} \subseteq \langIG{\Ggb}$, suppose that $w \in \langIG{\Gg}$. Then there is a derivation $S \derivesindexedstar_{\Gg} w$, and since $(S,\Useful)$ is a "$\Useful$-based annotation" of $S$ it follows from Lemma \ref{lem-consistent} that $(S, \Useful) \derivesindexedstar[\Ggb] w$, implying $w \in \langIG{\Ggb}$.

	\AP	It remains to prove "productiveness@@grammar".
	We wish to show that every $\ub$ such that $(S,\Useful) \derivesindexed[\Ggb] \ub$ is "productive@@SF".
	
	First observe that for all $\ub$ such that  $(S,\Useful) \derivesindexed[\Ggb] \ub$, every "term" $(A,X)[\zb]$ of $\ub$ is a "$\Useful$-based annotation" of some $A[z] \in NI^*$.
	This follows from the definition of $\Ggb$ and an easy induction on the derivation.
	Then, it is enough to prove that every such "term" $(A,X)[\zb]$ is "productive@@SF", since a "sentential form" can produce a terminal word if and only if all its "terms" can. 
	
	As $(A,X)[\zb]$ is a "$\Useful$-based annotation" of some $A[z]$, we have $X = z \cdot \Useful$.
	Since $A \in X$ by definition of $\Nb$, there is a derivation from $A[z]$ to a word of $T^*$.
	As a result, by Lemma \ref{lem-consistent} there is a derivation from $(A,X)[\zb]$ to a word in $T^*$.
\end{proof}

\section{Additional material from Section~\ref{sec:monoid}}\label{app:monoid}
\subsection{Proof of Lemma~\ref{lem:right-morphism}}
\label{app:right-morphism}

\lemRightMorphism*

\begin{proof}
	All three properties are clearly true for $\zb = \epsilon$. We now focus on non-empty stacks.
	
	From the definitions of $\prodmonoid$ and $\phi$, one sees that property \ref{three} is necessary and sufficient to ensure that no product of two consecutive infixes in $\phi(\zb) = \prod_{i=1}^n \phi(f_i,X_i)$ is equal to $\zeroPM$. It is also clear from the definitions that $\prod_{i=1}^n \phi(f_i,X_i) = \zeroPM$ if and only if two consecutive infixes multiply to equal $\zeroPM$, so we immediately obtain $\ref{two}\Leftrightarrow\ref{three}$.
	
	We now show that $\ref{one} \Rightarrow \ref{three}$. Let $\zb$ be "feasible@@stack", so by definition we have a derivation \begin{align}\label{derivation}
		(\alphaG(f_1), X_1) \derivesindexedstar_{\Ggb} u(\betaG(f_n), f_n\cdot X_n)[\zb]v.\tag{$*$}
	\end{align} We proceed by induction on the derivation length. If \eqref{derivation} has length one, then it must be of the form $(\alphaG(f_1), X_1) \derivesindexed_{\Ggb} (\betaG(f_1), f_1\cdot X_1)[(f_1,X_1)]$, whence \ref{three} holds trivially. We now assume that the length is greater than one, and that the first operation is of the form $(\alphaG(f_1), X_1)\to (B,X_1)(C,X_1)$. In this case, we may assume without loss of generality that $(B,X_1)$ derives $u'(\betaG(f_n), f_n\cdot X_n)[\zb]v'$ for some $u',v'\in \sentforms$. Since $(f_1,X_1)$ must eventually be pushed (and $(\alphaG(f_1),X_1)$ is the only nonterminal which allows this), it follows that $B\Reach{X_1}\alphaG(f_1)$, and that there is a derivation  $(\alphaG(f_1),X_1)\derivesindexedstar_{\Ggb} u''(\betaG(f_n),f_n\cdot X_n)[\zb]v''$ for some $u'',v''\in \sentforms$ that is strictly shorter than \eqref{derivation}. Hence, \ref{three} holds by induction.
	
	Now let us assume that the length of \eqref{derivation} is greater than one, and that the first operation is a push (the only remaining possibility). The push operation must be of the form $$(\alphaG(f_1), X_1) \derivesindexed_{\Ggb}(\betaG(f_1),f_1\cdot X_1)[(f_1,X_1)].$$ If $f_1\cdot X_1 \neq X_2$, then it is easy to see that the right hand side cannot derive any term which pushes $(f_2,X_2)$. Hence, we have
	\begin{align}\label{eq:cond3}
		X_2 = f_1\cdot X_1 \mbox{ and }\betaG(f_1)\Reach{X_2}\alphaG(f_2),\tag{$**$}
	\end{align} and there is a derivation $(\alphaG(f_2),X_2)[(f_1,X_1)]\derivesindexedstar_{{\Ggb}}u'(\betaG(f_n),f_n\cdot X_n)[\zb]v'$. Letting $\zb' = (f_n,X_n)\cdots (f_2,X_2)$, this implies that $\zb'$ is "feasible@@stack" with a derivation $$(\alphaG(f_2),X_2)\derivesindexedstar_{{\Ggb}}u'(\betaG(f_n),f_n\cdot X_n)[\zb']v'$$ that is strictly shorter than \eqref{derivation}. Hence, $\zb'$ satisfies \ref{three} by induction, and with \eqref{eq:cond3} we immediately obtain \ref{three} for $\zb$.
	
	Finally, we show that $\ref{three}\Rightarrow \ref{one}$. 
	Let us assume that $\zb$ satisfies \ref{three}. 
	Recall that we assumed that for all $f \in I$ there is a rule pushing $f$.
	Then for $i=1,\ldots,n-1$ we have
	\[(\alphaG(f_i),X_i)\derivesindexed[\Ggb] (\betaG(f_i), X_{i+1})[(f_i,X_i)]\] and \[(\betaG(f_i),X_{i+1})\derivesindexedstar_{{\Ggb}} u_i(\alphaG(f_{i+1}),X_{i+1})v_i\] for some $u_i,v_i\in \sentforms$, as well as \[(\alphaG(f_n),X_n)\derivesindexed[\Ggb](\betaG(f_n),f_n\cdot X_n)[(f_n,X_n)].\] It is clear that we may combine these derivations to obtain \[(\alphaG(f_1), X_1) \derivesindexedstar_{\Ggb} u(\betaG(f_n), f_n\cdot X_n)[\zb]v,\] proving that $\zb$ is "feasible@@stack".
\end{proof}

\subsection{Proof of Lemma~\ref{lem:right-monoid-A}}
\label{app:right-monoid-A}

\lemRightMonoidA*

\begin{proof}
	From the first condition, we have derivations 
	$$(C,X)\derivesindexedstar_{{\Ggb}}u_C(A,X)v_C \mbox{ and }(B,X)\derivesindexedstar_{{\Ggb}}u_D(D,X)v_D$$
	for some $u_C,v_C,u_D,u_D\in \sentforms$. Let $z = f_n\cdots f_1$. From the definition of $\phi$ (and the fact that $\phi(\annotate{z}{X})\neq \zeroPM$) it follows easily that $A = \alphaG(f_1)$, $B = \betaG(f_n)$ and $Y = z\cdot X$. Hence, Lemma \ref{lem:right-morphism} ensures that there is a derivation $$(A,X)\derivesindexedstar_{{\Ggb}} u'(B,Y)[\zb^X]v'$$ for some $u',v'\in \sentforms$, which we combine with the derivations above to obtain $$(C,X)\derivesindexedstar_{{\Ggb}}u(D,Y)[\annotate{z}{X}]v,$$ proving one direction.
	
	For the other implication, suppose such a derivation exists. Eventually, $(f_1,X)$ must be pushed onto an empty stack, and since $(A,X)$ is the only non-terminal which facilitates this operation, it follows that $C\Reach{X}A$. Similarly, the derivation must eventually push the topmost symbol in $\annotate{z}{X}$, and the only non-terminal which can result from this operation is $(B,Y)$. This implies that $(B,Y)[\annotate{z}{X}]\derivesindexedstar_{{\Ggb}}u(D,Y)[\annotate{z}{X}]v$, hence $B\Reach{Y}D$, completing the proof.
\end{proof}

\subsection{Proof of Lemma~\ref{lem:right-monoid-M}}
\label{app:right-monoid-M}

\lemRightMonoidM*
\begin{proof}
	Equivalence of the second and third statements follows easily from Lemma \ref{lem-consistent}. Hence, it suffices to prove equivalence of the first and second statements.
	We proceed by induction on $z$.
	If $|z| =1$ then we have $\annotate{z}{X} = (f, X)$ with $\morphism(f,X) = (B, Y,  M, A, X)$, and the equivalence holds simply by definition.
	
	If $|z| >1$, then let $w$ be such that $z = f w$. Let $Z = w \cdot X$, so that $\annotate{z}{X} = (f,Z)\annotate{w}{X}$, and let us write $(B_f, Y,  M_f, A_f, Z) = \morphism(f,Z)$ and $(B_w, Z, M_w, A_w, X) = \morphism(\annotate{w}{X})$. Note that $M = M_f M_w$. We prove the two directions separately.
	
	\noindent
	$\Rightarrow$: \ \ \
	Suppose $M(D,C) = \top$. Then there exists $E$ such that $M_f(D,E) = M_w(E,C) = \top$. From the induction hypothesis applied to $f$, it follows that $D \in Y$, $E \in Z$ and there exist $u_1,v_1 \in (Z \cup T)^*$ such that $D[f] \derivesindexedstar_{{\Gg}} u_1 E v_1$. On the other hand, applying the induction hypothesis to $w$ yields $C \in X$ and $u_2,v_2 \in (X \cup T)^*$ such that $E[w]\derivesindexedstar_{{\Gg}} u_2 Cv_2$.
	
	Since $Z = w \cdot X$ and $u_1,v_1 \in (Z \cup T)$, there must be $u_3, v_3 \in (X\cup T)^*$ such that $u_1[w] \derivesindexedstar_{\Gg} u_3$ and $v_1[w] \derivesindexedstar_{\Gg} v_3$. 
	Combining the facts above, we get \[D[z] \derivesindexedstar_{{\Gg}} u_1[w] E[w]v_1[w] \derivesindexedstar_{\Gg} u_3 E[w] v_3 \derivesindexedstar_{\Gg} u_3 u_2 C v_2 v_3\]
	as desired. 
	
	\noindent
	$\Leftarrow$: \ \ \
	Suppose $D \in Y$, $C \in X$ and there exist $u, v \in (X \cup T)^*$ such that $D[z] \derivesindexedstar_{{\Gg}} u C v$. Then we must have $D[f] \derivesindexedstar_{\Gg} u_1 U v_1$, $U[w] \derivesindexedstar_{\Gg} u_2 C v_2$,  $u_1[w]\derivesindexedstar_{\Gg} u_3$ and $v_1[w] \derivesindexedstar_{\Gg} v_3$ for some $U\in N$ and "sentential forms" $u_1, u_2, u_3, v_1, v_2, v_3$ such that $u = u_3 u_2$ and $v= v_2 v_3$.
	As a consequence, we have $u_2, u_3, v_2, v_3 \in (X \cup T)^*$.
	Since $w \cdot X = Z$, we get that $u_1, v_1 \in (Z \cup T)^*$, and since $C \in X$, the  derivation $U[w] \derivesindexedstar_{\Gg} u_2 C v_2 \in (X \cup T)^*$ proves that $U \in w\cdot X = Z$. 
	By the induction hypothesis, we have $M_w(D,U) = M_f(U,C) = \top$, implying that $M(D,C) = \top$.
\end{proof}

\section{Additional material from Section~\ref{sec:pump-skip}}\label{app:pump-skip}
\subsection{Proof of Proposition~\ref{prop:eliminate-pump-skip}}
\label{app:eliminate-pump-skip}

We prove the following statement:

\lemPumpSkip*
The following two auxiliary lemmas are required.
The first one shows that we can eliminate "pump rules", the second one that we can eliminate "skip rules".

\AP Call a "sentential form" $u$ of $\Ggb$ ""reachable@@SF"" if $u \in \langSF{(S,\Useful)}$. 
Call a "term" $(B,X)[\zb]$ ""reachable"" if it appears in a derivation from $(S,\Useful)$.

\begin{lemma}\label{lem:eliminate-pumps}
	Let $e = (B,X,M,A,X) \in \idempotents{\prodmonoid} \setminus \set{\zeroPM, \neutral}$, let $(B,X) [\zb]$ be a "reachable" "term" of $\Ggb$ and let $z_e \in \Ib^*$ be such that $\morphism(z_e) =e$. Then $$\langX{(B,X) [z_e \zb]}{\emptyset} \subseteq \dcl{\langX{(B,X)[\zb]}{\emptyset}}.$$
\end{lemma}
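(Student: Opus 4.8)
The plan is to show that every $w'\in\langX{(B,X)[z_e\zb]}{\emptyset}$ lies in $\dcl{\langX{(B,X)[\zb]}{\emptyset}}$, by building a single derivation from $(B,X)[\zb]$ to a "sentential form" of the shape $p\,(B,X)[z_e\zb]\,q$ in which every "term" of $p$ and $q$ is "productive@@SF". From such a derivation one completes $p$ and $q$ to terminal words $w_1,w_2\in T^*$ and separately derives $w'$ from the central "term", obtaining $w_1\,w'\,w_2\in\langX{(B,X)[\zb]}{\emptyset}$ with $w'\subword w_1\,w'\,w_2$, which is what we want. The fact that $p$ and $q$ are completable to terminals will come from "productiveness@@grammar" of $\Ggb$ (Lemma~\ref{lem-soundness}) together with the hypothesis that $(B,X)[\zb]$ is "reachable"; this is precisely why the statement is about $\Ggb$ rather than $\Gg$.

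To build that derivation I would first read structural data off the "idempotent" $e$. Since $e\cdot e=e\neq\zeroPM$, unfolding the product of $\prodmonoid$ forces $B\Reach{X}A$; as $\Reach{X}$ is witnessed by a $\Ggb$-derivation, this also gives $(A,X),(B,X)\in\Nb$, so $A,B\in X$, and moreover a derivation $(B,X)\derivesindexedstar[\Ggb] u_0\,(A,X)\,v_0$ with $u_0,v_0\in\sentforms$. Next, $\morphism(z_e)=e\neq\zeroPM$ makes $z_e$ "feasible@@stack" by Lemma~\ref{lem:right-morphism}, so (by the remark following that lemma) $z_e=\annotate{z}{X}$ for some "feasible@@stack" $z\in I^+$ with the same $X$. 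Since $\morphism(\annotate{z}{X})=e=(B,X,M,A,X)$, applying Lemma~\ref{lem:right-monoid-A} with $C=A$ and $D=B$ — its hypotheses $A\Reach{X}A$ and $B\Reach{X}B$ holding trivially via the length-$0$ derivation — yields $u_1,v_1\in\sentforms$ with $(A,X)\derivesindexedstar[\Ggb] u_1\,(B,X)[z_e]\,v_1$.

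Then I would assemble the two derivations over the stack $\zb$. A derivation of $\Ggb$ never inspects stack symbols strictly below the level at which it started (no step can pop past the original bottom, since both derivations above start from the empty stack), so replaying the first from $(B,X)[\zb]$ and the second from $(A,X)[\zb]$ merely appends $\zb$ at the bottom of every stack. Chaining them gives
\[(B,X)[\zb]\derivesindexedstar[\Ggb]\hat u_0\,\hat u_1\,(B,X)[z_e\zb]\,\hat v_1\,\hat v_0,\]
where $\hat u_i,\hat v_i$ are $u_i,v_i$ with $\zb$ appended at the bottom of every stack. Since $(B,X)[\zb]$ is "reachable" and $\Ggb$ is "productive@@grammar" (Lemma~\ref{lem-soundness}), every "term" of $\hat u_0,\hat u_1,\hat v_1,\hat v_0$ is "productive@@SF", so $\hat u_0\hat u_1\derivesindexedstar[\Ggb] w_1$ and $\hat v_1\hat v_0\derivesindexedstar[\Ggb] w_2$ for some $w_1,w_2\in T^*$. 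For any $w'\in\langX{(B,X)[z_e\zb]}{\emptyset}$ this gives $(B,X)[\zb]\derivesindexedstar[\Ggb] w_1\,w'\,w_2$, hence $w'\in\dcl{\langX{(B,X)[\zb]}{\emptyset}}$.

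The main obstacle will be making the two ``harmless deletions'' rigorous. First, the material $p,q$ around $(B,X)[z_e\zb]$ cannot simply be erased in an arbitrary grammar without possibly producing terminal words outside the "downward closure", so the argument must route through "productiveness@@grammar" and the "reachable" hypothesis. Second, one must use the $\prodmonoid$-to-derivation dictionary correctly: extracting $B\Reach{X}A$ from idempotency of $e$, and re-installing $z_e$ on top of $\zb$ via Lemma~\ref{lem:right-monoid-A} after rewriting $z_e$ as an annotation $\annotate{z}{X}$. The remaining ingredients — reflexivity of $\Reach{X}$, and replaying a derivation over a deeper stack — are routine.
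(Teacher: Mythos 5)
Your proof is correct and follows essentially the same route as the paper's: extract $B\Reach{X}A$ from idempotency of $e$, obtain a derivation $(A,X)\derivesindexedstar[\Ggb]u\,(B,X)[z_e]\,v$ from the monoid value of $z_e$, replay both derivations over the deeper stack $\zb$, and finish via "productiveness@@grammar" and "reachability@reachable". The only (immaterial) difference is that you get the push derivation from Lemma~\ref{lem:right-monoid-A} with $C=A$, $D=B$ and reflexivity of $\Reach{X}$, whereas the paper reads it off directly from "feasibility@feasible@@stack" of $z_e$ via Lemma~\ref{lem:right-morphism}.
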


\begin{proof}
	Let $w \in \langX{(B,X)[z_e\zb]}{\emptyset}$, so there is a derivation $$(B,X)[z_e\zb] \derivesindexedstar[\Ggb] w.$$
	Since $\morphism(z_e) = e \neq \zeroPM$, by Lemma~\ref{lem:right-morphism}, $z_e$ is "feasible@@stack", so there is a derivation $(A,X) \derivesindexedstar[\Ggb] u (B,X)[z_e] v$ with $u,v \in \sentforms$.
	Furthermore, since $e$ is "idempotent", by definition of the product in $\prodmonoid$ we have $B \Reach{X} A$, i.e., there is a derivation $(B,X) \derivesindexedstar[\Ggb] u' (A,X) v'$. In total, we obtain \[(B,X)[\zb] \derivesindexedstar[\Ggb] u' (A,X)[\zb] v' \derivesindexedstar[\Ggb] u' u (B,X)[z_e \zb] v v' \derivesindexedstar[\Ggb] u' u w v v'.\] 

	Moreover, since $\Ggb$ is "productive@@grammar" and $(B,X)[\zb]$ is a "reachable" "term" of $\Ggb$, it follows that $u' u w v v'$ is "reachable" as well, and can thus derive a terminal word $w'$. Since $w'$ necessarily contains $w$ as a subword, the proof is complete.
\end{proof}

\begin{lemma}\label{lem:eliminate-skips}
	Let $(A,Y) [z' u_1 \dots u_N z_e u_1 \dots u_N \zb]$ be a "reachable" "term" of $\Ggb$, and let $e=(B,X, M, A,X) \in \idempotents{\prodmonoid}$ be such that $\morphism(u_1) = \dots = \morphism(u_N) = \morphism(z_e) =e$. Then 
	$$\langX{(A,Y) [z' u_1 \dots u_N \zb]}{\emptyset} \subseteq \dcl{\langX{(A,Y)[z' u_1 \dots u_N z_e u_1 \dots u_N \zb]}{\emptyset}}.$$
\end{lemma}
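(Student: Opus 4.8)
The plan is to take a "derivation tree" of $\Ggb$ witnessing a word in $\langX{(A,Y)[z' u_1 \dots u_N \zb]}{\emptyset}$ and surgically turn it into one rooted at $(A,Y)[z' u_1 \dots u_N z_e u_1 \dots u_N \zb]$ whose leaf word is a \emph{superword} of it, by re‑inserting the infix $z_e u_1\dots u_N$ just above $\zb$ and absorbing this extra material branch by branch. If $\langX{(A,Y)[z' u_1 \dots u_N \zb]}{\emptyset}=\emptyset$ there is nothing to prove; otherwise I fix such a $w$ and a "derivation tree" $\tau$ of $\Ggb$ with root $(A,Y)[z' u_1 \dots u_N \zb]$ and leaf word $w$. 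First comes some bookkeeping: since $(A,Y)[z' u_1 \dots u_N z_e u_1 \dots u_N \zb]$ is "reachable", its stack is "feasible@@stack", and because the two occurrences of $u_1\cdots u_N$ in it are the \emph{same} word, \cref{lem:right-morphism} gives that $z' u_1 \cdots u_N \zb$ and all its suffixes are "feasible@@stack" as well, that $\morphism(u_1\cdots u_N\zb)=e\cdot\morphism(\zb)$, and (using that $\morphism(u_j)=e=(B,X,M,A,X)$ carries context $X$ at its bottom) that all "term"s met while popping the distinguished copy of $u_1\cdots u_N$ sitting above $\zb$ have second component $X$.

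Next, the pop analysis. Fix a branch of $\tau$ that pops this entire copy of $u_1\cdots u_N$, and let $\nu_0,\nu_1,\dots,\nu_N$ be the nodes on it whose stacks are $u_1\cdots u_N\zb,\ u_2\cdots u_N\zb,\ \dots,\ \zb$ (so $\nu_j$ is reached exactly when the last symbol of $u_j$ has been popped); write $(A_j,X)$ for the "term" at $\nu_j$. Restricting the sub‑derivation from $\nu_{j-1}$ to $\nu_j$ to the $u_j$‑part — deleting the suffix $u_{j+1}\cdots u_N\zb$ from every stack on the path and pruning the subtrees hanging off it — yields $(A_{j-1},X)[u_j]\derivesindexedstar_{\Ggb}u(A_j,X)v$ with $u,v\in\sentforms$, so $M(A_{j-1},A_j)=\top$ by \cref{lem:right-monoid-M}. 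Since $e$ is "idempotent", $MM=M$, so \cref{lem:idempotent-property} (applied to $A_0,\dots,A_N$) gives an index $i\in\{0,\dots,N\}$ with $M(A_i,A_i)=\top$; in particular the \emph{smallest} such index is $\le N$.

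The surgery producing $\tau'$ from $(A,Y)[z' u_1 \dots u_N z_e u_1 \dots u_N \zb]$ then goes as follows. Copy $\tau$, inserting the infix $z_e u_1\cdots u_N$ between the distinguished copy of $u_1\cdots u_N$ and $\zb$ in every stack; as long as that copy is not yet exhausted on a branch, the inserted infix lies strictly below everything being manipulated, so the copy is legal (this covers all of $\tau$ above the $\nu_0$'s, and all leaves reached without touching the distinguished copy). On each branch below $\nu_0$, let $j$ be the first index for which the completion node $\nu_j$ satisfies $M(A_j,A_j)=\top$ (it exists and $j\le N$); at the corresponding node of $\tau'$ the stack is $(u_{j+1}\cdots u_N\,z_e\,u_1\cdots u_j)\cdot(u_{j+1}\cdots u_N\zb)$, and its left factor maps to $e^{N-j}\cdot e\cdot e^{j}=e$ under $\morphism$. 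By \cref{lem:right-monoid-M} (third item, using $M(A_j,A_j)=\top$) there is a derivation $(A_j,X)[u_{j+1}\cdots u_N z_e u_1\cdots u_j]\derivesindexedstar_{\Ggb}u''(A_j,X)v''$ with $u'',v''\in\sentforms$; pushing $s:=u_{j+1}\cdots u_N\zb$ below every stack turns it into $(A_j,X)[u_{j+1}\cdots u_N z_e u_1\cdots u_j\,s]\derivesindexedstar_{\Ggb}u''[s]\,(A_j,X)[s]\,v''[s]$. I append this in $\tau'$, graft the subtree $\tau|_{\nu_j}$ onto the middle "term" $(A_j,X)[s]$ (which now carries exactly the stack it had at $\nu_j$ in $\tau$), and leave $u''[s],v''[s]$ as ``junk''. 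Branches that reach a leaf before any such $\nu_j$ are copied verbatim.

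Finally, the leaf word of $\tau'$ is a "sentential form" $\xi$ that contains the letters of $w$ in order, together with the junk "term"s introduced at the skip points. Since $(A,Y)[z' u_1 \dots u_N z_e u_1 \dots u_N \zb]$ is "reachable", $\xi$ placed in a context witnessing this reachability lies in $\langSF{(S,\Useful)}$, hence is "productive@@SF" by \cref{lem-soundness}; thus every "term" of $\xi$ derives a terminal word, so $\xi\derivesindexedstar_{\Ggb}w'$ for some $w'\in T^*$, and therefore $(A,Y)[z' u_1 \dots u_N z_e u_1 \dots u_N \zb]\derivesindexedstar_{\Ggb}w'$. As $w'$ is obtained from $\xi$ by replacing its junk "term"s with terminal words, $w\subword w'$, which gives $w\in\dcl{\langX{(A,Y)[z' u_1 \dots u_N z_e u_1 \dots u_N \zb]}{\emptyset}}$, as required. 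The main obstacle is precisely this tree surgery: verifying that the insertion is legal at every node up to the skip point, that the first skip point on each branch occurs before the distinguished copy is exhausted (this is exactly what \cref{lem:idempotent-property} buys us), and that the grafting is legitimate because the skip restores the stack to its value in $\tau$ — all while tracking the annotation contexts so that \cref{lem:right-monoid-M} can be invoked.
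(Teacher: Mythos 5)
Your proposal is correct and follows essentially the same route as the paper's proof: pop analysis along each branch via \cref{lem:right-monoid-M} and \cref{lem:idempotent-property} to find a node with $M(A_i,A_i)=\top$ before the distinguished copy is exhausted, suffix-replacement surgery inserting $z_e u_1\cdots u_N$ above $\zb$ up to these first skip points, absorption of the inserted infix by the self-loop derivation, grafting back the original subtrees, and finally "productiveness@@grammar" plus "reachable"-ness to extend the resulting sentential form to a terminal superword of $w$. The only cosmetic difference is that the paper packages the cut points as a minimal antichain of "special" and terminal-leaf nodes rather than choosing the first skip point per branch, which is an equivalent formulation.
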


\begin{proof}
	Let $w \in \langX{(A,Y) [z'u_1 \dots u_N \zb]}{\emptyset}$, and let $\tau$ a derivation tree for $(A,Y) [z'u_1 \dots u_N \zb] \derivesindexedstar[\Ggb] w$.
	
	Consider the set of nodes $\nu$ such that either
	\begin{enumerate}
		\item $\nu$ is a leaf with a label in $T^*$ and $\zb$ is a suffix of the stack content of each of its ancestors, or
		
		\item $\nu$ is labeled $(C, X)[u_{i+1} \dots u_N \zb]$ for some $i \in \set{0, \dots, |N|}$ (where $u_{N+1}$ is the empty string) and $ u_{i+1} \dots u_N \zb $ is a suffix of the stack content of each of its ancestors.
	\end{enumerate}
	A node of the second type which also satisfies $M(C,C) = \top$ is called a \emph{special node}.
	\begin{claim}
		Every branch of $\tau$ contains either a node of the first type or a special node.
	\end{claim}
	
	\begin{claimproof}
		Consider a branch of $\tau$. If the leaf of this branch is not of the first type, then the prefix $z' u_1\dots u_N$ must be fully popped. 
		Clearly there must be distinct nodes $\nu_0, \dots, \nu_{N}$ of the second type along this branch, where $\nu_i$ is labeled $(A_i, X_i)[u_{i+1} \dots u_N \zb]$ for some $A_i\in N$, and each of its ancestors has a stack content with suffix $u_{i+1} \dots u_N \zb$. Consequently, for $i = 0,\ldots,|N|-1$ we can define the "derivation tree" obtained by restricting $\tau$ to $\nu_i$ and its descendants, and then removing all nodes (and their descendants) where $u_{i+1} \dots u_N \zb$ is not a suffix of the stack, as well as all descendants of $\nu_{i+1}$. 
		By construction, $\nu_{i+1}$ is a leaf of the resulting tree, and so we obtain a derivation $$(A_i, X_i)[u_i \dots u_N \zb] \derivesindexedstar[\Ggb] w_i (A_{i+1}, X_{i+1})[u_{i+1} \dots u_N \zb] w_i'$$ with $w_i, w_i' \in \sentforms$. Since $\morphism(u_i) = e$, we must have $X_i = X_{i+1} = X$ and $M(A_i, A_{i+1}) = \top$. Thus, it follows from Lemma~\ref{lem:idempotent-property} that there is some $i$ such that $M(A_i,A_i) = \top$, so $\nu_i$ is a special node. 
	\end{claimproof}
	
	Let $\nu$ be a special node, labeled $(C, X)[u_{i+1} \dots u_N \zb]$ with $M(C,C) = \top$. Since $\morphism(u_1) = \dots = \morphism(u_N) = \morphism(z_e) = e$, it follows that $\morphism(u_{i+1} \dots u_N z_e u_1 \dots u_{i}) = e$.
	As a consequence, since $M(C,C) = \top$, we have a derivation $(C,X) [u_{i+1} \dots u_N z_e u_1 \dots u_{i}] \derivesindexedstar[\Ggb] w_-(C,X)w_+$ with $w_-, w_+ \in \sentforms$ (see Lemma~\ref{lem:right-monoid-M}).
	
	For the following construction we refer to Figure~\ref{fig:skip} for a visual presentation.
	
	Consider the set $V$ of minimal nodes (for the ancestor relation) which are either of the first type or special.	
	Notice that $V$ intersects every branch exactly once: at most once by the minimality requirement, at least once by the claim above. 
	We can thus define the subtree whose root is the same as $\tau$ and whose leaves are $V$. 
	By construction, every node in this tree is labeled with either a terminal word or a term whose stack has $\zb$ as a suffix. Define $\tau'$ the tree obtained by replacing each suffix $\zb$ with $z_e u_1 \dots u_N \zb$, and notice that the resulting tree $\tau'$ is still a valid derivation tree. 
	
	For each leaf $\nu$ of $\tau'$ that was a special node of $\tau$, with a label $(C,X) [u_{i+1} \dots u_N z_e u_1 \dots u_N \zb]$ in $\tau'$ for some $(C,X)$ and $i$, we append a derivation tree $\tau_{\nu}$ for 
	\[(C,X) [u_{i+1} \dots u_N z_e u_1 \dots u_N\zb] \derivesindexedstar[\Ggb] w_-(C,X)[u_{i+1} \dots u_N \zb]w_+,\] where $w_-, w_+ \in \sentforms$. Now take the subtree of $\tau$ rooted at $\nu$, and append it at the leaf of $\tau_{\nu}$ labeled with $(C,X)[u_{i+1} \dots u_N \zb]$.
	
	The result is a derivation tree from $(A,Y)[z' u_1 \dots u_N z_e u_1 \dots u_N \zb]$ to a "sentential form" $\wb$ of which $w$ is a subword. Since $\Ggb$ is "productive@@grammar" and $(A,Y)[z' u_1 \dots u_N z_e u_1 \dots u_N \zb]$ is a "reachable" "term", there is a word $w' \in T^*$ such that $\wb \derivesindexedstar[\Ggb] w'$, and since $w \subword \wb$ we have $w \subword w'$, completing the proof.
\end{proof}

\begin{proof}[Proof of Proposition~\ref{prop:eliminate-pump-skip}]
	We show the stronger statement that for all "reachable" "sentential form" $u$ in $\Ggb$, for all terminal word $w \in T^*$ such that $u \derivesindexedstarps[\Ggb] w$, there exists $w' \in T^*$ such that $u \derivesindexedstar[\Ggb] w'$ and $w \subword w'$. 
	The result then follows by taking $u = (S,\Useful)$.	

	We proceed by induction on the derivation $u \derivesindexedstarps[\Ggb] w$, distinguishing cases according to the first step. Note that the base case, where $u = w$, is trivial.
	\begin{itemize}
		\item If the first step is a production rule in $\Ggb$, say $u \derivesindexed[\Ggb] u'$, then $u' \derivesindexedstarps[\Ggb] w$ with a shorter derivation. By the induction hypothesis, $u' \derivesindexedstar[\Ggb] w'\in T^*$ with $w \subword w'$, hence $u \derivesindexed[\Ggb] u' \derivesindexedstar[\Ggb] w'$.
		
		\item If the first step is a "pump rule", say 
		\[u = u_- (B,X)[ \zb] u_+ \rulepump u_- (B,X)[z_e \zb] u_+ = u',\]
		with $\morphism(z_e) =e$ for some idempotent $e = (B,X,M,A,X)$.
		then $u' \derivesindexedstarps[\Ggb] w$, and the induction hypothesis implies that $u' \derivesindexed[\Ggb] w'\in T^*$ with $w \subword w'$. 
		Let us write $w' = w_- w_B w_+$, where
		\begin{itemize}
			\item $u_- \derivesindexedstar[\Ggb] w_-$,
			
			\item $u_+ \derivesindexedstar[\Ggb] w_+$, and
			
			\item $(B,X)[z_e \zb] \derivesindexedstar[\Ggb] w_B$.
		\end{itemize}
		
		By Lemma~\ref{lem:eliminate-pumps} we have that $u \derivesindexedstar[\Ggb] w_- w'_B w_+$ with $w_B \subword w'_B$. 
		The desired result follows, since $w \subword w' \subword w_- w'_B w_+$.
		
		\item If the first step is a "skip rule", say
		\begin{multline*}
			u = u_- (A,X)[z' u_1 \dots u_N z_e u_1 \dots u_N \zb] u_+ \\
			\ruleskip u_- (A,X)[z' u_1 \dots u_N \zb] u_+ = u',
		\end{multline*}
		then 
		$u' \derivesindexedstarps[\Ggb] w$, and by the induction hypothesis $u' \derivesindexed[\Ggb] w'\in T^*$ with $w \subword w'$. 
		Let us write $w' = w_- w_A w_+$ where 
		\begin{itemize}
			\item $u_- \derivesindexedstar[\Ggb] w_-$,
			
			\item  $u_+ \derivesindexedstar[\Ggb] w_+$, and
			
			\item $(A,X)[z' u_1 \dots u_N \zb] \derivesindexedstar[\Ggb] w_A$.
		\end{itemize}
		
		Then by Lemma~\ref{lem:eliminate-skips} we obtain $$(A,X)[z' u_1 \dots u_N z_e u_1 \dots u_N \zb] \derivesindexedstar[\Ggb] w'_A$$ with $w_A \subword w'_A$. We thus have $u \derivesindexedstar[\Ggb] w_- w'_A w_+$, and since $w \subword w' \subword w_- w'_A w_+$, the result follows and the proof is complete.
	\end{itemize}

\end{proof}
\section{Additional material from Section~\ref{sec:summaries}}\label{app:summaries}
\subsection{Proof of Lemma~\ref{lem:exp-summaries}}
\label{app:exp-summaries}

In this section we prove the following statement.

\ExponentialSummaries*

Its proof is very similar to the one of Theorem~26 in~\cite{GimbertMT25arxiv}.
We start by recalling some classical facts on Green relations.
For a more in-depth introduction to those, see for instance~\cite{McNaughton89}, or~\cite{Colcombet11}.

\begin{lemma}
	\label{lem:Green1}
	In a finite monoid $\mathbf{M}$, every $\Hgreen$-class contains at most one idempotent.	
\end{lemma}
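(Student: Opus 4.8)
The plan is to prove the statement directly in the following sharp form: if $e,f \in \idempotents{\mathbf{M}}$ lie in the same $\Hgreen$-class, then in fact $e = f$; consequently each $\Hgreen$-class can contain at most one idempotent. Recall that $e \Hgreen f$ means both $e \Rgreen f$ and $e \Lgreen f$, and that since $\mathbf{M}$ is a monoid the multipliers appearing in the definitions of $\Rleq$ and $\Lleq$ may be taken inside $\mathbf{M}$ (the reflexive instances being witnessed by $\neutralM$).

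First I would exploit $e \Rgreen f$: by definition there is $c \in \mathbf{M}$ with $e = f \cdot c$. Left-multiplying by $f$ and using $f \cdot f = f$ gives
\[ f \cdot e = f \cdot f \cdot c = f \cdot c = e. \]
Symmetrically, $e \Lgreen f$ yields some $d \in \mathbf{M}$ with $f = d \cdot e$, and right-multiplying by $e$ together with $e \cdot e = e$ gives
\[ f \cdot e = d \cdot e \cdot e = d \cdot e = f. \]
Combining the two identities, $e = f \cdot e = f$, which is exactly what we wanted.

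There is essentially no obstacle here: the proof is three lines of semigroup arithmetic and does not even use finiteness of $\mathbf{M}$ (finiteness will only be needed for the subsequent Green's-relations lemmas). The single point deserving care is the convention, fixed earlier in the paper, that $\Rleq$ and $\Lleq$ are defined with the auxiliary elements ranging over $\mathbf{M}$ itself; this is legitimate precisely because $\mathbf{M}$ possesses a neutral element, so the computations above are well-founded.
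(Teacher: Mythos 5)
Your proof is correct, and it is the standard argument: from $e \mathrel{\Rleq} f$ you get $f\cdot e=e$, from $f \mathrel{\Lleq} e$ you get $f\cdot e=f$, hence $e=f$; as you note, finiteness is not needed. The paper itself does not prove this lemma at all---it states it as a classical fact about Green's relations and refers to the literature---so there is nothing to compare against beyond observing that your three-line computation is exactly the textbook proof being cited.
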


\begin{lemma}
	\label{lem:Green2}
	In a finite monoid $\mathbf{M}$, for all $x,y \in \mathbf{M}$, if $x \Jgreen y$ and $x \Lleq y$ (resp. $x \Rleq y$) then $x \Lgreen y$ (resp. $x \Rgreen y$).
\end{lemma}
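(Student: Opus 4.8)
The plan is to prove the statement for $\Lgreen$; the $\Rgreen$ case is the exact left--right mirror image, so I would simply write ``dually'' there. Fix $x,y\in\mathbf{M}$ with $x\Jgreen y$ and $x\Lleq y$. The goal is to exhibit some $b\in\mathbf{M}$ with $y=bx$: this gives $y\Lleq x$, which together with the hypothesis $x\Lleq y$ yields $x\Lgreen y$.

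First I would unpack the hypotheses. From $x\Lleq y$ we get $a\in\mathbf{M}$ with $x=ay$, and from $y\Jleq x$ (this inequality is part of $x\Jgreen y$; note that the other inequality $x\Jleq y$ is not needed, as it already follows from $x\Lleq y$) we get $p,q\in\mathbf{M}$ with $y=pxq$. Substituting $x=ay$ into the latter gives $y=(pa)\,y\,q$, and a one-line induction then yields $y=(pa)^n\,y\,q^n$ for every $n\ge 1$.

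The next step is where finiteness of $\mathbf{M}$ enters: for the element $q$ there is some $n\ge 1$ with $q^n\in\idempotents{\mathbf{M}}$, since the cyclic subsemigroup generated by $q$ contains an idempotent. Fixing such an $n$, right-multiplying the identity $y=(pa)^n\,y\,q^n$ by $q^n$ and using $q^n q^n=q^n$ gives $y\,q^n=(pa)^n\,y\,q^n\,q^n=(pa)^n\,y\,q^n=y$; feeding $y\,q^n=y$ back in, we obtain $y=(pa)^n\,y\,q^n=(pa)^n\,y$. Finally, peeling one factor $pa$ off the right end of $(pa)^n$ and using $ay=x$ gives $y=(pa)^{n-1}p\,(ay)=\bigl((pa)^{n-1}p\bigr)x$, so $b=(pa)^{n-1}p$ works (for $n=1$ this is just $b=p$). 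Hence $y\Lleq x$, and therefore $x\Lgreen y$.

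The only point requiring a word of justification — and it is not really an obstacle — is the elementary fact that every element of a finite monoid has an idempotent power: pick $i<j$ with $q^i=q^j$, set $p=j-i\ge 1$, and take any multiple $m\ge i$ of $p$; then $q^{2m}=q^m$, so $q^m\in\idempotents{\mathbf{M}}$ and $m\ge 1$. Everything else in the argument is routine associativity bookkeeping, so I would keep the write-up short.
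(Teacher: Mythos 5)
Your argument is correct. Note that the paper itself gives no proof of Lemma~\ref{lem:Green2}: it is stated as a classical fact about Green's relations, with pointers to the literature (\cite{McNaughton89,Colcombet11}), so there is no in-paper argument to compare against. What you have written is the standard elementary proof of the stability of finite monoids: from $x=ay$ and $y=pxq$ you iterate to $y=(pa)^n y q^n$, choose $n$ with $q^n$ idempotent to cancel the right factor ($yq^n=y$, hence $y=(pa)^n y$), and then peel off one $pa$ to get $y=(pa)^{n-1}p\,x$, giving $y\Lleq x$ and hence $x\Lgreen y$; the $\Rgreen$ case is dual. All the equalities check out, and the only hypotheses used are exactly those available ($y\Jleq x$ from $x\Jgreen y$, plus $x\Lleq y$ and finiteness). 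Two cosmetic remarks: in your closing paragraph you reuse the letter $p$ both for the element with $y=pxq$ and for the period $j-i$, which you should rename; and when $i=0$ the phrase ``any multiple $m\ge i$ of $p$'' admits $m=0$, so you should explicitly take a positive multiple (e.g.\ $m=\max(i,1)\cdot p$) before concluding $q^m\in\idempotents{\mathbf{M}}$ with $m\ge 1$. Neither affects the validity of the proof.
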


\begin{theorem}\label{thm:Jheight}
	$\Jlength{\prodmonoid} \leq \frac{(N^2+N+2)}{2}+2$.
\end{theorem}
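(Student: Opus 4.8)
The plan is to reduce the computation of $\Jlength{\prodmonoid}$ to that of $\Jlength{\matrixmonoid}$, which is $\frac{N^2+N+2}{2}$ by \cref{thm:Ismael-height}, paying an additive $2$ for the two ``external'' idempotents $\neutral$ and $\zeroPM$. First one records the shape of idempotents of $\prodmonoid$: besides $\neutral$ and $\zeroPM$, an idempotent is necessarily a tuple of the form $(B,X,M,A,X)$ with $M\in\matrixmonoid$ idempotent and $B\Reach{X}A$ (this is forced by the product formula: $e\cdot e\ne\zeroPM$ requires the $X$-coordinates to coincide and $B\Reach{X}A$, and the matrix coordinate of $e\cdot e$ is $MM$). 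Call this set $E_Q$. Since $x<_{\Jgreen}\neutral$ for every $x\ne\neutral$ and $\zeroPM<_{\Jgreen}x$ for every $x\ne\zeroPM$, in any strictly $\Jgreen$-decreasing chain of idempotents $e_1>_{\Jgreen}\cdots>_{\Jgreen}e_k$ at most $e_1$ equals $\neutral$ and at most $e_k$ equals $\zeroPM$; hence $e_2,\dots,e_{k-1}\in E_Q$.

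The core of the argument will be a projection lemma: writing $M_e\in\matrixmonoid$ for the matrix coordinate of $e\in E_Q$, I claim that for $e,e'\in E_Q$ with $e'\Jleq e$ one has $M_{e'}\Jleq M_e$, and moreover that $M_{e'}$ and $M_e$ are $\Jgreen$-equivalent only if $e'$ and $e$ are $\Jgreen$-equivalent. The first part is immediate: from $e'=p\,e\,q$ with $p,q\ne\zeroPM$, unwinding the product formula in $\prodmonoid$ yields $M_{e'}=M_q\,M_e\,M_p$ (with $M_p$ or $M_q$ read as the identity matrix when $p$, resp.\ $q$, equals $\neutral$), so $M_{e'}\Jleq M_e$. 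For the second part, assume in addition $M_{e'}\Jgreen M_e$. Using idempotency of $M_e$, write $M_{e'}=(M_q M_e)(M_e M_p)$; since $M_q M_e\Lleq M_e$ and $M_e M_p\Rleq M_e$ and all of these are $\Jgreen$-equivalent, \cref{lem:Green2} gives $M_q M_e\mathrel{\Lgreen}M_e$ and $M_e M_p\mathrel{\Rgreen}M_e$, so there are $s,t\in\matrixmonoid$ with $M_e=s\,M_q\,M_e$ and $M_e=M_e\,M_p\,t$, whence $M_e=s\,M_{e'}\,t$. One then exhibits $p'=(B_e,X_e,t,A_{e'},X_{e'})$ and $q'=(B_{e'},X_{e'},s,A_e,X_e)$ (degenerate cases taking $p'$ or $q'$ to be $\neutral$) and checks, coordinate by coordinate and using that idempotency of $e'$ gives $B_{e'}\Reach{X_{e'}}A_{e'}$, that $p'\,e'\,q'=(B_e,X_e,s\,M_{e'}\,t,A_e,X_e)=e$. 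Hence $e\Jleq e'$, so $e$ and $e'$ are $\Jgreen$-equivalent, contradicting strictness.

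Granting the projection lemma, the theorem follows quickly: given a strictly $\Jgreen$-decreasing chain of idempotents $e_1>_{\Jgreen}\cdots>_{\Jgreen}e_k$ in $\prodmonoid$, restrict to the members lying in $E_Q$ — a sub-chain of length at least $k-2$ — and apply $e\mapsto M_e$. By the lemma, consecutive images are $\Jgreen$-comparable and cannot be $\Jgreen$-equivalent, so they form a strictly $\Jgreen$-decreasing chain of idempotents in $\matrixmonoid$, of length at most $\Jlength{\matrixmonoid}=\frac{N^2+N+2}{2}$ by \cref{thm:Ismael-height}. Thus $k-2\le\frac{N^2+N+2}{2}$, i.e.\ $k\le\frac{N^2+N+2}{2}+2$, and since the chain was arbitrary this bounds $\Jlength{\prodmonoid}$.

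The main obstacle is the ``strict'' half of the projection lemma, specifically the explicit construction of the witnesses $p',q'$: it requires simultaneously matching all five coordinates in the product $p'\,e'\,q'$ and verifying the two $\Reach{}$ side-conditions appearing in the definition of the product of $\prodmonoid$. These checks go through precisely because $e'$ being idempotent supplies $B_{e'}\Reach{X_{e'}}A_{e'}$, and because idempotency of $M_e$ lets one peel $M_{e'}$ off on both sides via Green's relations; everything else is bookkeeping.
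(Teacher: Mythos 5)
Your proof is correct, but it takes a genuinely different route from the paper's. The paper invokes the alternative characterization of the regular $\mathcal{J}$-length (cited from the appendix of \cite{Jecker21arxiv}) as the largest $m$ admitting an injective homomorphism $\theta$ from the max monoid $(\set{1,\dots,m},\max)$ into $\prodmonoid$: there, the constraints $\theta(i)\cdot\theta(j)=\theta(j)\cdot\theta(i)=\theta(j)$ for $1<i<j<m$ immediately force all middle images to share the same $B$, $A$ and $X$ coordinates, so their distinct idempotent matrix coordinates assemble into an injective max-monoid homomorphism into $\matrixmonoid$, and \cref{thm:Ismael-height} yields $m-2\le\frac{N^2+N+2}{2}$ with no Green-relation machinery at all. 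You instead work directly with the chain-of-idempotents definition of $\Jlength{\cdot}$ given in \cref{sec:summaries} and prove a projection lemma: for idempotents of the form $(B,X,M,A,X)$, strict $\Jgreen$-comparability is preserved under projection onto the matrix coordinate. Your argument is sound: the product formula does give $M_{e'}=M_qM_eM_p$; the intermediate elements $M_qM_e$ and $M_eM_p$ are indeed $\Jgreen$-equivalent to $M_e$ (they are sandwiched between $M_{e'}$ and $M_e$), so \cref{lem:Green2} applies; and your explicit witnesses $p',q'$ do multiply to $e$, the two side conditions in the product of $\prodmonoid$ being exactly $B_{e'}\Reach{X_{e'}}A_{e'}$, which idempotency of $e'$ supplies. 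In terms of trade-offs: the paper's proof is shorter but leans on the equivalence of the two definitions of the regular $\mathcal{J}$-length, which is only cited; yours is self-contained relative to the definition actually stated in the paper (needing only \cref{thm:Ismael-height} and \cref{lem:Green2}), at the cost of the witness construction, and it gives the slightly stronger fact that any strict chain of idempotents of $\prodmonoid$ projects, after discarding $\neutral$ and $\zeroPM$, to a strict chain of idempotents of $\matrixmonoid$. One cosmetic remark: your parenthetical about degenerate choices of $p'$ or $q'$ as $\neutral$ is unnecessary, since the general witnesses work verbatim for all $e,e'$ in your set $E_Q$.
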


\begin{proof}
	Let us start by using another definition of the "regular $\mathcal{J}$-length".
	By~\cite[Appendix B]{Jecker21arxiv}, the "regular $\mathcal{J}$-length" of $\prodmonoid$ is the largest $m$ such that there is an injective homomorphism from the max monoid $(\set{1,\dots, m}, \max, m)$ to $\prodmonoid$.
	
	Let $m \in \NN$, let $\theta : \set{1,\dots, m} \to \prodmonoid$ be such a homomorphism.
	We must show that  $m \leq \frac{(N^2+N+2)}{2}+2$.
	
	If $\zeroPM$ is in the image of $\theta$ then $\theta(m) = \zeroPM$.
	If $\neutral$ is in the image of $\theta$ then $\theta(1) = \neutral$.
	Since all $i$ in $\set{2,\dots, m-1}$ are idempotents, the image of each $i$ by $\theta$ must also be an idempotent. 
	As a result, we can set $\theta(i) = (B_i, X_i, M_i, A_i , X_i)$ for all $1< i <m$, with $M_i^2 = M_i$.
	
	Furthermore, for all $1< i < j< m$, since $\max(i,j) =j$, we must have 
	\begin{align*}
		& (B_j, Y_j, M_j, A_j , X_j)\cdot (B_i, X_i, M_i, A_i , X_i)\\
		= & (B_i, X_i, M_i, A_i , X_i) \cdot (B_j, Y_j, M_j, A_j , X_j)\\
		= & (B_j, X_j, M_j, A_j , X_j)
	\end{align*} 
	We infer that  $X_j = X_i$, $B_j = B_i$ and $A_j = A_i$ for all $i<j$.
	Since $\theta$ is injective, $M_2, \dots, M_{m-1}$ must be distinct.
	
	Define the function $\theta' : \set{1,\dots, m-2}$ mapping each $i$ to $M'_{i+1}$.
	It suffices to observe that $\theta_p$ is an injective homomorphism from the max monoid of size $m-2$ to $\matrixmonoid$.
	As a consequence, we have $m-2 \leq \Jlength{\matrixmonoid}$.
	By Theorem~\ref{thm:Ismael-height}, we have $\Jlength{\matrixmonoid} \leq \frac{N^2 + N + 2}{2}$.
	As a result, $m \leq \frac{(N^2+N+2)}{2}+2$. 
\end{proof}

Observe that the size of $\prodmonoid$ is bounded by $N^2 2^{N^2+2N}$.
Define \[K = ((2N+1)N^8 2^{4N^2+8N})^{\frac{(N^2+N+2)}{2}+2}.\] By combining the results above, we obtain the following corollary.

\begin{corollary}
	\label{cor:bound-idempotent}
	Let $z \in \prodmonoid^*$ with $|z| \geq K$, there exist $u_1, \dots, u_N \in \prodmonoid^*$ and $v_0, \ldots, v_N \in \prodmonoid^*$ and $e \in \idempotents{\prodmonoid}$ such that
	\begin{itemize}
		\item $u_1 \dots u_N v_0 \dots v_N$ is an infix of $z$
		
		\item $\morphism(u_1) = \dots = \morphism(u_N) = \morphism(v_0) = \cdots = \morphism(v_N) = e$
	\end{itemize}
\end{corollary}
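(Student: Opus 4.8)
The plan is to instantiate the bound of \cref{thm:Ismael-main} on the "Ramsey function" of $\prodmonoid$ with parameter $k = 2N+1$ and then unwind the definitions. Note that the conclusion asks for $N + (N+1) = 2N+1$ consecutive infixes $u_1,\dots,u_N,v_0,\dots,v_N$ all mapping to a single idempotent under evaluation. By the definition of $\Ramsey{\prodmonoid}$, as soon as $|z| \geq \Ramsey{\prodmonoid}(2N+1)$ the word $z$ contains an infix $w_1 \cdots w_{2N+1}$ with $w_i \in \prodmonoid^{+}$ and $\evalmorph{\prodmonoid}(w_i) = e$ for all $i$, for some $e \in \idempotents{\prodmonoid}$. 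Relabelling $u_i := w_i$ for $1 \leq i \leq N$ and $v_j := w_{N+1+j}$ for $0 \leq j \leq N$ then yields exactly the claimed decomposition: $u_1 \cdots u_N v_0 \cdots v_N = w_1 \cdots w_{2N+1}$ is an infix of $z$, and every factor evaluates to $e$.

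So it only remains to check $\Ramsey{\prodmonoid}(2N+1) \leq K$. I would chain the three results recalled above. By \cref{thm:Ismael-main},
\[ \Ramsey{\prodmonoid}(2N+1) \;\leq\; \bigl((2N+1)\,|\prodmonoid|^{4}\bigr)^{\Jlength{\prodmonoid}}. \]
Using $|\prodmonoid| \leq N^{2} 2^{N^{2}+2N}$ we get $|\prodmonoid|^{4} \leq N^{8} 2^{4N^{2}+8N}$, so the base is at most $(2N+1)\,N^{8}\,2^{4N^{2}+8N}$, which is precisely the base of $K$. For the exponent, \cref{thm:Jheight} gives $\Jlength{\prodmonoid} \leq \frac{N^{2}+N+2}{2}+2$, the exponent in $K$; since the base is $\geq 1$, replacing $\Jlength{\prodmonoid}$ by this (at least as large) exponent only increases the value. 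Hence $\Ramsey{\prodmonoid}(2N+1) \leq K$, and since $|z| \geq K$, the first paragraph applies and gives the desired $u_i$'s, $v_j$'s and $e$.

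There is no genuine obstacle here; the corollary is a bookkeeping consequence of \cref{thm:Ismael-main,thm:Ismael-height,thm:Jheight} (with \cref{thm:Ismael-height} feeding into \cref{thm:Jheight}). The only points that need a little care are (i) the choice $k = 2N+1$ together with recording where the split between the $u_i$'s and the $v_j$'s lies, and (ii) checking that both the base and the exponent defining $K$ dominate the ones coming out of \cref{thm:Ismael-main}; the exponent comparison is just monotonicity of $t \mapsto c^{t}$ for $c \geq 1$.
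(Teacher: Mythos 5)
Your proposal is correct and follows exactly the route the paper intends: the paper derives the corollary by "combining the results above," i.e., instantiating \cref{thm:Ismael-main} with $k=2N+1$, bounding $|\prodmonoid|\leq N^2 2^{N^2+2N}$, and using \cref{thm:Jheight} to bound the exponent, which is precisely your computation showing $\Ramsey{\prodmonoid}(2N+1)\leq K$. Your write-up just makes explicit the bookkeeping the paper leaves implicit, including the split of the $2N+1$ idempotent factors into $u_1,\dots,u_N,v_0,\dots,v_N$.
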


In what follows we distinguish the "size@@summary" of a "$d$-summary"/"$d$-block" from its ""length@@summary"", which is simply its length as a word of "$d$-atoms", $e^+$ letters and $d'$-summaries for various $d'<d$.

\begin{lemma}
	\label{lem:length-blocks}
	For all $\zb \in \Ib^*$, $\push{\zb}{\epsilon} = \sigma' u B_1 \dots B_k$ is such that $u$ has "length@@summary" at most $K$ and all $B_j$ at most $2K$.
\end{lemma}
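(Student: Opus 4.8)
The plan is to prove the claim by induction on the length of $\zb$, tracking how the structure $\push{\zb}{\epsilon} = \sigma' u B_1 \dots B_k$ evolves under a single application of $\push{(f,X)}{\_}$. The base case $\zb = \epsilon$ is trivial, and the case where the depth strictly increases (case (1) in the definition of $\push{\_}{\_}$) is also immediate: the new summary is a single $d_+$-atom, so $u$ becomes a single atom (length $1 \le K$) and there are no blocks. The interesting cases are (2a), (2b.i.A), (2b.i.B), and (2b.ii), where the depth stays at $d$. In all of these the block list $B_1, \dots, B_k$ is either left untouched, has its head replaced by a merged block $B'$, or gains a new block $B$ at the front; correspondingly, the ``residual word of $d$-atoms'' $u$ either grows by one atom (case (2b.ii): $u' = ((f,X)\sigma')u$) or is absorbed into a new block (case (2b.i)) leaving the empty word, or stays the same while the change happens recursively in $\sigma'$ (case (2a)).

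The key quantitative input is \cref{cor:bound-idempotent}: any word over $\prodmonoid$ of length $\ge K$ contains an infix $u_1 \cdots u_N v_0 \cdots v_N$ with all $\morphism(u_i) = \morphism(v_j) = e$ for a common idempotent $e$. I would use this to argue that $u$ can never reach length $K$: the residual word $u$ in $\push{\zb}{\epsilon}$ is always a word of $d$-atoms that was read right-to-left and in which \emph{no} length-$(2N{+}1)$ ``all-same-idempotent'' pattern occurred at its \emph{right} end (otherwise it would have triggered case (2b.i) and been turned into a $d$-block). More precisely, in case (2b.ii) we append $(f,X)\sigma'$ to the left of $u$ precisely because $((f,X)\sigma')u$ does \emph{not} have the form $u_1 \cdots u_N v_0 \cdots v_N w$ with the required idempotent pattern as a prefix; so at every stage $u$ is a word of $d$-atoms whose value sequence in $\prodmonoid$ has no prefix of the shape guaranteed by \cref{cor:bound-idempotent}. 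Hence $|u| < K$ as a word of atoms, which is the stated bound. For the blocks, a $d$-block has the form $u_1 \cdots u_N e^+ v_1 \cdots v_N w$; the parts $u_1 \cdots u_N$ and $v_1 \cdots v_N$ each have length $< K$ by the same right-to-left suffix-minimality argument (each $u_i$, $v_j$ is a minimal infix mapping to $e$, and the sequence of them was cut as soon as $2N{+}1$ such infixes were found, so neither $u_1\cdots u_N$ nor $v_1\cdots v_N$ has length $\ge K$), and the ``tail'' $w$ has length $< K$ because it is the residual piece to the right that did not itself contain a fresh pattern. Summing, $|B_j| < 2K$ (a slightly loose bound absorbing the single $e^+$ symbol and the fact that $w$ and the $u_i$-part could both be close to $K$). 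The merge step (2b.i.A) only shortens blocks (it abstracts an infix into a single $e^+$), so it preserves the bound; the new-block step adds one block of length $< 2K$ at the front.

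The main obstacle I anticipate is bookkeeping around case (2b.i.A), the merge: here the new leading block $B'$ is built from the freshly created $u_1 \cdots u_N e^+ \dots$ together with the $v'_1 \cdots v'_N w'$ part of some \emph{old} block $B_j$, and everything strictly between their $e^+$ markers is crushed into a single $e^+$. I need to be careful that the ``$u$-side'' $u_1 \cdots u_N$ of $B'$ still has length $< K$ — this holds because it is exactly the $u_1 \cdots u_N$ from the newly detected pattern in $((f,X)\sigma')u$, whose left portion $u_1\cdots u_N$ has length $< K$ by suffix-minimality of the $d$-atoms and the stopping rule at $2N{+}1$ infixes — and that the ``$v$-side'' $v'_1 \cdots v'_N w'$ is inherited verbatim from $B_j$, to which the induction hypothesis already applies. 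So the merge never creates an over-long $u_i$-block or $v_j$-block; it only concatenates an $e^+$ in between. Once these invariants are phrased precisely (``$u$ is a $d$-atom word whose $\prodmonoid$-value sequence has no prefix matching \cref{cor:bound-idempotent}'', and ``each $d$-block's $u$-part and $v$-part are $d$-atom words of length $< K$, with a $w$-tail of length $< K$''), the induction goes through case by case essentially mechanically, and the stated bounds ($|u| \le K$, $|B_j| \le 2K$) follow.
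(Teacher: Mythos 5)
Your treatment of the residual atom word $u$ is essentially the paper's argument: if the value sequence of the atoms of $u$ contained an occurrence of the pattern $u_1\cdots u_N v_0\cdots v_N$ with all factors mapping to one idempotent, then at the moment the leftmost atom of that occurrence was pushed, case (2b.i) would have fired and those atoms would have been absorbed into a block; hence no such \emph{infix} occurs, and \cref{cor:bound-idempotent} yields $|u|\le K-1$. Do state this for infixes (equivalently, for every atom-aligned suffix of the final $u$), not merely for prefixes: the corollary only guarantees an infix occurrence, so a ``no prefix pattern'' invariant by itself does not contradict $|u|\ge K$; the bridge is that every atom-aligned suffix of the final $u$ was itself the current atom word at some earlier stage, which is implicit in your ``at every stage'' phrasing but needs to be said.

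The genuine gap is in the bound $2K$ on blocks. Your invariant bounds the three pieces $u_1\cdots u_N$, $v_1\cdots v_N$ and $w$ of a block separately by $K$ and then ``sums'' to $2K$; but three pieces each possibly of length close to $K$ only give roughly $3K$, so the arithmetic does not close and the lemma's stated bound is not proved. The invariant must be strengthened to the one the paper maintains: in every block of $\push{\zb}{\epsilon}$, the left half $u_1\cdots u_N e^+$ and the right half $v_1\cdots v_N w$ each have length at most $K$. This holds at creation because a new block arises from $((f,X)\sigma')u$ with $|u|\le K-1$ by collapsing $v_0$ into a single $e^+$, so the \emph{entire} new block already has length at most $K$; and it is preserved under a merge (case (2b.i.A)) because the merged block consists of the new block's left half ($\le K$) followed by the old block $B_j$'s right half, inherited verbatim and $\le K$ by induction. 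You do note the verbatim inheritance in your ``obstacle'' paragraph, so the repair is small, but as written the invariant is too weak to give $2K$. Relatedly, your justification that $u_1\cdots u_N$ is short ``by suffix-minimality and the stopping rule at $2N{+}1$ infixes'' is not how the definition operates---the factors $u_i,v_i$ in the pattern test are arbitrary groupings of atoms, not minimal ones---the correct source of that bound is again $|u|\le K-1$ at the moment the block is created.
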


\begin{proof}
	We first show it for $u$.
	If $\push{\zb}{\epsilon} = \sigma' u B_1 \dots B_k$ then $u$ cannot have an infix of the form $u_1 \dots u_N v_0 \dots v_N$ with $\morphism(u_i) =\morphism(v_i) = \morphism(v_0)= e$ for all $i$, for any $e \in \idempotents{\prodmonoid}$. This is because when such a pattern appears $u$ is turned into a "$d$-block".
	As a consequence, by Corollary~\ref{cor:bound-idempotent}, $u$ has length at most $K-1$.
	
	For the blocks, we show that in a block $u_1 \cdots u_N e^+ v_1 \cdots v_N w$ appearing in $\push{\zb}{\epsilon}$, the "lengths@@summary" of $u_1 \dots u_N e^+$ and $v_1 \dots v_N w$ are always at most $K$.
	We show this by induction on $|\zb|$. For $\zb = \epsilon$ this is trivial.
	Now suppose $|\zb|> 0$, let $(f,X) \zb' = \zb$.
	By induction hypothesis $\push{\zb'}{\epsilon}$ has the property.
	
	Let $\sigma'' u' B'_1 \dots B'_m = \push{\zb'}{\epsilon}$
	We show the property on $\push{\zb}{\epsilon}$ (which is equal to  $\push{(f,X)}{\push{\zb'}{\epsilon}}$ by definition) by following the cases of the definition of $\push{\_}{\_}$.
	
	In cases (1), (a) and (ii) blocks remain the same, thus the property still holds.
	In case (i), we define a new block $B = u_1 \cdots u_N e^+ v_1 \cdots v_N w$ by concatenating $(f,X)$ with $u'$. Since we showed that $u'$ has "length@@summary" at most $K-1$, $B$ has "length@@summary" at most  $K$, hence so do $u_1 \cdots u_N e^+$ and $v_1 \cdots v_N w$.
	In case (B), we obtain the property immediately.
	In case (A), we form a new block $u_1 \cdots u_N e^+ v'_1 \cdots v'_N w'$ by merging $B$ with one of the $B'_j = u'_1 \cdots u'_N e^+ v'_1 \cdots v'_N w'$.
	Since $u_1 \cdots u_N e^+$ and $v'_1 \cdots v'_N w'$ both have length at most $K$, the property is maintained.

\end{proof}

\begin{lemma}
	\label{lem:nb-blocks}
	For all $\zb \in \Ib^*$, $\push{\zb}{\epsilon}$ has at most $|\prodmonoid|^{4\Jlength{\prodmonoid}+1}$ blocks.
\end{lemma}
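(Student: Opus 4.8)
The plan is to bound the number $k$ of top-level blocks in $\push{\zb}{\epsilon}=\sigma' u B_1\cdots B_k$. Since $\push{\zb}{\epsilon}$ is a $d$-summary for $d=\Jheight{\zb}$, all the $B_j$ are $d$-blocks, hence each carries a marker $e^+$ for some depth-$d$ idempotent $e$ of $\prodmonoid$. As $\idempotents{\prodmonoid}$ has at most $|\prodmonoid|$ elements, it suffices to show that for each fixed idempotent $e$ at most $|\prodmonoid|^{4\Jlength{\prodmonoid}}$ of the blocks carry the marker $e^+$; summing over idempotents then yields $k\le|\prodmonoid|^{4\Jlength{\prodmonoid}+1}$.

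The first step is a structural invariant of the operation $\push{\_}{\_}$, proved by induction on $|\zb|$ and following the case distinction in its definition, exactly as in the proof of \cref{lem:length-blocks}: in $\push{\zb}{\epsilon}$ no two blocks carrying the same marker $e^+$ are mergeable, i.e.\ for $i<j$ both carrying $e^+$, the infix of $\push{\zb}{\epsilon}$ lying strictly between their $e^+$-markers does not evaluate to $e$ under $\morphism$. The only delicate case is (i)(A), where choosing the \emph{maximal} merge index guarantees both that the freshly formed block cannot be merged with any block further to the right and that the invariant for the blocks that remain is inherited from the pre-push summary. Along the way one also records the saturation properties already used in \cref{lem:length-blocks} (no $2N+1$ consecutive chunks with a common idempotent image occur inside a block or in the residual word $u$).

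The second step isolates a question about a single finite monoid. Fix a depth-$d$ idempotent $e$, pass to the local monoid $M_e:=e\prodmonoid e$ --- a finite monoid with identity $e$, of size at most $|\prodmonoid|$ and with $\Jlength{M_e}\le\Jlength{\prodmonoid}$ (standard properties of local submonoids together with \cref{thm:Ismael-height}) --- and let $B_{p_1},\dots,B_{p_c}$ be the blocks carrying $e^+$, listed left to right. For $1\le l<c$ put $g_l\in M_e$ equal to $e$ times the $\morphism$-image of the infix between the markers of $B_{p_l}$ and $B_{p_{l+1}}$ after deleting its flanking $u$-/$v$-parts (which map to $e$). Using $e^{2N+1}=e$ and $\morphism(B_{p_m})=e\cdot\morphism(\text{tail of }B_{p_m})$, a short computation shows that $g_i g_{i+1}\cdots g_j$ equals, under $\morphism$, the infix between the markers of $B_{p_i}$ and $B_{p_j}$; by the invariant and the saturation properties this family $g_1,\dots,g_{c-1}$ avoids certain idempotent-induced patterns in $M_e$. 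The heart of the argument is then to deduce from this a bound $c-1\le|M_e|^{4\Jlength{M_e}}$. This is carried out by a Ramsey-type induction on $\Jlength{M_e}$, mirroring the proof of \cref{thm:Ismael-main}: in the base case ($\Jlength=1$, so $M_e$ is a group) equal prefix products would create a forbidden pattern, so there are at most $|M_e|$ prefix products; in the induction step a sufficiently long word forces, by \cref{thm:Ismael-main}, an infix evaluating to an idempotent strictly below the identity in the $\Jgreen$-order, into whose local monoid --- of strictly smaller $\Jgreen$-length --- one recurses.

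I expect this last step --- the monoid-theoretic bound on $c$ --- to be the main obstacle. One must pin down precisely which patterns the $g_l$ avoid (the bare no-merge invariant is delicate to exploit, and one genuinely needs the finer saturation structure of the compressed form), keep the Ramsey bound of \cref{thm:Ismael-main} under control through the induction on $\Jgreen$-length so that the exponent stays linear in $\Jlength{\prodmonoid}$, and verify how size and $\Jgreen$-length behave under passage to the local monoids $e\prodmonoid e$, whose product is computed from that of $\prodmonoid$ (in particular using $\Jheight{\morphism(B_j)}=\Jheight{e}=d$, which relies on $e$ being idempotent).
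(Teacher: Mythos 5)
Your opening moves (pigeonhole over the idempotent markers, and the no-merge invariant of $\push{\_}{\_}$, which the paper indeed uses, treating it as immediate from case (i)(A) of the definition) match the paper, but the heart of your argument is missing and, as sketched, would not go through. The paper does not redo a Ramsey-type induction inside a local monoid: it applies \cref{thm:Ismael-main} once, as a black box, to the sequence of gap values $\alpha_{i_k,i_{k+1}}$ between consecutive blocks marked $e^+$, obtaining a consecutive product that is some idempotent $e'$; the decisive step is then a depth-and-Green argument showing that $e'$ is forced to equal $e$. Concretely, every such gap product is the $\morphism$-image of an infix of the $d$-summary, so $\Jheight{e'}\le d$ by \cref{rem:product-depth}, while $e'\Jleq e$ and $\Jheight{e}=d$ force $e'\Jgreen e$; since the gaps begin and end with factors mapping to $e$, we also have $e'\Rleq e$ and $e'\Lleq e$, so \cref{lem:Green2} gives $e'\Hgreen e$, and uniqueness of the idempotent in an $\Hgreen$-class (\cref{lem:Green1}) yields $e'=e$, contradicting the no-merge invariant. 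This is exactly the ingredient your plan lacks.

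Without it, your per-idempotent bound cannot be established the way you propose: in a general finite monoid, a sequence avoiding only ``some consecutive product equals $e$'' can be arbitrarily long (take a constant sequence of an absorbing element, or of any idempotent other than $e$), so the bare no-merge invariant together with the saturation properties you invoke (which the paper needs only for \cref{lem:length-blocks}, i.e.\ for block \emph{lengths}, not for their number) is insufficient. Moreover, the induction step you sketch --- recursing into the local monoid of an idempotent strictly $\Jgreen$-below the identity $e$ of $M_e$ --- is precisely the case that must be \emph{excluded} rather than handled: inside such a local monoid the forbidden pattern ``product $=e$'' imposes no constraint at all, and the paper's depth argument shows such idempotents cannot arise from the gap products in the first place. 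You yourself flag ``which patterns the $g_l$ avoid'' as the main obstacle; that obstacle is resolved exactly by the observation above (which in fact shows all consecutive gap products lie in the group $\Hgreen$-class of $e$, where a distinct-prefix-products argument would bound their number immediately), and since your proposal leaves it open, it has a genuine gap at its core.
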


\begin{proof}
	Let $\sigma = \sigma' u B_1 \dots B_m$ be a "$d$-summary". 
	For each $i$ let $B_i =u_{i,1} \dots u_{i,N} e_i^+ v_{i,1}, \ldots, v_{i,N} w_i$ and for each  $i<j$ define $\alpha_{i,j} = \morphism(v_{i,1} \dots v_{i,N} w_i B_{i+1} \cdots B_{j-1} u_{j,1} \dots u_{j,N})$.
	
	Note that since $\morphism(v_{i,1}) =e_i$ and $e_i \in \idempotents{\prodmonoid}$ for all $i$ we have \[\alpha_{i,j}=  e_i \cdot \morphism(v_{i,1} \dots v_{i,N} w_i B_{i+1} \cdots B_{j-1} u_{j,1} \dots u_{j,N})\] and thus 
	\[\alpha_{i,j} \cdot \alpha_{j,k} = \alpha_{i,k} \text{ for all } i<j<k\]
	
	Suppose by contradiction that $m> |\prodmonoid|^{4\Jlength{\prodmonoid}+1}$, then by pigeonhole principle there exist $e \in \idempotents{\prodmonoid}$ and $i_1 <\dots < i_p \in \set{1,\dots, K}$ such that 
	$p > (|\prodmonoid|)^{4\Jlength{\prodmonoid}}$ and $e_{i_k} = e$ for all $k \in \set{1,\dots,p}$.
	
	Then by Theorem~\ref{thm:Ismael-main}, there exist $k, \ell$ such that $\morphism(\alpha_{i_k,i_\ell})$ is an idempotent $e'$.
	
	Since $\sigma$ is a "$d$-summary", we have $\Jheight{e'} = d = \Jheight{e}$. 
	In consequence, since $e' \Jleq e$ and they are both idempotent, we must have $e \Jgreen e_i$.
	Furthermore, since \[e' \Rleq \alpha_{i_k} \Rleq \phi_{\mathbb{M}}(\pi(v_{i_k,1})) = e,\] we have $e' \Rleq e$ and thus $e \Rgreen e'$ by Lemma~\ref{lem:Green1}. 
	Similarly, since $e' \Lleq \alpha_{i_\ell} \Lleq e$ we have $e' \Lleq e$ and thus $e \Lgreen e'$.
	
	We obtain $e \Hgreen e'$. By Lemma~\ref{lem:Green2}, this implies $e = e'$.
	This is a contradiction since then the blocks from $B_{i_k}$ to $B_{i_\ell}$ should have been merged when $B_{i_k}$ was created.
	As a result, we must have $m \leq |\prodmonoid|^{4\Jlength{\prodmonoid}+1}$.
\end{proof}

We now have all necessary tools to show Lemma~\ref{lem:exp-summaries}.

\begin{proof}[Proof of Lemma~\ref{lem:exp-summaries}]
	We show that for all $\zb$ and $d$, if $\push{\zb}{\epsilon}$ is a "$d$-summary" then it has "size@@summary" at most $(8|\prodmonoid|^{4\Jlength{\prodmonoid}+1} K )^d$. 
	
	We do an induction  on $d$. The property trivially holds for $d=0$.
	Let $d>0$, suppose the property holds for $d-1$. 
	
	Let $\push{\zb}{\epsilon} = \sigma' u B_1 \dots B_k$.
	By Lemma~\ref{lem:nb-blocks} we must have $k \leq |\prodmonoid|^{4\Jlength{\prodmonoid}+1}$.
	By Lemma~\ref{lem:length-blocks} we have $|u| \leq K$ and $|B_i| \leq 2K$ for all $i$.
	
	Therefore $\push{z}{\epsilon}$ has "length@@summary" at most $(|\prodmonoid|^{4\Jlength{\prodmonoid}+1}+1) K +1$, which is bounded by $4|\prodmonoid|^{4\Jlength{\prodmonoid}+1} K$.
	
	Let $a\sigma''$ be a $d$-atom appearing in $\zb$, with $\sigma''$ a "$(d-1)$-summary".
	Note that there must be a (strict) infix $\zb''$ of $\zb$ such that $\push{\zb''}{\epsilon} = \sigma''$.
	By induction hypothesis $\sigma''$ has size at most $(8|\prodmonoid|^{4\Jlength{\prodmonoid}+1}K)^{d-1}$.
	Thus every "$d$-atom" has size at most $(8|\prodmonoid|^{4\Jlength{\prodmonoid}+1} K)^{d-1}+1 \le 2(8|\prodmonoid|^{4\Jlength{\prodmonoid}+1} K)^{d-1}$.
	With the bound on its "length@@summary", we conclude that $\push{z}{\epsilon}$ has "size@@summary" at most 
	\begin{align*}
		& 2(8|\prodmonoid|^{4\Jlength{\prodmonoid}+1} K)^{d-1} (4|\prodmonoid|^{4\Jlength{\prodmonoid}+1} K ) \\
		\leq & 	(8(|\prodmonoid|^{4\Jlength{\prodmonoid}+1} K )^{d}
	\end{align*}
	
	We get the result by applying this bound with $d = \Jlength{\prodmonoid}$, which is at most $\frac{(N^2+N+2)}{2}+2$ by Theorem~\ref{thm:Jheight}. 
\end{proof}

\section{Additional material from Section~\ref{sec:CFG}}
\subsection{Proof of Proposition~\ref{prop:Ggb-in-CFG}}
\label{app:Ggb-in-CFG}

\GgbInCFG*

\begin{proof}
	Let $w \in \langIG{\Ggb}$.
	There is a "derivation tree" from $(S,\Useful)$ to $w$.
	Let $\tau$ its tree structure and $\lambda: \tau \to \Nb \Ib^* \cup T^*$ its labeling.
	
	We now define $\mu: \tau \to \FT \cup T^*$ a labeling of $\tau$ with non-terminals and (words of) terminals of $\CFG$ such that for all $\nu \in \tau$, 
	\begin{itemize}
		\item if $\lambda(\nu) \in T^*$ then $\mu(\nu) = \lambda(\nu)$
		
		\item if $\lambda(\nu) = (A,X)[\zb] \in \Nb \Ib^*$ then $\mu(\nu) = (A,X,\push{\zb}{\epsilon})$
	\end{itemize}
	
	We show that the resulting labeled tree is a "derivation tree" from $(S,\Useful,\epsilon)$ to $w$ in $\CFG$, thereby showing the lemma.
	
	Clearly the "leaf word" of $\tau, \mu$ is $w$.
	It remains to show that this is a "derivation tree". 
	Since $\tau, \lambda$ is a "derivation tree" from $(S, \Useful)$ to $w$, all stack contents appearing in it must be "feasible@@stack".
	As a consequence, $\mu$ maps all nodes of $\tau$ to non-terminals of $\CFG$. 
	Let $\nu$ an internal node of $\tau$, and $(A,X)[\zb] = \lambda(\nu)$. One of the following cases holds.
	\begin{itemize}
		\item $\nu$ has one child labeled $w' \in T^*$, and there is a rule $(A,X) \to w'$ in $\Ggb$.
		Then $(A,X, \push{\zb}{\epsilon}) \to w'$ is a rule of $\CFG$.
		
		\item $\nu$ has two children labeled $B[\zb]$ and $C[\zb]$, and there is a rule $(A,X) \to (B,X) (C,X)$ in $\Ggb$.
		Then $(A,X, \push{\zb}{\epsilon}) \to (B,X, \push{\zb}{\epsilon}) (C,X, \push{\zb}{\epsilon})$ is a rule of $\CFG$.
		
		\item $\nu$ has a child labeled $B[(f,X)\zb]$, and there is a rule $(A,X) \to (B,Y) (f,X)$ in $\Ggb$.
		Then, since $\push{(f,X)}{\push{\zb}{\epsilon}} = \push{(f,X)\zb}{\epsilon}$, it must be that
		
		$(A,X, \push{\zb}{\epsilon}) \to (B,Y, \push{(f,X)\zb}{\epsilon})$ is a rule of $\CFG$.
		
		\item $\nu$ has a child labeled $B[\zb_-]$ with $\zb = (f,Y) \zb_-$, and there is a rule $(A,X) (f,Y) \to (B,Y)$ in $\Ggb$.
		Then, since by definition $\push{(f,X)}{\zb_-} = \push{\zb}{\epsilon}$, we have $\zb_- \in \pop{(f,Y)}{\zb}$. 
		
		As a result, 
		$(A,X, \push{\zb}{\epsilon}) \to (B,Y, \push{\zb_-}{\epsilon})$ is a rule of $\CFG$.
	\end{itemize}
	We have shown that every node satisfies the requirements of a "derivation tree" for $\CFG$. 
\end{proof}

\subsection{Proof of Proposition~\ref{prop:CFG-in-pump-skip}}\label{app:CFG-in-pump-skip}
\propCFGinPumpSkip*
\begin{proof}
	A ""pop step"" is simply a derivation step of $\CFG$ where the production rule applied is of the form $(A,X,\sigma) \to (B,Y,\sigma')$ with $\sigma=\push{(f,Y)}{\sigma'}$.
	We prove the following statement:
	
	For all derivation with $p$ "pop steps" from $(A,X,\sigma) \derivesindexedstar_{\CFG} w$ with $w \in T^*$,  
	there is a derivation \emph{with pump and skip} from $(A,X)[\zb]$ to $w'$ in $\Ggb$ with $w \subword w'$, and $\zb$ the "$p$-unfolding" of $\sigma$.
	
	We show this by induction on the derivation, and distinguish cases according to the rule used in its first step.
	\begin{itemize}
		\item If the rule is of the form $(A,X, \sigma) \to w$ then we apply the corresponding rule $(A,X) \to w$ from $(A,X)[\zb]$, hence $(A,X)[\zb] \derivesindexed[\Ggb] w$.
		
		\item If the rule is of the form $(A,X, \sigma) \to (B,X, \sigma)(C,X, \sigma)$ then there exist $w_B, w_C$ such that $w = w_B w_c$ and $ (B,X, \sigma) \derivesindexedstar_{\CFG} w_B$ and  $ (C,X, \sigma) \derivesindexedstar_{\CFG} w_C$.
		
		We apply the corresponding rule $(A,X) \to (B,X)(C,X)$ of $\Ggb$ from $(A,X)[\zb]$ to obtain $(B,X) [\zb] (C,X) [\zb]$. 
		By induction hypothesis, we obtain  $(B,X) [\zb] \derivesindexedstarps[\Ggb] w'_B$, as well as  $(C,X)[\zb] \derivesindexedstarps[\Ggb] w'_C$ with $w_B \subword w'_B$ and $w_C \subword w'_C$.
		As a consequence,  \[(A,X)[\zb] \derivesindexedstarps[\Ggb] w'_B w'_C\] which yields the result since $w = w_B w_C \subword w'_B w'_C$. 
		
		\item If the rule is of the form $(A,X, \sigma) \to (B,Y, \push{(f,X)}{\sigma})$, then, 
		by Lemma~\ref{lem:push-rule-aux}, the "$p$-unfolding" $\zb'$ of $\push{(f,X)}{\sigma}$ satisfies $(B,Y)[(f,X) \zb] \derivesindexedstarps (B,Y)[\zb']$.

		By induction hypothesis, there exists $w' \in T^*$ such that $(B,Y)[\zb'] \derivesindexedstarps[\Ggb] w'$ and $w \subword w'$. 
		As a consequence, \[(A,X)[\zb] \derivesindexed[\Ggb] (B,Y)[(f,X)\zb] \derivesindexedstarps[\Ggb] w'.\]
		
		\item If the rule is of the form $(A,X, \sigma) \to (B,Y, \sigma')$, with $\sigma = \push{(f,Y)}{\sigma'}$, then by Lemma~\ref{lem:pop-case} the "$(p-1)$-unfolding" $\zb'$ of $\sigma'$ is such that $(A,X)[\zb] \ruleskipstar (A,X) [(f,Y)\zb']$. 
		By induction hypothesis, there exists $w' \in T^*$ such that \[(A,X)[(f,Y)\zb'] \derivesindexedstarps[\Ggb] w'.\]
		As a consequence, \[(A,X)[\zb] \derivesindexed[\Ggb] (A,X) [(f,Y) \zb'] \derivesindexedstarps[\Ggb] w'.\]
	\end{itemize}
	We have proven the induction. To obtain the lemma, let $w \in \langIG{\CFG}$.
	There is a  derivation in $\CFG$ from $(S,\Useful, \epsilon)$ to $w$. Let $p$ be its number of "pop steps".
	Since $\epsilon$ is the "$p$-unfolding" of $\epsilon$, there is a derivation from $(S,\Useful)$ to some $w'$ with $w \subword w'$.
	As a result, $w \in \dcl{\langIG{\Ggb}}$.
\end{proof}

\subsection{Proof of Lemma~\ref{lem:monotone-unfolding}}\label{app:monotone-unfolding}

\lemMonotoneUnfolding*
\begin{proof}
	By induction on the depth $d$ of $\sigma$.
	If $d=0$ then $\unfold{p}{\sigma} = \unfold{p-1}{\sigma} = \epsilon$.
	If $d>0$, we distinguish cases according to the shape of $\sigma$. 
	If $\sigma$ is a "$d$-atom" $(f,X) \sigma'$ then we  simply apply the induction hypothesis.
	The same goes for a sequence of "$d$-atoms": we apply the previous case to each one of them.
	In the case of a "$d$-block", we have 
	\begin{multline*}\unfold{p}{\sigma} =  z^u  z^v ((f,X) \unfold{p-1}{\sigma_1}) \cdots \\ \cdots ((f,X) \unfold{p-1}{\sigma_r}) z^v \unfold{p}{w}\end{multline*} with $z^u, z^v, \sigma_1, \dots, \sigma_r$ as in the definition.
	\begin{itemize}
		\item if $p>1$, then by the previous cases, $z^u$, $z^v$, and $\unfold{p}{w}$ reduce to their "$(p-1)$-unfolding" counterparts. By induction hypothesis, each $\unfold{p-1}{\sigma_i}$ reduces to $\unfold{p-2}{\sigma_i}$.
		Hence $\unfold{p}{\sigma}$ reduces to $\unfold{p-1}{\sigma}$.

		\item If $p=1$, then we  have \[\unfold{p-1}{\sigma} = \unfold{0}{u_1 \cdots u_N}  \unfold{0}{v_1 \cdots v_N}  \unfold{0}{w}.\] 
		Further, by definition of a "$d$-block",  $\morphism(z^u_i) = \morphism(z^v_i) = e$ for all $i$. Moreover, for all $j$ we have $\morphism((f,X) \unfold{0}{\sigma_j}) = e$, since $\push{(f,X) \unfold{0}{\sigma_j}}{\epsilon}= \push{(f,X)}{\sigma_j} = u_1 \dots u_N e^+ v_1 \dots v_N$ and $\morphism(u_1 \dots u_N e^+ v_1 \dots v_N) = e$.
		As a consequence, we have 
		\begin{align*}
			\unfold{p}{\sigma} = &z^u z^v ((f,X) \unfold{p-1}{\sigma_1}) \cdots ((f,X) \unfold{p-1}{\sigma_r}) z^v \unfold{p}{w}\\
			&\ruleskip z^u z^v z^w \ruleskipstar \unfold{p-1}{\sigma}
		\end{align*}
	\end{itemize}
	Finally, for a "summary" we can simply apply the previous cases to each of the components.
\end{proof}
\subsection{Proof of Lemma~\ref{lem:push-rule-aux}}
\label{app:push-rule-aux}

\lemPushRuleAux*

\begin{proof}
	We prove this statement by induction on the depth of $\sigma_2$.
	If $\sigma_2$ has depth $0$ then it is $\epsilon$, contradicting $\sigma_2 = \push{(f,X)}{\sigma_1}$.
	
	If $\sigma_2$ has depth $d>0$ then we distinguish cases according to its shape.
	In cases (1) and (ii) we will simply show that $(f,Y) \unfold{p}{\sigma_1}$ is the "$p$-unfolding" of $\sigma_2$.
	In case (a) we will use the induction hypothesis, and in cases (A) and (B) we will actually apply a "pump rule" and "skip rules" to obtain $\unfold{p}{\sigma_2}$. 
	
	We decompose $\sigma_1$ as $\sigma_1 = \sigma' u B_1 \dots B_k$. 
	By definition  we have \[\unfold{p}{\sigma_1}= \unfold{p}{\sigma'} \unfold{p}{u} \unfold{p}{B_1} \cdots \unfold{p}{B_k}.\]
	We follow the cases in the definition of $\push{(f,X)}{\sigma_1}$.
	
	\begin{enumerate}
		\item If $\Jheight{(f,X) \sigma_1}>d$ then $\sigma_2 = (f,X)\sigma_1$.
		Then by definition $\unfold{p}{\sigma_2} = (f,X)\unfold{p}{\sigma_1}$.
		
		\item Otherwise, we have $\Jheight{(f,X) \sigma_1}= d$
		\begin{enumerate}
			\item if $\Jheight{(f,X) \sigma'}<d$ then \[\sigma_2 = (\push{(f,X)}{\sigma'})u B_1 \dots B_k.\]
			By induction hypothesis, we have \[(B,Y)[(f,X)\unfold{p}{\sigma'}] \derivesindexedstarps (B,Y) [z'']\] with $z''= \unfold{p}{\push{(f,X)}{\sigma'}}$.
			Therefore,
			\begin{align*}
				& (B,Y)[\unfold{p}{\sigma'} \unfold{p}{u} \unfold{p}{B_1} \cdots \unfold{p}{B_k}]\\
				& \derivesindexedstarps (B,Y) [z'' z^u z_1 \dots z_k]\\
				& = (B,Y) [\sigma_2].
			\end{align*}
			
			\item Otherwise, $\Jheight{(f,X) \sigma'}=d$ and $(f,X) \sigma'$ is a "$d$-atom".
			\begin{enumerate}
				\item If $((f,X) \sigma') u$ is of the form $u_1 \dots u_N v_0 v_1 \dots v_N w$ with $\morphism(u_i) = \morphism(v_i) = \morphism(v_0) = e$ for all $i\geq 1$, for some  $e \in \idempotents{\prodmonoid}$, 
				then 
				\[
					(f,X) \unfold{p}{\sigma'} = \unfold{p}{u_1} \dots \unfold{p}{u_N} \unfold{p}{v_0} \dots \unfold{p}{v_N} \unfold{p}{w}.	
				\]

				Furthermore since $\sigma_2$ is obtained by pushing $(f,X)$ we must have $\betaG(f) = B$ and $Y = f \cdot X$.
				Furthermore, since $(B,Y, \sigma_2)$ is a non-terminal of $\CFG$, $\sigma_2$ must be "feasible@@summary", hence $e \neq \zeroPM$. This means that $e = (B,Y, M, C,Y)$ for some $C$ and $M$.
				We have two cases.

				\begin{enumerate}
					\item If there exists $j$ such that $B_j$ is of the form \[u'_1 \dots u'_N e^+ v'_1 \dots v'_N w'\] and $\morphism(v_1 \dots v_N w B_1 \dots B_{j-1} u'_1 \dots u'_N) = e$. Then this implies \[\sigma_2=(u_1 \dots u_N e^+ v'_1 \dots v'_N w') B_{j+1} \dots B_k. \]
					In that case, we also have
					\begin{align*}
						\unfold{p}{B_j} =& 
						\unfold{p}{u_1'} \dots \unfold{p}{u_N'} z'_e\\ 
						&\unfold{p}{v_1'} \dots \unfold{p}{v_N'} \unfold{p}{w'}
					\end{align*}  
					with $\morphism(z'_e) = e$.
					
					Let $\tilde{z}$ be the "$p$-unfolding" of the "summary" $u_1 \dots u_N e^+ v'_1 \dots v'_N w'$.
					Then this must be of the form $z''  \unfold{p}{v'_1} \dots \unfold{p}{v'_N} \unfold{p}{w'}$ with $\morphism(z'') = e$.
					Since $e = (B,Y, M, C,Y)$ is idempotent, we can apply "pump rules" and "skip rules" as follows:
					
					\begin{align*}
						&(B,Y)&&[(f,X) \unfold{p}{\sigma_1}] \\ 
						=&(B,Y)&&[ \unfold{p}{u_1} \dots \unfold{p}{u_N}\\
						&&&\unfold{p}{v_0} \dots \unfold{p}{v_N}  \unfold{p}{w} \\
						&&& \unfold{p}{B_1} \dots \unfold{p}{B_{j-1}} \\
						&&& \unfold{p}{u_1'} \dots \unfold{p}{u_N'} z'_e\\
						&&& \unfold{p}{v_1'} \dots \unfold{p}{v_N'} \unfold{p}{w'}\\
						&&& \unfold{p}{B_{j+1}} \dots \unfold{p}{B_k}]\\
						\rulepump &(B,Y)&&[\unfold{p}{v_1'} \dots \unfold{p}{v_N'}\\
						&&&\unfold{p}{u_1} \dots \unfold{p}{u_N}\\
						&&&\unfold{p}{v_1} \dots \unfold{p}{v_N}  \unfold{p}{w} \\
						&&& \unfold{p}{B_1} \dots \unfold{p}{B_{j-1}} \\
						&&& \unfold{p}{u_1'} \dots \unfold{p}{u_N'} z'_e\\
						&&& \unfold{p}{v_1'} \dots \unfold{p}{v_N'} \unfold{p}{w'}\\
						&&& \unfold{p}{B_{j+1}} \dots \unfold{p}{B_k}]\\
						\ruleskip &(B,Y)&&[\unfold{p}{v_1'} \dots \unfold{p}{v_N'} \unfold{p}{w'}\\
						&&& \unfold{p}{B_{j+1}} \dots \unfold{p}{B_k}]\\
						\rulepump &(B,Y)&&[z''\unfold{p}{v_1'} \dots \unfold{p}{v_N'} \unfold{p}{w'}\\
						&&& \unfold{p}{B_{j+1}} \dots \unfold{p}{B_k}]\\
						= &(B,Y)&&[\unfold{p}{\sigma_2}]
					\end{align*}
					
					\item Otherwise we have the "summary" 
					\[\sigma_2 = (u_1 \dots u_N e^+ v_1 \dots v_N w) B_{1} \dots B_k. \] 
					
					We define $B = u_1 \dots u_N e^+ v_1 \dots v_N w$. Then the "$p$-unfolding" $\unfold{p}{B}$ is of the form \[ z''  \unfold{p}{v_1} \dots \unfold{p}{v_N}\unfold{p}{w} \] with $\morphism(z'') = e$.
					Since $e = (B,Y, M, C,Y)$ is idempotent, we can apply "pump rules" and "skip rules" as follows:
					
					\begin{align*}
						&(B,Y)&&[(f,X) \unfold{p}{\sigma_1}] \\ 
						=&(B,Y)&&[\unfold{p}{u_1} \dots \unfold{p}{u_N}\\
						&&&\unfold{p}{v_0} \unfold{p}{v_1} \dots \unfold{p}{v_N} \unfold{p}{w}\\ 
						&&&\unfold{p}{B_1} \dots \unfold{p}{B_k}]\\
						\rulepump &(B,Y)&&[\unfold{p}{v_1} \dots \unfold{p}{v_N} \\
						&&&\unfold{p}{u_1} \dots \unfold{p}{u_N}\\
						&&&\unfold{p}{v_1} \dots \unfold{p}{v_N}  \unfold{p}{w} \\
						&&& \unfold{p}{B_1} \dots \unfold{p}{B_{k}}]\\
						\ruleskip &(B,Y)&&[\unfold{p}{v_1} \dots \unfold{p}{v_N} \\
						&&& \unfold{p}{B_{1}} \dots \unfold{p}{B_k}]\\
						\rulepump &(B,Y)&&[z''\unfold{p}{v_1} \dots \unfold{p}{v_N} \\
						&&& \unfold{p}{B_{j+1}} \dots \unfold{p}{B_k}]\\
						= &(B,Y)&&[\unfold{p}{\sigma_2}]
					\end{align*}	
					The resulting stack content is the "$p$-unfolding" of $\sigma_2$.
				\end{enumerate} 						
				\item Otherwise, we have $\sigma_2 = ((f,X) \sigma')u B_1 \dots B_k$ and thus $(f,X) \unfold{p}{\sigma_1} = \unfold{p}{\sigma_2}$. 
			\end{enumerate}
		\end{enumerate}
	\end{enumerate}
\end{proof}

\subsection{Proof of Lemma~\ref{lem:pop-case}}
\label{app:pop-rule-aux}

\lemPopRuleAux*

\begin{proof}[Proof of Lemma~\ref{lem:pop-case}]
	We prove this statement by induction on the depth of $\sigma_1$.
	If $\Jheight{\sigma_1}= 0$ then it is $\epsilon$ and $\pop{(f,Y)}{\sigma_1}$ is empty.

	If $\sigma_1$ has depth $d>0$ then we distinguish cases according to its shape. 
	Recall that by Lemma~\ref{lem:monotone-unfolding} a "$p$-unfolding" of a "summary" $\sigma$ always reduces to its "$(p-1)$-unfolding" through "skip rules". We will use this fact often throughout the proof.
	
	Let $\sigma_2 = \sigma' u B_1 \dots B_k$. 
	
	\begin{enumerate}
		\item If $\Jheight{(f,Y) \sigma_2}>d$ then $\sigma_1 = (f,Y)\sigma_2$.
		Then by definition $\unfold{p}{\sigma_1} = (f,Y)\unfold{p}{\sigma_2}$, hence  \[ (A,X)[\unfold{p}{\sigma_1}] \ruleskipstar (A,X)[(f,X)\unfold{p-1}{\sigma_2}]\] by Lemma~\ref{lem:monotone-unfolding}.
		
		\item Otherwise, we have $\Jheight{(f,Y) \sigma_2}= d$
		\begin{enumerate}
			\item if $\Jheight{(f,Y) \sigma'}<d$ then \[\sigma_1 = (\push{(f,Y)}{\sigma'})u B_1 \dots B_k.\] 
			Then we have 
			\begin{align*}
				&\unfold{p}{\sigma_1} =\\
				& \unfold{p}{\push{(f,Y)}{\sigma'}} \unfold{p}{u} \unfold{p}{B_1} \dots \unfold{p}{B_k}.
			\end{align*}
			Since $(A,X, \sigma_1) \to (B,Y, \sigma_2)$ is a rule of $\CFG$,  $\sigma_1$ and $\sigma_2$ must be "feasible@@summary", thus so are $\sigma'$, and $\push{(f,Y)}{\sigma'}$.
			Thus, by construction of $\CFG$, we must have the rule $(A,X,\push{(f,Y)}{\sigma'}) \to (B, Y, \sigma')$ in $\CFG$.
			Hence, by induction hypothesis, \[(A,X)[\unfold{p}{\push{(f,Y)}{\sigma'}}] \ruleskipstar (A,X)[(f,Y) \unfold{p-1}{\sigma'}].\] 
			As a result, we have 
			\begin{align*}
				&(A,X)[\unfold{p}{\sigma_1}]\\
				& \ruleskipstar (A,X)[(f,Y)\unfold{p-1}{\sigma'} \unfold{p}{u} \unfold{p}{B_1} \dots \unfold{p}{B_k}]
			\end{align*}
			By Lemma~\ref{lem:monotone-unfolding}, we thus have 
			
			\begin{align*}
				& (A,X)[\unfold{p}{\sigma_1}]  \\
				\ruleskipstar & (A,X)[(f,Y)\unfold{p-1}{\sigma'} \unfold{p-1}{u} \unfold{p-1}{B_1} \dots \unfold{p-1}{B_k}]\\
				= & (A,X)[(f,Y)\unfold{p-1}{\sigma_2}]
			\end{align*}

			\item Otherwise, $\Jheight{(f,Y) \sigma'}=d$ and $(f,Y) \sigma'$ is a "$d$-atom".
			\begin{enumerate}
				\item If $((f,Y) \sigma') u$ is of the form $u_1 \dots u_N v_0 v_1 \dots v_N w$ with $\morphism(u_i) = \morphism(v_i) = \morphism(v_0)= e$ for all $i \geq 1$, for some $e \in \idempotents{\prodmonoid}$,
				we have two cases.
				\begin{enumerate}
					\item If there exists $j$ such that $B_j$ is of the form $u'_1 \dots u'_N e^+ v'_1 \dots v'_N w'$ and \[\morphism(v_1 \dots v_N w B_1 \dots B_{j-1} u'_1 \dots u'_N) = e\] then we pick the maximal such $j$. 
					We have $\sigma_1 = B B_{j+1} \dots B_k$, where \[ B = u_1 \dots u_N e^+ v'_1 \dots v'_N w'. \]
					In that case, \[\unfold{p}{\sigma_1} = \unfold{p}{B} \unfold{p}{B_{j+1}} \dots \unfold{p}{B_k}.\]
					
					Let $B_j' = u'_1 \dots u'_N e^+ v'_1 \dots v'_N$, that is, $B_j$ without the $w'$ suffix.					
					Let us also define $z'= \unfold{p-1}{\sigma' u B_1 \dots B_{j-1} B'_j}$. Then $(f,Y) z'$ is of the form 
					\[\unfold{p-1}{u_1} \cdots \unfold{p-1}{u_N}  z'_e \unfold{p-1}{v'_1} \cdots \unfold{p-1}{v'_N}\]
					for some $z_e'$ such that $\morphism(z_e') = e$.

					By definition of the "$p$-unfolding", since \[\push{(f,X)}{\sigma' u B_1 \dots B_{j-1} B'_j} = B\] and $p \geq 1$, $\unfold{p}{B}$ is of the form \[\unfold{p}{u_1} \dots \unfold{p}{u_N} z_-  (f,Y) z'  z_+ \unfold{p}{v'_1} \cdots \unfold{p}{v'_N} \unfold{p}{w'}\] 
					with $\morphism(z_-) = \morphism(z_+) =e$.

					We can use "skip rules": 
					\begin{align*}
						& (A,X)&&[\unfold{p}{\sigma_1}]\\ 
						= &(A,X)&&[\unfold{p}{B} \unfold{p}{B_{j+1}} \dots \unfold{p}{B_k}]\\
						= &(A,X)&&[\unfold{p}{u_1} \dots \unfold{p}{u_N} z_-  (f,Y) z' z_+\\ &&&\unfold{p}{v'_1} \cdots \unfold{p}{v'_N} \\ 
						&&& \unfold{p}{w'} \unfold{p}{B_{j+1}} \dots \unfold{p}{B_k}] 	\\
						\ruleskipstar &(A,X)&&[\unfold{p-1}{u_1} \dots \unfold{p-1}{u_N} z_-  \\ 
						&&&(f,Y) z' z_+ \unfold{p-1}{v'_1} \cdots \unfold{p-1}{v'_N}\\ 
						&&& \unfold{p-1}{w'} \unfold{p-1}{B_{j+1}} \dots \unfold{p-1}{B_k}]\\
						=& (A,X)&&[\unfold{p-1}{u_1} \dots \unfold{p-1}{u_N} z_- \\ 
						&&& \unfold{p-1}{u_1} \cdots \unfold{p-1}{u_N}  z'_e\\ 
						&&&\unfold{p-1}{v'_1} \cdots \unfold{p-1}{v'_N}\\ 
						&&&  z_+ \unfold{p-1}{v'_1} \cdots \unfold{p-1}{v'_N} \unfold{p-1}{w'}\\ 
						&&& \unfold{p-1}{B_{j+1}} \dots \unfold{p-1}{B_k}]\\
						\ruleskip^2 & (A,X)&&[\unfold{p-1}{u_1} \dots \unfold{p-1}{u_N}  z'_e\\ &&&\unfold{p-1}{v'_1} \cdots \unfold{p-1}{v'_N}\\ 
						&&& \unfold{p-1}{w'} \unfold{p-1}{B_{j+1}} \dots \unfold{p-1}{B_k}]\\
						= &(A,X)&&[(f,Y)z' \unfold{p-1}{B_{j+1}} \dots \unfold{p-1}{B_k}]\\
						= &(A,X)&&[(f,Y)\unfold{p-1}{\sigma_2}]
					\end{align*}
					\item Otherwise our "summary" is of the form $\sigma_1 = (u_1 \dots u_N e^+ v_1 \dots v_N w) B_{1} \dots B_k$. 
					In particular, its "$p$-unfolding" is the stack content $\unfold{p}{\sigma_1} = \unfold{p}{B} \unfold{p}{B_1} \cdots \unfold{p}{B_k}$.

					Since $(f,Y)\sigma' u = u_1 \dots u_N v_0 \dots v_N w$, the "$(p-1)$-unfolding" of $u_1 \dots u_N v_0 \dots v_N$ must be of the form $(f,Y) z'$ for some $z'$. 
					Hence \[(f,Y) z'= \unfold{p-1}{u_1} \cdots \unfold{p-1}{u_N} \unfold{p-1}{v_0} \cdots \unfold{p-1}{v_N}.\]
					
					By definition of the "$p$-unfolding", $ \unfold{p}{B}$ is of the form \[\unfold{p}{u_1} \cdots \unfold{p}{u_N} z_-  (f,Y)z'  z_+ \unfold{p}{v_1} \cdots \unfold{p}{v_N} \unfold{p}{w}\] with $\morphism(z_-) = \morphism(z_+) =e$.

					We can use "skip rules": 
					\begin{align*}
						&(A,X)&&[\unfold{p}{\sigma_1}]\\
						= &(A,X)&&[\unfold{p}{B}  \unfold{p}{B_1} \cdots \unfold{p}{B_k}]\\
						= &(A,X)&&[\unfold{p}{u_1} \cdots \unfold{p}{u_N} z_-  \\ 
						&&& (f,Y)z'z_+ \unfold{p}{v_1} \cdots \unfold{p}{v_N} \\
						&&& \unfold{p}{w} \unfold{p}{B_1} \cdots \unfold{p}{B_k}]\\
						= &(A,X)&&[\unfold{p}{u_1} \cdots \unfold{p}{u_N} z_-\\ 
						&&&  \unfold{p-1}{u_1} \cdots \unfold{p-1}{u_N}\\
						&&& \unfold{p-1}{v_0} \unfold{p-1}{v_1} \cdots \unfold{p-1}{v_N}  \\ 
						&&&z_+ \unfold{p}{v_1} \cdots \unfold{p}{v_N} \unfold{p}{w}\\
						&&& \unfold{p}{B_1} \cdots \unfold{p}{B_k}]\\
						\ruleskipstar & (A,X)&&[\unfold{p-1}{u_1} \cdots \unfold{p-1}{u_N} z_-\\ &&&\unfold{p-1}{u_1} \cdots \unfold{p-1}{u_N}\\
						&&& \unfold{p-1}{v_0} \unfold{p-1}{v_1} \cdots \unfold{p-1}{v_N} \\\ 
						&&&z_+ \unfold{p-1}{v_1} \cdots \unfold{p-1}{v_N} \unfold{p-1}{w}\\ &&&\unfold{p-1}{B_1} \cdots \unfold{p-1}{B_k}]\\
						\ruleskip^2 & (A,X)&&[(f,Y)z' \unfold{p-1}{w} \unfold{p-1}{B_1} \cdots \unfold{p-1}{B_k}]\\
						=& (A,X)&&[\unfold{p-1}{u_1 \dots u_N v_0 \dots v_N}\\ 
						&&&\unfold{p-1}{w} \unfold{p-1}{B_1} \cdots \unfold{p-1}{B_k}]\\
						=& (A,X)&&[\unfold{p-1}{\sigma_2}]
					\end{align*}					
				\end{enumerate} 						
				\item Otherwise, we have the "summary" \[ \sigma_1 = ((f,Y) \sigma')u B_1 \dots B_k \] and thus $\unfold{p}{\sigma_1} =(f,Y) \unfold{p}{\sigma_2}$.
				According to Lemma~\ref{lem:monotone-unfolding}, we obtain the derivation \[ (A,X)[\sigma_1] \ruleskipstar (A,X)[(f,Y)\unfold{p-1}{\sigma_2}]. \] 
			\end{enumerate}
		\end{enumerate}
	\end{enumerate}
\end{proof}

\section{Additional material from Section~\ref{sec:lower-bound}}\label{app:lower-bound}
\subsection{Proofs for NFA lower bound}\label{app:main-lower-bound}

\begin{lemma}
	There is an derivation of $\Gg_n$ that produces the word $\ltr{a}^{\exp_3(n)}$
\end{lemma}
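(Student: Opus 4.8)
The plan is to exhibit an explicit derivation of $\Gg_n$ that produces $\ltr{a}^{\exp_3(n)}$ by following the intended operational behaviour of the grammar step by step. The key observation, which I would isolate first as a sub-claim, is that starting from a term $Z[z\bot]$ where $z\in(\Sigma_n\times\set{\ltr{0},\ltr{1}})^*$ encodes a binary number $k<2^{2^n}-1$ on a stack of height exactly $2^n$, there is a derivation to a single term $\duplicate[z'\bot]$ where $z'$ encodes $k+1$ on a stack of height exactly $2^n$ — provided I first deal with the special shape of the increment. Concretely, from $Z[z\bot]$ the rule $Z\to Z(\alpha,\ltr{0})$ lets me push some number of $\ltr{0}$-annotated symbols, and $Z\xrightarrow{\Bcal_1,\ldots,\Bcal_n}\duplicate$ can only fire once the stack has the (unique) correct increment-instruction word on top followed by $\bot$, which by the construction of the $\Bcal_i$ pins the stack height to exactly $2^n$. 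So I would argue: when $k$ has binary expansion $\ltr{1}^m\ltr{0}w$ (least significant bit on top of the stack, i.e.\ leftmost), the increment replaces the prefix $\ltr{1}^m\ltr{0}$ with $\ltr{0}^m\ltr{1}$.

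Next I would spell out how one increment is realized in the grammar. From $\duplicate[z\bot]$ apply $\duplicate\to AA$, obtaining two copies $A[z\bot]$; focus on one copy (the other is handled symmetrically, and recursively by the outer induction). Using $A(\alpha,\ltr{1})\to A$ repeatedly, pop the maximal prefix $\ltr{1}^m$ of annotated ones; then $A(\alpha,\ltr{0})\to B$ pops the following $\ltr{0}$ and switches to $B$; then $B\to Z(\alpha,\ltr{1})$ pushes a single annotated $\ltr{1}$ and switches to $Z$; then $Z\to Z(\alpha,\ltr{0})$ pushes exactly $m$ annotated zeros; finally $Z\xrightarrow{\Bcal_1,\ldots,\Bcal_n}\duplicate$ verifies the stack height is back to $2^n$ (which holds: we popped $m+1$ symbols and pushed $1+m$). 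The only subtlety is the terminal increment where $k=2^{2^n}-1-1$ would wrap, but we never need that: we stop exactly when every stack encodes $2^{2^n}-1$, i.e.\ when $m=2^n-1$ and $w=\varepsilon$, so the prefix $\ltr{1}^{2^n-1}\ltr{0}$ being replaced does not exist — at that point every term is $A[z\bot]$ with $z=(\alpha_1,\ltr{1})\cdots(\alpha_{2^n},\ltr{1})$, and we apply $A\bot\to F$ then $F\to\ltr{a}$.

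I would then assemble the full derivation: start with $S\to Z\bot$, then $Z\to Z(\alpha,\ltr{0})$ exactly $2^n$ times (pushing the correct increment-instruction annotations so that the $\Bcal_i$ accept), then $Z\xrightarrow{\Bcal_1,\ldots,\Bcal_n}\duplicate$, reaching $\duplicate[z_0\bot]$ with $z_0$ encoding $0$ on a height-$2^n$ stack. Now iterate the "duplicate-then-increment" block $2^{2^n}-1$ times: after $j$ iterations there are $2^j$ terms in the sentential form, each of the form $\duplicate[z\bot]$ with $z$ encoding $j$ (for the branch that performed all increments uniformly — here I pick the derivation that keeps all copies synchronized, applying the same increment to every branch). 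After $2^{2^n}-1$ iterations there are $2^{2^{2^n}-1}$ terms, each $\duplicate[z\bot]$ with $z$ encoding $2^{2^n}-1$. Applying $\duplicate\to AA$ once more, then $A\bot\to F$, then $F\to\ltr{a}$ to each leaf (note: actually I should be careful about whether the final $\duplicate\to AA$ is applied; the cleaner bookkeeping is to count: each $\duplicate\to AA$ doubles, applied once before each of the $2^{2^n}$ height-checks... I would settle the exact exponent arithmetic so that the leaf count is $2^{2^{2^n}}=\exp_3(n)$). Concatenating the $\ltr{a}$'s from left to right yields $\ltr{a}^{\exp_3(n)}$.

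The main obstacle I anticipate is the careful bookkeeping: (1) verifying that the $\Bcal_i$ automata force stack height exactly $2^n$ and that the annotations pushed by $Z\to Z(\alpha,\ltr{0})$ can be chosen to form an accepted increment-instruction word — this needs the explicit description of $\Bcal_i$ and of the unique word in $\bigcap_i\lang{\Aa_i}$ of length $2^n-1$ (padded by $\ltr{\#}$, or rather here by the structure over $\Sigma_n\times\set{\ltr 0,\ltr 1}$); (2) getting the counting of $\duplicate\to AA$ applications exactly right so the final number of $\ltr{a}$'s is $\exp_3(n)$ rather than off by a factor of $2$; and (3) confirming that at each stage the increment's prefix $\ltr{1}^m\ltr{0}$ genuinely matches what $A(\alpha,\ltr 1)\to A$ and $A(\alpha,\ltr 0)\to B$ can pop, i.e.\ that the bit-annotations agree. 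All three are routine but must be checked against the formal definitions of $\Aa_i$, $\Bcal_i$, and $P_n$; none requires a genuinely new idea, since the grammar was designed precisely to make this derivation work. I would present the argument as an explicit induction on $j$ (the value of the binary counter after $j$ increments), with the inductive invariant being the shape of the sentential form described above.
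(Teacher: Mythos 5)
Your proposal is correct and takes essentially the same route as the paper's appendix proof, which packages the same computation as an induction on the counter value $M$ (showing that a single term with stack $(\alpha_1,b_1)\cdots(\alpha_{2^n},b_{2^n})\bot$ encoding $M$ derives $\ltr{a}^{2^{2^{2^n}-M}}$), while you keep a synchronized global invariant over the sentential form; your hedged arithmetic does come out right, since $2^{2^n}-1$ duplicate-and-increment rounds followed by one last application of $\duplicate\to AA$ give $2\cdot 2^{2^{2^n}-1}=\exp_3(n)$ leaves. The only details to repair are cosmetic: from $A[(\alpha_1,\ltr{1})\cdots(\alpha_{2^n},\ltr{1})\bot]$ you must first pop all the $\ltr{1}$'s with $A(\alpha,\ltr{1})\to A$ before $A\bot\to F$ can fire, and your opening sub-claim cannot hold as literally stated (an increment never yields a \emph{single} $\duplicate$-term, since the duplication $\duplicate\to AA$ necessarily intervenes) — both points are already handled correctly in your detailed iteration.
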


\begin{proof}
	Let $\alpha_1\cdots \alpha_m \in \Sigma_n^*$ be the unique word accepted by all $\Aa_i$, with $m = 2^n$.
	For all $b_1, \ldots, b_m \in \set{\ltr{0},\ltr{1}}$, we write $\overline{b_1\cdots b_m}$ for the number in $[0,2^m-1]$ whose binary representation over $m$ bits is $b_1\cdots b_m$, where $b_1$ is the least significant digit.
	
	We show that for all $b_1, \ldots, b_m \in \set{\ltr{0},\ltr{1}}$, if $M= \overline{b_1\cdots b_m}$ then
	$Z[ (\alpha_1, b_1) \cdots (\alpha_m, b_m) \bot]$ produces $\ltr{a}^{2^{2^{m} - M}}$, by induction on $2^m-M$.
	
	We can apply $Z \xrightarrow{\Bcal_1,\ldots,\Bcal_n} D$ and $D \to AA$ to obtain two copies of $A [(\alpha_1, b_1) \cdots (\alpha_m, b_m) \bot]$. This is because
	$\alpha_1\cdots \alpha_m$ is accepted by all $\Aa_i$. We are left with 
	\[ A[(\alpha_1, b_1) \cdots (\alpha_m, b_m) \bot] A [(\alpha_1, b_1) \cdots (\alpha_m, b_m) \bot]. \]
	
	If $b_i = \ltr{1}$ for all $i$, then we can apply $A (\alpha_i, \ltr{1}) \to A$ for each $i$ and then $A \bot \to F$ and $F\to \ltr{a}$ to obtain $\ltr{a}\ltr{a}$, which is what we want since we would then have $M = 2^m -1$ and thus $\ltr{a}^{2^{2^{m} - M}} = \ltr{a}^2$.
	
	Otherwise, let $j$ be the least index such that $b_j=\ltr{0}$.
	It suffices to show that the term $A [(\alpha_1, b_1) \cdots (\alpha_m, b_m) \bot]$ produces $\ltr{a}^{2^{2^{m} - M -1}}$. 
	We have $M+1 = \overline{\ltr{1}^{j-1} \ltr{0} b_{j+1} \cdots b_m} +1 = \overline{\ltr{0}^{j-1} \ltr{1} b_{j+1} \cdots b_m}$.

	We can apply:
	\begin{itemize}
		\item $A (\alpha_i, \ltr{1}) \to A$ for each $i < j$ until we get\\$A [(\alpha_j, \ltr{0})(\alpha_{j+1}, b_{j+1}) \cdots (\alpha_m, b_m) \bot]$,
		
		\item then apply $A (\alpha_j, \ltr{0}) \to B$ and $B \to Z(\alpha_j,\ltr{1})$ to get \\ $Z[ (\alpha_j, \ltr{1})(\alpha_{j+1}, b_{j+1}) \cdots (\alpha_m, b_m) \bot]$,
		
		\item and $Z \to Z (\alpha_i, \ltr{0})$ for each $i<j$, in decreasing order, until we obtain\\ $Z [(\alpha_1, \ltr{0})\cdots (\alpha_{j-1},\ltr{0}) (\alpha_j,\ltr{1}) (\alpha_{j+1}, b_{j+1}) \cdots (\alpha_m, b_m) \bot]$, which produces $\ltr{a}^{2^{2^{m} - M -1}}$ by induction hypothesis.
	\end{itemize} 
	
	The induction is proved. To obtain the lemma, it suffices to start with
	$S$, apply $S \to Z \bot$ and then $Z \to Z (\alpha_i, \ltr{0})$ for
	each $i \in [1,n]$ in decreasing order. We get  $Z (\alpha_1,
	\ltr{0})\cdots (\alpha_{n},\ltr{0})$, which produces $a^{\exp_3(n)}$
	by applying the induction with $M=0$.
\end{proof}

\begin{lemma}
	For all $E \in N_n$ and $z \in I_n^*$ the language $\langX{Ez}{\emptyset}$ contains at most one word.
	In particular, $\langIG{\Gg_n}$ is empty or a singleton.
\end{lemma}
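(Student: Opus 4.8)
The statement to prove is that for every non-terminal $E \in N_n$ and every stack content $z \in I_n^*$, the language $\langX{Ez}{\emptyset}$ contains at most one word (and hence $\langIG{\Gg_n}$ is empty or a singleton). The natural approach is a strong induction, but the obvious induction parameter (the derivation length) does not decrease nicely under the non-deterministic rules $Z\to Z(\alpha,\ltr0)$, which can push arbitrarily many symbols. So the key is to first establish a \emph{structural normal-form lemma} about what derivations of $\Gg_n$ can look like, and only then argue uniqueness. Concretely, I would first observe that the rules naturally fall into ``phases'': from a $Z$-term one can only either push $\ltr0$-symbols (staying at $Z$) or fire $Z\xrightarrow{\Bcal_1,\dots,\Bcal_n}D$; the guard forces the current stack to be exactly of the form $(\alpha_1,b_1)\cdots(\alpha_m,b_m)\bot$ with $\alpha_1\cdots\alpha_m$ the \emph{unique} word accepted by all $\Aa_i$ (using the construction in Step II and the uniqueness there), so $m=2^n$ is forced and the letters $\alpha_i$ are forced. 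From $D$ only $D\to AA$ applies. From an $A$-term with stack $(\alpha_1,b_1)\cdots(\alpha_m,b_m)\bot$, the applicable rule is determined by the top symbol: if it is $(\alpha_1,\ltr1)$ only $A(\alpha,\ltr1)\to A$ applies; if it is $(\alpha_1,\ltr0)$ only $A(\alpha,\ltr0)\to B$ applies; if the stack is just $\bot$ only $A\bot\to F$ applies, and then only $F\to\ltr a$. From a $B$-term only $B\to Z(\alpha,\ltr1)$ applies. The only genuine non-determinism is the number of $\ltr0$'s pushed by $Z$, and that is immediately pinned down at the next guard check.

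The heart of the argument is therefore the following claim, proved by induction on a well-chosen parameter: \emph{for every stack content $z$ of the form $(\alpha_1,b_1)\cdots(\alpha_m,b_m)\bot$ with $m=2^n$ and $\alpha_1\cdots\alpha_m$ the canonical word, the term $A[z]$ derives exactly one terminal word, namely $\ltr a^{2^{2^m - M}}$ where $M=\overline{b_1\cdots b_m}$; and it derives none if $z$ is not of this canonical shape.} The induction parameter is $2^m - M \in [1, 2^m]$ (equivalently, counting down on the binary counter), exactly mirroring the existence proof in the preceding lemma. In the base case $M = 2^m-1$, all $b_i=\ltr1$, and the only possible derivation strips the $\ltr1$'s, reaches $A[\bot]$, then $F$, then $\ltr a$; duplicating gives $\ltr a\ltr a$, uniquely. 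In the inductive step, let $j$ be the least index with $b_j=\ltr0$ (it exists); the forced sequence of moves is: strip $(\alpha_i,\ltr1)$ for $i<j$, then $A(\alpha_j,\ltr0)\to B\to Z(\alpha_j,\ltr1)$, then the $Z$ may push some number $k$ of $\ltr0$-symbols, then $Z\xrightarrow{\Bcal_1,\dots,\Bcal_n}D$ — and this last guard forces $k=j-1$ \emph{and} forces the pushed $\ltr0$'s to be annotated by $\alpha_1,\dots,\alpha_{j-1}$ in that order, since the guard automata accept only the canonical word of length $2^n$. So the resulting stack is uniquely $(\alpha_1,\ltr0)\cdots(\alpha_{j-1},\ltr0)(\alpha_j,\ltr1)(\alpha_{j+1},b_{j+1})\cdots(\alpha_m,b_m)\bot$, which encodes $M+1$, so the smaller parameter $2^m-(M+1)$ applies and the induction hypothesis gives a unique output $\ltr a^{2^{2^m-M-1}}$ for each of the two $A$-copies, hence $\ltr a^{2^{2^m-M}}$ in total, uniquely. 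The ``wrong shape'' case is handled by noting that any stack content reaching a $Z\xrightarrow{\Bcal_1,\dots,\Bcal_n}D$ check must pass all $\Bcal_i$, which (by the Step-II uniqueness argument, with the $\bot$-marker handling) forces the canonical shape; and if a $B$- or $A$- or $F$-term is ever reached with a non-canonical stack, no rule applies or the computation is already dead, so $\langX{\cdot}{\emptyset}=\emptyset$.

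Finally, to conclude the statement for arbitrary $E$ and $z$: for $E\in\{F\}$ the claim is trivial ($F$ gives only $\ltr a$, regardless of stack, since $F\to\ltr a$ ignores the stack); for $E=S$ one observes $S$ must fire $S\to Z\bot$, so $z$ is irrelevant and we reduce to analyzing $Z[\bot]$; for $E\in\{Z,D,A,B\}$ with stack $z$, either $z$ (after the forced initial moves, if any) reaches a guard check that forces $z$ into the canonical family and then the structural claim above applies verbatim, or no terminal word is derivable. Either way $|\langX{Ez}{\emptyset}|\le 1$. In particular $\langIG{\Gg_n}=\langX{S}{\emptyset}$ has at most one element, and combined with the previous lemma exhibiting the derivation of $\ltr a^{\exp_3(n)}$, we get $\langIG{\Gg_n}=\{\ltr a^{\exp_3(n)}\}$.

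**Main obstacle.** The delicate point is showing that the non-deterministic $Z$-pushes are fully pinned down: one must carefully use that the guard automata $\Bcal_i$ (equivalently the $\Aa_i$ together with the $\bot$/$\#$ padding of Step II) accept a \emph{unique} string, and that consequently both the \emph{number} and the \emph{annotation letters} $\alpha_i$ of the pushed $\ltr0$'s are forced at the moment the $Z\xrightarrow{\Bcal_1,\dots,\Bcal_n}D$ rule fires. Everything else is a routine case analysis driven by ``which rule can apply to this term'', but this coupling between the guard and the annotation is what makes the single-word claim go through, and it is worth stating as a separate lemma before the main induction.
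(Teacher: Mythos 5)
Your overall strategy is sound and close in spirit to the paper's, though organized differently: the paper's proof is a short, purely local argument — from every configuration $Ez$ at most one production can be part of a complete derivation (the rules for $S$, $D$, $B$, $F$ are unique, the $A$-rules are forced by the top stack symbol, and for $Z$ the guard resolves push-versus-check: with fewer than $2^n$ symbols above the first $\bot$ the check cannot fire, and with exactly $2^n$ any further push kills the branch forever) — whereas you run a global induction on the counter value, mirroring the existence lemma. Your handling of the one genuinely delicate point, namely that the guard pins down both the \emph{number} and the \emph{annotation letters} of the pushed $\ltr{0}$'s, is correct and in fact more explicit than what the paper writes.

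There are, however, two concrete flaws in the write-up. First, the auxiliary claim that $A[z]$ ``derives none if $z$ is not of this canonical shape'', and the closing dichotomy ``either the derivation reaches a guard check forcing the canonical family, or no terminal word is derivable'', are false: $A[\bot]$ derives $\ltr{a}$ via $A\bot\to F\to\ltr{a}$ without ever meeting a check, and the same happens for $A[(\alpha,\ltr{1})\bot]$, for stacks whose $\ltr{0}$-positions can still be completed to the canonical word, and for stacks carrying extra material below the first $\bot$ (which your canonical shape excludes, although the lemma quantifies over all $z\in I_n^*$). Since the lemma only asserts ``at most one word'', these cases are harmless, but they must be handled — as the paper does, and as your own normal-form paragraph already suggests — by the observation that every non-$Z$ rule is determined by the top stack symbol, not by claiming emptiness. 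Second, your induction claim is off by one: $\ltr{a}^{2^{2^m-M}}$ is what the \emph{$Z$-term} produces; the $A$-term with counter value $M$ produces $\ltr{a}^{2^{2^m-M-1}}$, and in your base case $A[z]$ with all bits $\ltr{1}$ yields a single $\ltr{a}$ (an $A$-term never duplicates), so the claim as stated is false and its base case does not match the inductive step. Neither slip threatens uniqueness, but the claim needs to be restated (about $Z$-terms, or with the corrected exponent, and with ``at most one'' in place of emptiness for non-canonical stacks) for your induction to be literally correct.
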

\begin{proof}
	We prove this for each $E \in N_n$, one by one.
	
	\begin{enumerate}

		\item We first observe that from a configuration $Zz$ there is always at most one rule which can lead to a "complete@@tree" "derivation tree": if $z$ does not contain $\bot$ then $\langX{Zz}{\emptyset} = \emptyset$. If $z = (\alpha_1, b_1) \cdots (\alpha_m,b_m) \bot z'$ then: 
		\begin{itemize}
			\item if $m \neq 2^n$ we cannot apply $Z \xrightarrow{\Bcal_1,\ldots,\Bcal_n} D$ since $z$ has no prefix accepted by all $\Bcal_i$. Furthermore, if $m> 2^n$, we have $\langX{Zz}{\emptyset} = \emptyset$ since we can only push more pairs $(\alpha,\ltr{0})$ on the stack, so we will never be able to apply $Z \xrightarrow{\Bcal_1,\ldots,\Bcal_n} D$.
			
			\item if $m = 2^n$ then we cannot apply $Z \to Z (\alpha,\ltr{0})$, by the previous item, as we would obtain more than $2^n$ symbols before the first $\bot$.
		\end{itemize} 
		
		\item Note that $B$, $D$ and $S$ all have a single rule. Meanwhile, $A$ has several but the top stack symbol determines which rule can be applied. 
		In conclusion, from every configuration $E z$, there is at most one rule that can be applied to lead to a complete derivation. As a consequence, the language of $\Gg$ contains at most one word.
	\end{enumerate}
\end{proof}

By combining the two previous statements we conclude that $\Gg_n$ recognizes the singleton language $\set{\ltr{a}^{\exp_3(n)}}$, while having "size@@IG" only quadratic in $n$. 
A trim NFA for this language must be acyclic, as otherwise it would recognize an infinite language, and thus have at least $\exp_3(n)$ states.

\subsection{Computational hardness}\label{app:computational-hardness}

We now use methods from \cite{Zetzsche16} to derive \cref{main-lower-bound-dfa}
and $\coNEXP[3]$-hardness in \cref{main-completeness} from our construction above.
For this, we rely on the notion of $\Delta(f)$ language classes~\cite{Zetzsche16}, which requires some terminology.
A ""transducer"" is a tuple $\Tt=(Q,\Sigma,\Gamma,E,q_0,F)$, where $Q$ is a finite set of \emph{states}, $\Sigma$ is its \emph{input alphabet}, $\Gamma$ is its \emph{output alphabet}, $E\subseteq Q\times\Sigma^*\times\Gamma^*\times Q$ is its finite set of \emph{edges}, $q_0\in Q$ is its \emph{initial state}, and $F\subseteq Q$ is its set of \emph{final states}. It describes a relation $\intro*\relTrans{\Tt}\subseteq\Sigma^*\times\Gamma^*$, namely the set of all pairs $(u,v)$ for which there are decompositions $u=u_1\cdots u_n$ and $v=v_1\cdots v_n$, states $q_0,q_1,\ldots,q_n$, and edges $(q_{i-1},u_i,v_i,q_i)\in E$ for $i=1,\ldots,n$ with $q_n\in F$. For a language $L\subseteq\Sigma^*$, we write $\Tt(L)=\{v\in\Gamma^* \mid \exists u\in L\colon (u,v)\in\relTrans{\Tt}\}$.

A \emph{language class} is a class of formal languages, together with some
means to represent them, such as grammars or automata. A language class $\Cc$
is an \emph{effective full trio} if for a given language $L$ from $\Cc$, we can
effectively compute a description of $\Tt(L)$. Now suppose $f\colon \NN\to\NN$ is
an amplifying function meaning there is a polynomial $p$ such that $f(p(n))\ge
f(n)^2$. Then, $\Cc$ is said to be $\Delta(f)$ if (i)~computing $\Tt(L)$ can be
done in polynomial time and (ii)~given $n$, one can compute a description of
the language $\{a^{f(n)}\}$ in polynomial time. 

In these terms, \cref{construction-lower-bound} tells us that the class of
indexed languages (represented by indexed grammars) are $\Delta(\exp_3)$:
Applying rational transductions to indexed languages is well-known to be
possible in polynomial time~(see, e.g.~\cite[Section 3.1]{Zetzsche15arxiv}). 
\begin{proposition}\label{indexed-delta-exp-3}
	The indexed languages (represented by indexed grammars) are $\Delta(\exp_3)$.
\end{proposition}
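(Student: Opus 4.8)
The plan is to verify the two defining conditions of a $\Delta(f)$ class with $f=\exp_3$ for the class of indexed languages. Recall that being $\Delta(\exp_3)$ requires: (i) the class is an effective full trio in which the transduction $\Tt(L)$ can be computed in \emph{polynomial} time; and (ii) given $n$ (in unary), one can compute in polynomial time a description of the singleton language $\{\ltr{a}^{\exp_3(n)}\}$. Condition (ii) is exactly \cref{construction-lower-bound}, which we have already proved. So the only remaining work is condition (i), and this is entirely standard.

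For condition (i), I would first recall the well-known fact that indexed languages form a full trio: they are effectively closed under rational transductions, i.e.\ given an indexed grammar for $L$ and a transducer $\Tt$, one can construct an indexed grammar for $\Tt(L)$. The standard reference for this is the folklore product construction with the transducer's finite-state control; since the transducer only touches terminal symbols and the non-terminal stacks of an indexed grammar are untouched by the product, the construction is essentially the same as the classical one for context-free grammars (tracking a pair of transducer states across each non-terminal, and replacing terminal-producing rules $A\to w$ by rules that emit all $\Tt$-images of $w$ compatible with the state pair). The key quantitative point is that this construction runs in polynomial time: the number of non-terminals grows by a factor of $|Q|^2$ and the terminal rules are expanded by the (polynomially bounded, for a fixed transducer, and more generally polynomially bounded in the combined input) number of relevant output strings. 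I would simply cite this, e.g.\ \cite[Section 3.1]{Zetzsche15arxiv} as the excerpt already suggests, rather than reprove it.

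Putting the two halves together: indexed languages are an effective full trio with polynomial-time transduction (condition (i)), and by \cref{construction-lower-bound} one can build a polynomial-size indexed grammar for $\{\ltr{a}^{\exp_3(n)}\}$ given $n$ in unary (condition (ii)). Since $\exp_3$ is clearly amplifying---$\exp_3(2n)\ge\exp_3(n)^2$, so one may take the polynomial $p(n)=2n$---the class of indexed languages is $\Delta(\exp_3)$ by definition. This is really just an assembly of ingredients, so I would keep the proof to a few lines.

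I do not expect any genuine obstacle here; the one point that needs a little care is making sure the transduction closure is genuinely polynomial-time (not merely effective), which is why I would be explicit about the $|Q|^2$ blowup in non-terminals and the fact that expanding each rule $A\to w$ to its transducer images is polynomial in the combined size of the grammar and the transducer. Everything else---the amplifying property of $\exp_3$, and the singleton-language construction---is already in hand.
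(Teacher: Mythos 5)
Your proposal is correct and matches the paper's own argument: condition (ii) is exactly \cref{construction-lower-bound}, and condition (i) is the well-known polynomial-time closure of indexed languages under rational transductions, cited from \cite[Section 3.1]{Zetzsche15arxiv} just as the paper does. Your extra remarks (the explicit product-construction sketch and the check that $\exp_3$ is amplifying via $p(n)=2n$) are harmless elaborations of the same route.
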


We will also need the notion of simple substitutions. For alphabets
$\Sigma,\Gamma$, a \emph{substitution} is a map $\sigma\colon
\Sigma\to\powerset{\Gamma^*}$ that replaces each letter in $\Sigma$ by a
language over $\Gamma$.  For language $L\subseteq\Sigma^*$, the language
$\sigma(L)$ is defined in the obvious way. The substitution $\sigma$ is said to
be \emph{simple for $L\subseteq\Sigma^*$} if $\Sigma\subseteq\Gamma$ and there
is a letter $a\in\Sigma$ such that $\sigma(a')=\{a'\}$ for each
$a'\in\Sigma\setminus\{a\}$. We say that a language class $\Cc$ is \emph{closed
	under simple subsititutions} if for any given $L$ from $\Cc$, and any simple
substitution $\sigma$ for $L$, the language $\sigma(L)$ belongs to $\Cc$, and a
representation can be computed in polynomial time. It is easy to see that the
indexed languages are closed under simple substitutions. In \cite[Theorem
15]{Zetzsche16}, it is shown that downward closure inclusion and equivalence
are both $\coNTIME(t)$-hard for any language class that is $\Delta(t)$ and
closed under simple substitutions. Hence, \cref{construction-lower-bound}
implies $\coNEXP[3]$-hardness of downward closure inclusion and equivalence.

\subsection{Proofs for DFA lower bound}\label{app:lower-bound-dfa}
Here, we prove a slightly more general result than discussed in
\cref{sec:lower-bound}: We show that for any $\Delta(f)$ language class, DFAs
for downward closures of languages with polynomial-sized descriptions require
at least size $2^{f(n)}$, provided that the language class is also closed under
simple subsitution:
\begin{proposition}\label{construction-lower-bound-dfa}
	Let $f\colon\NN\to\NN$ be a function such that a language class $\Cc$
	is $\Delta(f)$ and let $\Cc$ be closed under simple substitutions. Then
	there is a family languages $(L_n)_{n\ge 1}$ with polynomial
	description sizes such that any DFA for $\dcl{L_n}$ requires at least
	$2^{f(n)}$ states.
\end{proposition}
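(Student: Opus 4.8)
The plan is to follow the DFA-distinguishing technique from \cite[Theorem~7]{DBLP:conf/lata/BachmeierLS15}, adapted to the abstract $\Delta(f)$ setting. First I would build, for each $n$, a language witnessing the lower bound. Starting from the $\Delta(f)$ assumption, we obtain in polynomial time a description of $\{\ltr{a}^{f(n)}\}$ in $\Cc$. Using the effective-full-trio property together with closure under simple substitutions, I would transform this into a description of
\[ L_n=\{uv \mid u,v\in\{\ltr{0},\ltr{1}\}^*,\ |u|=|v|=f(n),\ u\ne v\}. \]
Concretely: apply a rational transduction to $\{\ltr{a}^{f(n)}\}$ to first produce $\{w\ltr{\$}w' \mid |w|=|w'|=f(n)\}$ over $\{\ltr{0},\ltr{1},\ltr{\$}\}$ with $w\ne w'$ guaranteed by the transducer nondeterministically guessing a position $i$ and emitting different digits there on the two sides (the transduction reads the $i$-th $\ltr{a}$ twice, so to speak, by a standard counting construction over the $f(n)$-many $\ltr{a}$'s); then apply a simple substitution erasing the separator $\ltr{\$}$. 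Since $\Cc$ is an effective full trio with polynomial-time transductions and closed under simple substitutions with polynomial-time constructions, the description of $L_n$ has size polynomial in $n$.

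Next I would compute the downward closure. Note $\dcl{L_n}$ contains every word of length at most $2f(n)$ over $\{\ltr{0},\ltr{1}\}$ \emph{except} the words $ww$ with $|w|=f(n)$: indeed if $x\subword uv$ with $uv\in L_n$ then $x$ has length $\le 2f(n)$; conversely any $x$ of length $2f(n)$ that is not of the form $ww$ can be written as $x=x_1x_2$ with $|x_1|=|x_2|=f(n)$ and $x_1\ne x_2$, hence $x\in L_n\subseteq\dcl{L_n}$, and any shorter word is a subword of, say, $\ltr{0}^{f(n)}\ltr{1}^{f(n)}\in L_n$ (after possibly swapping to $\ltr{1}^{f(n)}\ltr{0}^{f(n)}$ to handle the all-equal corner case — more carefully, any $x$ with $|x|<2f(n)$ is a subword of some length-$2f(n)$ word in $L_n$). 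I would spell this characterization out cleanly as a lemma.

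Finally, the Myhill--Nerode argument: let $D$ be any DFA for $\dcl{L_n}$, and consider the $2^{f(n)}$ distinct strings $w\in\{\ltr{0},\ltr{1}\}^{f(n)}$. For $w\ne w'$ of equal length $f(n)$, the word $wv$ with $v=w'$ lies in $\dcl{L_n}$ (it is in $L_n$), whereas $ww'$... wait — I must pick the distinguishing suffix correctly: take suffix $y=w'$; then $wy=ww'\in L_n\subseteq\dcl{L_n}$ but $w'y=w'w'\notin\dcl{L_n}$. Hence $D$ cannot be in the same state after reading $w$ and after reading $w'$, so $D$ has at least $2^{f(n)}$ states. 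This establishes \cref{construction-lower-bound-dfa}, and \cref{main-lower-bound-dfa} follows by instantiating $f=\exp_3$ via \cref{indexed-delta-exp-3} (using also that indexed languages are closed under simple substitutions), so that a DFA for $\dcl{L_n}$ needs $2^{\exp_3(n)}=\exp_4(n)$ states.

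The main obstacle I anticipate is carrying out the transduction/substitution construction of $L_n$ from $\{\ltr{a}^{f(n)}\}$ \emph{entirely within the full-trio-plus-simple-substitution toolkit}, i.e.\ without secretly using more closure properties than $\Delta(f)$ classes are assumed to have. In particular, producing the inequality constraint $u\ne v$ and aligning the two length-$f(n)$ blocks must be done by a single rational transducer reading the word $\ltr{a}^{f(n)}$ once; the trick is that the transducer can output the left block while counting down from $f(n)$, nondeterministically mark a position, output a bit there, then output the right block with the opposite bit at the same position, again using the remaining $\ltr{a}$'s as a counter. Verifying that this is a genuine rational transduction (finite state, no two-way moves) and that composing it with the simple substitution erasing $\ltr{\$}$ stays polynomial is the delicate bookkeeping; everything after that (the description of $\dcl{L_n}$ and the Myhill--Nerode counting) is routine.
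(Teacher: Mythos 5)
Your overall skeleton is the same as the paper's: the same witness language $L_n=\{uv \mid u,v\in\{\ltr{0},\ltr{1}\}^*,\ |u|=|v|=f(n),\ u\ne v\}$, the same observation that $ww\notin\dcl{L_n}$ while $uv\in L_n\subseteq\dcl{L_n}$ for $u\ne v$, the same Myhill--Nerode counting, and the same instantiation at $f=\exp_3$. The downward-closure facts and the distinguishing argument are fine. The gap is exactly at the step you yourself flag as "delicate bookkeeping": producing $L_n$ from $\{\ltr{a}^{f(n)}\}$ by a \emph{single} polynomial-size rational transduction (your simple substitution only erases the separator, so it does no work). A transducer in the sense used here is one-way and finite-state: it cannot "read the $i$-th $\ltr{a}$ twice". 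With input $\ltr{a}^{f(n)}$ it has only $f(n)$ unary ticks, each consumed once, and polynomially many states, yet it must enforce two unbounded constraints simultaneously: that the block between the two differing bits has length exactly $f(n)-1$ (so that the bits are aligned across the two halves) and that the total length is $2f(n)$. Metering one of these consumes the input; the other is then controlled only by the finite state set, so either the machine needs on the order of $f(n)$ states (not polynomial, since $f=\exp_3$), or its image contains misaligned words -- in particular words of the form $ww$ -- which destroys the very fact ($ww\notin\dcl{L_n}$) your fooling-set argument relies on. So as written the construction of $L_n$ fails, and this construction is the entire content of the proposition (everything after it is routine, as you say).

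The paper's proof closes exactly this hole by using the simple substitution non-trivially, i.e., by invoking the $\Delta(f)$ description of $\{\ltr{a}^{f(n)}\}$ a second time: a transducer first inserts a single marked letter $\ltr{b}$ into $\ltr{a}^{f(n)}$ (its position is metered by the input), then a simple substitution replaces $\ltr{b}$ by the language $\{\ltr{b}\ltr{a}^{f(n)}\ltr{c}\}$, so that the distance between the two marks $\ltr{b}$ and $\ltr{c}$ is enforced by the length of the substituted block rather than by transducer counting; further transducers only trim a constant number of $\ltr{a}$'s to fix the total length and relabel $\ltr{a}\mapsto\{\ltr{0},\ltr{1}\}$ and $\ltr{b},\ltr{c}$ to distinct bits. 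If you want to repair your proof, adopt this two-stage use of the poly-size description of $\{\ltr{a}^{f(n)}\}$ (one copy for the base word, one inside the substitution, with the arithmetic arranged so the two marks sit exactly $f(n)$ positions apart); what cannot be salvaged is doing it with transductions alone from a single copy of $\{\ltr{a}^{f(n)}\}$.
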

\begin{proof}
	We claim that we can construct a representation of 
	\[ L_n=\{uv \mid u,v\in\{\ltr{0},\ltr{1}\}^*,~|u|=|v|=f(n),~u\ne v\} \]
	in polynomial time.  Note that a DFA for $\dcl{L_n}$ requires at least
	$2^{f(n)}$ states: After reading distinct prefixes $u,u'\in\{\ltr{0},\ltr{1}\}^*$ of length $f(n)$, the DFA must enter distinct states, as otherwise, it would accept $uu$, which does not belong to $\dcl{L_n}$.
	
	To construct $L_n$ for given $n\in\NN$, we begin by building a
	representation of $\{\ltr{a}^{f(n)}\}$. Using a "transducer", we then
	insert a single occurrence of a letter $\ltr{b}$ into every word, and
	then substitute this $\ltr{b}$ with $\{\ltr{b}\ltr{a}^{f(n)}\ltr{c}\}$. This
	yields the language 
	\[ \{\ltr{a}^r \ltr{b}\ltr{a}^{f(n)} \ltr{c} \ltr{a}^s \mid r+s=f(n)\}. \] 
	Using another "transducer", we can remove (i)~one occurrence
	of $\ltr{a}$ within $\ltr{a}^r$ and $\ltr{a}^s$ and (ii)~one occurrence of $\ltr{a}$ within $\ltr{a}^{f(n)}$ to obtain
	\[ \{\ltr{a}^r \ltr{b}\ltr{a}^{f(n)-1}\ltr{c} \ltr{a}^s \mid r+s=f(n)-1\}. \]
	A final "transducer" then replaces (i)~each $\ltr{a}$  with $\ltr{0}$ or
	$\ltr{1}$ and (ii)~$\ltr{b}$ and $\ltr{c}$ with distinct letters in
	$\{\ltr{0},\ltr{1}\}$. This results in the language $L_n$.  \end{proof} 

Now, \cref{construction-lower-bound-dfa} and \cref{indexed-delta-exp-3}
together directly imply \cref{main-lower-bound-dfa}.

}

\label{afterbibliography}
\newoutputstream{pagestotal}
\openoutputfile{main.pagestotal.ctr}{pagestotal}
\addtostream{pagestotal}{\getpagerefnumber{afterbibliography}}
\closeoutputstream{pagestotal}

\newoutputstream{todos}
\openoutputfile{main.todos.ctr}{todos}
\addtostream{todos}{\arabic{@todonotes@numberoftodonotes}}
\closeoutputstream{todos}
\end{document}